\newif\ifsubmission
\newif\ifnotes
\newtheorem{axiom}[theorem]{Axiom}
\newtheorem{physicsaxiom}[theorem]{Physics Axiom}
\newtheorem{importedtheorem}[theorem]{Imported Theorem}
\newtheorem{importedlemma}[theorem]{Imported Lemma}
\newtheorem{informaltheorem}[theorem]{Informal Theorem}
\newtheorem{physicstheorem}[theorem]{Physical Theorem}
\newtheorem{claim}[theorem]{Claim}
\newtheorem{subclaim}[theorem]{SubClaim}
\newtheorem{fact}[theorem]{Fact}
\newtheorem{construction}[theorem]{Construction}
\Crefname{importedtheorem}{Imported Theorem}{Imported Theorems}
\Crefname{importedlemma}{Imported Lemma}{Imported Lemma}
\Crefname{theorem}{Theorem}{Theorems}
\Crefname{proposition}{Proposition}{Propositions}
\Crefname{claim}{Claim}{Claims}
\Crefname{subclaim}{SubClaim}{SubClaims}
\Crefname{subsubclaim}{SubSubClaim}{SubSubClaims}
\Crefname{lemma}{Lemma}{Lemmas}
\Crefname{conjecture}{Conjecture}{Conjectures}
\Crefname{corollary}{Corollary}{Corollaries}
\Crefname{construction}{Construction}{Constructions}
\Crefname{property}{Property}{Properties}
\theoremstyle{definition}
\newtheorem{assumption}[theorem]{Assumption}
\newtheorem{notation}[theorem]{Notation}
\Crefname{definition}{Definition}{Definitions}
\Crefname{assumption}{Assumption}{Assumptions}
\Crefname{notation}{Notation}{Notations}
\theoremstyle{remark}
\Crefname{question}{Question}{Questions}
\Crefname{remark}{Remark}{Remarks}
\Crefname{comment}{Comment}{Comments}
\Crefname{fact}{Fact}{Facts}
\Crefname{step}{Step}{Steps}
\newtheorem{theorem}{Theorem}[section]
\newtheorem{importedtheorem}[theorem]{Imported Theorem}
\newtheorem{claim}[theorem]{Claim}
\newtheorem{lemma}[theorem]{Lemma}
\newtheorem{conjecture}{Conjecture}
\newtheorem{corollary}[theorem]{Corollary}
\newtheorem{fact}[theorem]{Fact}
\newtheorem{definition}[theorem]{Definition}
\newtheorem{remark}[theorem]{Remark}
\Crefname{importedtheorem}{Imported Theorem}{Imported Theorems}
\Crefname{importedlemma}{Imported Lemma}{Imported Lemmas}
\Crefname{theorem}{Theorem}{Theorems}
\Crefname{proposition}{Proposition}{Propositions}
\Crefname{claim}{Claim}{Claims}
\Crefname{subclaim}{SubClaim}{SubClaims}
\Crefname{subsubclaim}{SubSubClaim}{SubSubClaims}
\Crefname{lemma}{Lemma}{Lemmas}
\Crefname{conjecture}{Conjecture}{Conjectures}
\Crefname{corollary}{Corollary}{Corollaries}
\Crefname{construction}{Construction}{Constructions}
\Crefname{property}{Property}{Properties}
\theoremstyle{definition}
\Crefname{definition}{Definition}{Definitions}
\Crefname{assumption}{Assumption}{Assumptions}
\Crefname{notation}{Notation}{Notations}
\theoremstyle{remark}
\Crefname{question}{Question}{Questions}
\Crefname{comment}{Comment}{Comments}
\Crefname{fact}{Fact}{Facts}
\Crefname{step}{Step}{Steps}
\newcommand{\secp}{\lambda}
\def\cC{{\cal C}}
\def\cD{{\cal D}}
\def\cH{{\cal H}}
\def\cI{{\cal I}}
\def\cK{{\cal K}}
\def\cP{{\cal P}}
\def\regA{{\cal A}}
\def\regB{{\cal B}}
\def\regC{{\cal C}}
\def\regD{{\cal D}}
\def\regM{{\cal M}}
\def\regX{{\cal X}}
\def\regY{{\cal Y}}
\def\regZ{{\cal Z}}
\newcommand{\sA}{\mathsf{A}}
\newcommand{\sF}{\mathsf{F}}
\newcommand{\sG}{\mathsf{G}}
\newcommand{\sM}{\mathsf{M}}
\def\bbC{{\mathbb C}}
\def\bbF{{\mathbb F}}
\def\bbN{{\mathbb N}}
\def\poly{{\rm poly}}
\def\negl{{\rm negl}}
\newcommand{\sk}{\mathsf{sk}}
\newcommand{\Sim}{\mathsf{Sim}}
\newcommand{\Sign}{\mathsf{Sign}}
\newcommand{\Enc}{\mathsf{Enc}}
\newcommand{\Dec}{\mathsf{Dec}}
\DeclareMathOperator*{\expectation}{\mathbb{E}}
\newcommand{\E}{\expectation}
\newcommand{\Gen}{\mathsf{Gen}}
\newcommand{\vk}{\mathsf{vk}}
\newcommand{\proref}[1]{Figure~\protect\ref{#1}}
\newenvironment{boxfig}[2]{\begin{figure}[#1]\fbox{
    \begin{minipage}{\linewidth}
    \vspace{0.2em}\makebox[0.025\linewidth]{}    \begin{minipage}{0.95\linewidth}{{#2 }}
    \end{minipage}\vspace{0.2em}\end{minipage}}}{\end{figure}}
\newcommand{\pprotocol}[4]{
\begin{boxfig}{h!}{
\begin{center}
\textbf{#1}
\end{center}
    {\footnotesize#4}
\vspace{0.2em} } \caption{\label{#3} #2}
\end{boxfig}
}
\newcommand{\protocol}[4]{
\pprotocol{#1}{#2}{#3}{#4} }
\renewcommand{\partial}{\mathsf{partial}}
\newcommand{\TD}{\mathsf{TD}}
\newcommand{\Adv}{\mathsf{Adv}}
\newcommand{\Ver}{\mathsf{Ver}}
\newcommand{\nonnegl}{\mathsf{non}\text{-}\mathsf{negl}}
\newcommand{\co}{\mathsf{co}}
\newcommand{\QObf}{\mathsf{QObf}}
\newcommand{\QEval}{\mathsf{QEval}}
\newcommand{\TokGen}{\mathsf{TokGen}}
\newcommand{\TokVer}{\mathsf{TokVer}}
\newcommand{\TokSign}{\mathsf{TokSign}}
\newcommand{\FSim}{\mathsf{FSim}}
\newcommand{\GSim}{\mathsf{GSim}}
\newcommand{\ParSim}{\mathsf{ParSim}}
\newcommand{\ParMeas}{\mathsf{ParMeas}}
\newcommand{\LinEval}{\mathsf{LinEval}}
\newcommand{\TV}{\mathsf{TV}}
\newcommand{\LM}{\mathsf{LM}}
\newcommand{\CNOT}{\mathsf{CNOT}}
\newcommand{\Mx}{\mathsf{Mx}}
\newcommand{\univ}{\mathsf{unv}}
\newcommand{\LMEval}{\mathsf{LMEval}}
\newcommand{\MS}{\Phi}
\begin{document}
\setstcolor{red}
\ifsubmission
\title{}
\author{}
\institute{}
\else
\title{Quantum State Obfuscation from Classical Oracles}
\author{
James Bartusek\thanks{bartusek.james@gmail.com} \\ UC Berkeley \and 
Zvika Brakerski\thanks{zvika.brakerski@weizmann.ac.il} \\ Weizmann Institute of Science \and
Vinod Vaikuntanathan\thanks{vinodv@csail.mit.edu} \\ MIT 
}
\date{\today}
\fi
\maketitle
\begin{abstract}
A major unresolved question in quantum cryptography is whether it is possible to obfuscate arbitrary quantum computation. Indeed, there is much yet to understand about the feasibility of quantum obfuscation even in the classical oracle model, where one is given for free the ability to obfuscate any classical circuit.

In this work, we develop a new array of techniques that we use to construct a \emph{quantum state obfuscator}, a powerful notion formalized recently by Coladangelo and Gunn (arXiv:2311.07794) in their pursuit of better software copy-protection schemes. Quantum state obfuscation refers to the task of compiling a \emph{quantum program}, consisting of a quantum circuit $C$ with a classical description and an auxiliary quantum state $\ket{\psi}$, into a functionally-equivalent \emph{obfuscated} quantum program that hides as much as possible about $C$ and $\ket{\psi}$. We prove the security of our obfuscator when applied to any pseudo-deterministic quantum program, i.e.\ one that computes a (nearly) deterministic classical input / classical output functionality. Our security proof is with respect to an efficient classical oracle, which may be heuristically instantiated using quantum-secure indistinguishability obfuscation for classical circuits. 

Our result improves upon the recent work of Bartusek, Kitagawa, Nishimaki and Yamakawa (STOC 2023) who also showed how to obfuscate pseudo-deterministic quantum circuits in the classical oracle model, but only ones with a \emph{completely classical} description. Furthermore, our result answers a question of Coladangelo and Gunn, who provide a construction of quantum state indistinguishability obfuscation with respect to a \emph{quantum} oracle, but leave the existence of a concrete real-world candidate as an open problem. Indeed, our quantum state obfuscator together with Coladangelo-Gunn gives the first candidate realization of a ``best-possible'' copy-protection scheme for all polynomial-time functionalities.

Our techniques deviate significantly from previous works on quantum obfuscation. We develop several novel technical tools which we expect to be broadly useful in quantum cryptography. These tools include a publicly-verifiable, linearly-homomorphic quantum authentication scheme with classically-decodable ZX measurements (which we build from coset states), and a method for compiling any quantum circuit into a "linear + measurement" ($\LM$) quantum program: an alternating sequence of CNOT operations and partial ZX measurements.

\end{abstract}
\thispagestyle{empty}
\newpage
\tableofcontents
\thispagestyle{empty}
\newpage 
\pagenumbering{arabic}

\section{Introduction}
\label{sec:intro}

Whether white-box access to programs is more powerful than black-box access is a fundamental question in computer science. This question motivates the study of {\em program obfuscation}, the task of converting a given program $\mathcal{P}$ into an obfuscated program $\hat{\mathcal{P}}$ that preserves functionality (i.e.\ the input-output behavior of $\mathcal{P}$), yet hides all other information about $\mathcal{P}$. Indeed, the previous decades have produced exciting research in cryptography solving several foundational questions about program obfuscation, including: which notions of obfuscation are possible and which aren't~\cite{JACM:BGIRSVY12,FOCS:GGHRSW13}; which ones are useful and to what extent~\cite{STOC:SahWat14}; and of course, how to construct secure program obfuscators under well-founded hardness assumptions~\cite{STOC:JaiLinSah21,EC:JaiLinSah22}. The study of program obfuscation has given us sophisticated cryptographic primitives such as witness encryption~\cite{STOC:GarGenSahWat13} and functional encryption~\cite{FOCS:GGHRSW13}, and it has shed new light on foundational complexity-theoretic questions~\cite{FOCS:BitPanRos15,STOC:IlaLiWil23}.

Developments in quantum information, computation, and cryptography~\cite{DBLP:journals/sigact/Wiesner83,BEN84,DBLP:conf/focs/Shor94,STOC:AarChr12,DBLP:conf/focs/BrakerskiCMVV18,DBLP:conf/focs/Mahadev18,DBLP:journals/corr/abs-2001-04383,Zha21} bring to fore a similarly foundational question:
\begin{quote}
\begin{center}
 \emph{Is it possible to obfuscate quantum computation, and to what extent?}
\end{center}
\end{quote}
Despite recent progress, there remains much to understand about the feasibility of quantum obfuscation, even in a world where one has the aid of a classical oracle, equivalently a world where ideal obfuscation of classical circuits comes for free. 

The strongest known positive result on obfuscating quantum functionalities is from the recent work of Bartusek, Kitagawa, Nishimaki and Yamakawa~\cite{BKNY23}, who showed how to obfuscate the class of {\em classically described} pseudo-deterministic quantum circuits in the classical oracle model, under the learning with errors (LWE) assumption. (For a description of other related work, see Section~\ref{sec:related}.) A deterministic quantum circuit is one that implements a classical, deterministic map $x \to Q(x)$. A \emph{pseudo}-deterministic circuit is nearly deterministic, i.e.\ its output on every (classical) input $x$ is some (classical) $y$ with all but negligible probability, e.g.\ the Shor circuit for factoring. In the \emph{classical oracle model}, the obfuscated circuit as well as the (possibly adversarial) evaluator have oracle access to an efficiently computable classical function sampled by the obfuscator.\footnote{Importantly, the actual description of this function is not necessarily revealed to the evaluator / adversary.} 

\subsection{Our Results}

In this work, we present a new array of techniques that we use to construct a \emph{quantum state obfuscation} scheme, an efficient compiler that converts a \emph{quantum program}, consisting of a quantum circuit $C$ as well as an auxiliary quantum input $\ket{\psi}$, into an  {obfuscated} quantum program that preserves functionality, but hides everything else. In a nutshell, our main result constructs an ideal quantum state obfuscator for pseudo-deterministic functionalities with respect to an efficient classical oracle. The class of circuits we obfuscate is strictly more general than those obfuscated by \cite{BKNY23}, which have a {\em purely classical description} (see \cref{subsec:discussion} for a more detailed comparison). Moreover, our security is {\em unconditional}\footnote{Technically, we only achieve unconditional security if we allow our oracles to use a true random oracle as a sub-routine. If we insist on the efficiency of the oracles, and thus instantiate the random oracle with a pseudo-random function, then we must assume the existence of a quantum-secure one-way function.} while \cite{BKNY23} additionally relied on the LWE assumption (in the classical oracle model).

\begin{theorem}[Informal]\label{thm:informal1}
There exists a quantum state obfuscator for the class of pseudo-deterministic quantum programs achieving the notion of ideal obfuscation (Definition~\ref{def:obf}) in the classical oracle model.
\end{theorem}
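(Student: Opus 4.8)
The strategy is to reduce the theorem to the two technical tools highlighted in the abstract. First, compile the quantum program $(C,\ket{\psi})$ into a functionally equivalent \emph{linear+measurement} ($\LM$) program: a fixed initial register --- built from $\ket{\psi}$, ancillas, and a supply of auxiliary resource (``magic'') states that carry the non-Clifford content --- followed by an alternating sequence of $\CNOT$ layers and partial $\mathsf{ZX}$ measurements whose gate pattern can be padded to depend only on $|C|$. Second, authenticate this initial register under the new publicly-verifiable, $\CNOT$-homomorphic, coset-state authentication scheme; the authenticated state is the quantum part of the obfuscated program. The classical part is an efficient oracle that: on input $x$, returns an authenticated encoding of $x$ on the input wires; given a claimed authenticated state and a claimed $\mathsf{ZX}$-measurement outcome for the current layer, checks the authentication tags, updates the secret keys/Pauli-frame to reflect the honest $\CNOT$ layer and the classically-decoded measurement-dependent correction, and issues fresh tags for the next layer; and, after the final layer, re-verifies the transcript and decodes the plaintext output. (Instantiating the oracle's randomness with a quantum-secure PRF makes it efficient; using a true random oracle instead gives unconditional security, matching the footnote in the statement.)

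\emph{Correctness} is by inspection: an honest evaluator, on input $x$, obtains the authenticated $x$ from the oracle, then alternately applies the prescribed $\CNOT$s directly on the authenticated registers (using $\CNOT$-homomorphism, which needs no secret key) and measures the prescribed wires in the $\mathsf{ZX}$ bases, sending each outcome to the oracle to refresh tags; the oracle outputs the plaintext at the end. Pseudo-determinism guarantees this equals $Q(x)$ up to negligible error irrespective of the random internal outcomes, which the correction bookkeeping absorbs.

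For \emph{security} (ideal obfuscation, \cref{def:obf}) I would exhibit a simulator $\Sim$ with oracle access to $Q$ that outputs an authentication of the dummy state $\ket{0\cdots 0}$ together with a simulated oracle which answers a well-formed query by running the real oracle's checks and updates on the unique evaluation transcript consistent with the queried input $x$, substituting a call to $Q(x)$ for the final decoded plaintext, and rejects every ill-formed query. Indistinguishability is a hybrid argument with two main steps. (1) Swap the real authenticated $\LM$ state for the dummy state: undetectable because coset-state authentication acts as a quantum one-time pad whose keys appear only inside the oracle, so without those keys both states are the same fixed mixed state. (2) Argue that any oracle query that is not the well-formed continuation of an honest evaluation is rejected except with negligible probability; this is precisely unforgeability of the coset-state authentication, which follows from the direct-product / monogamy-of-entanglement hardness of coset states, and it reduces the adversary's view to a simulatable function of its $Q$-queries and the public parameters. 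One also punctures the classical oracle at the inputs touched by the reduction --- in the spirit of $\iO$ security proofs, here in the idealized classical-oracle model --- and, when a genuine random oracle is used, switches it out via One-Way to Hiding.

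The \emph{main obstacle} is step (2): the adversary may feed the classical oracle arbitrary superpositions of malformed authenticated states, entangled with its own workspace and with states distilled from earlier oracle responses across the polynomially many rounds, and we must show it cannot thereby redirect the computation onto a different circuit or leak $\ket{\psi}$ or $C$. Pinning this down demands the full strength of the new authentication scheme --- public verifiability so the oracle checks tags without the plaintext, $\CNOT$-homomorphism so honest evaluation never exposes keys, and classically-decodable $\mathsf{ZX}$ measurements so corrections stay trackable --- together with a careful layer-by-layer reduction to coset-state hardness that budgets the information the adversary accumulates round by round. A secondary point worth verifying is that the $\LM$ compiler is genuinely universal while confining all non-Clifford operations to the consumption of the magic states in the initial register, so that every operation the authentication layer ever sees is a $\CNOT$ or a $\mathsf{ZX}$ measurement.
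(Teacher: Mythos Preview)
Your high-level scheme matches the paper's, but you are missing one essential ingredient: \emph{signature tokens}. Without them, the construction as you describe it is insecure against a concrete attack that authentication unforgeability does not block.

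Here is the attack the paper calls out. The adversary picks two inputs $x$ and $x'$, begins an honest evaluation on $x$, but at some intermediate layer $i$ inserts a call to the oracle on input $x'$ (ignoring the returned label). This query is \emph{well-formed}: the authenticated registers it submits pass verification, so your step~(2) does not reject it. But the measurement $f_i^{x'}$ and $f_i^{x}$ are different functions, so this rogue call partially collapses the authenticated state in a manner inconsistent with the honest $x$-branch, after which the remaining honest $x$-calls may succeed yet steer the final decode to $y \neq Q(x)$. Coset-state unforgeability prevents the adversary from producing authenticated data out of thin air; it does not prevent it from feeding the same authenticated data to the oracle under two different classical contexts. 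Your security sketch, which reduces everything to ``malformed queries are rejected,'' does not address this.

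The paper closes this gap by including a one-time quantum signing key $\ket{\sk}$ in the obfuscated state and requiring every oracle call to carry a valid classical signature $\sigma_x$ on the claimed input $x$. The unclonability of tokens then forces the adversary to commit to a single $x$ before it can make any non-trivial oracle call. Formalizing this is the most delicate part of the proof: one must show that whenever the adversary's queries on $x'\neq x$ have a non-trivial effect on the authenticated state, a valid $\sigma_{x'}$ can be \emph{extracted} from the adversary, contradicting token security. The paper does this by purifying the random oracle used to generate the inter-layer labels and measuring its database register; this extraction argument is the reason the oracle outputs random-oracle-derived \emph{labels} $\ell_i = H(x,\sigma_x,\widetilde v_1,\dots,\widetilde v_i,\ell_1,\dots,\ell_{i-1},r_i)$ rather than, as you suggest, ``fresh tags for the next layer.'' A second place your sketch is thin is the simulation of the intermediate $\sF_i$ oracles: one must replace $\Dec$ by $\Ver$ inside them, which the paper achieves via the \emph{standard-basis-collapsible $W$ wires} structural property of the $\LM$ compiler and a ``collapsing'' argument on the magic-state registers---not merely by authentication unforgeability.
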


The notion of quantum state obfuscation was recently conceptualized and defined by Coladangelo and Gunn~\cite{CG23} in their pursuit of better software copy-protection schemes~\cite{DBLP:conf/coco/Aaronson09}. They show how a quantum state {\em indistinguishability obfuscator} can be used to construct a ``best-possible'' copy-protection scheme, i.e.\ they show a scheme to copy-protect any classical function that can be copy-protected at all! Their observation is that if there {\em exists} some (as yet unknown) copy-protection scheme for a classical (deterministic) function $x \to Q(x)$ that produces an unclonable quantum program $(\ket{\psi},C)$, then a quantum state obfuscation of $Q$ is just as good a copy-protection scheme. Indeed, this follows from the fact that the obfuscation of $Q$ is indistinguishable from an obfuscation of $(\ket{\psi},C)$. 

However, they leave the \emph{existence} of quantum state indistinguishability obfuscation as an open question, only providing a construction with respect to a quantum oracle that has no known (even heuristic) real-world instantiation. 

Our main result (Theorem~\ref{thm:informal1}) answers the open question from \cite{CG23}, providing a construction of quantum state obfuscation with respect  to a classical oracle. The oracle can be heuristically instantiated using a candidate quantum-secure indistinguishability obfuscation for classical circuits (e.g.\ \cite{TCC:GenGorHal15,TCC:BGMZ18,C:CheVaiWee18,brakerski_et_al:LIPIcs.ICALP.2022.28,10.1145/3406325.3451070,10.1007/978-3-030-77883-5_5}), thus providing the first concrete evidence that quantum state (indistinguishability) obfuscation is achievable. That is, while we don't currently have a proof in the plain model, it is plausible to conjecture the following.

\begin{conjecture}
Instantiating our scheme with one of the above candidate quantum-secure indistinguishability obfuscators for classical circuits yields a secure quantum state indistinguishability obfuscator.
\end{conjecture}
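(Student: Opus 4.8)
The plan is to re-examine the hybrid structure underlying the proof of Theorem~\ref{thm:informal1} and argue that the idealized classical oracle is only ever used in ways that a quantum-secure indistinguishability obfuscator $\iO$ can emulate. Recall that this security proof establishes \emph{ideal} obfuscation (Definition~\ref{def:obf}) in the classical oracle model via a sequence of hybrid experiments, in each of which the classical function implemented by the oracle is rewritten: hard-coding outputs of a puncturable PRF, rerouting branches of the coset-state authentication and ZX-decoding logic, and switching from the honestly generated oracle to the one used by the simulator. The first task is to catalog every such transition and check that the two classical functions involved are \emph{identical on all inputs}, or else differ only on a sparse set whose description is information-theoretically or computationally hidden from the evaluator. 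For the former, quantum-secure $\iO$ of classical circuits directly yields indistinguishability of the obfuscated oracles against any quantum polynomial-time adversary; for the latter, one first moves to a functionally equivalent circuit via a punctured-programming argument (puncturable PRF together with $\iO$), and then applies $\iO$. If every oracle-modification step in the proof of Theorem~\ref{thm:informal1} is functionality-preserving in this sense, the heuristic instantiation is justified at the level of those hybrids.

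The more delicate part is to handle whatever use the proof makes of the \emph{simulator's} ability to program the oracle \emph{adaptively}, as a function of the adversary's queries, since this is precisely where an ideal classical oracle is strictly stronger than $\iO$. I would aim to show that, for the specific oracles built here, no adaptive programming is actually needed: because all of the hiding is ultimately supplied by the unclonability and authentication properties of coset states and by the statistical structure of the $\LM$ program --- all of which are proved in the body of the paper as information-theoretic statements --- the simulator can fix the (obfuscated) classical circuit up front. Concretely, one would re-derive the entire hybrid chain using only (i) the security of $\iO$, (ii) the security of the puncturable PRF, and (iii) the information-theoretic lemmas on coset-state authentication and classically-decodable ZX measurements, verifying that nothing beyond the \emph{functionality} of the oracle is invoked at any point.

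The main obstacle --- and the reason this remains a conjecture rather than a theorem --- is that the proof of ideal obfuscation almost certainly relies, at least implicitly, on the fact that a quantum polynomial-time adversary learns the oracle's behavior only on the polynomially many points it queries, a property with no analogue once the adversary holds the obfuscated code of that oracle in full. Bridging this gap would require showing that the construction's oracles are ``self-hiding'' well beyond their functionality, which is morally a circular appeal to $\iO$-as-ideal-obfuscation; and there is no general theorem promoting an ideal-oracle construction to an $\iO$-instantiated one (indeed such a theorem is false in general, e.g.\ random oracles and ideal obfuscation cannot be instantiated this way). A genuine proof would therefore need a direct security analysis of \emph{this} obfuscator against adversaries given $\iO$'d classical oracles, most plausibly by adapting punctured-programming techniques to superposition access --- e.g.\ via small-range distributions or the compressed-oracle technique --- so that the PRF-puncturing steps remain sound even though the adversary can evaluate the obfuscated classical circuits on superpositions of inputs.
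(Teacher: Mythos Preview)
The paper does not prove this statement: it is explicitly a \emph{conjecture}, stated without proof, immediately after the sentence ``while we don't currently have a proof in the plain model, it is plausible to conjecture the following.'' There is therefore no paper proof to compare your proposal against.

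Your write-up correctly recognizes this: you yourself say ``the reason this remains a conjecture rather than a theorem'' and identify the central obstacle, namely that the security proof in the classical oracle model genuinely exploits the fact that the adversary only sees the oracle's input/output behavior, and there is no general theorem lifting such proofs to $\iO$-instantiated oracles. That diagnosis is accurate and aligns with the paper's own framing (the paper simply appeals to the ``best-possible'' heuristic for $\iO$ and to historical precedent of de-oracle-izing results, without claiming any formal reduction). Your proposed plan --- cataloging hybrid transitions and checking whether each is functionality-preserving so that $\iO$ suffices --- is a reasonable first step, but as you note, the actual proof (\cref{sec:security}) relies on oracle-reprogramming arguments (e.g.\ \cref{lemma:puncture}, the purified random oracle extraction in \cref{subsec:mapping-hardness}) that do not reduce to mere functional equivalence of circuits, so the plan would not close the gap. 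In short: your proposal is not a proof, you know it is not a proof, and the paper does not claim one either.
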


Then, following Coladangelo-Gunn, we obtain the first candidate realization of a ``best-possible'' copy-protection scheme for all polynomial-time programs. 

\begin{theorem}[Following \cite{CG23}]
Assuming the above conjecture, there exists a best-possible copy-protection scheme for all polynomial-time classical programs.
\end{theorem}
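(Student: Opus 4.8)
The plan is to instantiate the ``best-possible'' copy-protection compiler of Coladangelo and Gunn~\cite{CG23} with the quantum state indistinguishability obfuscator $\QObf$ guaranteed by the conjecture (recall that ideal obfuscation as in \cref{thm:informal1} in particular implies the indistinguishability flavor, and the conjecture posits a genuine plain-model object), and to verify that the hypotheses of their reduction are met. Concretely, given a polynomial-time classical program $f$, one defines the copy-protection scheme by having the vendor output $\rho_f \gets \QObf(\bot, f)$ --- the obfuscation of $f$ regarded as a quantum program with empty auxiliary state --- and having evaluation on input $x$ run $\QEval(\rho_f, x)$. Since $(\bot, f)$ is a (pseudo-)deterministic quantum program it lies in the class handled by $\QObf$, and functionality preservation of $\QObf$ immediately gives correctness of the copy-protection scheme: the compiled program outputs $f(x)$ on every $x$ with overwhelming probability.

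For the ``best-possible'' guarantee I would argue as follows. Fix $f$, and suppose towards a contradiction that (i) there exists some a priori unknown copy-protection scheme $\mathsf{CP}$ for $f$ that is anti-piracy secure and satisfies the standard near-perfect correctness --- so its honestly generated programs are pairs $(\ket{\phi}, C_\phi)$, a quantum state together with a classical circuit description, computing $f$ on all inputs with overwhelming probability, hence a pseudo-deterministic quantum program with the same functionality as $(\bot,f)$ --- and (ii) there is a QPT pirate $\mathcal{A}$ that, on input $\rho_f = \QObf(\bot, f)$, produces a bipartite state whose two halves each compute $f$ on the challenge inputs with probability noticeably above the trivial threshold $\mathrm{triv}_f$. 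Build a pirate $\mathcal{B}$ against $\mathsf{CP}$: $\mathcal{B}$ receives $(\ket{\phi}, C_\phi) \gets \mathsf{CP}.\Gen(f)$, runs the obfuscator on this pair (it has $C_\phi$ in hand classically, so it can compute $\QObf(\ket{\phi}, C_\phi)$), feeds the result to $\mathcal{A}$, and outputs the resulting bipartite state together with $\mathcal{A}$'s two evaluation strategies. By indistinguishability security of $\QObf$ applied to the functionally equivalent pseudo-deterministic programs $(\ket{\phi}, C_\phi)$ and $(\bot, f)$, we have $\QObf(\ket{\phi}, C_\phi) \approx_c \QObf(\bot, f)$; since the pirate's win event (both evaluations output the correct value of a poly-time classical $f$) is efficiently checkable, $\mathcal{B}$'s success probability matches $\mathcal{A}$'s up to $\negl$. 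As both $(\ket{\phi},C_\phi)$ and $(\bot,f)$ compute exactly $f$, the trivial baseline depends only on $f$ and the challenge distribution and is the same in both games, so $\mathcal{B}$ beats $\mathrm{triv}_f$ by a non-negligible amount, contradicting security of $\mathsf{CP}$. Hence our scheme is itself anti-piracy secure, i.e.\ it is a ``best-possible'' copy-protection scheme.

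The correctness argument and the syntactic unwinding of the anti-piracy experiments are routine; the main subtlety --- and the reason the theorem is stated as ``following~\cite{CG23}'' rather than proved from scratch --- is in faithfully aligning the precise definition of anti-piracy security between the two schemes: the order of quantifiers over the challenge distribution (worst-case input distribution versus every efficiently sampleable one), the exact form of the trivial baseline, and whether the pirate outputs free-form programs or runs a canonical evaluator. A secondary point to handle is model hygiene: our obfuscator is proved secure only in the classical oracle model, but the conjecture is precisely that instantiating that oracle with a candidate quantum-secure $\iO$ for classical circuits yields a genuine plain-model quantum state indistinguishability obfuscator, and it is in this plain-model regime that the Coladangelo-Gunn transformation is invoked, so no oracle bookkeeping needs to propagate into the copy-protection statement.
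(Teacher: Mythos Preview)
Your proposal is correct and matches the paper's approach: the paper does not give its own proof but simply attributes the result to \cite{CG23}, and what you have written is precisely the Coladangelo--Gunn reduction (obfuscate the hypothetical copy-protected program and invoke indistinguishability of $\QObf$ on the two functionally-equivalent pseudo-deterministic implementations of $f$). Your discussion of the model-hygiene and definitional-alignment caveats is apt and in fact more careful than the paper itself, which leaves these details entirely to the cited work.
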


We note that the prior work of \cite{DBLP:conf/crypto/AaronsonLLZZ21} also obtains results on general-purpose copy-protection by working in the classical oracle model, and we describe the advantages of our approach (combined with \cite{CG23}) in \cref{subsec:discussion}. 

Our techniques deviate significantly from previous works on quantum obfuscation. For example, while previous work \cite{BM22,BKNY23} builds from the idea of obfuscating the verifier for a classical verification of quantum computation scheme, we directly perform verifiable computation on an authenticated and encrypted input. Our construction is comfortingly reminiscent of Yao's classical garbled circuits construction~\cite{DBLP:conf/focs/Yao86} (in fact, the free-XOR variant~\cite{DBLP:conf/icalp/KolesnikovS08}) which has had a tremendous array of applications in cryptography. We give an overview of our techniques in \cref{sec:tech-overview}, and provide more discussion on how they compare to prior work in \cref{subsec:discussion}.

\paragraph{On the classical oracle model.} Before proceeding, a word is in order about achieving results in the classical oracle model. First, a result in this model can be interpreted as reducing the problem of obfuscating quantum functionalities to classical functionalities. Indeed, instantiating the (efficiently implementable) oracle by its indistinguishability obfuscation (iO) gives us a heuristically secure construction in the plain model. Furthermore, by the ``best-possible'' security guarantee of iO, if there exists a secure implementation of the oracle in the plain model, iO is one such.  

Secondly, results in the classical oracle model have historically been important harbingers of subsequent research that showed analogous results without the aid of an oracle. For example, quantum money~\cite{STOC:AarChr12} was first achieved in a classical oracle model before it was de-oracle-ized~\cite{Zha21}; and copy-protection for unlearnable programs was first achieved in a quantum oracle model~\cite{DBLP:conf/coco/Aaronson09} before it was achieved in a classical oracle model~\cite{DBLP:conf/crypto/AaronsonLLZZ21} and later without oracles, for certain classes of functionalities (e.g. \cite{10.1007/978-3-030-84242-0_20,coladangelo2022quantum,DBLP:conf/tcc/LiuLQZ22,CG23}).

\subsection{Related Work}
\label{sec:related}

Alagic and Fefferman~\cite{AF16} presented definitions for obfuscating quantum circuits (and obfuscating classical circuits using quantum states). Their results, as well as those of Alagic, Brakerski, Dulek and Schaffner~\cite{ABDS21} are negative. The latter, for example, shows that virtual black-box (VBB) obfuscation of {\em classical circuits}, even with the aid of quantum information, is impossible. 

Broadbent and Kazmi~\cite{DBLP:conf/latincrypt/BroadbentK21} showed how to obfuscate quantum circuits that have only a few non-Clifford gates. In their construction, the size of the obfuscated circuit blows up exponentially with the number of $T$ gates. 
Bartusek and Malavolta~\cite{BM22} showed how to achieve indistinguishability obfuscation of null quantum circuits and, most related to our work, Bartusek, Kitagawa, Nishimaki and Yamakawa~\cite{BKNY23} showed how to obfuscate general pseudo-deterministic quantum circuits (with a classical description) in the classical oracle model.  Finally, Coladangelo and Gunn~\cite{CG23}, in a concurrent work, define the notion of quantum state (indistinguishability) obfuscation, show applications of this notion to software copy-protection, and construct a quantum state indistinguishability obfuscator in the {\em quantum oracle model}. We summarize these results, along with our contribution, in \cref{table:summary}.

\begin{table}
\scriptsize
\begin{center}
    \begin{tabular}{|c|c|c|c|c|c|c|c|}
        \hline
        Work & \makecell{Obfuscator \\ input} & \makecell{Obfuscator \\ output} & \makecell{Program \\ input} & \makecell{Program \\ output} & Program class & \makecell{Assumption/ \\ model} & Result \\
        \hhline{|=|=|=|=|=|=|=|=|}
        \cite{DBLP:conf/latincrypt/BroadbentK21} & Classical & Quantum* & Quantum & Quantum & \makecell{Unitaries w/ logarithmically \\ many non-Clifford gates} & \makecell{iO for classical \\ circuits} & iO\\
        \hline
        \cite{BM22} & Classical & Classical & Quantum* & Classical & Null circuits & \makecell{Classical oracle \\ model + LWE} & iO \\
        \hline 
        \cite{BKNY23} & Classical & Quantum & Classical & Classical & \makecell{(Pseudo)-Deterministic \\ circuits} & \makecell{Classical oracle \\ model + LWE} & VBB \\
        \hline
        \cite{CG23} & Quantum & Quantum & Classical & Classical & Deterministic circuits & \makecell{Quantum oracle \\ model} & iO \\
        \hline
        This work & Quantum & Quantum & Classical & Classical & \makecell{(Pseudo)-Deterministic \\ circuits} &  \makecell{Classical oracle \\ model} & Ideal\\
        \hline
    \end{tabular}
    \caption{Summary of work on quantum obfuscation. In \cite{DBLP:conf/latincrypt/BroadbentK21}, the obfuscator outputs a quantum state that can only be used to evaluate the program \emph{on one input}, and then is potentially destroyed. In \cite{BM22}, the obfuscated program can be run on quantum inputs, but requires \emph{multiple copies} of the quantum input. The last column refers to the definition of obfuscation that is actually achieved in each work: iO is indistinguishability obfuscation, and VBB is virtual black-box obfuscation. We note that VBB and the stronger notion of ideal obfuscation are morally very similar, and \cite{BKNY23} could also likely be shown to be an ideal obfuscator. Finally, we note that while achieving the notion of VBB/ideal obfuscation is only possible in the oracle model, the results that are in the \emph{classical} oracle model yield heuristic candidates for iO in the plain model.
    }
    \label{table:summary}
\end{center}
\end{table}
\section{Technical Overview}\label{sec:tech-overview}

During this overview, we'll slowly build up to our construction of quantum state obfuscation, highlighting the main ideas along the way. But first, it may be useful to convey a high level feel for the construction. To obfuscate a quantum program $(\ket{\psi},C)$ that implements the computation $x \to Q(x)$, we first encode the state $$\ket*{\widetilde{\psi}} \gets \Enc_k\left(\ket{\psi}\right)$$ 
using a novel quantum authentication scheme (QAS) that we design with particular properties in mind. Next, we compile $C$ into what we call a "linear + measurement", or $\LM$, quantum program.  Such programs consist solely of operations that can be performed on data authenticated with our QAS. Finally, we prepare a sequence of classical oracles $\sF_1,\dots,\sF_t,\sG$, where $t$ is the number of "measurement layers" in the $\LM$ quantum program. The oracles $\sF_1,\dots,\sF_t$ are designed to help the evaluator implement an encrypted sequence of adaptive measurements. They output random labels encoding the measurement results, which are then fed into downstream oracles. The oracle $\sG$ is designed to return the output $Q(x)$ if the evaluation was performed honestly. The final obfuscation then consists of the state $\ket*{\widetilde{\psi}}$ and the oracles $\sF_1,\dots,\sF_t,\sG$. We will describe each of these pieces and how they fit together in more detail.

We begin this technical overview by presenting our quantum authentication scheme (\cref{subsec:overview-authentication}). Next, we discuss the notion of $\LM$ quantum programs, and describe a compiler that writes any quantum program as an $\LM$ quantum program (\cref{subsec:overview-LM}). Next, we show how to use these building blocks to construct a garbling scheme and then a full-fledged obfuscation scheme for quantum computation (\cref{subsec:overview-GC}), and mention a couple of key ideas behind proving security. We defer a more detailed proof overview to \cref{subsec:proof-overview}. We conclude with a discussion and open problems (\cref{subsec:discussion}). 

\subsection{Quantum Authentication from Random Subspaces}\label{subsec:overview-authentication} 

\paragraph{Encode-encrypt authentication.} Our starting point is the notion of an ``encode-encrypt'' authentication scheme, as defined by Broadbent, Gutoski and Stebila~\cite{C:BroGutSte13}. Such schemes are parameterized by a family of CSS codes $\mathscr{C}$, and operate as follows. To encode a qubit $\ket{\psi}$, sample a random code $C \gets \mathscr{C}$ from the family, sample a quantum one-time pad key $(x,z)$, and output the ``encoded-and-encrypted'' state $X^x Z^x C\ket{\psi}$. As discussed by \cite{C:BroGutSte13}, various choices of the code family give rise to popular quantum authentication schemes (QAS), e.g., the polynomial scheme used for multi-party quantum computation \cite{BCGHS} and verifiable delegation \cite{aharonov2017interactive}, and the trap code used for quantum one-time programs \cite{C:BroGutSte13} and zero-knowledge proofs for QMA \cite{doi:10.1137/18M1193530}.

\paragraph{Our instantiation.} A crucial aspect of obfuscation that does not arise in these other settings is the need to preserve security when we allow the adversary to access the verifier of the authentication scheme an {\em a-priori unbounded} number of times. Indeed, the oracles released as part of our obfuscation scheme include subroutines that perform checks on authenticated data, and hence implicitly give the adversary reusable access to the verifier. This requirement of ``public-verifiability'' is not always satisfied by encode-encrypt schemes: for example, the trap code is completely insecure in this setting, as it is possible to learn the location of the traps via repeated queries to the verifier.

While certain flavors of public-verifiability have been considered previously in the quantum authentication literature (e.g. \cite{dulek2018quantum,C:GarYueZha17}), we find that a particularly simple instantiation of the encode-encrypt framework suffices for us: sample a random subspace $S$, a random shift $\Delta$, and use the CSS code defined by the isometry $E_{S,\Delta}$ that maps $\ket{0} \to \ket{S}, \ket{1} \to \ket{S+\Delta}$.\footnote{Here, we use the standard subspace state notation: for an (affine) subspace $S$, $\ket{S} \propto \sum_{s \in S}\ket{s}$.} That is, to encode an $n$-qubit state $\ket{\psi}$, sample a key $k=(S,\Delta,x,z)$ where $S$ is a $\secp$-dimensional subspace of $\bbF_2^{2\secp+1}$, $\Delta \in \bbF_2^{2\secp+1} \setminus S$, and $x,z \in \{0,1\}^{n \cdot (2\secp+1)}$, and output \[\ket*{\widetilde{\psi}} = X^x Z^z E_{S,\Delta}^{\otimes n}\ket{\psi} \coloneqq \Enc_k(\ket{\psi}).\]

Beyond satisfying a natural notion of public-verifiability (which will be discussed below), the resulting QAS satisfies the following desirable properties: (i) linear-homomorphism, and (ii) classically-decodable standard and Hadamard basis measurements (that is, a classical machine can decode the results of standard and Hadamard basis measurements performed on authenticated data).  We note that these latter properties are in fact endemic to encode-encrypt schemes (see discussion in \cite{C:BroGutSte13}), but we confirm them here for completeness.

\paragraph{Useful properties.} First, since $S$ is a subspace, one can confirm that CNOTs are transversal for this scheme as long as the same $(S,\Delta)$ is used to encode each qubit. That is, applying $2\secp+1$ CNOT gates qubit-wise to an encoding of $b_1$ and $b_2$ yields 
\[X^{x_1,x_2}Z^{z_1,z_2}\ket{S+ b_1 \cdot \Delta}\ket{S + b_2 \cdot \Delta} \to X^{x_1,x_1 \oplus x_2}Z^{z_1 \oplus z_2,z_2}\ket{S + b_1 \cdot \Delta}\ket{S + (b_1 \oplus b_2) \cdot \Delta}~,\]
which is indeed an encoding of the output of the CNOT operation using quantum one-time pad keys $(x_1,z_1\oplus z_2)$ and $(x_1 \oplus x_2, z_2)$. Thus, an evaluator can apply any sequence of CNOT gates, which we refer to as a "linear"\footnote{Of course, all quantum gates are linear with respect to the ambient Hilbert space of exponential dimension. Here, linearity specifically refers to the fact that any sequence of CNOT gates applies a linear function over $\bbF_2$ to each standard basis vector.} function, to authenticated data, as long as the decoder performs the analogous updates to their one-time pad keys.

Next, we note that standard basis measurements of an encoded qubit $X^xZ^z(\alpha\ket{S} + \beta\ket{S+\Delta})$ can be decoded {\em classically}. Indeed, any vector in $S+x$ can be interpreted as a 0, while any vector in $S+\Delta+x$ can be interpreted as a 1.

Finally, we check that the results of a Hadamard basis measurement can also be decoded {\em classically}. To do so, we'll define the "primal" codespace $S_\Delta \coloneqq S \cup (S+\Delta)$, and define the "dual" codespace to consist of  $\widehat{S} \coloneqq S_\Delta^\bot$ and  $\widehat{S} + \widehat{\Delta}$, where $\widehat{\Delta}$ is such that  $$\widehat{S}_{\widehat{\Delta}} \coloneqq \widehat{S} \cup (\widehat{S} + \widehat{\Delta}) = S^\bot~.$$ 
Then, it is not hard to check and confirm that \[H^{\otimes (2\secp+1)}X^xZ^z\left(\alpha\ket{S} + \beta\ket{S+\Delta}\right) = X^z Z^x\left(\frac{\alpha+\beta}{\sqrt{2}}\ket*{\widehat{S}} + \frac{\alpha-\beta}{\sqrt{2}}\ket*{\widehat{S}+\widehat{\Delta}}\right).\] Thus, any vector in $\widehat{S} + z$ can be interpreted as a 0 measurement result in the Hadamard basis, and any vector in $\widehat{S} + \widehat{\Delta}+z$ can be interpreted as a 1 measurement result in the Hadamard basis.

\paragraph{Reusable security.} Now we turn to the security of our scheme. Intuitively, we want to capture the fact that no adversary can successfully tamper with authenticated data, even given the ability to verify authenticated data. In more detail, given an authentication key $k = (S,\Delta,x,z)$, where $x = (x_1,\dots,x_n), z= (z_1,\dots,z_n)$, we define the following classical functionalities, which are parameterized by the key $k$ and a choice of bases $\theta \in \{0,1\}^n$.

\begin{itemize}
    \item $\Dec_{k,\theta}(\widetilde{v})$: On input a tuple of vectors $\widetilde{v}$ parsed as $(\widetilde{v}_1,\dots,\widetilde{v}_n)$, the decoding algorithm defines $v \in \{0,1\}^n$ as follows. For each $i \in [n]$:
    \[\text{if } \theta_i = 0: v_i = \begin{cases}0 \text{ if } \widetilde{v}_i \in S + x_i \\ 1 \text{ if } \widetilde{v}_i \in S + \Delta+ x_i \\ \bot \text{ otherwise}\end{cases} ~~~~~ \text{if } \theta_i = 1: v_i = \begin{cases}0 \text{ if } \widetilde{v}_i \in \widehat{S}+ z_i \\ 1 \text{ if } \widetilde{v}_i \in \widehat{S} + \widehat{\Delta}+ z_i \\ \bot \text{ otherwise}\end{cases}.\] If $v_i = \bot$ for some $i$, then output $\bot$, and otherwise output $v$.

    \item $\Ver_{k,\theta}(\widetilde{v})$: On input a tuple of vectors $\widetilde{v}$ parsed as $(\widetilde{v}_1,\dots,\widetilde{v}_n)$, the verification algorithm defines $v \in \{\top,\bot\}^n$ as follows.
    \[\text{if } \theta_i = 0: v_i = \begin{cases}\top \text{ if } \widetilde{v}_i \in S_\Delta + x_i \\ \bot \text{ otherwise}\end{cases} ~~~~~ \text{if } \theta_i = 1: v_i = \begin{cases}\top \text{ if } \widetilde{v}_i \in \widehat{S}_{\widehat{\Delta}} + z_i \\ \bot \text{ otherwise}\end{cases}.\] If $v_i = \bot$ for some $i$, then output $\bot$, and otherwise output $\top$.

\end{itemize}

That is, the verification algorithm just checks whether its inputs lie in the primal (resp. dual) codespace, while the decoding algorithm additionally computes the logical bits encoded by its inputs. We show that for any state $\ket{\psi}$, sequence of measurement bases $\theta \in \{0,1\}^n$, and adversarial measurement $\mathsf{Adv}$ that samples

\[\widetilde{v} \gets \Adv^{\Ver_{k,\cdot}(\cdot)}\left(X^x Z^z E_{S,\Delta}^{\otimes n}\ket{\psi}\right),\] the decoded value $v \gets \Dec_{k,\theta}(\widetilde{v})$ is either $\bot$, or its distribution is very close in total variation distance to the distribution that results from directly measuring $\ket{\psi}$ in the bases $\theta$. 

In fact, we also consider the possibility that the adversary is supposed to homomorphically apply some  sequence of CNOT gates (that is, a linear function $L$) to the authenticated data before measuring. Thus, in full generality we also parameterize the decoding $\Dec_{k,\theta,L}$ and verification $\Ver_{k,\theta,L}$ algorithms by a linear function $L$, which determines an updated sequence of one-time pad keys $(x_{L,1},\dots,x_{L,n}), (z_{L,1},\dots,z_{L,n})$ to be used in the decoding and verification.

\paragraph{A word on the proof of security.} Our proof combines two useful tricks from the literature: superspace sampling (\cite{C:Zhandry19,10.1007/978-3-030-84242-0_20}) and the Pauli twirl \cite{aharonov2017interactive}. Briefly, our first step is to sample random (say, $(3\secp/2 +1)$-dimensional) superspaces $R \supset S_\Delta$, $\widehat{R} \supset \widehat{S}_{\widehat{\Delta}}$ and use $(R,\widehat{R})$ in lieu of $(S_\Delta,\widehat{S}_{\widehat{\Delta}})$ in the definition of the oracle $\Ver_{k,\cdot,\cdot}(\cdot)$. Since $R$ and $\widehat{R}$ are random and small enough compared to the ambient space, the adversary cannot notice this change except with negligible probability. Next, we imagine sampling each one-time pad vector in two parts: for $x_i$ we sample an $x_{i,R} \gets R$ and an $x_{i,\co(R)} \gets \co(R)$, where $\co(R)$ is a set of coset representatives of $R$, and define $x_i = x_{i,R} + x_{i,\co(R)}$, and for $z_i$ we sample a $z_{i,\widehat{R}} \gets \widehat{R}$ and a $z_{i,\co(\widehat{R})} \gets \co(\widehat{R})$, and define $z_i = z_{i,\widehat{R}} + z_{i,\co(\widehat{R})}$. Finally, we consider the following experiment:
\begin{itemize}
    \item Sample $R,\widehat{R},\{x_{i,\co(R)},z_{i,\co(\widehat{R})}\}_{i \in [n]}$ and give this information to the adversary in the clear. Note that this is now sufficient to implement the oracle $\Ver_{k,\cdot,\cdot}(\cdot)$.
    \item Sample random $S,\Delta$ such that $\widehat{R}^\bot \subset S \subset S_\Delta \subset R$ and $\{x_{i,R},z_{i,\widehat{R}}\}_{i \in [n]}$ to complete the description of the authentication key $(S,\Delta,x,z)$. Send $X^xZ^zE_{S,\Delta}^{\otimes n}\ket{\psi}$ to the adversary, who mounts its attack.
\end{itemize}

At this point, we use the Pauli twirl over the space in between $\widehat{R}^\bot$ and $R$ to show that any adversarial operation can be decomposed into a fixed linear combination of Pauli attacks. To conclude, we use the randomness of $S,\Delta$ to show that any fixed Pauli attack will either be rejected with overwhelming probability or act as the identity on the encoded qubit, which completes the proof. See \cref{sec:authentication} for the full details of our definitions, construction, and security proofs.

\subsection{Linear + Measurement Quantum Programs}\label{subsec:overview-LM}

Next, we discuss our quantum program compiler. We start with any quantum circuit written using the $\{\CNOT,H,T\}$ universal gate set, where $H$ is the Hadamard gate, and $T$ applies a phase of $e^{i\pi/4}$. With the help of magic states, we compile the circuit into an alternating sequence of layers of CNOT gates (i.e.\ linear functions) and partial standard and Hadamard basis measurements, which we refer to as "ZX measurements".\footnote{Here, ZX is not meant to denote the composition of the Z and X operators, rather, it is meant as a shorthand for ``standard + Hadamard basis''.} We refer to the resulting program as a linear + measurement ($\LM$) quantum program. We note that the measurements are in fact partial in two aspects: (i) they may only operate on a subset of the qubits, and (ii) the measurement operators are projectors with rank potentially greater than 1. Furthermore, we allow the measurements to be adaptive, that is, their description may depend on previous measurement results (and the classical input to the computation).

More specifically, our goal will be to write each of the $H$ and $T$ gates as a sequence of CNOT gates, ZX measurements, and Pauli corrections derived from these measurement results. Then, the Pauli corrections can be commuted past future CNOT gates using the update rule $(x_1,z_1),(x_2,z_2) \to (x_1,z_1 \oplus z_2),(x_1 \oplus x_2,z_2)$, and incorporated into the description of future ZX measurements.

\paragraph{Handling the $H$ gate.} Following \cite{C:BroGutSte13}, we prepare the two-qubit magic state $$\ket{\phi_H} \propto \ket{00} + \ket{01} + \ket{10} - \ket{11},$$ and perform the Hadamard gate as shown in \cref{fig:H} (\cpageref{fig:H}), using one CNOT gate and Pauli corrections derived from a standard basis and a Hadamard basis measurement. As remarked in \cite{C:BroGutSte13}, it might seem strange at first that we are replacing a Hadamard gate with a circuit that nonetheless performs a Hadamard basis measurement. However, in our setting this does represent real progress: our authentication scheme does not support applying Hadamard gates directly to authenticated data,\footnote{At least, while preserving its linear-homomorphism. Applying a Hadamard gate transversally to authenticated data would result in an encoding with respect to the dual subspace, which would no longer support transversal CNOTs with data encoded using the primal subspace.} but does support the decoding of Hadamard basis measurements.

\paragraph{Handling the $T$ gate.} As we show on the bottom left of \cref{fig:T} (\cpageref{fig:T}), the $T$ gate can be implemented using the two magic states $$ \ket{\phi_T} \propto \ket{0} + e^{i\pi /4} \ket{1} \hspace{.2in} \mbox{and} \hspace{.2in} \ket{\phi_{PX}} \propto i\ket{0} + \ket{1}, $$ a CNOT gate, a \emph{classically controlled CNOT gate}, and Pauli corrections. 

Unfortunately, controlled CNOT is a "multi-linear" operation that we don't know how to directly implement on data authenticated with our authentication scheme. Therefore, taking inspiration from the "encrypted CNOT" operation introduced in \cite{FOCS:Mahadev18b},\footnote{Those familiar with \cite{FOCS:Mahadev18b}'s encrypted CNOT may notice the parallels: in \cite{FOCS:Mahadev18b}'s setting, these two measurements correspond to the two types of ``claws'' generated by the lattice-based encryption of $c$.} we replace the controlled CNOT operation with a \emph{projective measurement} $\Gamma_c$ controlled on the classical control bit $c$, where

 \begin{itemize}
        \item $\Gamma_0 = \{\ketbra{00}{00} + \ketbra{10}{10}, \ketbra{01}{01} + \ketbra{11}{11}\}$. That is, it measures its second input in the standard basis and has no effect on its first input.
        \item $\Gamma_1 = \{\ketbra{00}{00} + \ketbra{11}{11}, \ketbra{01}{01} + \ketbra{10}{10}\}$. That is, it measures the XOR of its two inputs, partially collapsing both.
\end{itemize}

Note that $\Gamma_1$ can roughly be seen as applying CNOT ``out of place'', writing the result to a third register, and then measuring it. We also remark that both of these measurements are diagonal in the standard basis, and thus can be performed on our authenticated data.

In the implementation of the $T$ gate shown on the bottom left of \cref{fig:T}, the $c$-CNOT operation is applied to the two magic states, followed by a measurement of the second magic state wire in the standard basis, and finally a $Z$ correction to the first magic state wire conditioned on both $c$ and the measurement outcome. One can show that the result of these operations is identical to what is shown on the bottom right of \cref{fig:T}: measure $\Gamma_c$ on the two magic state wires, then measure the second magic state wire in the \emph{Hadamard} basis, and finally apply a $Z$ correction to the first magic state wire conditioned on both $c$ and the \emph{XOR} of the two measurement results. We make this precise in the proof of \cref{claim:T-gate}, showing that our $\Gamma_c$-based implementation of the $T$ gate works as expected.

\paragraph{Formalizing $\LM$ quantum programs.} By combining these observations, we are able to specify any quantum program with $t$ many $T$ gates using an $n$-qubit state $\ket{\psi}$ (which in particular includes all of the necessary magic state) along with a sequence \[L_1,M_{\theta_1,f_1^{(\cdot)}},\dots,L_t,M_{\theta_t,f_t^{(\cdot)}},L_{t+1},M_{\theta_{t+1},g^{(\cdot)}}\] where
\begin{itemize}
    \item Each $L_i$ is a sequence of CNOT gates. 
    \item Each $M_{\theta_i,f_i^{(\cdot)}}$  (and $M_{\theta_i,g^{(\cdot)}}$) describes a partial ZX measurement in the following way:
    \begin{itemize}
        \item $\theta_i \in \{0,1,\bot\}^n$ defines a partial set of measurement bases. We define $\MS_{i,0} \coloneqq \{j : \theta_{i,j} = 0\}$ to be the set of registers measured in the standard basis and $\MS_{i,1} \coloneqq \{j : \theta_{i,j} = 1\}$ to be the set of registers measured in the Hadamard basis, and define $\MS_i \coloneqq (\MS_{i,0},\MS_{i,1})$ to be the total set of registers on which the $i$'th measurement is performed.
        \item Each $f_i^{(\cdot)}$ is a function that assigns measurement outcomes to basis states. The superscript indicates that its description may depend on previously generated information, i.e. the classical input $x$ to the computation and previous measurement results.
        \item To be precise, $M_{\theta_i,f_i^{(\cdot)}}$ can be described by the following measurement operators:
        \[\left\{H^{\MS_{i,1}}\left(\sum_{m : f_i^{(\cdot)}\left(m_{\MS_i}\right) = y}\ketbra{m}{m}\right)H^{\MS_{i,1}}\right\}_y,\] where $H^{\MS_{i,1}}$ applies a Hadamard gate to each qubit in the set $\MS_{i,1}$, and $m_{\MS_i}$ is the substring of $m$ consisting of the indices in $\MS_i$.
    \end{itemize}
\end{itemize}
\noindent

Thus, we have written our quantum program as an alternating sequence of linear operations and partial ZX measurements. We formalize this notion of an "$\LM$ quantum program" in \cref{def:LM-circuit}, and provide an example diagram of an $\LM$ quantum program in \cref{fig:LM}. 

However, looking ahead, it will be convenient to apply our obfuscator not to a completely arbitrary $\LM$ quantum program, but rather to an $\LM$ quantum program that satisfies a particular structural property. This property is described in \cref{def:LM-properties}, and satisfied by $\LM$ quantum programs output by our compiler described above. In order to formalize such programs (and our obfuscator), it will be necessary to introduce some further notation. For the purpose of this technical overview, we will introduce the notation and show how it is applied to the concrete compiler described above, but defer further details and a formalization of the property given by \cref{def:LM-properties} to the body.

It may be helpful to refer to \cref{fig:T} (our implementation of the $T$ gate) and \cref{fig:LM} (the example $\LM$ quantum program) while reading what follows. We begin by specifying (disjoint) sets $V_1,\dots,V_{t+1}$ and (disjoint) sets $W_1,\dots,W_t$ with the following properties.

\begin{itemize}
    \item  $\MS_1 = (V_1,W_1), \MS_2 = (V_1,V_2,W_2), \dots, \MS_t = (V_1,\dots,V_t,W_t), \MS_{t+1} = (V_1,\dots,V_{t+1}) = [n].$
    \item $V_i$ are the set of registers that are \emph{fully} collapsed in the standard or Hadamard basis by the $i$'th measurement. Concretely, $V_i$ consists of the 3rd wire of the $(i-1)$'th $T$-gate circuit (\cref{fig:T}), 1st and 2nd wires of any $H$-gate circuit (\cref{fig:H}) in the $i$'th layer, and 1st wire of the $i$'th $T$-gate circuit (\cref{fig:T}).
    \item $W_i$ are the set of registers that are \emph{partially} collapsed by the $i$'th measurement. Concretely, $W_i$ consists of the two magic state wires used for the $i$'th $T$ gate circuit. Indeed, the $i$'th measurement applies the controlled measurement $\Gamma_{c_i}$ to these wires, which only partially collapses them.
\end{itemize}

\noindent Then, we are able to specify further details about the $f_i^{(\cdot)}$ and $g^{(\cdot)}$ measurements.

\begin{itemize}
    \item Each $f_i^{(\cdot)}$ takes as input some sequence $(v_1,\dots,v_i,w_i)$ and outputs $v_i$ (fully collapsing the $V_i$ registers) along with a bit $r_i$ (partially collapsing the $W_i$ registers).
    \item In order to compute the bit $r_i$, the function $f_i^{(\cdot)}$ first needs to compute the $i$'th control bit $c_i$, which may depend on the input $x$ and all previous measurement results $(v_1,\dots,v_{i},r_1,\dots,r_{i-1})$. While we are able to provide $f_i^{(\cdot)}$ with the values $v_1,\dots,v_{i-1}$ on registers $V_1,\dots,V_{i-1}$ (which have been collapsed by previous measurements), this is not the case for the bits $r_1,\dots,r_{i-1}$, since the $W_1,\dots,W_{i-1}$ registers may have been computed on since previous measurements. To handle this, we remember the previous results $r_1,\dots,r_{i-1}$, and paramaterize $f_i^{x,r_1,\dots,r_{i-1}}$ by the input $x$ and previously generated bits $r_1,\dots,r_{i-1}$. Thus, the actual measurements are specified \emph{adaptively} using the previously generated bits $r_1,\dots,r_{i-1}$.
    \item In a similar manner, the function $g^{x,r_1,\dots,r_t}$ is parameterized by the input $x$ and previously generated bits $r_1,\dots,r_t$. It takes as input some sequence $(v_1,\dots,v_{t+1})$ and instead of performing an intermediate measurement, it computes the final output $y = Q(x)$.
\end{itemize}

Finally, we have set up enough notation to start discussing our actual obfuscation construction, which follows. 

\subsection{Obfuscation Construction}\label{subsec:overview-GC} 

So far, we have discussed a method for authenticating quantum states $\ket*{\widetilde{\psi}} = \Enc_k(\ket{\psi})$ using key $k = (S,\Delta,x,z)$, and a method for writing any quantum program as
\[\ket{\psi},L_1,M_{\theta_1,f_1^{(\cdot)}},\dots,L_t,M_{\theta_t,f_t^{(\cdot)}},L_{t+1},M_{\theta_{t+1},g^{(\cdot)}}\] where $\ket{\psi}$ consists of a quantum state that was part of the description of the original program, as well as some magic states.

\paragraph{Garbling via encrypted measurements.} We will build up to our full obfuscation construction by first describing how to \emph{garble} quantum circuits using our approach. That is, we'll suppose that the evaluator is only interested in computing the output on a particular input $x$, and show how to design oracles $\sF_1[x],\dots,\sF_t[x],\sG[x]$ that, along with the authenticated state $\ket*{\widetilde{\psi}} = \Enc_k(\ket{\psi})$, allow the evaluator to perform the entire computation on top of authenticated data, and eventually obtain $Q(x)$ without learning anything else about the program's implementation.

The basic idea is to implement the measurement $M_{\theta_i,f_i^{(x,\cdot)}}$ on authenticated data using an oracle $\sF_i[x]$, that, instead of outputting the results $(v_i,r_i)$ in the clear, outputs the encoded version $\widetilde{v}_i$ of $v_i$ (that is, $\widetilde{v}_i$ is in the support of the authenticated state that encodes the logical string $v_i$) along with a random \emph{label} $\ell_i$ representing the bit $r_i$. We will always denote vectors that result from measuring authenticated states (but not decoding) with a tilde (e.g. $\widetilde{v}_i$). 

Roughly $\sF_i[x]$ will be implemented as follows. It takes as input vectors $\widetilde{v}_1,\dots,\widetilde{v}_i,\widetilde{w}_i$ from the support of authenticated states (obtained from authenticated wires $V_1,\dots,V_i,W_i$), as well as labels $\ell_1,\dots,\ell_{i-1}$ that encode the results $r_1,\dots,r_{i-1}$ of previous measurements. It first decodes its inputs, and then uses the decoded values to compute the next measurement results $(v_i,r_i)$. Finally, it outputs the encodings $(\widetilde{v}_i,\ell_i)$ where $\ell_i$ is a label for $r_i$ computed via a random oracle $H$. 

We will implement $\sG[x]$ in exactly the same way, except that it directly outputs the result $y$. Sketches of these oracles follow.\\

\noindent \underline{$\sF_i[x](\widetilde{v}_1,\dots,\widetilde{v}_i,\widetilde{w}_i,\ell_1,\dots,\ell_{i-1})$}
\begin{enumerate}
    \item $(v_1,\dots,v_i,w_i) \gets \Dec_{k,\theta_i,L_i\dots L_1}(\widetilde{v}_1,\dots,\widetilde{v}_i,\widetilde{w}_i)$.\footnote{Recall the description of the decoding oracle $\Dec$ from \cref{subsec:overview-authentication}. We additionally parameterize the oracle with a concatenation of the linear functions $L_i \dots L_1$, which determines the sequence of Pauli one-time-pad keys to be used during the decoding. } Abort if the output is $\bot$.
    \item For each $\iota \in [i-1]$, let \[\ell_{\iota,0} = H(\widetilde{v}_1,\dots,\widetilde{v}_{\iota},\ell_1,\dots,\ell_{\iota-1},0), ~~~ \ell_{\iota,1} = H(\widetilde{v}_1,\dots,\widetilde{v}_{\iota},\ell_1,\dots,\ell_{\iota-1},1),\] and let $r_\iota$ be the bit such that $\ell_\iota = \ell_{\iota,r_\iota}$, or abort if there is no such bit.
    \item Compute $(v_i,r_i) = f_i^{x,r_1,\dots,r_{i-1}}(v_1,\dots,v_i,w_i)$.
    \item Set $\ell_i \coloneqq H(\widetilde{v}_1,\dots,\widetilde{v}_i,\ell_1,\dots,\ell_{i-1},r_i),$ and output $(\widetilde{v}_i,\ell_i)$.
\end{enumerate} 

\noindent \underline{$\sG[x](\widetilde{v}_1,\dots,\widetilde{v}_{t+1},\ell_1,\dots,\ell_{t})$}
\begin{enumerate}
    \item $(v_1,\dots,v_{t+1}) \gets \Dec_{k,\theta_i,L_{t+1}\dots L_1}(\widetilde{v}_1,\dots,\widetilde{v}_{t+1})$. Abort if the output is $\bot$.
    \item For each $\iota \in [t]$, let \[\ell_{\iota,0} = H(\widetilde{v}_1,\dots,\widetilde{v}_{\iota},\ell_1,\dots,\ell_{\iota-1},0), ~~~ \ell_{\iota,1} = H(\widetilde{v}_1,\dots,\widetilde{v}_{\iota},\ell_1,\dots,\ell_{\iota-1},1),\] and let $r_\iota$ be the bit such that $\ell_\iota = \ell_{\iota,r_\iota}$, or abort if there is no such bit.
    \item Compute and output $y = g^{x,r_1,\dots,r_t}(v_1,\dots,v_{t+1})$.
\end{enumerate}

Proving the security of this garbled program consists of two main steps: (1) a ``soundness'' argument establishing that no adversary, given $\ket*{\widetilde{\psi}}$ and oracle access to $\sF_1[x],\dots,\sF_t[x]$ should be able to output classical strings $(\widetilde{v}_1,\dots,\widetilde{v}_{t+1},\ell_1,\dots,\ell_t)$ such that $\sG[x](\widetilde{v}_1,\dots,\widetilde{v}_{t+1},\ell_1,\dots,\ell_t) \notin \{Q(x),\bot\}$, and (2) a ``simulation'' argument establishing that the $\sF_1[x],\dots,\sF_t[x]$ oracles can be simulated using a verification oracle $\Ver_{k,\cdot,\cdot}(\cdot)$ for the authentication scheme \emph{instead of} the decoding functionality $\Dec_{k,\cdot,\cdot}(\cdot)$. Indeed, a common theme throughout our proof strategy is understanding how we can replace $\Dec_{k,\cdot,\cdot}(\cdot)$ with $\Ver_{k,\cdot,\cdot}(\cdot)$ so that we can then appeal to the security of the authentication scheme. Further discussion on these two steps can be found in our proof intuition section, \cref{subsec:proof-overview}. Here, we just mention that the main idea for the soundness argument is an inductive strategy, where we perform the first measurement and appeal to soundness of a garbled program with one fewer measurement layer.

\paragraph{From garbling to obfuscation via signature tokens.} To complete our construction of full-fledged obfuscation, it remains to show how to grant the evaluator the ability to execute the circuit on \emph{any} input $x$ of its choice, without risking any other leakage on the description of the program. A natural idea is to re-define $\sF_1,\dots,\sF_t,\sG$ so that they additionally take $x$ as input, and include $x$ in the hashes $H(x,\widetilde{v}_1,\dots,\widetilde{v}_i,\ell_1,\dots,\ell_{i-1},r_i)$ that define the output labels. We intuitively want to sample different output labels for each $x$ so that the resulting obfuscation scheme can roughly be seen as a ``concatenation'' of \emph{independently sampled} garbling schemes for each $x$ (that share the same initial authenticated state). However, it turns out that this is not yet enough to ensure security.

Indeed, nothing is preventing the adversary from applying a type of ``mixed input'' attack, where they evaluate honestly on an input $x$, but at some point insert a measurement implemented by the oracle $\sF_i(x',\cdot)$ on some input $x' \neq x$. That is, at layer $i$, the adversary would first implement $\sF_i(x',\cdot)$ (and ignore the output labels) to collapse the state in some way before continuing with their honest evaluation procedure using $\sF_i(x,\cdot)$. Unfortunately, this rogue call to $\sF_i(x',\cdot)$ wouldn't destroy the current state enough to cause the remaining oracle calls to $\sF_i(x,\cdot),\dots,\sF_t(x,\cdot),\sG(x,\cdot)$ to abort, but \emph{might} collapse the state in a manner inconsistent with an honest evaluation on input $x$, eventually allowing the adversary to break the ``soundness'' of the scheme by finding an input to $\sG(x,\cdot)$ that results in an output $y \neq Q(x)$.

Taking inspiration from \cite{BKNY23} who faced a similar issue, we solve our problem via the use of \emph{signature tokens} \cite{arxiv:BenSat16}. This quantum cryptographic primitive consists of a quantum signing key $\ket{\sk}$ that may be used to produce a classical signature $\sigma_x$ on any \emph{single} message $x$ but never \emph{two} signatures $\sigma_x,\sigma_{x'}$ on two different messages $x,x'$ simultaneously. We include a quantum signing key $\ket{\sk}$ as part of our obfuscation construction, and re-define the oracles $\sF_1,\dots,\sF_t,\sG$ to take $x$ \emph{and} a signature $\sigma_x$ as input, abort if the signature is invalid, and otherwise include both in the hashes $H(x,\sigma_x,\widetilde{v}_1,\dots,\widetilde{v}_i,\ell_1,\dots,\ell_{i-1},r_i)$ that define the output labels.

Intuitively, this prevents the above attack. Once the adversary has begun an honest evaluation on some input $x$, it must ``know'' some valid signature $\sigma_x$, preventing it from querying the oracles on any input that starts with $(x',\sigma_{x'})$. That is, if it actually want to evaluate on $x'$, it must uncompute everything it has computed so far to return to $\ket{\sk}$ before it can produce a signature $\sigma_{x'}$ and begin evaluating on $x'$. 

We provide more details about this approach in the proof intuition section, \cref{subsec:proof-overview}. Of note is the fact that we crucially use a \emph{purified} random oracle \cite{C:Zhandry19} in order to extract a signature token on $x$ from any adversary who has begun evaluating the obfuscated program on $x$. Formalizing this approach is one of trickier aspects of the proof, and we describe a toy problem in \cref{subsec:proof-overview} that may provide some intuition for the actual proof.

\subsection{Discussion and Open Problems}\label{subsec:discussion}

\paragraph{Comparison with \cite{DBLP:conf/crypto/AaronsonLLZZ21}.} In \cref{sec:intro}, we described an application of our construction to ``best-possible'' copy-protection, a notion recently introduced by \cite{CG23}. However, it has already been shown by \cite{DBLP:conf/crypto/AaronsonLLZZ21} that copy-protection for \emph{all unlearnable functions} exists in the classical oracle model. So what is the advantage of our approach? 

The catch with \cite{DBLP:conf/crypto/AaronsonLLZZ21} is that it is also known that copy-protection for all unlearnable functions is \emph{unachievable} in the plain model \cite{10.1007/978-3-030-77886-6_17}. Thus, one cannot expect to instantiate \cite{DBLP:conf/crypto/AaronsonLLZZ21}'s construction in the plain model with, say, an indistinguishably obfuscator, and conjecture that their notion of security still holds. Moreover, it is unclear (to us) how one would prove that \cite{DBLP:conf/crypto/AaronsonLLZZ21} or any other approach is a ``best-possible'' copy-protector without going through the intermediate notion of quantum state indistinguishability obfuscation.

On the other hand, there is no known impossibility for quantum state indistinguishability obfuscation in the plain model. Thus, one can conjecture that our approach (or some variant of it) will eventually yield a quantum state indistinguishability obfuscator in the plain model, which, by the results of \cite{CG23}, would immediately yield best-possible copy-protection for \emph{any} functionality. Hence, we view our results (combined with \cite{CG23}) as marking significant progress towards the long-standing goal of achieving general-purpose copy-protection from concrete cryptographic assumptions.

\paragraph{Comparison with \cite{BKNY23}.} As mentioned in \cref{sec:intro}, our techniques differ significantly from prior work on quantum obfuscation. To be more concrete, let's take the example of \cite{BKNY23}, who constructed obfuscation for classically-described pseudo-deterministic quantum functionalities. 

At a high level, their approach was to encrypt the description of the circuit $Q \to \Enc(Q)$ using a blind quantum computation protocol (e.g. \cite{FOCS:Mahadev18b}) and prepare a classical oracle that is able to verify the result of computing $\Enc(Q) \to \Enc(Q(x))$, and decrypt the result if verification passes. But if the program $\ket{Q}$ is described quantumly, this approach completely breaks down, since the classical oracle cannot even \emph{interpret} the \emph{quantum} statement $\Enc(\ket{Q}) \to \Enc(Q(x))$ to be verified.

Our approach is to not only encrypt $\ket{Q}$ but to \emph{authenticate} it as well. Then, we design oracles that are able to verify the blind computation \emph{along the way}, as opposed to \emph{all at once} at the end of the computation. That is, while both approaches make fundamental use of both blind and verifiable quantum computation, they differ significantly in execution.

Finally, it is worth mentioning that the \cite{BKNY23} approach may prove to be advantageous in the case where we are only interested in obfuscating quantum circuits with a classical description, and desire to produce an obfuscated program that \emph{also} has a classical description. Given the fact that quantum state obfuscation is a best-possible copy-protector, it is actually \emph{inherent} that our obfuscated program includes (unclonable) quantum states, even when applied to a circuit with classical description. On the other hand, while the \cite{BKNY23} approach as currently implemented also produces obfuscated programs with a quantum description, it is reasonable to hope that it could be de-quantized. Indeed, the only reason that \cite{BKNY23}'s construction includes a quantum state is that two of their building blocks - signature tokens and Pauli functional commitments - are constructed using quantum keys. However, it is plausible that both of these building blocks could be instantiated with a classical key (e.g.\ using the ideas of \cite{10.1145/3357713.3384304}, though proving security may be difficult). We leave further exploration of these ideas to future work.

\paragraph{Open problems.} We conclude this overview by mentioning a couple of natural open problems that are motivated by this work. Perhaps most obviously, can we obtain provable security in the plain model from quantum-secure indistinguishability obfuscation (or a different concrete and plausible assumption on classical obfuscators)? Besides representing major progress in our understanding of quantum obfuscation, answering this question would have significant implications for general-purpose software copy-protection (due to \cite{CG23}), another area where positive results are difficult to come by.

Next, can we generalize beyond pseudo-deterministic programs? As a first step, can we obfuscate \emph{sampling} circuits, i.e.\ those with classical inputs that produce a \emph{distribution} over classical outputs? Interestingly, while \cite{BKNY23}'s construction does not even satisfy correctness for (even classically-described) sampling circuits, our approach can plausibly be applied to sampling circuits, although it remains open to obtain any provable guarantees. Finally, the feasibility of obfuscating general-purpose circuits with quantum inputs and/or quantum outputs is still wide open, and remains a fascinating direction for future exploration, with potential applications beyond cryptography, e.g.\ to quantum complexity theory.

\section{Preliminaries}

Let $\secp$ denote the security parameter. We write $\negl(\cdot)$ to denote any \emph{negligible} function, which is a function $f$ such that for every constant $c \in \mathbb{N}$ there exists $N \in \mathbb{N}$ such that for all $n > N$, $f(n) < n^{-c}$. We write $\nonnegl(\cdot)$ to denote any function $f$ that is not negligible. That is, there exists a constant $c$ such that for infinitely many $n$, $f(n) \geq n^{-c}$. Finally, we write $\poly(\cdot)$ to denote any polynomial function $f$. That is, there exists a constant $c$ such that for all $n \in \bbN$, $f(n) \leq n^{-c}$. For two probability distributions $D_0,D_1$ with classical support $S$, let \[\mathsf{TV}\left(D_0,D_1\right) \coloneqq \sum_{x \in S}|D_0(x) - D_1(x)|\] denote the total variation distance. For a set $S$, we let $x \gets S$ denote sampling a uniformly random element $x$ from $S$. If $D$ is a distribution, we let $x \gets D$ denote sampling from $D$, and let \[\left\{x : x \gets D_0\right\} \approx_\epsilon \left\{x : x \gets D_1\right\}\] denote that $\TV(D_0,D_1) \leq \epsilon$. Finally, we denote a linear combination of distributions by 
\[(1-\delta)\{x : x \gets D_0\} + \delta \{x : x \gets D_1\},\] meaning that with probability $1-\delta$, sample from $D_0$ and with probability $\delta$, sample from $D_1$.

\subsection{Quantum Background}

An $n$-qubit register $\regX$ is a named Hilbert space $\bbC^{2^n}$. A pure quantum state on register $\regX$ is a unit vector $\ket{\psi}^{\regX} \in \bbC^{2^n}$. A mixed state on register $\regX$ is described by a density matrix $\rho^{\regX} \in \bbC^{2^n \times 2^n}$, which is a positive semi-definite Hermitian operator with trace 1. 

A \emph{quantum operation} $F$ is a completely-positive trace-preserving (CPTP) map from a register $\regX$ to a register $\regY$, which in general may have different dimensions. That is, on input a density matrix $\rho^{\regX}$, the operation $F$ produces $F(\rho^{\regX}) = \tau^{\regY}$ a mixed state on register $\regY$. A \emph{unitary} $U: \regX \to \regX$ is a special case of a quantum operation that satisfies $U^\dagger U = U U^\dagger = \cI^{\regX}$, where $\cI^{\regX}$ is the identity matrix on register $\regX$. A \emph{projector} $\Pi$ is a Hermitian operator such that $\Pi^2 = \Pi$, and a \emph{projective measurement} is a collection of projectors $\{\Pi_i\}_i$ such that $\sum_i \Pi_i = \cI$. Throughout this work, we will often write an expression like $\Pi\ket{\psi}$, where $\ket{\psi}$ has been defined on some multiple registers, say $\regX$, $\regY$, and $\regZ$, and $\Pi$ has only been defined on a subset of these registers, say $\regY$. In this case, we technically mean $(\cI^\regX \otimes \Pi \otimes \cI^\regZ)\ket{\psi}$, but we drop the identity matrices to reduce notational clutter. 

A family of quantum circuits is in general a sequence of quantum operations $\{C_\secp\}_{\secp \in \bbN}$, parameterized by the security parameter $\secp$. We say that the family is \emph{quantum polynomial time} (QPT) if $C_\secp$ can be implemented with a $\poly(\secp)$-size quantum circuit. A family of \emph{oracle-aided} quantum circuits $\{C^\sF_\secp\}_{\secp \in \bbN}$ has access to an oracle $\sF: \{0,1\}^* \to \{0,1\}^*$ that implements some deterministic classical map. That is, $C_\secp$ can apply a unitary that maps $\ket{x}\ket{y} \to \ket{x}\ket{y \oplus \sF(x)}$. We say that the family is \emph{quantum polynomial query} (QPQ) if $C_\secp$ only makes $\poly(\secp)$-many queries to $\sF$, but is otherwise computationally unbounded.

Let $\Tr$ denote the trace operator. For registers $\regX,\regY$, the \emph{partial trace} $\Tr^{\regY}$ is the unique operation from $\regX,\regY$ to $\regX$ such that for all $(\rho,\tau)^{\regX,\regY}$, $\Tr^{\regY}(\rho,\tau) = \Tr(\tau)\rho$. The \emph{trace distance} between states $\rho,\tau$, denoted $\TD(\rho,\tau)$ is defined as \[\TD(\rho,\tau) \coloneqq \frac{1}{2}\Tr\left(\sqrt{(\rho-\tau)^\dagger(\rho-\tau)}\right).\] The trace distance between two states $\rho$ and $\tau$ is an upper bound on the probability that any (unbounded) algorithm can distinguish $\rho$ and $\tau$.

For any set $S$, we define $O[S]$ to be the boolean function that checks for membership in $S$ and define the projector $$\Pi[S] = \sum_{s \in S}\ketbra{s}{s}.$$

\begin{definition}[Quantum Program]\label{def:quantum-imp}
    A quantum implementation of a functionality with classical inputs and outputs, or, a \emph{quantum program}, is a pair $(\ket{\psi},C)$, where $\ket{\psi}$ is a state and $C$ is the classical description of a quantum circuit. For any classical input $x \in \{0,1\}^{m}$, we write $y \gets C(\ket{x}\ket{\psi})$ to denote the result of running $C$ and then measuring a dedicated $m'$-qubit output register in the standard basis to obtain $y$.
    
    \begin{itemize}
        \item We say that the program is \emph{deterministic} if for all $x$, there exists $y \in \{0,1\}^{m'}$ such that \[\Pr[C(\ket{x}\ket{\psi}) = y] = 1.\]
        \item We say that a family of quantum programs $\{(\ket{\psi_\secp},C_\secp)\}_{\secp \in \bbN}$ is $\epsilon-$\emph{pseudo-deterministic} for some $\epsilon = \epsilon(\secp)$ if for all sequences of inputs $\{x_\secp\}_{\secp \in \bbN}$, there exists a sequence of outputs $\{y_\secp\}_{\secp \in \bbN}$ such that \[\Pr[C_\secp(\ket{x_\secp}\ket{\psi_\secp}) \to y_\secp] \geq 1-\epsilon(\secp).\]

    \end{itemize}

    We will often leave the dependence on $\secp$ implicit, and just refer to (pseudo)-deterministic programs $(\ket{\psi},C)$. We will denote by $Q(x)$ the string $y$ such that $\Pr[C(\ket{x}\ket{\psi}) \to y] \geq 1-\epsilon(\secp)$, and refer to $Q$ as the \emph{map induced by $(\ket{\psi},C)$}.
\end{definition}

\subsection{Useful Lemmas}

\begin{lemma}[Gentle measurement \cite{DBLP:journals/tit/Winter99}]\label{lemma:gentle-measurement}
Let $\rho$ be a quantum state and let $(\Pi,\cI-\Pi)$ be a projective measurement such that $\Tr(\Pi\rho) \geq 1-\delta$. Let \[\rho' = \frac{\Pi\rho\Pi}{\Tr(\Pi\rho)}\] be the state after applying $(\Pi,\cI-\Pi)$ to $\rho$ and post-selecting on obtaining the first outcome. Then, $\TD(\rho,\rho') \leq 2\sqrt{\delta}$.
\end{lemma}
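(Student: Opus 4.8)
The plan is to reduce to the pure-state case by purification and then quote the standard dictionary between trace distance and fidelity. First I would introduce a purification $\ket{\psi}^{\regX\regR}$ of $\rho$, so that $\rho = \Tr^{\regR}(\ketbra{\psi}{\psi})$, and observe that $\rho'$ arises from $\ket{\psi}$ by applying $\Pi$ on register $\regX$, renormalizing to obtain $\ket{\psi'} \coloneqq (\Pi \otimes \cI^{\regR})\ket{\psi}/\|(\Pi\otimes\cI^{\regR})\ket{\psi}\|$, and then tracing out $\regR$. The only bookkeeping to verify is that $\Tr^{\regR}(\ketbra{\psi'}{\psi'}) = \Pi\rho\Pi/\Tr(\Pi\rho) = \rho'$, which uses crucially that $\Pi$ is a genuine projector: since $\Pi^2 = \Pi$ we get $\|(\Pi\otimes\cI^{\regR})\ket{\psi}\|^2 = \Tr(\Pi\rho)$, so the renormalized, partially-traced state is exactly $\Pi\rho\Pi/\Tr(\Pi\rho)$.

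Next I would lower bound the overlap of the two purifications. Again using $\Pi^2 = \Pi$, one has $\braket{\psi}{\psi'} = \Tr(\Pi\rho)/\sqrt{\Tr(\Pi\rho)} = \sqrt{\Tr(\Pi\rho)} \geq \sqrt{1-\delta}$. Invoking the pure-state identity $\TD(\ketbra{\psi}{\psi},\ketbra{\psi'}{\psi'}) = \sqrt{1 - |\braket{\psi}{\psi'}|^2}$ then gives $\TD(\ketbra{\psi}{\psi},\ketbra{\psi'}{\psi'}) \leq \sqrt{\delta}$. Finally, since the partial trace $\Tr^{\regR}$ is a CPTP map and trace distance is non-increasing under CPTP maps, $\TD(\rho,\rho') \leq \TD(\ketbra{\psi}{\psi},\ketbra{\psi'}{\psi'}) \leq \sqrt{\delta} \leq 2\sqrt{\delta}$, which is in fact slightly stronger than the claimed bound.

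I do not anticipate any real obstacle here: the argument is only a few lines modulo three standard facts — existence of purifications, the pure-state formula relating trace distance to overlap, and monotonicity of trace distance under quantum operations. The single point deserving a little care is the identity $\Tr^{\regR}(\ketbra{\psi'}{\psi'}) = \rho'$, where one must track normalization and use that $\sqrt{\Pi} = \Pi$; for a general (non-projective) POVM element this step would cost an extra factor, which is presumably why the looser constant $2\sqrt{\delta}$ is stated in general formulations. As an alternative, one could simply cite Winter's original argument, which instead proceeds via a direct operator-norm estimate of $\rho - \Pi\rho\Pi$.
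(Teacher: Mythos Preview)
Your argument is correct and in fact yields the sharper bound $\TD(\rho,\rho') \le \sqrt{\delta}$. Note that the paper does not supply its own proof of this lemma at all; it simply states it with a citation to Winter, so there is no in-paper argument to compare against. Your purification-based proof is one of the standard routes to gentle measurement and is entirely adequate here.
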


\begin{lemma}[Pauli Twirl over Affine Subspaces]\label{lemma:Pauli-twirl}
Let $R,\widehat{R} \subseteq \bbF_2^n$ be subspaces of $\bbF_2^n$, and let $(x_0,z_0,x_1,z_1)$ be such that either $x_0 \oplus x_1 \notin \widehat{R}^\bot$ or $z_0 \oplus z_1 \notin R^\bot$. Then for any $\Delta \notin R, \widehat{\Delta} \notin \widehat{R}$ and density matrix $\rho$ on $m \geq n$ qubits,
\[\sum_{x \in R + \Delta, z \in \widehat{R} + \widehat{\Delta}}(Z^z X^x)(X^{x_0}Z^{z_0})(X^xZ^z)\rho(Z^z X^x)(Z^{z_1}X^{x_1})(X^x Z^z) = 0.\]
\end{lemma}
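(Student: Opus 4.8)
The plan is to reduce this operator identity to a purely scalar character-sum computation over $\bbF_2$. First I would collect the Paulis flanking $\rho$. Setting $W = Z^z X^x$ and noting that $X^x$ and $Z^z$ are involutions, we have $W^{-1} = X^x Z^z$, so the summand is exactly $(W P W^{-1})\,\rho\,(W P' W^{-1})$ with $P = X^{x_0} Z^{z_0}$ and $P' = Z^{z_1} X^{x_1}$. Since $P$, $P'$, and $W$ are all products of $X$'s and $Z$'s, conjugating $P$ (resp.\ $P'$) by $W$ can only introduce a global sign, which I would compute using the commutation relation $X^a Z^b = (-1)^{a \cdot b} Z^b X^a$ repeatedly: one obtains $W P W^{-1} = (-1)^{x \cdot z_0 + z \cdot x_0}\, P$ and $W P' W^{-1} = (-1)^{x \cdot z_1 + z \cdot x_1}\, P'$. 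Substituting, the entire sum equals $c \cdot P \rho P'$ where
\[ c = \sum_{x \in R + \Delta}\ \sum_{z \in \widehat{R} + \widehat{\Delta}} (-1)^{x \cdot (z_0 \oplus z_1)}(-1)^{z \cdot (x_0 \oplus x_1)} = \Big(\sum_{x \in R + \Delta} (-1)^{x \cdot (z_0 \oplus z_1)}\Big)\Big(\sum_{z \in \widehat{R} + \widehat{\Delta}} (-1)^{z \cdot (x_0 \oplus x_1)}\Big). \]

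Next I would evaluate each affine character sum using the standard fact that for a subspace $T \subseteq \bbF_2^n$, shift $\delta$, and $u \in \bbF_2^n$, one has $\sum_{v \in T + \delta}(-1)^{v \cdot u} = (-1)^{\delta \cdot u}\,\abs{T}$ if $u \in T^\bot$ and $0$ otherwise (write $v = r + \delta$ with $r \in T$ and use orthogonality of the characters of $T$). Applying this: the first factor of $c$ vanishes unless $z_0 \oplus z_1 \in R^\bot$, and the second vanishes unless $x_0 \oplus x_1 \in \widehat{R}^\bot$. The hypothesis of the lemma states precisely that at least one of these two containments fails, so at least one factor is $0$, hence $c = 0$ and the whole expression is the zero operator — independently of $\rho$, and independently of the shifts $\Delta,\widehat{\Delta}$ (which enter only through the phases $(-1)^{\Delta\cdot(z_0\oplus z_1)}$ and $(-1)^{\widehat{\Delta}\cdot(x_0\oplus x_1)}$ and never affect whether a character sum vanishes). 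The fact that $\rho$ is on $m \ge n$ qubits is immaterial, as all the Paulis act as identity on the extra $m - n$ qubits.

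I do not expect a genuine obstacle here: the only step needing care is the sign bookkeeping in the conjugation computation — in particular keeping track of which side of $\rho$ each $x \cdot z$ phase comes from, and being consistent about the $X$-then-$Z$ ordering in $P$ versus the $Z$-then-$X$ ordering in $P'$. The affine shifts are a red herring. The upshot is the conceptual statement that the ``twirl'' over the product of affine subspaces $(R+\Delta) \times (\widehat{R}+\widehat{\Delta})$ acts on a pair of Paulis $(P,P')$ as scalar multiplication by the value of the character indexed by $(x_0 \oplus x_1,\ z_0 \oplus z_1)$, and this character is nontrivial — hence sums to zero — exactly under the stated nondegeneracy condition.
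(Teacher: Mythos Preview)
Your proposal is correct and follows essentially the same argument as the paper: use the anticommutation relation $X^a Z^b = (-1)^{a\cdot b} Z^b X^a$ to reduce the operator sum to a scalar times $(X^{x_0}Z^{z_0})\rho(Z^{z_1}X^{x_1})$, factor that scalar as a product of two affine character sums over $R+\Delta$ and $\widehat{R}+\widehat{\Delta}$, and observe that at least one factor vanishes under the hypothesis. Your framing via conjugation by $W = Z^z X^x$ is a clean way to organize the sign bookkeeping, but the underlying computation is identical to the paper's.
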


\begin{proof}
Using the fact that $X^x Z^z = (-1)^{x \cdot z} Z^z X^x$, we write
\begin{align*}
&\sum_{x \in R + \Delta, z \in \widehat{R} + \widehat{\Delta}}(Z^z X^x)(X^{x_0}Z^{z_0})(X^xZ^z)\rho(Z^z X^x)(Z^{z_1}X^{x_1})(X^x Z^z) \\
&= \sum_{x \in R + \Delta, z \in \widehat{R}+ \widehat{\Delta}} (-1)^{x \cdot z_0 + z \cdot x_0 + x \cdot z_1 + z \cdot x_1}(X^{x_0}Z^{z_0})\rho(Z^{z_1} X^{x_1})\\
&= \left(\sum_{z \in \widehat{R} + \widehat{\Delta}}(-1)^{z \cdot (x_0 \oplus x_1)}\right)\left(\sum_{x \in R + \Delta}(-1)^{x \cdot (z_0 \oplus z_1)}\right)(x^{x_0}Z^{z_0})\rho(Z^{z_1} X^{x_1})\\ 
&= \left(\sum_{z \in \widehat{R}}(-1)^{z \cdot (x_0 \oplus x_1)}\right)\left(\sum_{x \in R}(-1)^{x \cdot (z_0 \oplus z_1)}\right)(-1)^{\widehat{\Delta} \cdot (x_0 \oplus x_1)}(-1)^{\Delta \cdot (z_0 \oplus z_1)}(x^{x_0}Z^{z_0})\rho(Z^{z_1} X^{x_1})\\
&= 0,
\end{align*}
where the last equality follows because either $x_0 \oplus x_1 \notin \widehat{R}^\bot$ or $z_0 \oplus z_1 \notin R^\bot$.
\end{proof}

The following two lemmas are applications of Cauchy-Schwarz. The first is adapted from \cite{cryptoeprint:2022/786}.

\begin{lemma}\label{lemma:map-distinguish}
    Let $\cK$ be a set of keys, $N$ an integer, and $\{\ket{\psi_k},\{\Pi_{k,i}\}_{i \in [N]},O_k\}_{k \in \cK}$ be a set of states $\ket{\psi_k}$, projective submeasurements $\{\Pi_{k,i}\}_{i \in [N]}$, and classical functions $O_k$ such that $\ket{\psi_k} \in \mathsf{Im}(\sum_i \Pi_{k,i})$ for each $k$. Then for any distinguisher $D$, which we take to be an oracle-aided binary outcome projector, it holds that 
    \[\E_{k \gets \cK}\left[\|D^{O_k}\ket{\psi_k}\|^2\right] - \sum_i\E_{k \gets \cK}\|D^{O_k}\Pi_{k,i}\ket{\psi_k}\|^2 \leq N \cdot \left[\sum_{i \neq j}\E_{k \gets \cK}\|\Pi_{k,j}D^{O_k}\Pi_{k,i}\ket{\psi_k}\|^2\right]^{1/2}.\]
\end{lemma}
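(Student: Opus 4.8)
The plan is to prove this via a Cauchy--Schwarz argument on the ``cross terms'' that appear when one expands $D^{O_k}\ket{\psi_k}$ in the submeasurement basis $\{\Pi_{k,i}\}$. First I would use the hypothesis $\ket{\psi_k} \in \mathsf{Im}(\sum_i \Pi_{k,i})$ to write, for each fixed $k$, the resolution
\[
D^{O_k}\ket{\psi_k} \;=\; \sum_{i \in [N]} D^{O_k}\Pi_{k,i}\ket{\psi_k},
\]
and since $\{\Pi_{k,i}\}$ is a projective submeasurement, also $\sum_i \Pi_{k,i} D^{O_k}\ket{\psi_k}$ has norm at most $\|D^{O_k}\ket{\psi_k}\|$ (it is a projection of that vector). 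The idea is to insert the identity (or the sub-identity $\sum_j \Pi_{k,j}$) on the \emph{left} of $D^{O_k}$ as well: informally $\|D^{O_k}\ket{\psi_k}\|^2$ is close to $\sum_{i,j}\langle \psi_k|\Pi_{k,i} D^{O_k} \Pi_{k,j} D^{O_k}\Pi_{k,i}\ket{\psi_k}$-type terms, whose diagonal part ($i=j$) is exactly $\sum_i \|D^{O_k}\Pi_{k,i}\ket{\psi_k}\|^2$ up to the $D$-factor being a projector, and whose off-diagonal part is controlled by $\|\Pi_{k,j} D^{O_k}\Pi_{k,i}\ket{\psi_k}\|$.

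More concretely, I would bound the left-hand side $\E_k\|D^{O_k}\ket{\psi_k}\|^2 - \sum_i \E_k \|D^{O_k}\Pi_{k,i}\ket{\psi_k}\|^2$ by first observing $\|D^{O_k}\ket{\psi_k}\|^2 = \langle \psi_k | D^{O_k} \ket{\psi_k}$ (using $D$ a projector, so $D^2 = D$ as an operator acting through the oracle interface, i.e. $(D^{O_k})^\dagger D^{O_k} = D^{O_k}$), then expanding the right copy of $\ket{\psi_k}$ as $\sum_i \Pi_{k,i}\ket{\psi_k}$ to get $\sum_i \langle \psi_k | D^{O_k} \Pi_{k,i}\ket{\psi_k}$, and expanding the left copy similarly inside each term. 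The diagonal $i=i$ contributions give $\sum_i \langle \psi_k | \Pi_{k,i} D^{O_k}\Pi_{k,i}\ket{\psi_k}$; comparing this to $\sum_i \|D^{O_k}\Pi_{k,i}\ket{\psi_k}\|^2 = \sum_i \langle \psi_k|\Pi_{k,i} D^{O_k}\Pi_{k,i}\ket{\psi_k}$ (again using $D$ a projector) shows they agree, so the difference is precisely the sum over off-diagonal pairs $\sum_{i \neq j}\langle \psi_k|\Pi_{k,i} D^{O_k}\Pi_{k,j}\ket{\psi_k}$. Taking absolute values and then the expectation over $k$, I would apply Cauchy--Schwarz: $|\langle \psi_k|\Pi_{k,i} D^{O_k}\Pi_{k,j}\ket{\psi_k}| = |\langle \Pi_{k,i}\psi_k| D^{O_k}\Pi_{k,j}\ket{\psi_k}| \le \|\Pi_{k,i}\ket{\psi_k}\| \cdot \|\Pi_{k,i} D^{O_k}\Pi_{k,j}\ket{\psi_k}\|$ — wait, I need to be careful about which $\Pi$ ends up where — and then a second Cauchy--Schwarz over the sum and over the expectation to pull out the square root, using $\sum_i \|\Pi_{k,i}\ket{\psi_k}\|^2 \le 1$ (submeasurement) and bounding the number of terms by $N$, yielding the stated factor of $N$ in front of the $[\,\cdot\,]^{1/2}$.

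The main obstacle I anticipate is bookkeeping the two applications of Cauchy--Schwarz so that the final bound has exactly the claimed shape $N \cdot \big[\sum_{i\ne j}\E_k \|\Pi_{k,j} D^{O_k}\Pi_{k,i}\ket{\psi_k}\|^2\big]^{1/2}$ rather than, say, $\sqrt{N}$ times a quantity with $\E_k$ inside a different place, or a bound in terms of $\|\Pi_{k,i} D^{O_k}\Pi_{k,j}\|$ with the projectors transposed. The trick will be to split $\langle \psi_k|\Pi_{k,i} D^{O_k}\Pi_{k,j}\ket{\psi_k}$ as the inner product of $\Pi_{k,i}\ket{\psi_k}$ with $\Pi_{k,i} D^{O_k}\Pi_{k,j}\ket{\psi_k}$, bound it by $\|\Pi_{k,i}\ket{\psi_k}\|\cdot\|\Pi_{k,i} D^{O_k}\Pi_{k,j}\ket{\psi_k}\|$, sum over $j \ne i$ using Cauchy--Schwarz in the index $j$ (picking up a factor $\sqrt{N}$ from the number of $j$'s and a factor $\|\Pi_{k,i}\ket{\psi_k}\|\sqrt{N}$ overall after summing over $i$ and using $\sum_i \|\Pi_{k,i}\ket{\psi_k}\|^2\le 1$ with another Cauchy--Schwarz), and finally apply Jensen/Cauchy--Schwarz to move the expectation $\E_k$ inside the square root. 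One subtlety worth double-checking is the precise meaning of ``$D$ is an oracle-aided binary outcome projector'' — I will take it to mean that, viewed as acting on the adversary's workspace together with the oracle-query interface, $(D^{O_k})^2 = D^{O_k}$ and $(D^{O_k})^\dagger = D^{O_k}$, which is what legitimizes $\|D^{O_k}\ket{\psi_k}\|^2 = \langle \psi_k|D^{O_k}\ket{\psi_k}$ and the matching of the diagonal terms; I would state this convention explicitly at the start of the proof. The reference to \cite{cryptoeprint:2022/786} suggests the argument there can be adapted essentially verbatim, so I expect no conceptual difficulty beyond this indexing care.
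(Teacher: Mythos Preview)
Your proposal is correct and follows essentially the same route as the paper: expand $\ket{\psi_k} = \sum_i \Pi_{k,i}\ket{\psi_k}$ on both sides, use that $D^{O_k}$ is a projector so the diagonal matches $\sum_i\|D^{O_k}\Pi_{k,i}\ket{\psi_k}\|^2$, then control the off-diagonal terms with Cauchy--Schwarz and Jensen. The paper's bookkeeping is slightly simpler than what you sketch: it applies Cauchy--Schwarz directly to the pair $\ket{\psi_k}$ and $\Pi_{k,j}D^{O_k}\Pi_{k,i}\ket{\psi_k}$ (using $\|\psi_k\|\le 1$), then a single Cauchy--Schwarz over all $N^2$ off-diagonal index pairs to extract the factor $N$, and finally Jensen to pull $\E_k$ inside the square root --- so your anticipated obstacle about landing on $\sqrt{N}$ versus $N$ is moot (and in any case $\sqrt{N}\le N$ would only sharpen the inequality).
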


\begin{proof}
\begingroup
\allowdisplaybreaks
\begin{align*}
&\E_{k \gets \cK}\left[\bigg\| D^{O_k}\ket{\psi_k}\big\|^2\right] \\
&=\E_{k \gets \cK}\left[\bigg| \bra{\psi_k}\left(\sum_i \Pi_{k,i}\right)D^{O_k}\left(\sum_i \Pi_{k,i}\right)\ket{\psi_k}\bigg|\right]\\
&=\E_{k \gets \cK}\left[\bigg| \sum_i \bra{\psi_k}\Pi_{k,i}D^{O_k}\Pi_{k,i}\ket{\psi_k} + \sum_{i \neq j}\bra{\psi_k}\Pi_{k,j}D^{O_k}\Pi_{k,i}\ket{\psi_k}\bigg|\right]\\
&\leq \sum_i \E_{k \gets \cK}\big\| D^{O_k}\Pi_{k,i}\ket{\psi_k}\big\|^2 + \E_{k \gets \cK}\left[\sum_{i \neq j}\bigg| \bra{\psi_k}\Pi_{k,j}D^{O_k}\Pi_{k,i}\ket{\psi_k}\bigg|\right] \\
&\leq \sum_i \E_{k \gets \cK}\big\| D^{O_k}\Pi_{k,i}\ket{\psi_k}\big\|^2  + \E_{k \gets \cK}\left[\sum_{i \neq j}\sqrt{\bra{\psi_k}\Pi_{k,i}D^{O_k}\Pi_{k,j}D^{O_k}\Pi_{k,i}\ket{\psi_k}}\right] \\
&\leq \sum_i \E_{k \gets \cK}\big\| D^{O_k}\Pi_{k,i}\ket{\psi_k}\big\|^2  + N \cdot \E_{k \gets \cK}\left[\left(\sum_{i \neq j}\|\Pi_{k,j}D^{O_k}\Pi_{k,i}\ket{\psi_k}\|^2\right)^{1/2}\right] \\
&\leq \sum_i \E_{k \gets \cK}\big\| D^{O_k}\Pi_{k,i}\ket{\psi_k}\big\|^2  + N \cdot \left[\sum_{i \neq j}\E_{k \gets \cK}\|\Pi_{k,j}D^{O_k}\Pi_{k,i}\ket{\psi_k}\|^2\right]^{1/2},
\end{align*}
\endgroup

where the first inequality is the triangle inequality, the second follows from Cauchy-Schwarz applied to vectors $\ket{\psi_k}$ and $\Pi_{k,j}D^{O_k}\Pi_{k,i}\ket{\psi}$, the third follows from Cauchy-Shwarz applied to the length $N^2$ vector $(1,\dots,1)$ and the vector with $(i,j)$'th entry equal to \[\sqrt{\bra{\psi_k}\Pi_{k,i}D^{O_k}\Pi_{k,j}D^{O_k}\Pi_{k,i}\ket{\psi_k}},\] and the fourth is Jensen's inequality.

\end{proof}

\begin{lemma}\label{lemma:project}
    Let $\cK$ be a set of keys, $N$ an integer, and $\{\ket{\psi_k},\{\Pi_{k,i}\}_{i \in [N]},O_k,\Gamma_k\}_{k \in \cK}$ be a set of states $\ket{\psi_k}$, projective submeasurements $\{\Pi_{k,i}\}_{i \in [N]}$, classical function $O_k$, and projective measurements $\Gamma_k$ such that $\ket{\psi_k} \in \mathsf{Im}(\sum_i \Pi_{k,i})$ for each $k$. Then for any oracle-aided unitary $U$, it holds that
    \[\E_{k \gets \cK}[\|\Gamma_k U^{O_k}\ket{\psi_k}\|^2] \leq N \cdot \sum_i\E_{k \gets \cK}\|\Gamma_k U^{O_k}\Pi_{k,i}\ket{\psi_k}\|^2.\] %
\end{lemma}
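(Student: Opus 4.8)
The plan is to mimic the structure of the proof of \cref{lemma:map-distinguish}, but here the argument is much shorter: there is no cross-term to bound via an extra round of Cauchy--Schwarz, so a single application of Cauchy--Schwarz (equivalently, the power-mean inequality $\|\sum_{i}v_i\|^2 \le N\sum_i\|v_i\|^2$) suffices. First I would use the hypothesis $\ket{\psi_k}\in\mathsf{Im}(\sum_i\Pi_{k,i})$. Since the $\{\Pi_{k,i}\}_{i\in[N]}$ form a projective submeasurement, $\sum_i\Pi_{k,i}$ is itself a projector, so this hypothesis is equivalent to $\left(\sum_i\Pi_{k,i}\right)\ket{\psi_k}=\ket{\psi_k}$. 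Substituting this in and pulling the sum out of the linear operators $\Gamma_k$ and $U^{O_k}$ gives $\Gamma_k U^{O_k}\ket{\psi_k} = \sum_i \Gamma_k U^{O_k}\Pi_{k,i}\ket{\psi_k}$.

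Then for each fixed $k$ I would apply $\|\sum_{i\in[N]}v_i\|^2 \le N\sum_{i\in[N]}\|v_i\|^2$ (which follows from Cauchy--Schwarz applied to the all-ones vector $(1,\dots,1)\in\R^N$ and the vector $(\|v_1\|,\dots,\|v_N\|)$, together with the triangle inequality) with $v_i = \Gamma_k U^{O_k}\Pi_{k,i}\ket{\psi_k}$. This yields $\|\Gamma_k U^{O_k}\ket{\psi_k}\|^2 \le N\sum_i\|\Gamma_k U^{O_k}\Pi_{k,i}\ket{\psi_k}\|^2$ pointwise in $k$. Finally I would take $\E_{k\gets\cK}$ of both sides and use linearity of expectation to move the sum over $i$ outside, obtaining exactly the claimed bound. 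Concretely the chain of (in)equalities is
\begin{align*}
\E_{k\gets\cK}\left[\|\Gamma_k U^{O_k}\ket{\psi_k}\|^2\right]
&= \E_{k\gets\cK}\left[\left\|\Gamma_k U^{O_k}\left(\sum_i\Pi_{k,i}\right)\ket{\psi_k}\right\|^2\right]
= \E_{k\gets\cK}\left[\left\|\sum_i \Gamma_k U^{O_k}\Pi_{k,i}\ket{\psi_k}\right\|^2\right] \\
&\le N\cdot\E_{k\gets\cK}\left[\sum_i\|\Gamma_k U^{O_k}\Pi_{k,i}\ket{\psi_k}\|^2\right]
= N\cdot\sum_i\E_{k\gets\cK}\|\Gamma_k U^{O_k}\Pi_{k,i}\ket{\psi_k}\|^2.
\end{align*}

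There is no real obstacle here; the only point that needs a moment's care is justifying that $\ket{\psi_k}$ is fixed by the projector $\sum_i\Pi_{k,i}$ (rather than merely lying in its image as a subspace), which is immediate once one observes that a submeasurement's projectors are mutually orthogonal so their sum is a projection onto the span of their ranges. Everything else is linearity and one invocation of Cauchy--Schwarz, exactly as in the opening steps of the proof of \cref{lemma:map-distinguish}, and no properties of the oracle $O_k$, the unitary $U$, or the measurement $\Gamma_k$ beyond linearity (resp.\ non-expansiveness) are used.
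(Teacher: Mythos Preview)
Your proposal is correct and follows essentially the same approach as the paper: the paper also uses the decomposition $\ket{\psi_k}=\sum_i\Pi_{k,i}\ket{\psi_k}$, then applies the triangle inequality followed by Cauchy--Schwarz with the all-ones vector of length $N$ to obtain the pointwise bound before taking the expectation. Your presentation packages the triangle inequality and Cauchy--Schwarz into the single inequality $\|\sum_i v_i\|^2\le N\sum_i\|v_i\|^2$, but the underlying steps are identical.
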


\begin{proof}
\begin{align*}
    &\E_{k \gets \cK}[\|\Gamma_k U^{O_k}\ket{\psi_k}\|^2]\\
    &\leq \E_{k \gets \cK}\left[\left(\sum_i\|\Gamma_k U^{O_k}\Pi_{k,i}\ket{\psi_k}\|\right)^2\right]\\
    &\leq \E_{k \gets \cK}\left[\left(\sqrt{N}\sqrt{\sum_i\|\Gamma_k U^{O_k}\Pi_{k,i}\ket{\psi_k}\|^2}\right)^2\right]\\
    &=N \cdot \sum_i\E_{k \gets \cK}\|\Gamma_k U^{O_k}\Pi_{k,i}\ket{\psi_k}\|^2,
\end{align*}

where the first inequality is the triangle inequality, and the second follows from Cauchy-Shwarz applied to the length $N$ vector $(1,\dots,1)$ and the vector with $i$'th entry equal to $\|\Gamma_k U^{O_k}\Pi_{k,i}\ket{\psi_k}\|$.

\end{proof}

We will frequently invoke the following lemma in order to switch between two oracles that differ on hard-to-find inputs. The proof is a standard oracle hybrid argument.

\begin{lemma}\label{lemma:puncture}
    For each $\secp \in \bbN$, let $\cK_\secp$ be a set of keys, and $\{\ket{\psi_k},O_k^0,O_k^1,S_k\}_{k \in \cK_\secp}$ be a set of states $\ket{\psi_k}$, classical functions $O_k^0,O_k^1$, and sets of inputs $S_k$. Suppose that the following properties holds.
    \begin{enumerate}
        \item The oracles $O_k^0$ and $O_k^1$ are identical on inputs outside of $S_k$.
        \item For any oracle-aided unitary $U$ with $q = q(\secp)$ queries, there is some $\epsilon = \epsilon(\secp)$ such that 
        \[\E_{k \gets \cK}\left[\big\| \Pi[S_k]U^{O_k^0}\ket{\psi_k}\big\|^2\right] \leq \epsilon.\]
    \end{enumerate}
    Then, for any oracle-aided unitary $U$ with $q(\secp)$ queries and distinguisher $D$, \[\bigg| \Pr_{k \gets \cK}\left[D\left(k,U^{O_k^0}\ket{\psi_k}\right) = 0\right] - \Pr_{k \gets \cK}\left[D\left(k,U^{O_k^1}\ket{\psi_k}\right) = 0\right]\bigg| \leq 4q\sqrt{\epsilon}.\]
\end{lemma}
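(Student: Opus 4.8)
The plan is to prove this by a standard query-by-query hybrid argument (equivalently, an invocation of the one-way-to-hiding lemma), switching the oracle used by $U$ from $O_k^0$ to $O_k^1$ one query at a time and bounding each step using the two hypotheses. The starting observation, which I would record first, extracts the relevant content of hypothesis~1: viewing each $O_k^b$ as the unitary $\ket{x}\ket{y}\mapsto\ket{x}\ket{y\oplus O_k^b(x)}$, the difference $O_k^0-O_k^1$ is annihilated by any input-register component supported outside $S_k$, i.e.\ $O_k^0-O_k^1=(O_k^0-O_k^1)(\Pi[S_k]\otimes\cI)$, so that $\|(O_k^0-O_k^1)\ket{\chi}\|\le 2\|\Pi[S_k]\ket{\chi}\|$ for every state $\ket{\chi}$.

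Next I would set up the hybrids. Write $U^{O}=W_q\,O\,W_{q-1}\cdots W_1\,O\,W_0$ with oracle-free unitaries $W_0,\dots,W_q$, and for $j=0,\dots,q$ let $\ket{\phi_k^{(j)}}$ be the result of running this circuit on $\ket{\psi_k}$ with the first $j$ oracle calls answered by $O_k^0$ and the remaining $q-j$ calls answered by $O_k^1$, so $\ket{\phi_k^{(0)}}=U^{O_k^1}\ket{\psi_k}$ and $\ket{\phi_k^{(q)}}=U^{O_k^0}\ket{\psi_k}$. Consecutive hybrids $\ket{\phi_k^{(j)}},\ket{\phi_k^{(j-1)}}$ are generated by identical circuits except that the $j$-th oracle call is $O_k^0$ in the former and $O_k^1$ in the latter; in both, the state right before that call is $\ket{\eta_k^{(j)}}\coloneqq W_{j-1}O_k^0 W_{j-2}\cdots O_k^0 W_0\ket{\psi_k}$, which is the output of an oracle-aided unitary making $j-1\le q$ queries, all to $O_k^0$ (one can pad with dummy queries on a fresh ancilla to match the ``exactly $q$ queries'' phrasing of hypothesis~2 verbatim). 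Since the tail of the circuit is unitary, $\|\ket{\phi_k^{(j)}}-\ket{\phi_k^{(j-1)}}\|=\|(O_k^0-O_k^1)\ket{\eta_k^{(j)}}\|\le 2\|\Pi[S_k]\ket{\eta_k^{(j)}}\|$, and hypothesis~2 together with concavity of the square root (Jensen) gives $\E_k\big[\|\ket{\phi_k^{(j)}}-\ket{\phi_k^{(j-1)}}\|\big]\le 2\sqrt{\E_k\|\Pi[S_k]\ket{\eta_k^{(j)}}\|^2}\le 2\sqrt{\epsilon}$.

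To conclude, I would sum over $j=1,\dots,q$ by the triangle inequality to get $\E_k\big[\|U^{O_k^0}\ket{\psi_k}-U^{O_k^1}\ket{\psi_k}\|\big]\le 2q\sqrt{\epsilon}$. Then, since for pure states the trace distance is at most the Euclidean distance of the state vectors (and trace distance only decreases under partial trace, in case $D$ sees only part of the output), the distinguishing advantage of $D(k,\cdot)$ on a fixed $k$ is at most the trace distance between the two output states; averaging over $k\gets\cK$ yields $\big|\Pr_k[D(k,U^{O_k^0}\ket{\psi_k})=0]-\Pr_k[D(k,U^{O_k^1}\ket{\psi_k})=0]\big|\le 2q\sqrt{\epsilon}\le 4q\sqrt{\epsilon}$.

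I don't expect a real obstacle here; the argument is routine. The one point requiring care is the \emph{direction} of the hybrid: hypothesis~2 bounds the $S_k$-weight only for executions run entirely with the real oracle $O_k^0$, so the hybrids must switch queries from the front (first $j$ calls use $O_k^0$) rather than from the back — with the opposite convention the pre-query state would be a partial $O_k^1$-execution, for which no a-priori $S_k$-weight bound is available. The only other bookkeeping detail is reconciling the truncated unitaries (which make fewer than $q$ queries) with the precise ``$q$ queries'' wording of the hypothesis, handled by the trivial padding noted above; and the factor-$2$ slack relative to the stated $4q\sqrt{\epsilon}$ is harmless.
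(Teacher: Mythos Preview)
Your proof is correct and follows the same high-level query-by-query hybrid strategy as the paper. The paper likewise switches the oracle one query at a time, ensuring that the state just before the differing call is produced by an $O_k^0$-only execution so that hypothesis~2 applies.

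The one methodological difference is in how each adjacent hybrid pair is bounded. The paper introduces an intermediate hybrid $\cH_i'$ in which one applies the projective measurement $\{\Pi[S_k],\cI-\Pi[S_k]\}$ right before the critical query and post-selects on the second outcome; Gentle Measurement (\cref{lemma:gentle-measurement}) then gives $\TD(\cH_i,\cH_i')\le 2\sqrt{\epsilon}$ and $\TD(\cH_i',\cH_{i+1})\le 2\sqrt{\epsilon}$, for a total of $4\sqrt{\epsilon}$ per query and $4q\sqrt{\epsilon}$ overall. You instead bound the difference directly via $\|(O_k^0-O_k^1)\ket{\eta}\|\le 2\|\Pi[S_k]\ket{\eta}\|$ and Jensen, which avoids the intermediate hybrid and Gentle Measurement entirely and yields the sharper constant $2q\sqrt{\epsilon}$. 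Both routes are standard; yours is a bit more elementary and explains the factor-of-two slack you noticed.
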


\begin{proof}

For each $i \in [0,\dots,q]$, define hybrid $\cH_i$ to sample $k \gets \cK$ and output $(k,U^{(\cdot)}\ket{\psi_k})$, where $U$'s first $q - i$ oracle queries are answered with $O_k^0$ and $U$'s final $i$ oracle queries are answered with $O_k^1$. For each $i \in [0,\dots,q-1]$, define hybrid $\cH_i'$ to be identical to $\cH_i$ except that we apply the measurement $\{\Pi[S_k],\cI-\Pi[S_k]\}$ to $U$'s state right before the $q-i$'th oracle query, and post-select on obtaining the second outcome. Then for any $i \in [0,\dots,q-1]$, 
\begin{itemize}
    \item By condition 2 of the lemma statement and \cref{lemma:gentle-measurement}, it holds that $\TD(\cH_i,\cH_i') \leq 2\sqrt{\epsilon}$.
    \item By conditions 1 and 2 of the lemma statement and \cref{lemma:gentle-measurement}, it holds that $\TD(\cH_i'
    ,\cH_{i+1}) \leq 2\sqrt{\epsilon}$.
\end{itemize}

Thus, the lemma follows by summing over the $2q$ hybrid switches.

\end{proof}

\subsection{Signature Tokens}\label{subsec:sig-tokens}

A signature token scheme consists of algorithms $(\TokGen,\TokSign,\TokVer)$ with the following syntax. 

\begin{itemize}
    \item $\TokGen(1^\secp) \to (\vk,\ket{\sk})$: The $\TokGen$ algorithm takes as input the security parameter $1^\secp$ and outputs a classical verification key $\vk$ and a quantum signing key $\ket{\sk}$.
    \item $\TokSign(b,\ket{\sk}) \to \sigma$: The $\TokSign$ algorithm takes as input a bit $b \in \{0,1\}$ and the signing key $\ket{\sk}$, and outputs a classical signature $\sigma$.
    \item $\TokVer(\vk,b,\sigma) \to \{\top,\bot\}$: The $\TokVer$ algorithm takes as input a verification key $\vk$, a bit $b$, and a signature $\sigma$, and outputs $\top$ or $\bot$.
\end{itemize}

A signature token should satisfy the following definition of correctness.  

\begin{definition}\label{def:token-correctness}
A signature token scheme $(\TokGen,\TokSign,\TokVer)$ is \emph{correct} if for any $b \in \{0,1\}$,
\[\Pr\left[\TokVer(\vk,b,\sigma) = \top : \begin{array}{r}(\vk,\ket{\sk}) \gets \Gen(1^\secp) \\ \sigma \gets \Sign(b,\ket{\sk})\end{array}\right] = 1.\]
\end{definition}

A signature token should satisfy the following definition of security. Note that we give the adversary oracle access to the verification functionality, and ask for exponential security.

\begin{definition}\label{def:unforgeability}
A signature token scheme $(\TokGen,\TokSign,\TokVer)$ satisfies \emph{unforgeability} if for any QPQ adversary $\{\sA_\secp\}_{\secp \in \bbN}$,

\[\Pr\left[\begin{array}{l} \TokVer(\vk,0,\sigma_0) = \top ~~ \wedge \\ \TokVer(\vk,1,\sigma_1) = \top\end{array} : \begin{array}{r} (\vk,\ket{\sk}) \gets \Gen(1^\secp) \\ (\sigma_0,\sigma_1) \gets \sA_\secp^{\TokVer[\vk]}(\ket{\sk}) \end{array}\right] = 2^{-\Omega(\secp)},\] where $\TokVer[\vk]$ is the functionality $\TokVer(\vk,\cdot,\cdot)$.
\end{definition}

\begin{importedtheorem}[\cite{arxiv:BenSat16}]
There exists a signature token scheme that satisfies \cref{def:token-correctness} and \cref{def:unforgeability}. 
\end{importedtheorem}

\section{Authentication Scheme}\label{sec:authentication}

In this section, we introduce the notion of a ``publicly-verifiable linearly-homomorphic QAS (Quantum Authentication Scheme) with classically-decodable ZX measurements.'' We then provide a construction and security proof.

\subsection{Definitions}\label{subsec:auth-def}

The following notation will be heavily referenced both throughout this section, and throughout the remainder of the paper.

\paragraph{Partial ZX measurements.} Given a string $\theta \in \{0,1,\bot\}^n$, define sets \[\MS_\theta \coloneqq \{i : \theta_i \neq \bot\}, \ \ \MS_{\theta,0} = \{i : \theta_i = 0\}, \ \ \MS_{\theta,1} = \{i : \theta_i = 1\}, \ \ \MS_{\theta,\bot} \coloneqq \{i : \theta_i = \bot\}.\] We will often write $\MS,\MS_0,\MS_1,\MS_\bot$ instead of $\MS_\theta,\MS_{\theta,0},\MS_{\theta,1},\MS_{\theta,\bot}$ when the choice of $\theta$ is clear from context. The string $\theta$ will be used to denote the basis of a partial measurement on $n$ qubits, where the 0 indices are measured in the standard basis and the 1 indices are measured in the Hadamard basis. We will also need the following notation.

\begin{itemize}
    \item Given a string $m \in \{0,1\}^n$ and a set $\MS \subseteq [n]$, let $m_\MS$ denote the substring of $m$ consisting of bits $\{m_i\}_{i \in \MS}$. 
    \item Given $\theta \in \{0,1,\bot\}^n$ and a set $\MS \subseteq [n]$, define $\theta[\MS]$ to be equal to $\theta$ on indices in $V$ and $\bot$ everywhere else.
    \item Given a register $\regM$, an operation $U$ on $\regM$, and a subset $\MS \subseteq [n]$, let $U^\MS$ be the operation on $\regM^{\otimes n}$ that applies $U$ to the $i$'th copy of $\regM$ for each $i \in \MS$.
\end{itemize}

For any $\theta \in \{0,1,\bot\}^n$ and classical function $f : \{0,1\}^{|\MS|} \to \{0,1\}^*$, let $M_{\theta,f}$ be the projective measurement on $n$ qubits defined by the operators \[\left\{H^{\MS_1}\left(\sum_{m : f\left(m_\MS\right) = y}\ketbra{m}{m}\right)H^{\MS_1}\right\}_{y}.\]  For any $n$-qubit register $\regM$, we write \[M_{\theta,f}(\regM) \to \regM,y\] to refer to the operation that measures $\regM$ according to $M_{\theta,f}$ and then writes the classical result $y$ to a new register. Sometime we will write $M_{\theta,f}(\regM) \to y$ to denote just the classical measurement result $y$.

\paragraph{Linear operations.} We will use $L$ to denote a sequence of CNOT gates on $n$ qubits, which we refer to as a \emph{linear operation}. While all quantum gates are linear with respect to the ambient Hilbert space of exponential dimension, here linearity specifically refers to the fact that any sequence of CNOT gates applies a linear function over $\bbF_2$ to each standard basis vector. In an abuse of notation, $L$ will either refer to the classical description of a series of CNOT gates or to the actual unitary operation that applies these gates. Which case should be clear from context.

\paragraph{Syntax.} A publicly-verifiable, linearly-homomorphic quantum authentiation scheme (QAS) with classically-decodable ZX measurements has the following syntax. Let $p = p(\secp)$ be a polynomial.

\begin{itemize}
  \item $\mathsf{Gen}(1^\secp,n) \rightarrow k$: The key generation algorithm takes as input a security parameter $1^\secp$ and number of qubits $n = \poly(\secp)$, and outputs an authentication key $k$.
  \item $\mathsf{Enc}_k(\regM)\to\regC$: The encoding algorithm is an isometry parameterized by an authentication key $k$ that maps a state on an $n$-qubit register $\regM \coloneqq \regM_1 \otimes \dots \otimes \regM_n$ to a state on an $np$-qubit register $\regC \coloneqq \regC_1 \otimes \dots \otimes \regC_n$, where each $\regC_i$ is an $p$-qubit register.
  \item $\LinEval_L(\regC) \to \regC$: The linearly-homomorphic evaluation procedure is a unitary parameterized by a linear operation $L$ that operates on register $\regC$.
  \item $\mathsf{Dec}_{k,L,\theta}\left(c\right) \rightarrow m \cup \{\bot\}$: The classical decoding algorithm is parameterized by an authentication key $k$, a linear operation $L$, and a choice of bases $\theta \in \{0,1,\bot\}^{n}$. It takes as input a string $c \in \{0,1\}^{|\MS| \cdot p}$ and outputs either a classical string $m \in \{0,1\}^{|\MS|}$ or $\bot$.
  \item $\Ver_{k,L,\theta}(c) \to \{\top,\bot\}$: The classical verification algorithm is identical to $\Dec$ except that whenever $\Dec$ outputs an $m \neq \bot$, $\Ver$ outputs $\top$.
\end{itemize}

\paragraph{Partial ZX measurements on authenticated states.} First, given the parameter $p$, define \[\widetilde{\MS} \coloneqq \bigcup_{i \in \MS} \{(i-1)p +1, \dots, ip\} \subseteq [np].\] That is, $\widetilde{\MS}$ contains the $i$'th chunk of $p$ indices for each $i \in \MS$. Define $\widetilde{\MS}_0, \widetilde{\MS}_1, \widetilde{\MS}_\bot$ analogously. For any $\theta \in \{0,1,\bot\}^n$, classical function $f : \{0,1\}^{|\MS|} \to \{0,1\}^*$, authentication key $k$, and linear operation $L$, let $\widetilde{M}_{\theta,f,k,L}$ be the projective measurement on $np$ qubits defined by the operators \[\left\{H^{\widetilde{\MS}_1} \left(\sum_{c : f\left(\Dec_{k,L,\theta}\left(c_{\widetilde{\MS}}\right)\right) = y}\ketbra{c}{c}\right)H^{\widetilde{\MS}_1}\right\}_{y} \cup \left\{H^{\widetilde{\MS}_1}\left(\sum_{c:\Dec_{k,L,\theta}\left(c_{\widetilde{\MS}}\right) = \bot}\ketbra{c}{c}\right)H^{\widetilde{\MS}_1}\right\}.\]  For any $np$-qubit register $\regC$, we write \[\widetilde{M}_{\theta,f,k,L}(\regC) \to \regC,y\] to refer to the operation that measures $\regC$ according to $M_{\theta,f,k,L}$ and then writes the classical result $y$ to a new register. Sometimes we will write $\widetilde{M}_{\theta,f,k,L}(\regC) \to y$ to denote just the classical measurement result $y$.

\paragraph{Correctness.} Our definition of correctness roughly states that encoding, applying a linear homomorphism, and then applying a partial measurement to the encoded state is equivalent to first applying the linear operation, applying the partial measurement, and then encoding. This definition supports composition of multiple partial measurements on encoded data, which will be necessary for our application to obfuscation.

\begin{definition}[Correctness]\label{def:auth-correct}
    A publicly-verifiable linearly-homomorphic QAS with classically-decodable ZX measurements is \emph{correct} if the following holds. For any linear operation $L$, bases $\theta \in \{0,1,\bot\}^{n}$, $f: \{0,1\}^{|\MS|} \to \{0,1\}^*$, and $k \in \Gen(1^\secp,n)$, 
    
    \begin{align*}
    &\LinEval_{L}^\dagger \circ \widetilde{M}_{\theta,f,k,L} \circ \LinEval_L \circ \Enc_k  = \Enc_k \circ L^\dagger \circ M_{\theta,f} \circ L.
    \end{align*}

    Note that both sequences of operations above map $\regM \to (\regC,y)$, where $\regM$ is an $n$-qubit register, $\regC$ is an $np$-qubit register, and $y$ is a classical measurement outcome.

\end{definition}

\paragraph{Security} Next, we formalize two security properties. The first roughly states that no adversary with access to the verification oracle can change the distribution resulting from a partial measurement on the encoded state.

\begin{definition}[Security]\label{def:auth-sec}
    A publicly-verifiable linearly-homomorphic QAS with classically-decodable ZX measurements is \emph{secure} if the following holds. For any linear operation $L$, bases $\theta \in \{0,1,\bot\}^n$, $f : \{0,1\}^{|\MS|} \to \{0,1\}^*$, and oracle-aided adversary $A : \regC \to \regC$, there exists an $\epsilon(\secp) \in [0,1]$ such that for any $n$-qubit state $\ket{\psi}$,

    \[\left\{y : \begin{array}{r} k \gets \Gen(1^\secp,n) \\ y \gets \widetilde{M}_{\theta,f,k,L} \circ A^{\Ver_{k,\cdot,\cdot}(\cdot)} \circ \Enc_k(\regM)\end{array}\right\} \approx_{2^{-\Omega(\secp)}} (1-\epsilon(\secp))\left\{y : y \gets M_{\theta,f} \circ L(\regM)\right\} + \epsilon(\secp)\{\bot\}.\]

\end{definition}

\begin{remark}\label{remark:unitary}
Although the adversary $A$ is defined as a (oracle-aided) general quantum map from $\regC \to \regC$, we can without loss of generality take it to be a (oracle-aided) unitary that additionally operates on some workspace register $\regA$ initialized to $\ket{0}$. We leave the workspace register $\regA$ implicit when writing $y \gets \widetilde{M}_{\theta,f,k,L} \circ A^{\Ver_{k,\cdot,\cdot}(\cdot)} \circ \Enc_k(\regM)$, and note that $ \widetilde{M}_{\theta,f,k,L}$ only operates on $\regC$.
\end{remark}

Next, we describe a weaker security property that is immediately implied by \cref{def:auth-sec}, but will be convenient to use in our application to obfuscation.

\begin{definition}[Mapping Security]\label{def:auth-mapping-sec}
    For any linear operation $L$, bases $\theta \in \{0,1,\bot\}^n$, $f : \{0,1\}^{|\MS|} \to \{0,1\}^*$, $n$-qubit state $\ket{\psi}$, and set $B \subset \{0,1\}^*$ such that \[\Pr[y \in B : y \gets M_{\theta,f} \circ L (\ket{\psi})] = 0,\] it holds that for any oracle-aided adversary $A : \regC \to \regC$, \[\Pr\left[y \in B : \begin{array}{r} k \gets \Gen(1^\secp,1^n) \\ y \gets \widetilde{M}_{\theta,f,k,L} \circ A^{\Ver_{k,\cdot,\cdot}(\cdot)} \circ \Enc_k (\ket{\psi})\end{array}\right] = 2^{-\Omega(\secp)}.\]
\end{definition}

Finally, we define a notion of privacy, which states that any two encoded states are indistinguishable, even given the verification oracle.

\begin{definition}[Privacy]\label{def:auth-privacy}
For any $n$-qubit states $\ket{\psi_0},\ket{\psi_1
}$ and oracle-aided binary outcome projector $D$, 
\[\bigg|\Pr_{k \gets \Gen(1^\secp,n)}\left[1 \gets D^{\Ver_{k,\cdot,\cdot}(\cdot)} \circ \Enc_k(\ket{\psi_0})\right] - \Pr_{k \gets \Gen(1^\secp,n)}\left[1 \gets D^{\Ver_{k,\cdot,\cdot}(\cdot)} \circ \Enc_k(\ket{\psi_1})\right]\bigg| = 2^{-\Omega(\secp)}.\]
\end{definition}

\subsection{Construction}

\paragraph{Paulis and updates.} We specify several notational conventions regarding sets $\{x_i\}_{i \in [n]}, \{z_i\}_{i \in [n]}$ that describe Pauli corrections on $n$ registers.

\begin{itemize}
    \item As $n$ will be clear from context, let $x \coloneqq (x_1,\dots,x_n)$ and $z \coloneqq (z_1,\dots,z_n)$.
    \item Given a linear operation $L$ on $n$ qubits, let $L(x,z) \coloneqq (x_L,z_L)$ be the result of starting with $(x,z)$, and, for each  CNOT gate in $L$, sequentially applying the CNOT update rule $(x_i,z_i),(x_j,z_j) \to (x_i, z_i \oplus z_j), (x_i \oplus x_j, z_j)$. Note that this is yet another interpretation for $L$, which in another context could refer to the unitary that applies the sequence of CNOT gates.
    \item Let $L^{-1}$ be the inverse of $L$, and note that $L^{-1}(x_L,z_L) = (x,z)$.
    \item Given $x = (x_1,\dots,x_n)$ or $x_L = (x_{L,1},\dots,x_{L,n})$ and a subset $\MS \subseteq [n]$, let $x_\MS \coloneqq \{x_i\}_{i \in \MS}$ and $x_{L,\MS} \coloneqq \{x_{L,i}\}_{i \in \MS}$. Given disjoint sets $\MS_0,\MS_1 \subset [n]$, we let $x_{\MS_0},x_{\MS_1}$ refer to the union $\{x_i\}_{i \in \MS_0} \cup \{x_i\}_{i \in \MS_1}$.
\end{itemize}

\paragraph{Subspaces.}
Given a $\secp$-dimensional subspace $S \subset \bbF_2^{2\secp+1}$ and a vector $\Delta \in \bbF_2^{2\secp+1} \setminus S$, define the $(\secp+1)$-dimensional subspace $$S_\Delta \coloneqq S \cup (S+\Delta).$$ 
Let the dual subspace of $S_{\Delta}$ be $\widehat{S} \coloneqq S_\Delta^\bot$; note that 
\begin{itemize}
    \item $\widehat{S}$ is $\secp$-dimensional; and
    \item since $S_{\Delta} \supset S$, its dual subspace $\widehat{S} := S_{\Delta}^\perp \subset S^\perp$.
\end{itemize}
\noindent
Let $\widehat{\Delta}$ be an arbitrary choice of a vector such that $S^\bot = \widehat{S} \cup (\widehat{S} + \widehat{\Delta})$, and define $$\widehat{S}_{\widehat{\Delta}} \coloneqq S^\bot = \widehat{S} \cup (\widehat{S} + \widehat{\Delta})~.$$

Given a subspace $S$, define the state \[\ket{S} \coloneqq \frac{1}{\sqrt{|S|}}\sum_{s \in S}\ket{s},\] and note that \[H^{\otimes 2\secp+1}\ket{S} = \ket*{S^\bot}.\]

Next, given any $\secp$-dimensional subspace $S$ and $\Delta \notin S$, define the isometry $E_{S,\Delta}$ from 1 qubit to $2\secp+1$ qubits that maps $\ket{0} \to \ket{S}$ and $\ket{1} \to \ket{S+\Delta}$.

\protocol{Publicly-verifiable linearly-homomorphic QAS with classically-decodable ZX measurements}{Our construction of a publicly-verifiable linearly-homomorphic QAS with classically-decodable ZX measurements.}{fig:authscheme}{

\begin{itemize}
    \item $\Gen(1^\secp,n)$: Sample a uniformly random $\secp$-dimensional subspace $S \subset \bbF_2^{2\secp+1}$, vector $\Delta \gets \bbF_2^{2\secp+1} \setminus S$, and $x_i,z_i \gets \bbF_2^{2\secp+1}$ for each $i \in [n]$. Output $k \coloneqq (S,\Delta,x,z)$.  
    \item $\Enc_k = X^x Z^z E_{S,\Delta}^{\otimes n}.$ 
    \item $\LinEval_L(\regC)$: Parse register $\regC = \regC_1 \otimes \dots \otimes \regC_n$, where each $\regC_i$ is a $(2\secp+1)$-qubit register. For each CNOT in $L$ from qubit $i$ to $j$, apply $\mathsf{CNOT}^{\otimes 2\secp+1}$ from register $\regC_i$ to $\regC_j$.
    \item $\Dec_{k,L,\theta}(c)$: Parse $c = \{c_i\}_{i \in \MS}$. Define $\{m_i\}_{i \in \MS}$ as follows. \[\forall i \in \MS_0: m_i = \begin{cases}0 \text{ if } c_i \in S + x_{L,i} \\ 1 \text{ if } c_i \in S + \Delta+ x_{L,i} \\ \bot \text{ otherwise}\end{cases} ~~~~~ \forall i \in \MS_1: m_i = \begin{cases}0 \text{ if } c_i \in \widehat{S} + z_{L,i} \\ 1 \text{ if } c_i \in \widehat{S} + \widehat{\Delta}+ z_{L,i} \\ \bot \text{ otherwise}\end{cases}\] If any $m_i = \bot$, then output $\bot$, and otherwise output $m = \{m_i\}_{i \in \MS}$.
    \item $\Ver_{k,L,\theta}(c)$:\footnote{This procedure is already determined by $\Dec_{k,L,\theta}(c)$, but we write it explicitly for clarity in the proof.}  Parse $c = \{c_i\}_{i \in \MS}$. For each $i \in \MS_0$, output $\bot$ if $c_i \notin S_\Delta + x_{L,i}$. For each $i \in \MS_1$, output $\bot$ if $c_i \notin \widehat{S}_{\widehat{\Delta}} + z_{L,i}$. Otherwise, output $\top$.
\end{itemize}
}

\begin{theorem}\label{thm:auth-correct}
    The QAS described in \proref{fig:authscheme} satisfies correctness (\cref{def:auth-correct}).
\end{theorem}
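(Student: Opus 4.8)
The plan is to split the claimed identity into two independent pieces: a \emph{transversality} step that commutes $\LinEval_L$ past $\Enc_k$ and reduces everything to the case $L=\Id$, and a \emph{per-qubit} step showing that a ZX measurement of a one-time-padded codeword, read back through $\Dec$, realises the logical ZX measurement of the encoded data.

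\textbf{Step 1 (transversality; reduction to $L=\Id$).} First I would prove the lemma $\LinEval_L \circ \Enc_k = \Enc_{k_L}\circ L$, where $k_L\coloneqq(S,\Delta,L(x,z))$ and on the right $L$ is the $n$-qubit CNOT unitary. This has two parts. (i) Since $S$ is a linear subspace, applying $\CNOT^{\otimes(2\secp+1)}$ from register $\regC_i$ to $\regC_j$ maps $\ket{S+b_i\Delta}\ket{S+b_j\Delta}$ to $\ket{S+b_i\Delta}\ket{S+(b_i\oplus b_j)\Delta}$ (re-index the inner summation over $S$), so the transversal CNOT is the logical CNOT on codewords; composing over all gates of $L$ gives $\mathbf{C}_L\,E_{S,\Delta}^{\otimes n} = E_{S,\Delta}^{\otimes n}\,L$, where $\mathbf{C}_L$ is the transversal CNOT layer. (ii) Conjugating $X^xZ^z$ through $\mathbf{C}_L$ acts on $(x,z)$ by the coordinate-wise update rule $(x_i,z_i),(x_j,z_j)\to(x_i,z_i\oplus z_j),(x_i\oplus x_j,z_j)$, i.e.\ sends $X^xZ^z$ to $X^{x_L}Z^{z_L}$; hence $\LinEval_L\circ\Enc_k = \mathbf{C}_L X^xZ^z E_{S,\Delta}^{\otimes n} = X^{x_L}Z^{z_L}E_{S,\Delta}^{\otimes n}L = \Enc_{k_L}\circ L$. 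Now, as a measurement on $\regC$, $\widetilde{M}_{\theta,f,k,L}$ equals $\widetilde{M}_{\theta,f,k_L,\Id}$, since both are built only from $(S,\Delta)$ and the keys $(x_L,z_L)$ that $\Dec_{k,L,\theta}$ uses. Using this with unitarity of $\LinEval_L$ and $L$, the left side of the correctness identity becomes $\LinEval_L^\dagger\circ\widetilde{M}_{\theta,f,k_L,\Id}\circ\Enc_{k_L}\circ L$, while the right side becomes $\LinEval_L^\dagger\circ\Enc_{k_L}\circ M_{\theta,f}\circ L$ after cancelling $\LinEval_L^\dagger\Enc_{k_L}=\Enc_k L^\dagger$ (a rearrangement of the transversality lemma). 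So it suffices to prove the $L=\Id$ identity $\widetilde{M}_{\theta,f,k',\Id}\circ\Enc_{k'} = \Enc_{k'}\circ M_{\theta,f}$ for an arbitrary key $k'=(S,\Delta,x',z')$.

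\textbf{Step 2 ($L=\Id$; core correctness).} Both instruments are projective, with Kraus operators $H^{\widetilde{\MS}_1}\widetilde{\Pi}_y H^{\widetilde{\MS}_1}$ and $H^{\MS_1}\Pi_y H^{\MS_1}$, plus an extra ``$\bot$'' operator $H^{\widetilde{\MS}_1}\widetilde{\Pi}_\bot H^{\widetilde{\MS}_1}$ on the physical side. So it is enough to verify, on the spanning set $\{\ket m\}_{m\in\{0,1\}^n}$, that $(H^{\widetilde{\MS}_1}\widetilde{\Pi}_y H^{\widetilde{\MS}_1})\Enc_{k'} = \Enc_{k'}(H^{\MS_1}\Pi_y H^{\MS_1})$ for every $y$ and $(H^{\widetilde{\MS}_1}\widetilde{\Pi}_\bot H^{\widetilde{\MS}_1})\Enc_{k'}=0$. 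The key point is that conjugating $\Enc_{k'}$ by $H^{\MS_1}$ turns each chunk $i\in\MS_1$ from a primal into a \emph{dual} codeword, via the identity (a consequence of $HX=ZH$, $HZ=XH$, $H^{\otimes(2\secp+1)}\ket S=\ket*{S^\perp}$, and $\widehat{\Delta}\cdot\Delta=1$)
\[ H^{\otimes(2\secp+1)}X^{x}Z^{z}\big(\alpha\ket{S}+\beta\ket{S+\Delta}\big) = Z^{x}X^{z}\Big(\tfrac{\alpha+\beta}{\sqrt2}\ket*{\widehat{S}}+\tfrac{\alpha-\beta}{\sqrt2}\ket*{\widehat{S}+\widehat{\Delta}}\Big), \]
from which $H^{\otimes(2\secp+1)}X^{x'_i}Z^{z'_i}E_{S,\Delta}H = Z^{x'_i}X^{z'_i}E_{\widehat{S},\widehat{\Delta}}$. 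Since $f$ reads only $m_\MS$, both $\Pi_y$ and $\widetilde{\Pi}_y$ factor over positions: $\Pi_y$ into the per-position $\ketbra{m_i}{m_i}$ ($i\in\MS$, $f(m_\MS)=y$), and $\widetilde{\Pi}_y$ into the per-chunk \emph{coset} projectors $\Pi[S+m_i\Delta+x'_i]$ ($i\in\MS_0$) and $\Pi[\widehat{S}+m_i\widehat{\Delta}+z'_i]$ ($i\in\MS_1$). For $i\in\MS_0$ the intertwining is immediate: $X^{x'_i}Z^{z'_i}E_{S,\Delta}\ket{m_i}$ is supported exactly on $S+m_i\Delta+x'_i$, and the two cosets $S+x'_i$, $S+\Delta+x'_i$ are disjoint ($\Delta\notin S$), so $\Pi[S+m_i\Delta+x'_i]X^{x'_i}Z^{z'_i}E_{S,\Delta}=X^{x'_i}Z^{z'_i}E_{S,\Delta}\ketbra{m_i}{m_i}$; for $i\in\MS_1$, the displayed identity reduces this to the same statement in the dual subspace (where $\widehat{S}+z'_i$, $\widehat{S}+\widehat{\Delta}+z'_i$ are disjoint since $\widehat{\Delta}\notin\widehat{S}$). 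Finally, after conjugation by $H^{\widetilde{\MS}_1}$ a codeword is supported only on $c$ with $c_i\in S_\Delta+x'_i$ ($i\in\MS_0\cup\MS_\bot$) and $c_i\in\widehat{S}_{\widehat{\Delta}}+z'_i$ ($i\in\MS_1$), i.e.\ on inputs where $\Dec$ does not output $\bot$, so $\widetilde{\Pi}_\bot$ annihilates it. Assembling the per-position identities (using that $f$ depends only on $m_\MS$) gives the Kraus-operator intertwining for each $y$, hence $\widetilde{M}_{\theta,f,k',\Id}\circ\Enc_{k'}=\Enc_{k'}\circ M_{\theta,f}$, which combined with Step 1 is the claim.

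\textbf{Main obstacle.} The conceptual content is modest; the real work is the bookkeeping in Step 2 — tracking the one-time-pad vectors and the $X\leftrightarrow Z$, $\Delta\leftrightarrow\widehat{\Delta}$, primal$\leftrightarrow$dual swaps induced by the Hadamard layer, and, crucially, keeping the physical projectors as whole-\emph{coset} projectors rather than individual basis-string projectors. This last point is exactly why $\widetilde{M}_{\theta,f,k,L}$ groups outcomes by the value of $f\circ\Dec$ and not by the raw string $c$: collapsing onto a single $\ket c$ would destroy the superposition inside a logical qubit and the identity would fail. The argument also uses a few elementary facts about the subspaces ($\dim\widehat{S}=\secp$, $\widehat{S}^\perp=S_\Delta$, $\widehat{\Delta}\cdot\Delta=1$, $\widehat{\Delta}\notin\widehat{S}$), which I would establish at the outset.
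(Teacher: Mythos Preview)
Your proposal is correct and follows essentially the same route as the paper: both rest on the transversality lemma $\LinEval_L\,E_{S,\Delta}^{\otimes n}=E_{S,\Delta}^{\otimes n}\,L$ and the Hadamard duality $H^{\otimes(2\secp+1)}E_{\widehat{S},\widehat{\Delta}}=E_{S,\Delta}H$, and both reduce the core step to the observation that the physical coset projectors intertwine with the logical standard-basis projectors through the (mixed primal/dual) encoding isometry. The only difference is organizational --- you first absorb $L$ into the key to reduce to $L=\Id$ and then verify the intertwining per position, whereas the paper keeps $L$ in the picture and establishes the identity via a single chain of equalities --- but the content is the same.
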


\begin{proof}
First, we show two key claims.

\begin{claim}\label{claim:Hadamard}
For any $S$ and $\Delta$, it holds that $H^{\otimes 2\secp+1}E_{\widehat{S},\widehat{\Delta}} = E_{S,\Delta}H$.
\end{claim}

\begin{proof}
We show that the maps are equivalent by checking their behavior on the basis $\{\ket{+},\ket{-}\}$. First, 

\begin{align*}
H^{\otimes 2\secp+1}E_{\widehat{S},\widehat{\Delta}}\ket{+} = H^{\otimes 2\secp+1}\ket*{\widehat{S}_{\widehat{\Delta}}} = \ket{S} = E_{S,\Delta}\ket{0} = E_{S,\Delta}H\ket{+}.
\end{align*}

Next, 
\begin{align*}
H^{\otimes 2\secp+1}E_{\widehat{S},\widehat{\Delta}}\ket{-} = H^{\otimes 2\secp+1}\left(\ket*{\widehat{S}} - \ket*{\widehat{S} + \widehat{\Delta}}\right)= H^{\otimes 2\secp + 1}Z^{\Delta}\ket*{\widehat{S}_{\widehat{\Delta}}} = \ket*{S+\Delta} = E_{S,\Delta}\ket{1} = E_{S,\Delta}H\ket{-}.
\end{align*}

\end{proof}

\begin{claim}\label{claim:linear}
For any $S,\Delta$, and $L$, $\LinEval_LE_{S,\Delta}^{\otimes n} = E_{S,\Delta}^{\otimes n}L$
\end{claim}

\begin{proof}
We show this for the case where $L$ contains a single CNOT gate, and the full proof follows by applying the argument sequentially. We show that the maps are equivalent by checking their behavior on the basis $\{\ket{b_1,b_2}\}_{b_1,b_2 \in \{0,1\}}$.

\begingroup
\allowdisplaybreaks
\begin{align*}
&\mathsf{CNOT}^{\otimes 2\secp+1}E_{S,\Delta}^{\otimes 2}\ket{b_1,b_2}\\
&= \mathsf{CNOT}^{\otimes 2\secp+1}\ket{S+b_1 \cdot \Delta}\ket{S + b_2 \cdot \Delta}\\
&= \frac{1}{2^{\secp}}\mathsf{CNOT}^{\otimes 2\secp+1} \sum_{s_1 \in S}\ket{s_1 + b_1 \cdot \Delta}\sum_{s_2 \in S}\ket{s_2 + b_2 \cdot \Delta}\\
&= \frac{1}{2^{\secp}} \sum_{s_1 \in S}\ket{s_1 + b_1 \cdot \Delta}\sum_{s_2 \in S}\ket{(s_1 + s_2) + (b_1 + b_2) \cdot \Delta}\\
&= \ket{S+b_1 \cdot \Delta}\ket{S + (b_1 + b_2) \cdot \Delta}\\
&= E_{S,\Delta}^{\otimes 2}\mathsf{CNOT}\ket{b_1,b_2}.
\end{align*}
\endgroup

\end{proof}

Now, define measurements $M'_{\theta,f},\widetilde{M}'_{\theta,f,k,L}$ so that \[M_{\theta,f} = H^{\MS_1}M'_{\theta,f}H^{\MS_1}, ~~ \text{ and } ~~ \widetilde{M}_{\theta,f,k,L} = H^{\widetilde{\MS}_1}X^{x_{L,\MS_0},z_{L,\MS_1}}\widetilde{M}'_{\theta,f,k,L}X^{x_{L,\MS_0},z_{L,\MS_1}}H^{\widetilde{\MS}_1}.\] 

To be concrete,

\[M'_{\theta,f} \coloneqq \left\{\sum_{m : f(m_\MS) = y}\ketbra{m}{m}\right\}_{y},\] and \[\widetilde{M}'_{\theta,f,k,L} \coloneqq \left\{\sum_{m : f(m_\MS) = y}\left(\sum_{\substack{c : \{c_i \in S+m_i \cdot \Delta\}_{i \in \MS_0},\\ \{c_i \in \widehat{S}+m_i \cdot \widehat{\Delta}\}_{i \in \MS_1}}}\ketbra{c}{c}\right)\right\}_{y} \cup \left\{\sum_{\substack{c: \exists i \in \MS_0 \text{ s.t. } c_i \notin S_\Delta\\ \ \ \vee \exists i \in \MS_1 \text{ s.t. }c_i \notin \widehat{S}_{\widehat{\Delta}}}}\ketbra{c}{c}\right\}.\]

Observe that 

\[\widetilde{M}'_{\theta,f,k,L}\left(E_{S,\Delta}^{\otimes |\MS_\bot \cup \MS_0|} \otimes E_{\widehat{S},\widehat{\Delta}}^{\otimes |\MS_1|}\right) = \left(E_{S,\Delta}^{\otimes |\MS_\bot \cup \MS_0|} \otimes E_{\widehat{S},\widehat{\Delta}}^{\otimes |\MS_1|}\right)M'_{\theta,f}.\]

Then,

\begingroup
\allowdisplaybreaks
\begin{align*}
&\LinEval_L^\dagger\widetilde{M}_{\theta,f,k,L}\LinEval_L\Enc_k\\
&=\LinEval_L^\dagger\widetilde{M}_{\theta,f,k,L}\LinEval_L X^xZ^zE_{S,\Delta}^{\otimes n}  \\
&=\LinEval_L^\dagger\widetilde{M}_{\theta,f,k,L}X^{x_L}Z^{z_L} \LinEval_LE_{S,\Delta}^{\otimes n}\\
&=\LinEval_L^\dagger\widetilde{M}_{\theta,f,k,L}X^{x_L}Z^{z_L}E^{\otimes n}_{S,\Delta}L \tag{\cref{claim:linear}}\\
&=\LinEval_L^\dagger H^{\widetilde{\MS}_1}X^{x_{L,\MS_0}z_{L,\MS_1}}\widetilde{M}'_{\theta,f,k,L}X^{x_{L,\MS_0},z_{L,\MS_1}}H^{\widetilde{\MS}_1}X^{x_L}Z^{z_L}E^{\otimes n}_{S,\Delta}L\\
&=\LinEval_L^\dagger H^{\widetilde{\MS}_1}X^{x_{L,\MS_0}z_{L,\MS_1}}\widetilde{M}'_{\theta,f,k,L}X^{x_{L,\MS_\bot}}Z^{z_{L,\MS_\bot},z_{L,\MS_0},x_{L,\MS_1}}H^{\widetilde{\MS}_1}E^{\otimes n}_{S,\Delta}L\\
&=\LinEval_L^\dagger H^{\widetilde{\MS}_1}X^{x_{L,\MS_\bot},x_{L,\MS_0},z_{L,\MS_1}}Z^{z_{L,\MS_\bot},z_{L,\MS_0},x_{L,\MS_1}}\widetilde{M}'_{\theta,f,k,L}H^{\widetilde{\MS}_1}E^{\otimes n}_{S,\Delta}L\\
&=\LinEval_L^\dagger X^{x_L}Z^{z_L}H^{\widetilde{\MS}_1}\widetilde{M}'_{\theta,f,k,L}H^{\widetilde{\MS}_1}E^{\otimes n}_{S,\Delta}L\\
&=\LinEval_L^\dagger X^{x_L}Z^{z_L}H^{\widetilde{\MS}_1}\widetilde{M}'_{\theta,f,k,L}\left(E_{S,\Delta}^{\otimes |\MS_\bot \cup \MS_0|} \otimes E_{\widehat{S},\widehat{\Delta}}^{\otimes |\MS_1|}\right)H^{\MS_1}L \tag{\cref{claim:Hadamard}}\\
&=\LinEval_L^\dagger X^{x_L}Z^{z_L}H^{\widetilde{\MS}_1}\left(E_{S,\Delta}^{\otimes |\MS_\bot \cup \MS_0|} \otimes E_{\widehat{S},\widehat{\Delta}}^{\otimes |\MS_1|}\right)M'_{\theta,f}H^{\MS_1}L \\
&=\LinEval_L^\dagger X^{x_L}Z^{z_L}E^{\otimes n}_{S,\Delta}H^{\MS_1}M'_{\theta,f}H^{\MS_1}L \\
&=X^xZ^z\LinEval_L^\dagger E^{\otimes n}_{S,\Delta}M_{\theta,f}L\\
&=X^xZ^z E^{\otimes n}_{S,\Delta}L^\dagger M_{\theta,f}L\\
&=\Enc_{k}L^\dagger M_{\theta,f}L.
\end{align*}
\endgroup

\end{proof}

\subsection{Security}

\begin{theorem}\label{thm:auth-sec}
    The QAS described in \proref{fig:authscheme} satisfies security (\cref{def:auth-sec}).
\end{theorem}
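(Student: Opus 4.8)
The plan is to follow the blueprint of \cref{subsec:overview-authentication}: superspace sampling for the verification oracle, then a Pauli twirl (\cref{lemma:Pauli-twirl}), then finishing with the randomness of $(S,\Delta)$. By \cref{remark:unitary} I take $A$ to be an oracle-aided unitary on $\regC$ plus a workspace, and I fix $L,\theta,f$. First I would replace the oracle $\Ver_{k,\cdot,\cdot}$ given to $A$: sample uniformly random $d$-dimensional superspaces $R\supseteq S_\Delta$ and $\widehat R\supseteq \widehat S_{\widehat\Delta}$ with $d\approx\tfrac32\secp$, and let $\Ver'_{k,\cdot,\cdot}$ check membership of each $c_i$ in $R+x_{L',i}$ (resp.\ $\widehat R+z_{L',i}$) instead of $S_\Delta+x_{L',i}$ (resp.\ $\widehat S_{\widehat\Delta}+z_{L',i}$). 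Indistinguishability of $\Ver$ and $\Ver'$ follows from \cref{lemma:puncture}, with the ``hard-to-find'' set consisting of inputs having some $c_i\in (R\setminus S_\Delta)+x_{L',i}$ or $(\widehat R\setminus\widehat S_{\widehat\Delta})+z_{L',i}$: against the true $\Ver$ the adversary has no information about the freshly sampled $R,\widehat R$, so each query hits this set with probability $2^{-\Omega(\secp)}$. Next I would re-order the sampling so $R,\widehat R$ come first, split each pad as $x_i=x_{i,R}+x_{i,\co(R)}$ with $x_{i,R}\gets R$, $x_{i,\co(R)}\gets\co(R)$ (symmetrically for $z_i$ using $\widehat R$), and hand $R,\widehat R$ together with all $\co$-parts to $A$; since membership of $c_i$ in $R+x_{L',i}$ depends only on $x_{L',i}\bmod R$, which the CNOT update rule computes from the $\co(R)$-parts, this suffices for $A$ to run $\Ver'$ itself, so $A$ becomes an ordinary (oracle-free) unitary. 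Only then would I sample $S,\Delta$ with $\widehat R^\bot\subseteq S\subseteq S_\Delta\subseteq R$ (these inclusions are forced by $R\supseteq S_\Delta$ and $\widehat R\supseteq\widehat S_{\widehat\Delta}=S^\bot$, so this is just the correct conditional distribution), complete $k$, prepare $\Enc_k\ket\psi$, apply $A$, and measure $\widetilde M_{\theta,f,k,L}$; this is the same experiment as in \cref{def:auth-sec} after the oracle switch.

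Then comes the twirl. Write $A=A'\circ\LinEval_L$, so by \cref{claim:linear} the pre-measurement state is $A'X^{x_L}Z^{z_L}E_{S,\Delta}^{\otimes n}(L\ket\psi)$, and, using the decomposition from the proof of \cref{thm:auth-correct}, $\widetilde M_{\theta,f,k,L}=H^{\widetilde\MS_1}X^{x_{L,\MS_0},z_{L,\MS_1}}\widetilde M'_{\theta,f,k,L}X^{x_{L,\MS_0},z_{L,\MS_1}}H^{\widetilde\MS_1}$ with $\widetilde M'$ pad-independent. The decoding correction cancels the $X$-pad on the standard registers $\MS_0$ and the $Z$-pad on the Hadamard registers $\MS_1$, so after expanding $A'=\sum_{a,b}(X^aZ^b)\otimes A'_{a,b}$ the pad randomness that remains to average over is an $X$-pad on $\MS_1\cup\MS_\bot$ valued in $R$ and a $Z$-pad on $\MS_0\cup\MS_\bot$ valued in $\widehat R$. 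Applying \cref{lemma:Pauli-twirl} register-wise kills every cross term $((a,b),(a',b'))$ except those with $a_i\oplus a'_i\in\widehat R^\bot$ for $i\in\MS_0\cup\MS_\bot$ and $b_i\oplus b'_i\in R^\bot$ for $i\in\MS_1\cup\MS_\bot$; and since $\widehat R^\bot\subseteq S$ and $R^\bot\subseteq\widehat S$, all Paulis in a surviving class agree, as operators on the codespace, on exactly the registers and components that $\widetilde M'$ reads. Consequently the outcome distribution equals that of a classical mixture: with weight $w_{[a_0],[b_0]}$ — a squared norm built from the $A'_{a,b}$'s, depending on $A$ and the revealed data but \emph{not} on $\ket\psi$ — apply the representative Pauli $X^{a_0}Z^{b_0}$ to $E_{S,\Delta}^{\otimes n}(L\ket\psi)$ and then run $\widetilde M_{\theta,f,k,L}$.

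Finally I would classify the classes. Call a class \emph{benign} if on every measured register its $X$-part lies in $S$ and its $Z$-part in $\widehat S$ (such Paulis commute with every operator of $H^{\widetilde\MS_1}\widetilde M'_{\theta,f,k,L}H^{\widetilde\MS_1}$, so by \cref{thm:auth-correct} they reproduce exactly $\{y\gets M_{\theta,f}\circ L(\ket\psi)\}$); \emph{rejecting} if some measured register's error leaves its codespace coset (then $\widetilde M$ outputs $\bot$ deterministically); and \emph{logical-altering} otherwise. Only logical-altering classes are harmful, and for such a class some measured register carries, say, an $X$-part $a_{0,i}$ that is nonzero modulo $\widehat R^\bot$ yet lies in $S_\Delta$; but conditioned on the already-fixed $R,\widehat R$, the subspace $S_\Delta/\widehat R^\bot$ is uniformly random of dimension $\approx\secp/2$ inside $R/\widehat R^\bot$ of dimension $\approx\secp$, so a fixed nonzero vector lies there with probability $2^{-\Omega(\secp)}$ (and symmetrically for a $Z$-part on a Hadamard register). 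Since the weights $w$ are fixed before $(S,\Delta)$ is drawn, $\E_k[\sum_{\text{log-alt}}w]\le 2^{-\Omega(\secp)}$. Setting $\epsilon:=\E_k[\sum_{\text{rejecting}}w]$, which is a function of $A$ and $\secp$ alone, the outcome distribution is then within total variation $2^{-\Omega(\secp)}$ of $(1-\epsilon)\{y\gets M_{\theta,f}\circ L(\ket\psi)\}+\epsilon\{\bot\}$, which is \cref{def:auth-sec}; essentially the same argument, with the input state forgotten entirely, also yields privacy (\cref{def:auth-privacy}).

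The hard part will be the twirl step: carefully tracking how the pad-dependent decoding corrections of $\widetilde M_{\theta,f,k,L}$ interact with the one-time pad so that the twirl applies on precisely the registers and Pauli components that influence the outcome, and verifying that each surviving Pauli class acts \emph{coherently} as a single operator on the decoded state (not merely on average), so that the outcome genuinely is the claimed classical mixture of ``honest'' and ``$\bot$''.
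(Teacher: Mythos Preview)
Your proposal follows the paper's proof essentially step for step: superspace sampling for the oracle (their \cref{claim:increase-subspaces}), handing the adversary $(R,\widehat R,x^{\co(R)},z^{\co(\widehat R)})$ so it can simulate $\Ver'$ on its own, then Pauli-twirling $B:=A\circ\LinEval_L^\dagger$ and classifying the surviving Pauli classes using the remaining randomness in $(S,\Delta)$. The high-level architecture and every ingredient you name are exactly the ones the paper uses.

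Where your write-up diverges is the mechanics of the twirl, and this is precisely the point you flag as ``hard.'' Your sentence ``the decoding correction cancels the $X$-pad on $\MS_0$ and the $Z$-pad on $\MS_1$, so the pad randomness that remains is an $X$-pad on $\MS_1\cup\MS_\bot$ and a $Z$-pad on $\MS_0\cup\MS_\bot$'' has the picture slightly inverted. The decoding correction does not \emph{remove} those pads from the twirl; on the contrary, it supplies the second copy that makes the conjugation two-sided (so one already has an $X$-twirl on $\MS_0$ and a $Z$-twirl on $\MS_1$). What is genuinely missing without extra work is the two-sided pad on the \emph{other} components and on $\MS_\bot$, and a one-sided pad does not by itself invoke \cref{lemma:Pauli-twirl}. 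With only the twirl you describe, the surviving ``classes'' still mix Paulis that differ in $b_i$ on $\MS_0$ and $a_i$ on $\MS_1$, and those do not act coherently as a single operator on the encoded state, so the weights $w_{[a_0],[b_0]}$ need not be $\ket\psi$-independent squared norms as you assert.

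The paper fixes exactly this with a small but crucial move you should absorb: after writing the outcome probability as $\|\Pi[V_y]\,X^{x_{L,\MS_0},z_{L,\MS_1}}H^{\widetilde\MS_1}\,B\,X^{x_L}Z^{z_L}E_{S,\Delta}^{\otimes n}L\ket\psi\|^2$, it \emph{inserts} the missing Paulis $X^{x_{L,\MS_1\cup\MS_\bot}}$ and $Z^{z_{L,\MS_0\cup\MS_\bot}}$ on the left---harmless because after $H^{\widetilde\MS_1}$ they become phases absorbed by the diagonal projector $\Pi[V_y]$---to obtain a \emph{full} two-sided conjugation $Z^{z_L}X^{x_L}\,B\,X^{x_L}Z^{z_L}$ over all $n$ registers. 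Now \cref{lemma:Pauli-twirl} applies directly with $x_L,z_L$ uniform over the affine subspaces $R^{\oplus n}+x_L^{\co(R)}$, $\widehat R^{\oplus n}+z_L^{\co(\widehat R)}$, killing every cross term. The paper then does a ``canonical replacement'' of each surviving $P\in\cP_\top$ by a representative in $\cP_{C,\widehat C}$ (with $C\simeq R/\widehat R^\bot$, $\widehat C\simeq\widehat R/R^\bot$), after which the mixture is genuinely over single Pauli attacks with nonnegative, $\ket\psi$-independent weights, and your benign/rejecting/logical-altering trichotomy goes through verbatim. Once you replace your ``cancellation'' paragraph with this insertion-then-canonical-representative argument, the rest of your plan is the paper's proof.
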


\begin{proof} We begin by modifying the $\Gen$ procedure and $\Ver_{k,\cdot,\cdot}(\cdot)$ oracle, and arguing that the output of the experiment remains (almost) unaffected. In particular, we will "expand" the verification oracle with random superspaces $R \supset S_\Delta$ and $\widehat{R} \supset \widehat{S}_{\widehat{\Delta}}$. Consider the following procedures.

\begin{itemize}
    \item $\Gen'(1^\secp,n)$: Sample a uniformly random $\secp$-dimensional subspace $S \subset \bbF_2^{2\secp+1}$, vector $\Delta \gets \bbF_2^{2\secp+1} \setminus S$, and $x_i,z_i \gets \bbF_2^{2\secp+1}$ for each $i \in [n]$. Sample uniformly random $(3\secp/2 + 1)$-dimensional subspaces $R,\widehat{R} \subset \bbF_2^{2\secp+1}$ conditioned on $S_\Delta \subset R$ and $\widehat{S}_{\widehat{\Delta}} \subset \widehat{R}$. Output $k \coloneqq (S,\Delta,x,z)$ along with $(R,\widehat{R})$.
    \item $\Ver_{k,R,\widehat{R},L,\theta}'(c)$: Parse $c = \{c_i\}_{i \in \MS}$. For each $i \in \MS_0$, output $\bot$ if $c_i \notin R + x_{L,i}$. For each $i \in \MS_1$, output $\bot$ if $c_i \notin \widehat{R} + z_{L,i}$. Otherwise, output $\top$.
\end{itemize}

\begin{claim}\label{claim:increase-subspaces} 
For any $L,\theta,f,A$, and $\ket{\psi}$, it holds that 
\begin{align*}
    \left\{y : \begin{array} {r} k \gets \Gen(1^\secp,n) \\ y \gets \widetilde{M}_{\theta,f,k,L} \circ A^{\Ver_{k,\cdot,\cdot}(\cdot)} \circ \Enc_k(\ket{\psi})\end{array}\right\} \approx_{2^{-\Omega(\secp)}} \left\{y : \begin{array} {r} (k,R,\widehat{R}) \gets \Gen'(1^\secp,n) \\ y \gets \widetilde{M}_{\theta,f,k,L} \circ A^{\Ver'_{k,R,\widehat{R},\cdot,\cdot}(\cdot)} \circ \Enc_k(\ket{\psi})\end{array}\right\}.
\end{align*}
\end{claim}

\begin{proof} Note that these distributions can be sampled by a reduction given oracle access to either $(O[S_\Delta],O[\widehat{S}_{\widehat{\Delta}}])$ or $(O[R],O[\widehat{R}])$. Now, for any fixed $(\secp+1)$-dimensional subspaces $S_\Delta, \widehat{S}_\Delta$ and any vector $v$, \[\Pr_{R,\widehat{R}}[v \in R \setminus S_\Delta \cup \widehat{R} \setminus \widehat{S}_{\widehat{\Delta}}] \leq \frac{|R \setminus S_\Delta|}{|\bbF_2^{2\secp+1} \setminus S_\Delta |} + \frac{|\widehat{R} \setminus \widehat{S}_{\widehat{\Delta}}|}{|\bbF_2^{2\secp+1} \setminus \widehat{S}_{\widehat{\Delta}} |} = 2 \cdot \frac{2^{3\secp/2+1} - 2^{\secp+1}}{2^{2\secp+1} - 2^{\secp+1}} = 2^{-\Omega(\secp)},\] where the probability is over sampling random $(3\secp/2+1)$-dimensional subspaces $R,\widehat{R}$ conditioned on $S_\Delta \subset R$ and $\widehat{S}_{\widehat{\Delta}} \subset \widehat{R}$. Then the claim follows by noting that $(O[S_\Delta],O[\widehat{S}_{\widehat{\Delta}}])$ and $(O[R],O[\widehat{R}])$ are identical outside of $R \setminus S_\Delta$ and $\widehat{R} \setminus \widehat{S}_\Delta$, and applying \cref{lemma:puncture} (a standard oracle hybrid argument).
\end{proof}

Now, fix any $(3\secp/2+1)$-dimensional subspaces $R,\widehat{R}$ such that $\widehat{R}^\bot \subset R$, and consider the following procedure.

\begin{itemize}
    \item $\Gen'_{R,\widehat{R}}(1^\secp,n)$: Sample a uniformly random subspace $S \subset \bbF_2^{2\secp+1}$ conditioned on $\widehat{R}^\bot \subset S \subset R$, sample a uniformly random vector $\Delta \gets R \setminus S$, and sample uniformly random $x_i^R,z_i^{\widehat{R}} \gets (R,\widehat{R})$ for each $i \in [n]$. Set $x^R = (x_1^R,\dots,x_n^R), z^{\widehat{R}} = (z_1^{\widehat{R}},\dots,z_n^{\widehat{R}})$ and output $(S,\Delta,x^R,z^{\widehat{R}})$. 
\end{itemize}

Next, let $\co(R)$ be an arbitrary set of coset representatives of $R$, let $\co(\widehat{R})$ be an arbitrary set of coset representatives of $\widehat{R}$, and fix any

\[x^{\co(R)} = \left(x_1^{\co(R)},\dots,x_n^{\co(R)}\right), ~~~ z^{\co(\widehat{R})} = \left(z_1^{\co(\widehat{R})},\dots,z_n^{\co(\widehat{R})}\right),\] where each $x_i^{\co(R)} \in \co(R)$ and $z_i^{\co(\widehat{R})} \in \co(\widehat{R})$. Then the proof of the theorem follows by combining \cref{claim:increase-subspaces} with the following claim. Notice that the adversary $A$ in the following claim no longer requires access to the "expanded" oracle $\Ver'_{k,R,\widehat{R},\cdot,\cdot}(\cdot)$, since $A$ is allowed to depend on $\left(R,\widehat{R},x^{\co(R)},z^{\co(\widehat{R})}\right)$, which suffice to implement $\Ver'_{k,R,\widehat{R},\cdot,\cdot}(\cdot)$.

\begin{claim}
Fix any $R,\widehat{R},x^{\co(R)},z^{\co(\widehat{R})}$. Then for any $L,\theta,f$, and unitary $A$,\footnote{As noted in \cref{remark:unitary}, by introducing a sufficiently large workspace register $\regA$ initialized to $\ket{0}$, we can assume without loss of generality that the adversary $A$ is unitary. This additional workspace register $\regA$ is left implicit in the description of the claim and proof.} there exists an $\epsilon = \epsilon(\secp) \in [0,1]$ such that for any $\ket{\psi}$,

\begin{align*}
    \left\{y : \begin{array}{r} \left(S,\Delta,x^{R},z^{\widehat{R}}\right) \gets \Gen'_{R,\widehat{R}}(1^\secp,n) \\  x \coloneqq x^{R} + x^{\co(R)}, z \coloneqq z^{\widehat{R}} + z^{\co(\widehat{R})} \\ k \coloneqq (S,\Delta,x,z)\\ y \gets \widetilde{M}_{\theta,f,k,L} \circ A \circ \Enc_k(\ket{\psi})\end{array}\right\} \approx_{2^{-\Omega(\secp)}} (1-\epsilon)\left\{y: y \gets M_{\theta,f} \circ L(\ket{\psi})\right\} + \epsilon \left\{\bot\right\}.
\end{align*}

\end{claim}

\begin{proof}

Let $\cD$ be the distribution described by the LHS of the statement in the claim. Next, define $x^{\co(R)}_L,z^{\co(\widehat{R})}_L \coloneqq L(x^{\co(R)},z^{\co(\widehat{R})})$, and define the distribution $\cK_L$ as follows.

\[\cK_L \coloneqq \left\{(S,\Delta,x_L,z_L) : \begin{array}{r} \left(S,\Delta,x^{R},z^{\widehat{R}}\right) \gets \Gen'_{R,\widehat{R}}(1^\secp,n) \\ x^R_L,z^{\widehat{R}}_L \coloneqq L(x^R,z^{\widehat{R}})  \\ x_L \coloneqq x^R_L + x^{\co(R)}_L, z_L \coloneqq z^{\widehat{R}}_L + z^{\co(\widehat{R})}_L\end{array}\right\}.\]

Observe that the distribution over $k = (S,\Delta,x,z)$ as sampled by $\cD$ is equivalent to the distribution that results from sampling  $(S,\Delta,x_L,z_L) \gets \cK_L$ and setting $(x,z) = L^{-1}(x_L,z_L)$. Thus, we can write $\cD$ equivalently as

\[\cD = \left\{y : \begin{array}{r} \left(S,\Delta,x_L,z_L\right) \gets \cK_L \\ (x,z) \coloneqq L^{-1}(x_L,z_L) \\ k \coloneqq \left(S,\Delta,x,z\right) \\ y \gets \widetilde{M}_{\theta,f,k,L} \circ A \circ \Enc_k(\ket{\psi})\end{array}\right\}.\]

Moreover, the vectors $(x_L,z_L)$ obtained by sampling $(S,\Delta,x_L,z_L) \gets \cK_L$ are such that $x_L$ and $z_L$ are uniformly random over an affine subspaces, namely, \[x_L \gets R^{\oplus n} + x^{\co(R)}_L, \ \ \ \text{and } \ \ \ z_L \gets \widehat{R}^{\oplus n} + z^{\co(\widehat{R})}_L.\] 

This follows because $L$ is full rank, and the vectors $x^R,z^{\widehat{R}} = (x_1^R,\dots,x_n^R),(z_1^{\widehat{R}},\dots,z_n^{\widehat{R}})$ obtained by sampling \[\left(S,\Delta,x^{R},z^{\widehat{R}}\right) \gets \Gen'_{R,\widehat{R}}(1^\secp,n)\] are such that each $x_i^R$ is uniformly random over $R$ and each $z_i^{\widehat{R}}$ is uniformly random over $\widehat{R}$. The fact that $x_L$ and $z_L$ are uniform over affine subspaces will be used later in the proof when we apply the Pauli twirl over affine subspaces (\cref{lemma:Pauli-twirl}).

Next, we introduce some more notation.

\begin{itemize}
    \item For each $y \in \mathsf{range}(f)$, define \[V_y \coloneqq \bigcup_{m: f\left(m_\MS\right)=y}  \left(\bigotimes_{i \in \MS_0} \left(S + m_i \cdot \Delta\right) \bigotimes_{i \in \MS_1} \left(\widehat{S} + m_i \cdot \widehat{\Delta}\right)\right),\] and define
    \[V_\bot \coloneqq  \{0,1\}^{(2\secp + 1)n} \setminus \bigcup_{y \in \mathsf{range}(f)}V_y.\] For $y \in \mathsf{range}(f) \cup \{\bot\}$, define $\ket{V_y} \coloneqq \sum_{v \in V_y}\ket{v}$.
    \item Define the unitary $B \coloneqq A \circ \LinEval_L^\dagger$. Note that the "honest" $A$ operation just applies $\LinEval_L$, so in this case $B$ is the identity.

    \item For any pure state $\ket{\phi}$, define $\Mx[\ket{\phi}] \coloneqq \ketbra{\phi}{\phi}.$
\end{itemize}

Now, given any $(S,\Delta,x_L,z_L) \in \cK_L$, which defines $(x,z) = L^{-1}(x_L,z_L)$, and any $y \in \mathsf{range}(f) \cup \{\bot\}$, we can write the probability that $\cD$ outputs $y$ as  

\begin{align*}
    &\left\| \Pi[V_y]X^{x_{L,\MS_0},z_{L,\MS_1}}H^{\widetilde{\MS}_1}A\Enc_{(S,\Delta,x,z)}\ket{\psi}\right\|^2 \\
    &= \left\| \Pi[V_y]X^{x_{L,\MS_0},z_{L,\MS_1}}H^{\widetilde{\MS}_1}B\LinEval_L X^x Z^z E_{S,\Delta}^{\otimes n}\ket{\psi}\right\|^2 \\
    &= \left\| \Pi[V_y]H^{\widetilde{\MS}_1}X^{x_{L,\MS_0}}Z^{z_{L,\MS_1}}B X^{x_L} Z^{z_L} E_{S,\Delta}^{\otimes n}L\ket{\psi}\right\|^2 \\
    &= \left\| \Pi[V_y]H^{\widetilde{\MS}_1}X^{x_L}Z^{z_L}B X^{x_L} Z^{z_L} E_{S,\Delta}^{\otimes n}L\ket{\psi}\right\|^2,
\end{align*}

where in the last line, we have inserted Pauli $X$ operations on registers that are either measured in the Hadamard basis or not measured at all and Pauli $Z$ operations on registers that are either measured in the standard basis or not measured at all. Doing this has no effect on the outcome. Thus, we can write the distribution $\cD$ concisely as

\begin{align*}
\cD = \sum_{y \in \mathsf{range}(f) \cup \{\bot\}} \ketbra{y}{y} \frac{1}{|\cK_L|}\sum_{\left(S,\Delta,x_L,z_L\right) \in \cK_L}\bra{V_y}\Mx\left[H^{\widetilde{\MS}_1} Z^{z_{L}}X^{x_{L}} BX^{x_L}Z^{z_L}E_{S,\Delta}^{\otimes n}L\ket{\psi}\right]\ket{V_y}.
\end{align*}

To complete the proof, we will decompose $B$ as a sum of Paulis, and factor out terms that will cause $\cD$ to output $\bot$ (with high probability). Eventually, we'll be left with terms that do not affect the outcome of directly measuring $L\ket{\psi}$. To begin with, let 

\[\cP \coloneqq \left\{X^xZ^z : x = (x_1,\dots,x_n),z=(z_1,\dots,z_n) \in \{0,1\}^{(2\secp+1)n}\right\},\] and define the subsets

\[\cP_\bot \coloneqq \left\{X^xZ^z : \exists i \in \MS_0 \text{ s.t. } x_i \notin R \text{ or } \exists i \in \MS_1 \text{ s.t. } z_i \notin \widehat{R}\right\}, ~~~~~~ \cP_\top = \cP \setminus \cP_\bot.\]

Then we can write $B$ as \[B = \sum_{P \in \cP_\top} \alpha_P P  + \sum_{P \in \cP_\bot}\alpha_P P \coloneqq B_\top + B_\bot,\] for some coefficients $\alpha_P$.

Note that for any $y \in \mathsf{range}(f)$, $\left(S,\Delta,x_L,z_L\right) \in \cK_L$, and $P \in P_\bot$,
\[\bra{V_y}H^{\widetilde{\MS}_1} Z^{z_L}X^{x_L}P X^{x_L}Z^{z_L}E_{S,\Delta}^{\otimes n}L\ket{\psi} = 0,\] which follows by definition of $V_y$, since $S_\Delta \subset R$ and $\widehat{S}_{\widehat{\Delta}} \subset \widehat{R}$. Thus there exists an $\epsilon_\bot$ such that

\[\cD = \sum_{y \in \mathsf{range}(f) \cup \{\bot\}} \ketbra{y}{y} \frac{1}{|\cK_L|}\sum_{\left(S,\Delta,x_L,z_L\right)}\bra{V_y}\Mx\left[H^{\widetilde{\MS}_1} Z^{z_{L}}X^{x_{L}} B_\top X^{x_L}Z^{z_L}E_{S,\Delta}^{\otimes n}L\ket{\psi}\right]\ket{V_y} + \epsilon_\bot \ketbra{\bot}{\bot}.\]

Next, we define the following.
\begin{itemize}
    \item Let $C \simeq R / \widehat{R}^\bot$ be a subspace of coset representatives of $\widehat{R}^\bot$ in $R$.%
    \item Let $\widehat{C} \simeq \widehat{R} / R^\bot$ be a subspace of coset representatives of $R^\bot$ in $\widehat{R}$. %
    \item Define the set of Paulis
    \begin{align*}&\cP_{C,\widehat{C}} \coloneqq \left\{X^x Z^z : \begin{array}{c}\left\{x_i \in C, z_i = 0^{2\secp+1}\right\}_{i \in \MS_0}, \left\{x_i = 0^{2\secp+1}, z_i \in \widehat{C}\right\}_{i \in \MS_1}, \\ \left\{x_i = 0^{2\secp+1}, z_i = 0^{2\secp+1}\right\}_{i \in \MS_\bot}\end{array}\right\}.
    \end{align*}
\end{itemize}

Now, for any $(x,z) = (x_1,\dots,x_n,z_1,\dots,z_n)$ such that $P  = X^xZ^z \in \cP_\top$, define $(x',z') = (x_1',\dots,x_n',z_1',\dots,z_n')$ such that $X^{x'}Z^{z'} \in \cP_{C,\widehat{C}}$ as follows.

\begin{itemize}
    \item For $i \in \MS_0$, let $x_i' \in C$ be the representative of $x_i$'s coset (recall that $x_i \in R$ by definition of $\cP_\top$), and let $z_i' = 0^{2\secp+1}$.
    \item For $i \in \MS_1$, let $z_i' \in \widehat{C}$ be the representative of $z_i$'s coset (recall that $z_i \in \widehat{R}$ by definition of $\cP_\top$), and let $x_i' = 0^{2\secp+1}$.
    \item For $i \in \MS_\bot$, let $x'_i = 0^{2\secp+1}, z_i' = 0^{2\secp+1}$.
\end{itemize}

Then note that for all $y \in \mathsf{range}(f) \cup \{\bot\}$ and $\left(S,\Delta,x_L,z_L\right) \in \cK_L$, it holds that 

\[\bra{V_y}H^{\widetilde{\MS}_1} Z^{z_L}X^{x_L}X^{x}Z^z X^{x_L}Z^{z_L}E_{S,\Delta}^{\otimes n}L\ket{\psi} = \bra{V_y}H^{\widetilde{\MS}_1} Z^{z_L}X^{x_L}X^{x'}Z^{z'} X^{x_L}Z^{z_L}E_{S,\Delta}^{\otimes n}L\ket{\psi},\]

which follows by definition of $V_y$ and the fact that $S,\widehat{S}$ are always sampled so that $\widehat{R}^\bot \subset S$ and $R^\bot \subset \widehat{S}$. 

That is, we have identified for any $P \in \cP_\top$ a canonical $P' \in \cP_{C,\widehat{C}}$ for which $P'$ will have the same behavior as $P$ over all $\left(S,\Delta,x_L,z_L\right) \in \cK_L$. Thus, we can replace $B_\top$ in the expression for $\cD$ with \[\sum_{P \in \cP_{C,\widehat{C}}}\alpha_P P\] for some coefficients $\alpha_P$, and write $\cD$ as
\begin{align*}
&\sum_{y \in \mathsf{range}(f) \cup \{\bot\}} \ketbra{y}{y} \frac{1}{|\cK_L|}\sum_{\left(S,\Delta,x_L,z_L\right)}\bra{V_y}\Mx\left[H^{\widetilde{\MS}_1} Z^{z_{L}}X^{x_{L}} \left(\sum_{P \in \cP_{C,\widehat{C}}}\alpha_P P\right) X^{x_L}Z^{z_L}E_{S,\Delta}^{\otimes n}L\ket{\psi}\right]\ket{V_y} \\ & ~~~~~~~ + \epsilon_\bot \ketbra{\bot}{\bot} \\
= &\sum_{y \in \mathsf{range}(f) \cup \{\bot\}} \ketbra{y}{y} \sum_{P_0,P_1 \in \cP_{C,\widehat{C}}}\alpha_{P_0}\alpha_{P_1}^*\frac{1}{|\cK_L|}\\&\hspace{2cm}\left(\sum_{\left(S,\Delta,x_L,z_L\right)}\bra{V_y}H^{\widetilde{\MS}_1} Z^{z_{L}}X^{x_{L}} P_0 X^{x_L}Z^{z_L}\Mx\left[E_{S,\Delta}^{\otimes n}L\ket{\psi}\right] Z^{z_L}X^{x_L} P_1^\dagger X^{x_{L}}Z^{z_{L}}H^{\widetilde{\MS}_1}\ket{V_y}\right)\\ & ~~~~~~~ + \epsilon_\bot \ketbra{\bot}{\bot}
\end{align*}

Now, we are finally ready to apply the the Pauli twirl over affine subspaces (\cref{lemma:Pauli-twirl}). To do so, we make the following observations.

\begin{itemize}
    \item As noted above, $x_L \gets R^{\oplus n} + x^{\co(R)}_L$, and $z_L \gets \widehat{R}^{\oplus n} + z^{\co(\widehat{R})}_L$ are uniformly random over affine subspaces of $R^{\oplus n}$ and $\widehat{R}^{\oplus n}$ respectively. 
    \item Consider any $X^{x_0}Z^{z_0} \neq X^{x_1}Z^{z_1} \in \cP_{C,\widehat{C}}$. If $x_0 \neq x_1$, then there exists some index $i \in [n]$ such that $x_{0,i} \oplus x_{1,i} \notin \widehat{R}^\bot$ and thus, $x_0 \oplus x_1 \notin (\widehat{R}^{\oplus n})^\bot$. Otherwise, $z_0 \neq z_1$, and there exists some index $i \in [n]$ such that $z_{0,i} \oplus z_{1,i} \notin R^\bot$ and thus, $z_0 \oplus z_1 \notin (R^{\oplus n})^\bot$.
\end{itemize}

Then by \cref{lemma:Pauli-twirl}, all the cross-terms $P_0 \neq P_1$ are killed in the above expression for $\cD$, which we can now write as

\begin{align*}
&\sum_{y \in \mathsf{range}(f) \cup \{\bot\}} \ketbra{y}{y} \sum_{P \in \cP_{C,\widehat{C}}}\alpha_P\alpha_P^*\frac{1}{|\cK_L|}\left(\sum_{\left(S,\Delta,x_L,z_L\right)}\bra{V_y}\Mx\left[H^{\widetilde{\MS}_1} Z^{z_{L}}X^{x_{L}} PX^{x_L}Z^{z_L}E_{S,\Delta}^{\otimes n}L\ket{\psi}\right]\ket{V_y}\right)\\ & ~~~~~~~ + \epsilon_\bot \ketbra{\bot}{\bot},
\end{align*}

Finally, since $S,\Delta$ are chosen uniformly at random conditioned on $\widehat{R}^\bot \subset S_\Delta \subset R$, we have that for any fixed $P \in \cP_{C,\widehat{C}} \setminus \cI$, 

\begin{align*}
&\sum_{y \in \mathsf{range}(f) \cup \{\bot\}} \frac{1}{|\cK_L|}\sum_{\left(S,\Delta,x_L,z_L\right)}\bra{V_y}\Mx\left[H^{\widetilde{\MS}_1} Z^{z_{L}}X^{x_{L}} PX^{x_L}Z^{z_L}E_{S,\Delta}^{\otimes n}L\ket{\psi}\right]\ket{V_y}\\ &~~~~~ \leq \frac{|S_\Delta \setminus \widehat{R}_\bot|}{|R \setminus \widehat{R}_\bot|} + \frac{|\widehat{S}_{\widehat{\Delta}} \setminus R_\bot|}{|\widehat{R} \setminus R_\bot|} = 2 \cdot \frac{2^{\secp+1} - 2^\secp}{2^{3\secp/2+1} - 2^\secp} = 2^{-\Omega(\secp)}.
\end{align*}

Thus, $\cD$ is within $2^{-\Omega(\secp)}$ total variation distance of 

\begin{align*}
&(1-\epsilon_\bot)\sum_y \ketbra{y}{y}\frac{1}{|\cK_L|}\sum_{\left(S,\Delta,x_L,z_L\right)}\bra{V_y}\Mx\left[H^{\widetilde{\MS}_1} Z^{z_{L}}X^{x_{L}} \cI X^{x_L}Z^{z_L}E_{S,\Delta}^{\otimes n}L\ket{\psi}\right]\ket{V_y} + \epsilon_\bot \ketbra{\bot}{\bot}\\
&= (1-\epsilon_\bot)\sum_y \ketbra{y}{y}\frac{1}{|\cK_L|}\sum_{\left(S,\Delta,x_L,z_L\right)}\bra{V_y}\Mx\left[H^{\widetilde{\MS}_1}E_{S,\Delta}^{\otimes n}L\ket{\psi}\right]\ket{V_y} + \epsilon_\bot \ketbra{\bot}{\bot} \\
&= (1-\epsilon_\bot)\sum_y \ketbra{y}{y}\left(\sum_{m : f(m) = y}\bra{m}\right)\Mx\left[H^{\MS_1}L\ket{\psi}\right]\left(\sum_{m : f(m) = y}\ket{m}\right) + \epsilon_\bot \ketbra{\bot}{\bot} \\
&= (1-\epsilon_\bot)\{y : y \gets M_{\theta,f} \circ L(\ket{\psi})\} + \epsilon_\bot\{\bot\},
\end{align*} 

which completes the proof.

\end{proof}

\end{proof}

\begin{theorem}\label{thm:auth-privacy}
The QAS described in \proref{fig:authscheme} satisfies privacy (\cref{def:auth-privacy}).
\end{theorem}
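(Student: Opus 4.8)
The plan is to prove something slightly stronger than \cref{def:auth-privacy} asks for: conditioned on all the classical information that the oracle $\Ver_{k,\cdot,\cdot}(\cdot)$ can possibly reveal, the density matrix of $\Enc_k(\ket\psi)$ is completely independent of $\ket\psi$. Since $D$ only ever touches this oracle and the encoded state, the two acceptance probabilities in the statement must then coincide (and in fact the ``$2^{-\Omega(\secp)}$'' can be replaced by $0$).

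\emph{Step 1 (the oracle sees only a ``public part'' of the key).} For $k = (S,\Delta,x,z)$, define the public part $k_{\mathsf{pub}} \coloneqq \big(S,\ \Delta+S,\ \{x_i + S_\Delta\}_{i\in[n]},\ \{z_i + S^\bot\}_{i\in[n]}\big)$, i.e.\ $S$, the coset of $\Delta$ modulo $S$, and the cosets of the one-time-pad vectors modulo $S_\Delta$ and $S^\bot$, respectively. I claim $\Ver_{k,L,\theta}$ is a function of $k_{\mathsf{pub}}$ alone (together with the query inputs $L,\theta$). Indeed, for $i \in \MS_0$ the oracle tests whether $c_i \in S_\Delta + x_{L,i}$, and for $i \in \MS_1$ whether $c_i \in \widehat S_{\widehat\Delta} + z_{L,i} = S^\bot + z_{L,i}$; since $L$ is a product of CNOT gates it maps the subspaces $S_\Delta^{\oplus n}$ and $(S^\bot)^{\oplus n}$ into themselves, so the relevant cosets $S_\Delta + x_{L,i}$ and $S^\bot + z_{L,i}$ are determined by $L$, the cosets $\{x_i + S_\Delta\}_i$, $\{z_i + S^\bot\}_i$, and the subspaces $S_\Delta, S^\bot$ — and $S^\bot$ determines $S$, which together with $\Delta + S$ determines $S_\Delta$.

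\emph{Step 2 (the conditional state is plaintext-independent).} Fix $k_{\mathsf{pub}}$ and set $\rho_b \coloneqq \E\!\big[\Enc_k \ketbra{\psi_b}{\psi_b} \Enc_k^\dagger \,\big|\, k_{\mathsf{pub}}\big]$. Conditioned on $k_{\mathsf{pub}}$, the isometry $E_{S,\Delta}^{\otimes n}$ is fixed (it depends only on $S$ and $\Delta + S$), and writing $x_i = x_i^{S_\Delta} + x_i^{\mathsf{co}}$, $z_i = z_i^{S^\bot} + z_i^{\mathsf{co}}$ with $x_i^{\mathsf{co}}, z_i^{\mathsf{co}}$ the fixed coset representatives, the vectors $x^{S_\Delta}$ and $z^{S^\bot}$ are independent and uniform over $S_\Delta^{\oplus n}$ and $(S^\bot)^{\oplus n}$. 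Pulling the fixed Paulis $X^{x^{\mathsf{co}}} Z^{z^{\mathsf{co}}}$ outside the conjugation (the leftover $\pm 1$ sign factor cancels), we get $\rho_b = (X^{x^{\mathsf{co}}} Z^{z^{\mathsf{co}}})\, \sigma_b\, (X^{x^{\mathsf{co}}} Z^{z^{\mathsf{co}}})^\dagger$, where $\sigma_b$ is the result of twirling $E_{S,\Delta}^{\otimes n}\ketbra{\psi_b}{\psi_b}E_{S,\Delta}^{\dagger \otimes n}$ by the abelian group $\{X^a Z^b : a \in S_\Delta^{\oplus n},\, b \in (S^\bot)^{\oplus n}\}$. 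A short calculation (of the same flavor as the Pauli twirls already used in this section) shows $X^u Z^v$ survives this twirl iff $v \in (S_\Delta^{\oplus n})^\bot = \widehat S^{\oplus n}$ and $u \in ((S^\bot)^{\oplus n})^\bot = S^{\oplus n}$; but $\{X^u Z^v : u \in S^{\oplus n}, v \in \widehat S^{\oplus n}\}$ is precisely the stabilizer group $\mathcal G$ of the code $E_{S,\Delta}^{\otimes n}$, which has $2^{2\secp n}$ elements, all acting as the identity on the code subspace, and satisfies $\Pi_{\mathsf{code}}^{\otimes n} = 2^{-2\secp n}\sum_{g \in \mathcal G} g$ for $\Pi_{\mathsf{code}}$ the projector onto the image of $E_{S,\Delta}$. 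Since $E_{S,\Delta}^{\otimes n}\ket{\psi_b}$ lies in the code subspace, $\Tr\!\big(E_{S,\Delta}^{\otimes n} \ketbra{\psi_b}{\psi_b} E_{S,\Delta}^{\dagger \otimes n}\, g\big) = 1$ for every $g \in \mathcal G$, hence $\sigma_b = 2^{-(2\secp+1)n}\sum_{g\in\mathcal G} g = 2^{-n}\Pi_{\mathsf{code}}^{\otimes n}$, which is independent of $b$, and therefore so is $\rho_b$.

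\emph{Step 3 (conclude) and the main obstacle.} Conditioned on $k_{\mathsf{pub}}$, the map $D^{\Ver_{k,\cdot,\cdot}(\cdot)}$ is a fixed binary-outcome channel, so its output-$1$ probability on $\Enc_k(\ket{\psi_b})$ equals $\Tr(D' \rho_b)$ for a fixed operator $D'$; since $\rho_0 = \rho_1$, this probability agrees for $b=0$ and $b=1$ for every $k_{\mathsf{pub}}$. Averaging over $k_{\mathsf{pub}}$ (whose distribution does not depend on $b$) finishes the proof. The only delicate points are (i) checking in Step 1 that the CNOT update rule genuinely preserves the coset structure, so that the oracle really is a function of $k_{\mathsf{pub}}$, and (ii) the Pauli-sign bookkeeping in Step 2 when extracting the fixed one-time pad and when verifying that the twirl is exactly the Hilbert–Schmidt projection onto the span of $\mathcal G$; I do not expect either to present a real difficulty.
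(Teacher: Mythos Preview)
Your proof is correct and follows the same underlying strategy as the paper's — condition on the portion of the key that determines the verification oracle, then show the residual randomness acts as a quantum one-time pad on the logical qubits — but your execution differs in two ways worth noting. First, the paper begins by replacing $\Ver$ with the superspace-expanded oracle $\Ver'$ (reusing \cref{claim:increase-subspaces} from the security proof), which costs a $2^{-\Omega(\secp)}$ term; you correctly observe that this is unnecessary for privacy, since the original $\Ver$ already depends only on the cosets $x_i + S_\Delta$ and $z_i + S^\bot$, and thereby prove the advantage is exactly $0$. Second, the paper isolates a single ``logical'' bit per block by writing $x_i = x_{i,\Delta} + b_i\Delta$ and invoking the one-qubit quantum one-time pad directly, whereas you twirl over the full group $\{X^aZ^b : a\in S_\Delta^{\oplus n},\, b\in (S^\bot)^{\oplus n}\}$ and identify the surviving Pauli terms with the stabilizer group of the code; these are equivalent (your twirl group contains the paper's as a subgroup and both collapse the encoded state to $2^{-n}\Pi_{\mathsf{code}}^{\otimes n}$), but your formulation sidesteps a small wrinkle in the paper's decomposition, namely that $\Delta$ need not lie outside the hyperplane $H_\Delta$ over $\bbF_2$. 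Your flagged ``obstacles'' (i) and (ii) are both routine: the CNOT update rule sends each $x_{L,i}$ to an $\bbF_2$-linear combination of the $x_j$, so cosets modulo the subspace $S_\Delta$ are preserved, and the sign bookkeeping is benign because $u\cdot v=0$ for $u\in S$, $v\in\widehat S\subset S^\bot$.
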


\begin{proof}
First, recalling the definitions of $\Gen',\Ver'$ in the proof of \cref{thm:auth-sec}, and applying the oracle indistinguishability argued during the proof of \cref{claim:increase-subspaces}, it suffices to show that 

\[\bigg|\Pr_{k,R,\widehat{R} \gets \Gen'(1^\secp,n)}\left[1 \gets A^{\Ver'_{k,R,\widehat{R},\cdot,\cdot}(\cdot)} \circ \Enc_k(\ket{\psi_0})\right] - \Pr_{k,R,\widehat{R} \gets \Gen'(1^\secp,n)}\left[1 \gets A^{\Ver'_{k,R,\widehat{R},\cdot,\cdot}(\cdot)} \circ \Enc_k(\ket{\psi_1})\right]\bigg| = 0.\]

To see this, we'll show that we can give enough information to $A$ for it to implement $\Ver'_{k,R,\widehat{R},\cdot,\cdot}(\cdot)$ while preserving sufficient randomness to one-time pad the input state. 

Consider the following equivalent description of $\Gen'(1^\secp,n)$.\\

\noindent $\Gen'(1^\secp,n)$:

\begin{itemize}
    \item Sample a uniformly random $\secp$-dimensional subspace $S \subset \bbF_2^{2\secp+1}$, vector $\Delta \gets \bbF_2^{2\secp+1} \setminus S$, and uniformly random $(3\secp/2 + 1)$-dimensional subspaces $R,\widehat{R} \subset \bbF_2^{2\secp+1}$ conditioned on $S_\Delta \subset R$ and $\widehat{S}_{\widehat{\Delta}} \subset \widehat{R}$. 
    \item Let $H_\Delta$ be the $2\secp$-dimensional subspace perpendicular to $\Delta$ and $H_{\widehat{\Delta}}$ be the $2\secp$-dimensional subspace perpendicular to $\widehat{\Delta}$. For each $i \in [n]$, sample $x_{i,\Delta} \gets H_\Delta, b_i \gets \{0,1\}, z_{i,\widehat{\Delta}} \gets H_{\widehat{\Delta}}, c_i \gets \{0,1\}$, and define $x_i = x_{i,\Delta} + b_i \cdot \Delta$ and $z_i = z_{i,\widehat{\Delta}} + c_i \cdot \widehat{\Delta}$.
    \item Output $(S,\Delta,x,z),R,\widehat{R}$.
\end{itemize}

Now, fix any choice of $S,\Delta,R,\widehat{R},x_\Delta,z_{\widehat{\Delta}}$ sampled during the procedure $\Gen'(1^\secp,n)$, where $x_\Delta \coloneqq \left(x_{1,\Delta},\cdots,x_{n,\Delta}\right)$ and $z_{\widehat{\Delta}} \coloneqq \left(z_{1,\widehat{\Delta}},\cdots,z_{n,\widehat{\Delta}}\right)$, and consider the following procedure that completes the sampling of the key.\\

\noindent $\Gen'_{S,\Delta,R,\widehat{R},x_\Delta,z_{\widehat{\Delta}}}(1^\secp,n)$:

\begin{itemize}
    \item For each $i \in [n]$, sample $b_i,c_i \gets \{0,1\}$, and define $x_i = x_{i,\Delta} + b_i \cdot \Delta$ and $z_i = z_{i,\widehat{\Delta}} + c_i \cdot \widehat{\Delta}$.
    \item Output $(S,\Delta,x,z)$.
\end{itemize}

Since the oracle $\Ver'_{k,R,\widehat{R},\cdot,\cdot}(\cdot)$ can be implemented given just the fixed information $S,\Delta,R,\widehat{R},x_\Delta,z_{\widehat{\Delta}}$, it suffices to show that for any $S,\Delta,R,\widehat{R},x_\Delta,z_{\widehat{\Delta}}$ and any adversary $A$ (whose description may depend on this information), it holds that 

\[\bigg|\Pr_{k}\left[1 \gets A (\Enc_k(\ket{\psi_0}))\right] - \Pr_{k}\left[1 \gets A(\Enc_k(\ket{\psi_1}))\right]\bigg| = 0,\]

where the probability is over $k \gets \Gen'_{S,\Delta,R,\widehat{R},x_\Delta,z_{\widehat{\Delta}}}(1^\secp,n)$. Since

\[\Enc_k = X^x Z^z E^{\otimes n}_{S,\Delta} = X^{x_\Delta}Z^{z_{\widehat{\Delta}}}X^{b_1 \cdot \Delta \dots b_n \cdot \Delta}Z^{c_1 \cdots \widehat{\Delta} \dots c_n \cdot \widehat{\Delta}} E^{\otimes n}_{S,\Delta} = X^{x_\Delta}Z^{z_{\widehat{\Delta}}} E^{\otimes n}_{S,\Delta} X^{b_1 \dots b_n}Z^{c_1 \dots c_n},\] this follows from the quantum one-time pad \cite{892142}. That is, we use the fact that 
\[\sum_{b_1,\dots,b_n,c_1,\dots,c_n}\Mx\left[X^{b_1,\dots,b_n}Z^{c_1,\dots,c_n}\ket{\psi_0}\right] = \sum_{b_1,\dots,b_n,c_1,\dots,c_n}\Mx\left[X^{b_1,\dots,b_n}Z^{c_1,\dots,c_n}\ket{\psi_1}\right].\]

\end{proof}

\section{Linear + Measurement Quantum Programs}\label{subsec:circuit-rep}

In this section, we show that any quantum program with classical input and output (\cref{def:quantum-imp}) can be implemented using a "linear + measurement" ($\LM$) quantum program.

In slightly more detail, we make use of magic states in order to write any quantum circuit as an alternating sequence of linear operations $L_i$ (by which we mean a sequence of CNOT gates) and partial ZX measurements $M_{\theta_i,f_i}$, where the description of each $f_i$ may depend on the classical input $x$ as well as previous measurement results. We encourage the reader to review our notation for partial ZX measurements $M_{\theta_i,f_i}$ described at the beginning of \cref{subsec:auth-def}. We remark here that these measurements are "partial" in two aspects: (i) they may only operate on a subset of the qubits, and (ii) each measurement outcome may be associated with multiple basis vectors, meaning that the input qubits are not necessarily fully collapsed. We also remark that for the purpose of this paper, we restrict attention to circuits with classical inputs and outputs, but note that one could consider circuits with quantum inputs and outputs as well. 

We first formally define $\LM$ quantum programs, and accompany this with a diagram in \cref{fig:LM}.

\begin{definition}[$\LM$ quantum program]\label{def:LM-circuit}
    An $\LM$ quantum program with classical input and output is described by:
    \begin{itemize}
        \item A quantum state $\ket{\psi}$ on $n$ qubits.
        \item Linear operations $L_1,\dots,L_{t+1}$, where each $L_i$ is a sequence of CNOT gates.
        \item Partial ZX measurements $M_{\theta_1,f_1^{(\cdot)}},M_{\theta_2,f_2^{(\cdot)}},\dots,M_{\theta_t,f_t^{(\cdot)}}$,$M_{\theta_{t+1},g^{(\cdot)}}$ defined by sets of bases $\{\theta_i\}_{i \in [t+1]}$ and classical functions $\{f_i^{(\cdot)}\}_{i \in [t]},g^{(\cdot)}$, which will be parameterized by the input $x$ as well as previous measurement results. In line with the notation introduced in \cref{subsec:auth-def}, for each $i \in [t+1]$, we define $\MS_i \subseteq [n]$ be the set of wires such that $\theta_i \neq \bot$. That is, $\MS_i$ is the set of wires on which the $i$'th partial measurement operates.
    \end{itemize}
    
    Now, we will find it useful to introduce further notation drawing attention to which wires are simply measured in either the standard or Hamadard basis by the $i$'th partial ZX measurement, and which are not fully collapsed. In particular, we define disjoint sets $V_1,\dots,V_{t+1}$ and sets $W_1,\dots,W_t$ with the following properties.
    \begin{itemize}
        \item $\MS_1 = (V_1,W_1), \MS_2 = (V_1,V_2,W_2), \dots, \MS_t = (V_1,\dots,V_t,W_t), \MS_{t+1} = (V_1,\dots,V_{t+1}) = [n]$.
        \item The $i$'th measurement takes previously collapsed wires $V_1,\dots,V_{i-1}$ as input, "fully" collapses wires $V_i$, and "partially" collapses wires $W_i$. This will be made precise by the evaluation procedure defined below, where the $v_i$ are inputs from the $V_i$ wires and $w_i$ are inputs from the $W_i$ wires.
        \item Each $L_i$ does not operate on $\{V_j\}_{j < i}$. That is, fully collapsed registers are no longer computed on.
    \end{itemize}

    \noindent Finally, given an input $x \in \{0,1\}^m$, let $\LMEval\left(x,\ket{\psi},\{L_i\}_{i \in [t+1]},\{\theta_i\}_{i \in [t+1]},\{f_i\}_{i \in [t]},g\right) \to y$ be the formal evaluation procedure, defined as follows:

    \begin{itemize}
        \item Initialize an $n$-qubit register $\regM$ with $\ket{\psi}$.
        \item Compute $((v_1,r_1),\regM) \gets M_{\theta_1,f_1^x} \circ L_1(\regM)$, where $f_1^x(v_1,w_1) = (v_1,r_1)$.
        \item Compute $((v_2,r_2),\regM) \gets M_{\theta_2,f_2^{x,r_1}} \circ L_2 (\regM)$, where $f_2^{x,r_1}(v_1,v_2,w_2) = (v_2,r_2)$.
        \item \dots
        \item Compute $((v_t,r_t),\regM) \gets M_{\theta_t,f_t^{x,r_1,\dots,r_{t-1}}} \circ L_t(\regM)$, where $f^{x,r_1,\dots,r_{t-1}}_t(v_1,\dots,v_t,w_t) = (v_t,r_t)$.
        \item Compute output $y \gets M_{\theta_{t+1},g^{x,r_1,\dots,r_t}} \circ L_{t+1}(\regM)$, where $g^{x,r_1,\dots,r_t}(v_1,\dots,v_{t+1}) = y$.
    \end{itemize}

\end{definition}

\begin{figure}
\begin{quantikz}[transparent]
     \lstick[8]{$\ket{\psi}$} & \gate[8]{L_1} & \qw & \gate[6]{L_2} & \qw & \gate[4]{L_3}& \gate[8,nwires={5,6,7,8}]{M_{\theta_3,g^{x,r_1,r_2}}}\gateinput[4]{$V_3$} & \cw \rstick[4]{$y$}\\
    & & \qw & & \qw & & & \cw\\
    & & \qw & & \gate[6,nwires={5,6}]{M_{\theta_2,f_2^{x,r_1}}}\gateinput[wires=2]{$W_2$} & & & \cw \\
    & & \qw & & & & & \cw \\
     &  & \gate[4]{M_{\theta_1,f_1^x}}\gateinput[wires=2]{$W_1$}  & & \gateinput[wires=2]{$V_2$} & \cw \midstick[2]{$v_2$}& \gateinput[wires=2]{$V_2$}\cw & \\
    &  & & & & \cw & \cw & \\
    &  & \gateinput[wires=2]{$V_1$} & \cw \midstick[2]{$v_1$}  & \gateinput[wires=2]{$V_1$} \cw &  \cw \midstick[2]{$v_1$} & \gateinput[wires=2]{$V_1$}\cw & \\
     & & \vcw{1} & \cw & \cw \vcw{1} & \cw & \cw & \\
    & & & & & & & \\
    & & \text{$r_1$}& & \text{$r_2$} & & &
\end{quantikz}
\caption{Diagram of an $\LM$ quantum program. We use some non-standard quantum circuit notation, so we provide some explanation. Each partial ZX measurement $M_{\theta_1,f_1^{(\cdot)}}, M_{\theta_2,f_2^{(\cdot)}}, M_{\theta_3,g^{(\cdot)}}$ is applied to the wires coming from the left of the corresponding box, some of which may be classical. Some wires (namely, $V_1$, $V_2$, and $V_3$) are fully collapsed by the measurement, producing classical output wires coming from the right. Other wires (namely, $W_1$ and $W_2$) are only partially collapsed, so their corresponding output wires are still quantum. Additional classical outputs (namely, $r_1$ and $r_2$) are produced by these measurements, which are denoted by classical wires coming out of the bottom. Note that the description of later measurements depend on $r_1,r_2$. Finally, we remark that one could instead introduce explicit ancillary wires for the input $x$ and intermediate measurement results $r_1,r_2$, but writing the circuit in the manner above is visually suggestive of the structure of our eventual obfuscation scheme.}
\label{fig:LM}
\end{figure}
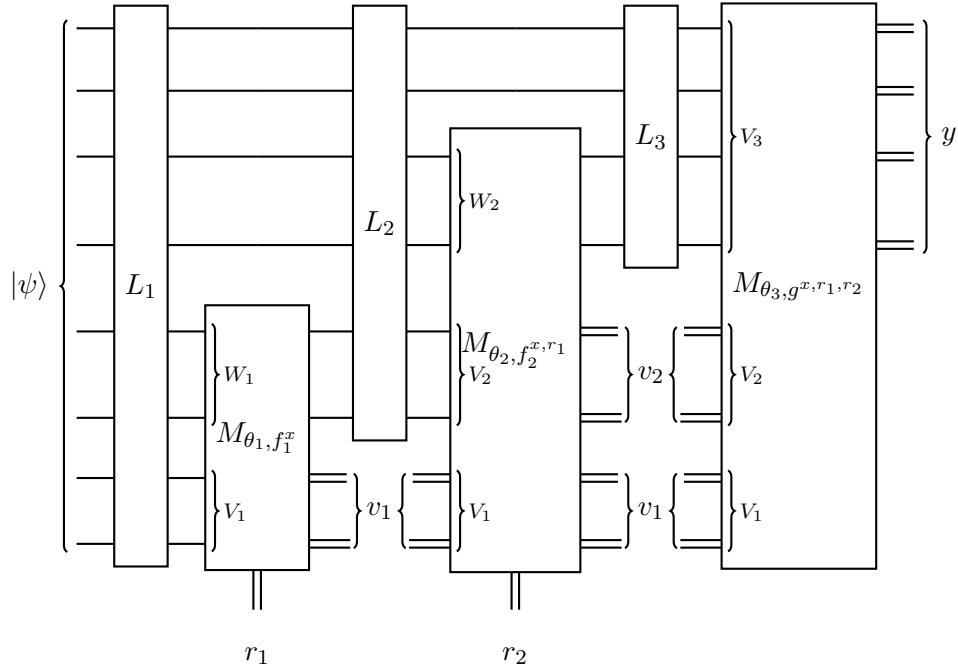

Observe that any alternating sequence of linear operations and partial ZX measurements may be written in the form introduced in the above definition, by simply defining each of the sets $V_i$ to be empty, and writing each function as $f_i^{x,r_1,\dots,r_{i-1}}(w_i) \to r_i$ (that is, we can always choose not to treat any of the wires as fully collapsed in the above formalism). So, why did we bother explicitly defining the $V_i$ and $W_i$ sets? The reason is that we will actually be interested in a ``subclass'' of $\LM$ quantum programs whose partially collapsed wires (the $W_i$ wires) have a particularly simple structure. The diagram shown in \cref{fig:LM} is indeed an example of such an $\LM$ quantum program.

\begin{definition}[$\LM$ quantum program with standard-basis-collapsible $W$ wires]\label{def:LM-properties}
    An $\LM$ quantum program has \emph{standard-basis-collapsible $W$ wires} if:
    \begin{itemize}
        \item $W_1,\dots,W_n$ consist of only standard basis indices, that is, $\theta_{i,j} = 0$ for $i \in [t]$ and $j \in W_i$.
        \item For $i \in [t]$, $W_i$ is disjoint from $\MS_1 \cup \dots \cup \MS_{i-1}$, and the operations $L_1,\dots,L_{i-1}$ are either classically controlled on or do not operate on $W_i$. In particular, for each $i \in [t]$, the entire operation of the $\LM$ program up to and including the $i$'th measurement is diagonal in the standard basis on the wires $W_i$.
    \end{itemize}
\end{definition}

This \emph{standard-basis-collapsible $W$ wires} property ensures that if one were to measure ("collapse") the $W_1,\dots,\allowbreak W_n$ wires in the standard basis before executing the program, the $W_i$ wires would remain completely unaffected throughout the execution of the program up to and \emph{including} the $i$'th measurement (though they could be affected after the $i$'th measurement). Note that this is not a correctness property, indeed, collapsing the $W_1,\dots,W_n$ wires at the beginning of the computation would likely completely change the desired functionality. However, it turns out that this property will be crucial for arguing the \emph{security} of our obfuscation scheme in the following sections (in particular, refer to the "Collapsing the $\sF$ oracles" discussion in the proof intuition section, \cref{subsec:proof-overview}).

\begin{theorem}\label{thm:LM-compiler}
Any quantum program $(\ket{\psi},C)$ (\cref{def:quantum-imp}) can be compiled into an equivalent $\LM$ quantum program $(\ket{\psi'},\{L_i\}_{i \in [t+1]},\{\theta_i\}_{i \in [t+1]},\{f_i\}_{i \in [t]},g)$ with standard-basis-collapsible $W$ wires, where $\{L_i\}_{i \in [t+1]},\allowbreak\{\theta_i\}_{i \in [t+1]},\allowbreak\{f_i\}_{i \in [t]},\allowbreak g$ only depend on the description of $C$ (and not $\ket{\psi}$). Moreover, the compiler runs in polynomial time in the size of its input $(\ket{\psi},C)$.
\end{theorem}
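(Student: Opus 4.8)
The plan is to proceed constructively, following the gadget-based approach sketched in \cref{subsec:overview-LM}. First I would fix a decomposition of $C$ into the universal gate set $\{\CNOT,H,T\}$; standard results give such a decomposition with only polynomial blowup, and this step depends only on $C$. Let $t$ be the number of $T$ gates in the resulting circuit. The state $\ket{\psi'}$ will consist of $\ket{\psi}$ tensored with a collection of magic states: one copy of $\ket{\phi_H} \propto \ket{00}+\ket{01}+\ket{10}-\ket{11}$ for each $H$ gate, and one copy each of $\ket{\phi_T} \propto \ket{0}+e^{i\pi/4}\ket{1}$ and $\ket{\phi_{PX}} \propto i\ket{0}+\ket{1}$ for each $T$ gate. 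Crucially, the magic states are input-independent, so the circuit skeleton $(\{L_i\},\{\theta_i\},\{f_i\},g)$ ends up depending only on $C$.

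Next I would replace each $H$ and $T$ gate by its gadget. For the $H$ gate, use the circuit of \cref{fig:H}: one $\CNOT$ between the data wire and the $\ket{\phi_H}$ wire, a standard-basis and a Hadamard-basis measurement of the two magic-state wires, and Pauli corrections to the data wire determined by the outcomes. For the $T$ gate, use \cref{fig:T}: apply $\CNOT$s among the data wire and the two $T$-magic-state wires, then the controlled measurement $\Gamma_c$ (whose control bit $c$ is a classical function of $x$ and earlier outcomes), then a Hadamard-basis measurement of the second magic-state wire, and finally Pauli corrections. Correctness of the $\Gamma_c$-based $T$ gadget is exactly \cref{claim:T-gate}, which I would invoke. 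Since every gate is now a $\CNOT$, a ZX measurement, or a Pauli conditioned on earlier outcomes, I can commute all Pauli corrections to the right using the update rule $(x_i,z_i),(x_j,z_j) \to (x_i, z_i \oplus z_j), (x_i \oplus x_j, z_j)$ and absorb them into the descriptions of downstream ZX measurements; the resulting measurement functions are precisely the adaptive $f_i^{x,r_1,\dots,r_{i-1}}$ and $g^{x,r_1,\dots,r_t}$ of \cref{def:LM-circuit}.

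Having done this, I would group the CNOT layers and measurements into the alternating form $L_1,M_{\theta_1,f_1^{(\cdot)}},\dots,L_{t+1},M_{\theta_{t+1},g^{(\cdot)}}$, with $t$ the number of $T$-gadgets. Define $V_i$ to collect the wires fully collapsed at layer $i$ — the third wire of the $(i-1)$-st $T$-gadget, both wires of every $H$-gadget in layer $i$, and the first wire of the $i$-th $T$-gadget — and $W_i$ to collect the two magic-state wires of the $i$-th $T$-gadget, on which $\Gamma_{c_i}$ acts and which are only partially collapsed. Then I would verify the bookkeeping of \cref{def:LM-circuit}: $\MS_i = (V_1,\dots,V_i,W_i)$, the $L_i$ do not touch previously-collapsed $V_j$ wires, and each $f_i$ outputs $(v_i,r_i)$ as specified. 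For the \emph{standard-basis-collapsible $W$ wires} property of \cref{def:LM-properties}: the wires in $W_i$ are freshly-initialized magic-state qubits not contained in any earlier $\MS_j$, the operations $L_1,\dots,L_{i-1}$ act on them only via classically-controlled CNOTs (or not at all), and both $\Gamma_0$ and $\Gamma_1$ are diagonal in the standard basis — so the whole program up to and including the $i$-th measurement is standard-basis diagonal on $W_i$, giving $\theta_{i,j}=0$ for $j \in W_i$. Polynomial running time is immediate since each gadget has constant size and the number of gates is polynomial in $|C|$.

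The main obstacle is not a single idea but the careful verification that the $\Gamma_c$-substituted $T$-gadget is functionally correct (the content of \cref{claim:T-gate}) while simultaneously having the right structural shape: that the Pauli correction on the surviving magic-state wire can be pushed past later CNOTs without destroying standard-basis-collapsibility of future $W$ wires, and that the control bits $c_i$ are genuinely computable from $x$ and the retained bits $r_1,\dots,r_{i-1}$ (rather than from the $W_j$ wires themselves, which get computed on after measurement). Making the layering and the $V_i/W_i$ partition line up with \cref{def:LM-properties} exactly, rather than just morally, is the fiddly part.
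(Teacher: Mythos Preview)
Your proposal is correct and follows essentially the same approach as the paper: replace $H$ and $T$ by the magic-state gadgets of \cref{fig:H} and \cref{fig:T}, invoke \cref{claim:T-gate} for the $\Gamma_c$-based $T$ gadget, push Pauli corrections forward into the adaptive measurement descriptions, and verify the standard-basis-collapsible $W$ wires property from the fact that the $W_i$ wires are fresh magic-state qubits touched only as CNOT controls before the $i$'th measurement. The paper packages the gate-by-gate compilation as an explicit induction via an intermediate notion of ``$\LM$ quantum program with Pauli-encoded quantum output'' (tracking the running Pauli-correction function $h$ separately), but this is just a formal scaffold for the same argument you sketch.
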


\begin{proof}

We will use a circuit representation very similar to that described in \cite{C:BroGutSte13}, except for a key difference in how we implement the $T$ gate, inspired by the encrypted CNOT operation introduced in \cite{FOCS:Mahadev18b}. We write the quantum circuit $C$ using the $\{\CNOT,H,T\}$ universal gate set, where $T$ is the gate that applies a phase of $e^{i\pi/4}$. Given magic states, we'll show how to implement $H$ and $T$ gates using only CNOT gates and Pauli ($X$ and $Z$) gates controlled on the results of partial ZX measurements. Then, we will observe that the Pauli gates can be subsumed into the description of the measurements, leaving only layers of CNOT gates and partial ZX measurements. 

First, we'll describe our implementations of the $H$ and $T$ gates and prove that they are correct. Then, we'll complete the proof with an inductive argument, showing how to build an $\LM$ quantum program one gate at a time.

\begin{figure}
    \centering
    \begin{quantikz}
     \lstick{$\ket{\psi}$} & \ctrl{1} & \gate{H}  & \meter{} \vcw{2}\\
     \lstick[2]{$\ket{\phi_H}$}& \targ{} & \meter{} \vcw{1} &  \\
     & \qw & \gate{Z} & \gate{X} & \qw \rstick{$H\ket{\psi}$}
    \end{quantikz}
   \caption{Implementation of the $H$ gate with the $H$-magic state $\ket{\phi_H} \propto \ket{00} + \ket{01} + \ket{10} - \ket{11}$.} \label{fig:H}
\end{figure}

\paragraph{Implementing the $H$ gate.} Following \cite{C:BroGutSte13}, we use a two-qubit magic state $$\ket{\phi_H} \propto \ket{00} + \ket{01} + \ket{10} - \ket{11}$$ to implement the Hadamard gate, via the circuit in Figure~\ref{fig:H}. For completeness, we show that the circuit indeed implements the Hadamard gate.

\begin{claim}\label{claim:H-gate}
    The circuit in Figure~\ref{fig:H} implements the Hadamard gate.
\end{claim}
\begin{proof}
    Write $\ket{\psi} = \alpha\ket{0} + \beta\ket{1}$. After the CNOT gate, the joint state of all three qubits can be written as 
    \begin{align*} \alpha \ket{000} + &\alpha\ket{001} + \alpha\ket{010} - \alpha \ket{011} + \beta\ket{100} - \beta\ket{101} + \beta\ket{110} + \beta \ket{111} \\
    = & \Big((\alpha+\beta) \ket{+}\ket{00} + (\alpha-\beta)\ket{+}\ket{01}\Big) + \Big((\alpha+\beta)\ket{+}\ket{10} - (\alpha-\beta) \ket{+}\ket{11}\Big) \\
    + & \Big((\alpha-\beta) \ket{-}\ket{00} + (\alpha + \beta)\ket{-}\ket{01}\Big) + \Big(\alpha-\beta)\ket{-}\ket{10} - (\alpha+\beta)\ket{-}\ket{11}\Big)
    \end{align*}
    After the Hadamard basis measurement on the first wire resulting in a bit $x$ and the standard basis measurement on the second wire resulting in a bit $z$, the resulting state on the third wire is 
    \begin{align*}
       (\alpha+\beta) \ket{0} + (\alpha-\beta)\ket{1} & = H\ket{\psi} & \hspace{.3in} \mbox{if $x=0$ and $z=0$} \\
       (\alpha+\beta)\ket{0} - (\alpha-\beta) \ket{1} & = ZH\ket{\psi} & \hspace{.3in} \mbox{if $x=0$ and $z=1$}  \\ 
       (\alpha-\beta) \ket{0} + (\alpha + \beta)\ket{1} & = XH\ket{\psi} & \hspace{.3in} \mbox{if $x=1$ and $z=0$} \\
       (\alpha-\beta)\ket{0} - (\alpha+\beta)\ket{1}  & = ZXH\ket{\psi} & \hspace{.3in} \mbox{if $x=1$ and $z=1$}
    \end{align*}
    Applying the $Z$ and $X$ corrections now gives the state $H\ket{\psi}$.
\end{proof}

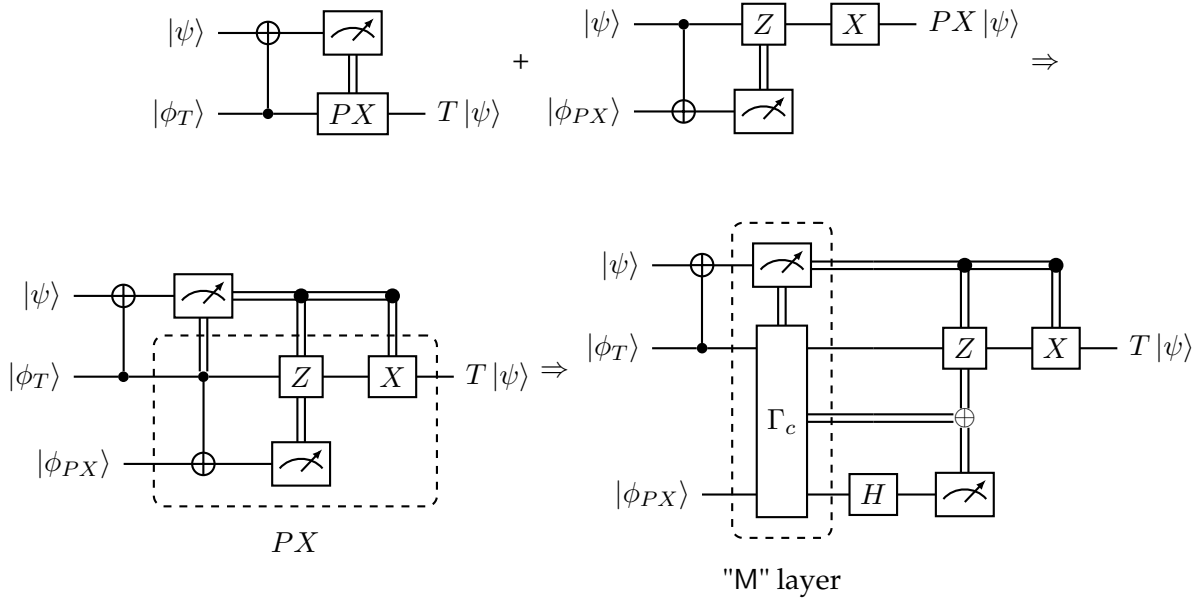
\begin{figure}
    \begin{minipage}{\textwidth}
    \centering 
    \begin{quantikz}
        \lstick{$\ket{\psi}$} & \targ{} & \meter{} \vcw{1}   \\
        \lstick{$\ket{\phi_T}$} & \ctrl{-1} & \gate{PX} &  \qw \rstick{$T\ket{\psi}$}
    \end{quantikz}
    +
    \begin{quantikz}
        \lstick{$\ket{\psi}$} & \ctrl{1} &  \gate{Z} & \gate{X} & \qw \rstick{$PX\ket{\psi}$} \\
        \lstick{$\ket{\phi_{PX}}$} & \targ{} & \meter{} \vcw{-1} & 
    \end{quantikz}
    $\Rightarrow$ \\ 
    \vspace{1cm}
    \begin{quantikz}
        \lstick{$\ket{\psi}$} & \targ{} & \meter{} \vcw{1} & \cwbend{1} & \cwbend{1}   \\
        \lstick{$\ket{\phi_T}$} & \ctrl{-1} & \ctrl{1}\gategroup[wires=2,steps=3,style={dashed, rounded corners},label style={yshift=-3 cm}]{$PX$} & \gate{Z}\vcw{1} & \gate{X}& \qw \rstick{$T\ket{\psi}$} \\
        & \lstick{$\ket{\phi_{PX}}$} & \targ{} & \meter{} &
    \end{quantikz}
    $\Rightarrow$
    \begin{quantikz}
        \lstick{$\ket{\psi}$} & \targ{} & \meter{}\gategroup[wires=4,style={dashed, rounded corners},label style={yshift=-5 cm}]{"$\sM$" layer} \vcw{1} & \cw & \cwbend{1} & \cwbend{1}   \\
        \lstick{$\ket{\phi_T}$} & \ctrl{-1} & \gate[wires=3,nwires=2][3]{\Gamma_c} & \qw & \gate{Z}\vcw{1} & \gate{X}& \qw \rstick{$T\ket{\psi}$} \\
        & & & \cw & \cw \text{$\oplus$} & \\
        & \lstick{$\ket{\phi_{PX}}$} & & \gate{H} & \meter{}\vcw{-1} 
    \end{quantikz}
    \caption{Implementation of the $T$ gate. First, we combine an implementation of the $T$ gate using the $T$-magic state $\ket{\phi_T} \propto \ket{0} + e^{i\pi/4}\ket{1}$ (upper left) with an implementation of the $PX$ gate using the $PX$-magic state $\ket{\phi_{PX}} \propto i\ket{0} + \ket{1}$ (upper right) to obtain the circuit on the bottom left. This circuit includes a classically controlled CNOT gate, which is not supported by $\LM$ quantum programs. We replace the classically controlled CNOT with a classically controlled \emph{projective measurement} to arrive at the circuit on the bottom right. Here, $\Gamma_c$ represents a measurement controlled on the bit $c$ from the first wire to be applied to the second and third wires. These wires are only partially collapsed by this measurement, so they remain quantum wires. However, $\Gamma_c$ also produces a measurement result $r$, which is carried on the classical wire coming from the right. The final $Z$ gate is controlled on the bit $c \cdot (r \oplus h)$, where $h$ is the result of measuring the third wire in the Hadamard basis. The dashed box will eventually become a measurement layer in our implementation of an $\LM$ circuit (though we remark that the input for this measurement will also include wires from previous $H$-gate and $T$-gate circuits). }
    \label{fig:T}
    \end{minipage}
\end{figure}

\paragraph{Implementing the $T$ gate.} We will use two magic states 
$$ \ket{\phi_T} \propto \ket{0} + e^{i\pi /4} \ket{1} \hspace{.2in} \mbox{and} \hspace{.2in} \ket{\phi_{PX}} \propto i\ket{0} + \ket{1} 
$$ 
and the circuit on the bottom right of Figure~\ref{fig:T}. First, we clarify notation in the figure. $\Gamma_c$ is a projective measurement controlled on the bit $c$ from the first wire, and is defined as follows.

    \begin{itemize}
        \item $\Gamma_0 = \{\ketbra{00}{00} + \ketbra{10}{10}, \ketbra{01}{01} + \ketbra{11}{11}\}$. That is, it measures its second input in the standard basis.
        \item $\Gamma_1 = \{\ketbra{00}{00} + \ketbra{11}{11}, \ketbra{01}{01} + \ketbra{10}{10}\}$. That is, it measures the XOR of its two inputs.
    \end{itemize}

    The measurement $\Gamma_c$ is applied to the second and third wires, which remain quantum wires, and produces a classical bit $r$ indicating which of the two measurement results was observed. In this figure, this bit $r$ is carried on the classical wire coming from the right of $\Gamma_c$. In an abuse of notation, we will also use $\Gamma_c$ as a function to define measurement outcomes:
    \begin{align*}\Gamma_0(00) & = \Gamma_0(10) = 0, \ \ \ \Gamma_0(01) = \Gamma_0(11) = 1, \\ 
    \Gamma_1(00) & = \Gamma_1(11) = 0, \ \ \ \Gamma_1(01) = \Gamma_1(10) = 1
    \end{align*}

    The control logic for the $Z$ gate is $c \cdot (r \oplus h)$, where $c$ is the result of measuring the first wire, $r$ is the result of measuring $\Gamma_c$, and $h$ is the result of measuring the third wire in the Hadamard basis. We will now confirm this representation of the $T$ gate works as expected. 
    
\begin{claim}\label{claim:T-gate}
    The bottom right circuit in Figure~\ref{fig:T} implements the $T$ gate.
\end{claim}

\begin{proof}
    Write $\ket{\psi} = \alpha\ket{0}+\beta\ket{1}$. Applying the first CNOT yields

    \begin{align*}&\alpha\ket{00} + e^{i\pi/4}\beta\ket{01} + \beta\ket{10} + e^{i\pi/4}\alpha\ket{11}\\ &= \ket{0}(\alpha\ket{0} + e^{i\pi/4}\beta\ket{1}) + \ket{1}(\beta\ket{0} + e^{i\pi/4}\alpha\ket{1}).
    \end{align*}

    If the result of measuring the first wire is $c = 0$, the state on the second wire is already \[\alpha\ket{0} + e^{i\pi/4}\beta\ket{1} = T\ket{\psi}.\] In this case, we measure $\Gamma_0$, which only collapses the third wire, and neither the $Z$ nor $X$ corrections is applied to the second wire, which remains in the state $T\ket{\psi}$. 
    
    If the result of measuring the first wire is $c=1$, then the second wire is in the state \[\beta\ket{0} + e^{i\pi/4}\alpha\ket{1}.\]
    
    In this case, we measure $\Gamma_1$ on \[(\beta\ket{0} + e^{i\pi/4}\alpha\ket{1})(i\ket{0}+\ket{1}).\] 
    
    If the result is $r=0$, the state has collapsed to

    \begin{align*}
        &i\beta\ket{00} + e^{i\pi/4}\alpha\ket{11} \\
        &=e^{i\pi/4} \beta\ket{00} + \alpha\ket{11}\\
        &=\left(e^{i\pi/4}\beta\ket{0} + \alpha\ket{1}\right)\ket{+} + \left(e^{i\pi/4}\beta\ket{0} - \alpha\ket{1}\right)\ket{-}\\
        &=XT\ket{\psi}\ket{+} + ZXT\ket{\psi}\ket{-},
    \end{align*}

    so applying the $Z$ correction controlled on $c \cdot(r \oplus h) = h$ followed by the $X$ correction results in  $T\ket{\psi}$. If the result is $r=1$, the state has collapsed to

    \begin{align*}
        &\beta\ket{01} + ie^{i\pi/4}\alpha\ket{10} \\
        &= e^{i\pi/4}\beta\ket{01} -\alpha\ket{10} \\
        &= \left(e^{i\pi/4}\ket{0} - \alpha\ket{1}\right)\ket{+} - \left(e^{i\pi/4}\beta\ket{0} + \alpha\ket{1}\right)\ket{-}\\
        &= ZXT\ket{\psi}\ket{+} - ZT\ket{\psi}\ket{-},
    \end{align*}

    so applying the $Z$ correction controlled on $c \cdot(r \oplus h) = 1\oplus h$ followed by the $X$ correction results in $T\ket{\psi}$. 
\end{proof}

\paragraph{Inductive argument.} In order to carry out an inductive argument, we will first generalize the notion of an $\LM$ quantum program to support quantum output. We will actually allow the output to be correct up to some Pauli errors that can be computed by measuring some ancillary registers in the standard or Hadamard basis and applying a classical function to the measurement results.

First, we fix some notation. Throughout the proof, we'll keep track of disjoint sets of wires $V_1,\dots,V_t,V^*_{t+1},O$, where $V_1 \cup \dots \cup V_t \cup V^*_{t+1} \cup O = [n]$. We will also keep track of strings $v,\widehat{x},\widehat{z} \in \{0,1,\bot\}^n$, where $v$ denotes a subset of measurement results (from wires $V_1 \cup \dots \cup V_t \cup V^*_{t+1}$), and $\widehat{x},\widehat{z}$ denote Pauli corrections to be applied to the wires in $O$. For any set $V$, we let $v^{(V)}$ (resp.\ $\widehat{x}^{(V)},\widehat{z}^{(V)}$) be the string restricted to indices in the set $V$. In particular, for an index $i \in [n]$, $v^{(i)}$ is just the $i$'th entry of $v$. Finally, for each $i \in [t]$, we define $v_i \coloneqq v^{(V_i)}$, and we define $v^*_{t+1} \coloneqq v^{(V^*_{t+1})}$.

\begin{definition}[$\LM$ quantum program with Pauli-encoded quantum output]
    An $\LM$ quantum program with Pauli-encoded quantum output is defined like a standard $\LM$ quantum program (\cref{def:LM-circuit}), except for the following differences.
    \begin{itemize}
        \item There is no final measurement $M_{\theta_{t+1},g^{(\cdot)}}$, and thus $\theta_{t+1}$ and $g^{(\cdot)}$ are undefined.
        \item The set $[n] \setminus (V_1 \cup \dots \cup V_t)$ consists of disjoint sets  $V^*_{t+1}$ and $O$, where $O$ contains the Pauli-encoded output. We will refer to $O$ as the "active" set of wires.
        \item There is a string $\theta^*_{t+1} \in \{0,1,\bot\}^n$ that is 0 or 1 on $V^*_{t+1}$ (determining whether these registers are measured in the standard or Hadamard basis) and $\bot$ everywhere else.
        \item There is a classical function $h(x,v_1,\dots,v_t,v^*_{t+1},r_1,\dots,r_t) \to \{0,1\}^{2|O|}$ that operates on the program input $x$ and previous measurement results, and outputs Pauli corrections $\widehat{x}^{(O)},\widehat{z}^{(O)} \in \{0,1\}^{|O|}$ to be applied to the $O$ registers.
    \end{itemize}
    Note that the program is defined by a state $\ket{\psi}$ along with $\{L_i\}_{i \in [t+1]},\{\theta_i\}_{i \in [t]},\{f_i^{(\cdot)}\}_{i \in [t]},\theta^*_{t+1},h$.
\end{definition}

First, the following claim will confirm that it suffices to compile the quantum program $(\ket{\psi},C)$ into an $\LM$ quantum program with Pauli-encoded quantum output.

\begin{claim}
    Consider any $\LM$ quantum program with Pauli-encoded quantum output that computes a classical output functionality. That is, the (Pauli encoding of the) output we are interested in is determined by measuring the $O$ registers in the standard basis. Then, this program can be written as a standard $\LM$ quantum program (\cref{def:LM-circuit}).
\end{claim}

\begin{proof}
    It suffices to define the final measurement $M_{\theta_{t+1},g^{(\cdot)}}$. Set $\theta_{t+1} \in \{0,1\}^n$ to be equal to $\theta_t$ on the sets $V_1,\dots,V_t$, equal to $\theta^*_{t+1}$ on the set $V^*_{t+1}$ and equal to 0 on the set $O$. Let $g^{x,r_1,\dots,r_t}(v_1,\dots,v_{t+1})$ be defined as follows. Parse $v_{t+1}$ as $(v^*_{t+1},y')$, compute $h(x,v_1,\dots,v_t,v^*_{t+1},r_1,\dots,r_t) = (\widehat{x}^{(O)},\widehat{z}^{(O)})$, and output $y = y' \oplus \widehat{x}^{(O)}$.
\end{proof}

Now, we show how to compile any quantum program $(\ket{\psi},C)$ into an $\LM$ quantum program with Pauli-encoded quantum output. We proceed by induction over the number of gates $\ell$ in $C$.

\paragraph{Base case.} Suppose that $C$ contains 0 gates. That is, there is no state $\ket{\psi}$ and the functionality is just the identity applied to input $x$. In this case, $n = |x|$, $t=0$, $L_1$ is empty, $\theta^*_1 = \bot^n$, and the $\LM$ quantum program is defined by $(\ket{0^n},h)$, where $h(x) = (x,0^n)$.

\paragraph{Inductive step.} Consider a quantum program $(\ket{\psi},C)$ with $\ell+1$ gates, and begin by writing $C$ as $(C_\ell,G)$, where $C_\ell$ contains the first $\ell$ gates, and $G \in \{\mathsf{CNOT},H,T\}$ is the final gate. By the inductive hypothesis, we know that $(\ket{\psi},C_\ell)$ can be written as an $\LM$ quantum program with Pauli-encoded quantum output: $\ket{\psi'},\{L_i\}_{i \in [t+1]},\{\theta_i\}_{i \in [t]},\{f_i^{(\cdot)}\}_{i \in [t]},\theta^*_{t+1},h$, where $t$ is the number of $T$ gates in $C_\ell$. Now, we consider three cases corresponding to the gate $G$, which by definition will be applied to one (or two) wire(s) in the "active" set $O$. In each case, we describe how to update the description of the $\LM$ quantum program for $(\ket{\psi},C_\ell)$ so that it now has the same functionality as the full quantum program $(\ket{\psi},C)$. The fact that these updates implement the desired functionality follow from \cref{claim:H-gate} and \cref{claim:T-gate} above. 

\begin{itemize}
    \item $\CNOT$ from wire $i$ to $j$: Append the description of this gate to the end of $L_{t+1}$ and append the operation  $(\widehat{x}^{(i)},\widehat{z}^{(i)}), (\widehat{x}^{(j)},\widehat{z}^{(j)}) \to (\widehat{x}^{(i)},\widehat{z}^{(i)} \oplus \widehat{z}^{(j)}), (\widehat{x}^{(i)} \oplus \widehat{x}^{(j)}, \widehat{z}^{(j)})$ to the end of $h$.
    \item $H$ on wire $i$: Refer to \cref{fig:H}. 
    \begin{itemize}
        \item Introduce two new wires $(n+1,n+2)$ and append $\ket{\phi_H}$ to $\ket{\psi'}$.
        \item Append the description of a $\CNOT$ gate from wire $i$ to $n+1$ to the end of $L_{t+1}$.
        \item Remove wire $i$ from and add wire $n+2$ to $O$. 
        \item Add wires $i$ and $n+1$ to $V^*_{t+1}$. \item For $\tau \in [t]$, define $\theta_{\tau,n+1} = \theta_{\tau,n+2} = \bot$. Define $\theta^*_{t+1,i} = 1$, $\theta^*_{t+1,n+1} = 0$, and $\theta^*_{t+1,n+2} = \bot$.
        \item Update $h$ to $h'$ as follows. The function $h'$ will now take two additional input bits $v^{(i)}, v^{(n+1)}$ as part of $v^*_{t+1}$ and its output will now include $(\widehat{x}^{(n+2)},\widehat{z}^{(n+2)})$ rather than $(\widehat{x}^{(i)},\widehat{z}^{(i)})$, computed as follows. Let $\widehat{x}^{(O)},\widehat{z}^{(O)} = h(x,v_1,\dots,v_t,v^*_{t+1},r_1,\dots,r_t)$ be the output of the original $h$, which includes $(\widehat{x}^{(i)},\widehat{z}^{(i)})$. Then the output of $h'$ includes $\widehat{x}^{(n+2)} \coloneqq v^{(i)} \oplus \widehat{z}^{(i)}$ and $\widehat{z}^{(n+2)} \coloneqq v^{(n+1)} \oplus \widehat{x}^{(i)}$.
    \end{itemize}

    \item $T$ on wire $i$: Refer to \cref{fig:T}. 
    \begin{itemize}
        \item Introduce two new wires $(n+1,n+2)$ and append $\ket{\phi_T}\ket{\psi_{PX}}$ to $\ket{\psi'}$.
        \item Append the description of a $\CNOT$ gate from wire $n+1$ to $i$ to the end of $L_{t+1}$. 
        \item Remove wire $i$ from and add wire $n+1$ to $O$.
        \item Define $V_{t+1} \coloneqq V^*_{t+1} \cup \{i\}$ and define $W_{t+1} \coloneqq \{n+1,n+2\}$.
        \item For $\tau \in [t]$, define $\theta_{\tau,n+1} = \theta_{\tau,n+2} = \bot$. Define $\theta_{t+1}$ to be equal to $\theta_t$ on the sets $V_1,\dots,V_t$, equal to $\theta^*_{t+1}$ on the set $V^*_{t+1}$, equal to $0$ on index $i$, and equal to $\bot$ everywhere else.
        \item Define a function $f_{t+1}^{x,r_1,\dots,r_t}(v_1,\dots,v_{t+1},w_{t+1})$ as follows, where $v_{t+1} = (v^*_{t+1},v^{(i)})$. First, compute $(\widehat{x}^{(O)},\widehat{z}^{(O)}) = h(x,v_1,\dots,v_t,v^*_{t+1},r_1,\dots,r_t)$, which includes $(\widehat{x}^{(i)},\widehat{z}^{(i)})$. Then, set $c = v^{(i)} \oplus \widehat{x}^{(i)}$, and output $(v_{t+1},\Gamma_c(w_{t+1}))$. This defines measurement $M_{\theta_{t+1},f_{t+1}^{(\cdot)}}$.
        \item Initialize $V^*_{t+2} \coloneqq \{n+2\}$, $\theta^*_{t+2}$ to be equal to 1 at index $n+2$ and $\bot$ everywhere else, and $L_{t+2}$ to be empty.
        \item Update $h$ to $h'$ as follows. The function $h'$ will now take an additional input bit $v^{(i)}$ as part of $v_{t+1}$, an additional input bit $v^{(n+2)}$ as part of $v^*_{t+2}$, and an additional input bit $r_{t+1}$. Its output will now include $(\widehat{x}^{(n+1)},\widehat{z}^{(n+1)})$ rather than $(\widehat{x}^{(i)},\widehat{z}^{(i)})$, computed as follows. Let $\widehat{x}^{(O)},\widehat{z}^{(O)} = h(x,v_1,\dots,v_t,v^*_{t+1},r_1,\dots,r_t)$ be the output of the original $h$, which includes $(\widehat{x}^{(i)},\widehat{z}^{(i)})$. Then the output of $h'$ includes $\widehat{x}^{(n+1)} \coloneqq v^{(i)} \oplus \widehat{x}^{(i)}$ and $\widehat{z}^{(n+1)} \coloneqq (v^{(i)} \oplus \widehat{x}^{(i)}) \cdot (v^{(n+2)} \oplus r_{t+1})$.
    \end{itemize}

\end{itemize}

This completes the description of the compiler. Observe that the $W_i$ wires consist of the two magic state wires used to implement the $i$'th $T$ gate (one initialized with $\ket{\phi_T}$ and the other initialized with $\ket{\phi_{PX}}$). The $\ket{\phi_T}$ wire is used as the control for a single CNOT gate just prior to the $i$'th measurement, and the $\ket{\phi_{PX}}$ wire is not touched until the $i$'th measurement. Thus, since $\Gamma_c$ is a standard basis projector, all the requirements of the \emph{standard-basis-collapsible $W$ wires} property (\cref{def:LM-properties}) are fulfilled.

\end{proof}

\section{Quantum State Obfuscation: Construction}

In this section, we define quantum state obfuscation, describe our construction, and show that it is correct. We will prove security in the following section.

\begin{definition}[Quantum State Obfuscation]\label{def:obf}
    For any $\epsilon = \epsilon(\secp)$, let $\cC_\epsilon$ be the set of families of $\epsilon$-pseudo-deterministic quantum programs (\cref{def:quantum-imp}), where each family $\{\ket{\psi_\secp},C_\secp\}_{\secp \in \bbN} \in \cC_\epsilon$ is associated with an induced family of maps $\{Q_\secp : \{0,1\}^{m(\secp)} \to \{0,1\}^{m'(\secp)}\}_{\secp \in \bbN}$. A quantum state obfuscator is a pair of QPT algorithms $(\QObf,\QEval)$ with the following syntax.
    \begin{itemize}
        \item $\QObf\left(1^\secp,\ket{\psi},C\right) \to \ket*{\widetilde{\psi}}$: The obfuscator takes as input the security parameter $1^\secp$ and a quantum program $(\ket{\psi},C)$, and outputs an obfuscated state $\ket*{\widetilde{\psi}}$.
        \item $\QEval\left(x,\ket*{\widetilde{\psi}}\right) \to y$: The evaluation algorithm takes an input $x \in \{0,1\}^{m(\secp)}$ and an obfuscated state $\ket*{\widetilde{\psi}}$, and outputs $y \in \{0,1\}^{m'(\secp)}$.
    \end{itemize}
    \textbf{Correctness} is defined as follows for any quantum program $\{\ket{\psi_\secp},C_\secp\}_{\secp \in \bbN}$.
    \[\forall x \in \{0,1\}^{m(\secp)}, \Pr\left[\QEval\left(x,\ket*{\widetilde{\psi}}\right) = Q_\secp(x) : \ket*{\widetilde{\psi}} \gets \QObf\left(1^\secp,\ket{\psi_\secp},C_\secp\right)\right] = 1-\negl(\secp).\]
    We define two notions of security with respect to some pseudo-deterministic parameter $\epsilon = \epsilon(\secp)$. 
    \begin{itemize}
        \item \textbf{Ideal Obfuscation}: For any QPT adversary $\{A_\secp\}_{\secp \in \bbN}$, there exists a QPT simulator $\{\Sim_\secp\}_{\secp \in \bbN}$ such that for any polynomial $n(\secp)$, program $\{\ket{\psi_\secp},C_\secp\}_{\secp \in \bbN} \in \cC_\epsilon$ with induced family of maps $\{Q_\secp : \{0,1\}^{m(\secp)} \to \{0,1\}^{m'(\secp)}\}_{\secp \in \bbN}$ such that $\ket{\psi_\secp}$ has at most $n(\secp)$ qubits and $C_\secp$ has at most $n(\secp)$ gates, and QPT distinguisher $\{D_\secp\}_{\secp \in \bbN}$,

        \begin{align*}&\bigg|\Pr\left[1 \gets D_\secp\left(A_\secp\left(\QObf\left(1^\secp,\ket{\psi_\secp},C_\secp\right)\right)\right)\right]\\ & ~~~~~ - \Pr\left[1 \gets D_\secp\left(\Sim_\secp^{Q_\secp}\left(1^\secp,n(\secp),m(\secp),m'(\secp)\right)\right)\right]\bigg| = \negl(\secp).\end{align*}
    
        \item \textbf{Indistinguishability Obfuscation}: For any polynomial $n(\secp)$, pair of families $\{\ket{\psi_{\secp,0}},C_{\secp,0}\}_{\secp \in \bbN},\allowbreak \{\ket{\psi_{\secp,1}},C_{\secp,1}\}_{\secp \in \bbN} \in \cC_\epsilon$ with the same induced map $\{Q_\secp : \{0,1\}^{m(\secp)} \to \{0,1\}^{m'(\secp)}\}_{\secp \in \bbN}$ such that $\ket{\psi_{\secp,0}}$ and $\ket{\psi_{\secp,1}}$ both have at most $n(\secp)$ qubits and $C_{\secp,0}$ and $C_{\secp,1}$ both have at most $n(\secp)$ gates, and QPT adversary $\{A_\secp\}_{\secp \in \bbN}$,
        \[\bigg|\Pr\left[1 \gets A_\secp\left(\QObf\left(1^\secp,\ket{\psi_{\secp,0}},C_{\secp,0}\right)\right)\right] - \Pr\left[1 \gets A_\secp\left(\QObf\left(1^\secp,\ket{\psi_{\secp,1}},C_{\secp,1}\right)\right)\right]\bigg| = \negl(\secp).\]
    \end{itemize}
\end{definition}

\begin{remark}[Classical Oracle Model]
In this work, we construct quantum state obfusation in the \emph{classical oracle model}. In this model, we allow $\QObf$ to additionally output the description of a classical deterministic functionality $O$, and both $\QEval$ and the adversary $A_\secp$ are granted quantum-accessible oracle access to $O$. Any scheme in the classical oracle model may be heuristically instantiated in the plain model by using a post-quantum indistinguishability obfuscator to obfuscate $O$ and include its obfuscation in the description of the state $\ket*{\widetilde{\psi}}$.
\end{remark}

Our construction of quantum state obfuscation in the classical oracle model makes use of the following ingredients.

\begin{itemize}
    \item Publicly-verifiable, linearly-homomorphic QAS with classically-decodable ZX measurements $(\Gen,\allowbreak\Enc,\allowbreak\LinEval,\allowbreak\Dec,\allowbreak\Ver)$, defined in \cref{sec:authentication}.
    \item Signature token $(\TokGen,\TokSign,\TokVer)$, defined in \cref{subsec:sig-tokens}.
    \item A pseudorandom function $F_k$ secure against superposition-query attacks \cite{6375347}.
\end{itemize}

For any polynomials $n=n(\secp)$, $m=m(\secp)$, and $m'=m'(\secp)$, let \[\left\{\ket*{\psi^\univ_{\secp,n,m,m'}},C^\univ_{\secp,n,m,m'}\right\}_{\secp \in \bbN}\] be the family of \emph{universal} $(n,m,m')$ quantum programs. That is, for any family of quantum programs $\{\ket{\phi_\secp},C_\secp\}_{\secp \in \bbN}$ where $\ket{\phi_\secp}$ has at most $n$ qubits, $C_\secp$ has as most $n$ gates, and $Q: \{0,1\}^{m} \to \{0,1\}^{m'}$, it holds that for all $\secp$ and inputs $x$, \[C^\univ_{\secp,n,m,m'}\left(\ket{x}\ket{C_\secp}\ket{\phi_\secp}\ket*{\psi^\univ_{\secp,n,m,m'}}\right) = C_\secp\left(\ket{x}\ket{\phi_\secp}\right).\]

By \cref{thm:LM-compiler}, for any $(n,m,m')$ family $\{\ket{\phi_\secp},C_\secp\}_{\secp \in \bbN}$, we can write each quantum program \[\ket{C_\secp}\ket{\phi_\secp}\ket*{\psi^\univ_{\secp,n,m,m'}},C^\univ_{\secp,n,m,m'}\] as an $\LM$ quantum program \[\left(\ket{\psi},\allowbreak\{L_i\}_{i \in [t+1]},\allowbreak\{\theta_i\}_{i \in [t+1]},\allowbreak\{f_i\}_{i \in [t]},g\right)\] that satisfies the \emph{standard-basis-collapsible $W$ wires} property (\cref{def:LM-properties}), where we have dropped the indexing by $\secp$ to reduce notational clutter. Note that \cref{thm:LM-compiler} guarantees that $\ket{\psi}$ contains the complete description of $(\ket{\phi_\secp},C_\secp)$, and that the classical part of the program $(\{L_i\}_{i \in [t+1]},\allowbreak\{\theta_i\}_{i \in [t+1]},\allowbreak\{f_i\}_{i \in [t]},\allowbreak g)$ only depends on $C^\univ_{\secp,n,m,m'}$. Thus, we consider everything but $\ket{\psi}$ to be public, and our obfuscator will take as input a quantum state $\ket{\psi}$, and its goal is to hide $\ket{\psi}$. Finally, we assume without loss of generality that each $r_i$ (being part of the output of $f_i$) is a single bit, which is convenient (though not strictly necessary) for describing the construction and proof, and is satisfied by the output of the compiler given in \cref{thm:LM-compiler}. 

The construction is given in \proref{fig:qobf}, and incorporates the following public parameters:
 \begin{itemize}
        \item Security parameter $\secp$.
        \item Classical part of the $\LM$ quantum program $\{L_i\}_{i \in [t+1]},\{\theta_i\}_{i \in [t+1]},\{f_i\}_{i \in [t]},g$. This information determines the number of qubits $n = \poly(\secp)$ in the input state $\ket{\psi}$, classical input size $m = \poly(\secp)$, and classical output size $m'(\secp)$. Recall from \cref{subsec:auth-def} that each $\theta_i$ defines subsets \[\MS_{\theta_i},\MS_{\theta_i,0},\MS_{\theta_i,1},\MS_{\theta_i,\bot},\widetilde{\MS}_{\theta_i},\widetilde{\MS}_{\theta_i,0},\widetilde{\MS}_{\theta_i,1},\widetilde{\MS}_{\theta_i,\bot},\] and in what follows we drop the $\theta$ and write these subsets as \[\MS_i,\MS_{i,0},\MS_{i,1},\MS_{i,\bot},\widetilde{\MS}_i,\widetilde{\MS}_{i,0},\widetilde{\MS}_{i,1},\widetilde{\MS}_{i,\bot}.\]
        \item Derived security parameter $\kappa \coloneqq \max\{\secp,n^4\} = \poly(\secp)$. 
    \end{itemize}

We will also make use of the following notation. Given a register $\regX$ and classical functionality $\sF$, we let \[(y,\regX) \gets \sF(\regX)\] denote the result of initializing a new register $\regY$, coherently applying the map \[\ket{x}^\regX\ket{0}^\regY \to \ket{x}^\regX\ket{\sF(x)}^\regY,\] and then measuring register $\regY$ to obtain output $y$.

\protocol{Quantum State Obfuscation}{Construction of quantum state obfuscation.}{fig:qobf}{

\noindent $\QObf\left(1^\secp,\ket{\psi}\right)$: 

\begin{itemize}
    \item Sample $k \gets \Gen(1^\kappa,n)$, and compute $\ket*{\psi_k} = \Enc_k(\ket{\psi})$.
    \item Sample a signature token $(\vk,\ket{\sk}) \gets \TokGen(1^\kappa)$.
    \item Sample $k' \gets \{0,1\}^\secp$ for PRF $F_{k'} : \{0,1\}^* \to \{0,1\}^\kappa$ and let $H(\cdot) \coloneqq F_{k'}(\cdot)$.

    \item For each $i \in [t]$, define the function $\sF_i\left(x,\sigma_x,\widetilde{v}_1,\dots,\widetilde{v}_i,\widetilde{w}_i,\ell_1,\dots,\ell_{i-1}\right)$:
    \begin{itemize}
        \item Output $\bot$ if $\TokVer(\vk,x,\sigma_x) = \bot$.
        \item For each $\iota \in [i-1]$, let \[\ell_{\iota,0} = H(x,\sigma_x,\widetilde{v}_1,\dots,\widetilde{v}_{\iota},\ell_1,\dots,\ell_{\iota-1},0),  ~~ \ell_{\iota,1} = H(x,\sigma_x,\widetilde{v}_1,\dots,\widetilde{v}_{\iota},\ell_1,\dots,\ell_{\iota-1},1),\]  and output $\bot$ if $\ell_{\iota,0} = \ell_{\iota,1}$ or $\ell_\iota \notin \{\ell_{\iota,0},\ell_{\iota,1}\}$. Otherwise, let $r_\iota$ be such that $\ell_\iota = h_{\iota,r_\iota}$.%
        \item Compute $(v_1,\dots,v_i,w_i) = \Dec_{k,L_i \dots L_1,\theta_i}(\widetilde{v}_1,\dots,\widetilde{v}_i,\widetilde{w}_i)$ and output $\bot$ if the result is $\bot$.
        \item Compute $(\cdot,r_i) = f_i^{x,r_1,\dots,r_{i-1}}(v_1,\dots,v_i,w_i)$.
        \item Set $\ell_i \coloneqq H(x,\sigma_x,\widetilde{v}_1,\dots,\widetilde{v}_i,\ell_1,\dots,\ell_{i-1},r_i),$ and output $(\widetilde{v}_i,\ell_i)$.
   
    \end{itemize}
    \item Define the function $\sG(x,\sigma_x,\widetilde{v}_1,\dots,\widetilde{v}_{t+1},\ell_1,\dots,\ell_t)$:
\begin{itemize}
        \item Output $\bot$ if $\TokVer(\vk,x,\sigma_x) = \bot$.
        \item For each $\iota \in [t]$, let \[\ell_{\iota,0} = H(x,\sigma_x,\widetilde{v}_1,\dots,\widetilde{v}_\iota,\ell_1,\dots,\ell_{\iota-1},0), ~~ \ell_{\iota,1} = H(x,\sigma_x,\widetilde{v}_1,\dots,\widetilde{v}_\iota,\ell_1,\dots,\ell_{\iota-1},1),\] and output $\bot$ if $\ell_{\iota,0} = \ell_{\iota,1}$ or $\ell_\iota \notin \{\ell_{\iota,0},\ell_{\iota,1}\}$. Otherwise, let $r_\iota$ be such that $\ell_\iota = \ell_{\iota,r_\iota}$.
        \item Compute $(v_1,\dots,v_{t+1}) = \Dec_{k,L_{t+1} \dots L_1,\theta_{t+1}}(\widetilde{v}_1,\dots,\widetilde{v}_{t+1})$ and output $\bot$ if the result is $\bot$.
        \item Output $y = g^{x,r_1,\dots,r_t}(v_1,\dots,v_{t+1})$.
    \end{itemize}
    \item Output $\ket*{\widetilde{\psi}} = \ket*{\psi_k}\ket{\sk}, O = \left(\sF_1,\dots,\sF_t,\sG\right).$
\end{itemize}

$\QEval^O\left(x,\ket*{\widetilde{\psi}}\right)$:%
\begin{itemize}
    \item Parse $\ket*{\widetilde{\psi}} = \ket*{\psi_k}\ket{\sk}, O = \left(\sF_1,\dots,\sF_t,\sG\right)$.
    \item Sample $\sigma_x \gets \TokSign(x,\ket{\sk})$.
    \item Initialize a register $\regC$ with $\ket*{\psi_k}$, and do the following for $i \in [t]$:
    \begin{itemize}
        \item $\regC \gets H^{\widetilde{\MS}_{i,1}}\LinEval_{L_i}(\regC)$.
        \item Measure $\left(\widetilde{v}_i,\ell_i,\regC_{\widetilde{\MS}_i}\right) \gets \sF_i\left(x,\sigma_x,\regC_{\widetilde{\MS}_i},\ell_1,\dots,\ell_{i-1}\right)$.
        \item $\regC \gets H^{\widetilde{\MS}_{i,1}}(\regC)$.
    \end{itemize}
    \item $\regC \gets H^{\widetilde{\MS}_{t+1,1}}\LinEval_{L_{t+1}}(\regC)$.
    \item Measure $y \gets \sG\left(x,\sigma_x,\regC_{\widetilde{\MS}_{t+1}},\ell_1,\dots,\ell_t\right)$ and output $y$.
\end{itemize}

}

\begin{theorem}
The scheme described in \cref{fig:qobf} is a quantum state obfuscator that satisfies correctness (\cref{def:obf}).
\end{theorem}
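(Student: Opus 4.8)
The plan is to show that an honest run of $\QEval$ on an honestly obfuscated state faithfully reproduces the outcome distribution of the formal evaluation procedure $\LMEval$ applied to the underlying plaintext program, and then to invoke correctness of the authentication scheme (\cref{def:auth-correct}) together with correctness of the $\LM$ compiler (\cref{thm:LM-compiler}). Fix a family $\{\ket{\psi_\secp},C_\secp\}_{\secp\in\bbN}\in\cC_\epsilon$ with induced map $Q=Q_\secp$ and an input $x$, and let $(\ket{\psi},\{L_i\}_{i\in[t+1]},\{\theta_i\}_{i\in[t+1]},\{f_i\}_{i\in[t]},g)$ be the $\LM$ quantum program that \cref{thm:LM-compiler} produces from the associated universal program; by that theorem this program has standard-basis-collapsible $W$ wires (\cref{def:LM-properties}) and, by $\epsilon$-pseudo-determinism, $\LMEval$ on it outputs $Q(x)$ with probability at least $1-\epsilon(\secp)$.

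First I would dispense with the two ``syntactic'' ways an honest run could cause $\sF_1,\dots,\sF_t,\sG$ to abort. Since $\sigma_x\gets\TokSign(x,\ket{\sk})$ is produced from a fresh $(\vk,\ket{\sk})\gets\TokGen(1^\kappa)$, signature-token correctness (\cref{def:token-correctness}) gives $\TokVer(\vk,x,\sigma_x)=\top$ with probability $1$, so the signature check never fails. For the label checks, each $\ell_\iota$ is defined at layer $\iota$ as $H$ evaluated on a prefix that, by the standard-basis-collapsible $W$-wire structure (which guarantees the fully collapsed registers $V_1,\dots,V_\iota$ are never operated on after layer $\iota$), is reproduced verbatim by every downstream oracle; hence $\ell_\iota\in\{\ell_{\iota,0},\ell_{\iota,1}\}$ and $r_\iota$ is recovered correctly, provided only $\ell_{\iota,0}\neq\ell_{\iota,1}$. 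Because $H=F_{k'}$ outputs $\kappa\geq\secp$ bits, PRF security against superposition queries lets us replace $H$ by a truly random function up to $\negl(\secp)$ error, after which a union bound over the $\poly(\secp)$ relevant label pairs shows they are all distinct except with probability $\negl(\secp)$. So with overwhelming probability no oracle call in an honest run aborts.

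Conditioned on this good event, I would identify the quantum operations of the $i$-th iteration of $\QEval$ --- apply $H^{\widetilde{\MS}_{i,1}}\LinEval_{L_i}$, coherently apply $\sF_i$ and measure its output $(\widetilde{v}_i,\ell_i)$, apply $H^{\widetilde{\MS}_{i,1}}$ --- with the authenticated partial ZX measurement $\widetilde{M}_{\theta_i,\,f_i^{x,r_1,\dots,r_{i-1}},\,k,\,L_i\cdots L_1}$ composed with $\LinEval_{L_i}$ (and $\sG$ with $\Dec_{k,L_{t+1}\cdots L_1,\theta_{t+1}}$ followed by $g$), up to two harmless differences: the labels merely re-encode the bits $r_i$ instead of returning them directly, and measuring the raw output $\widetilde{v}_i$ fully collapses the $V_i$ registers rather than only projecting onto the corresponding coset --- but those registers are never again operated on, and every downstream oracle depends on $\widetilde{v}_i$ only through its decoded logical value, so neither difference affects the output distribution. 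Here one must verify that $\sF_i$ decodes relative to the \emph{cumulative} linear map $L_i\cdots L_1$, matching the Pauli-key update performed by $\LinEval_{L_i}\cdots\LinEval_{L_1}$ (via \cref{claim:linear}); that the interleaved Hadamard layers realize exactly the conjugation $H^{\widetilde{\MS}_1}(\cdot)H^{\widetilde{\MS}_1}$ in the definition of $\widetilde{M}$; and that conditioning on the measurement history fixes each adaptive $f_i^{x,r_1,\dots,r_{i-1}}$ to a concrete function so that \cref{def:auth-correct} applies. Iterating \cref{def:auth-correct} layer by layer --- the induction establishing $\widetilde{M}_{\theta_i,f_i,k,L_i\cdots L_1}\LinEval_{L_i}\cdots\widetilde{M}_{\theta_1,f_1,k,L_1}\LinEval_{L_1}\Enc_k = \LinEval_{L_i\cdots L_1}\Enc_k\,(L_i\cdots L_1)^\dagger M_{\theta_i,f_i}L_i\cdots M_{\theta_1,f_1}L_1$, which is precisely the composition the authors flag as needed right after \cref{def:auth-correct} --- then shows that the outcomes $(v_1,r_1),\dots,(v_t,r_t),y$ produced by $\QEval$ on $\Enc_k(\ket{\psi})$ are distributed exactly as the measurement results of $\LMEval$ on $\ket{\psi}$. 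Combining with \cref{thm:LM-compiler} gives $\Pr[\QEval^O(x,\ket*{\widetilde{\psi}})=Q(x)]\geq 1-\epsilon(\secp)-\negl(\secp)=1-\negl(\secp)$.

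I expect the main obstacle to be exactly that layer-by-layer composition: \cref{def:auth-correct} is phrased for a single measurement and a single fixed linear operation, so turning it into ``$\QEval$ reproduces the outcome distribution of $\LMEval$'' requires carefully propagating encoded states, Pauli one-time-pad keys, and the conditioning on already-measured values through each $\LinEval$/$\widetilde{M}$ pair, and reconciling the forward-only sequence of $\LinEval_{L_i}$'s that actually appears in $\QEval$ with the cumulative maps $L_i\cdots L_1$ used inside the oracles. Everything else --- token correctness, the PRF collision bound, and the $\LM$ compiler guarantee --- is either immediate or a routine argument.
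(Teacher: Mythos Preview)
Your proposal is correct and follows essentially the same approach as the paper's proof, which is organized as an explicit sequence of hybrids $\cH_0,\dots,\cH_4$ corresponding precisely to your steps (PRF $\to$ random oracle, signature correctness, label-collision bound, $\widetilde{v}_i$ versus $v_i$, then iterating \cref{def:auth-correct}). One small misattribution: the fact that the fully collapsed registers $V_1,\dots,V_\iota$ are not operated on after layer $\iota$ comes from the basic $\LM$ program definition (\cref{def:LM-circuit}), not from the standard-basis-collapsible $W$-wire property (\cref{def:LM-properties}), which concerns the $W$ wires and is used only for security, not correctness.
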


\begin{proof}
Fix the classical part of an $\LM$ quantum program $\{L_i\}_{i \in [t+1]},\{\theta_i\}_{i \in [t+1]},\{f_i\}_{i \in [t]},g$, and let $x,\ket{\psi}$ be such that there exists $y$ such that 
\[\Pr\left[\LMEval\left(x,\ket{\psi},\{L_i\}_{i \in [t+1]},\{\theta_i\}_{i \in [t+1]},\{f_i\}_{i \in [t]},g\right) \to y\right] = 1-\negl(\secp).\] Technically, we mean an infinite family of programs, inputs, states, and outputs, parameterized by the security parameter $\secp$, but we keep this implicit. We will show via a sequence of hybrids that 

\[\Pr\left[y^* = y : \begin{array}{r}\ket*{\widetilde{\psi}},O \gets \QObf(1^\secp,\ket{\psi}) \\ y^* \gets \QEval^O\left(x,\ket*{\widetilde{\psi}}\right) \end{array}\right] = 1-\negl(\secp).\]

Each hybrid will describe a distribution over $y^*$, beginning with the distribution above, which we denote $\cH_0$. 

\begin{itemize}
    \item $\cH_1$: This is the same as $\cH_0$ except that $H(\cdot)$ is defined to be a uniformly random function with range $\{0,1\}^\kappa$ rather than the PRF $F_{k'}$.
    \item $\cH_2$: This is the same as $\cH_1$ except that the functions $\sF_1,\dots,\sF_t,\sG$ ignore their input $\sigma_x$ and don't apply $\TokVer$.
    \item $\cH_3$: This is the same as $\cH_2$ except that instead of inputting and outputting the labels $\ell_\iota$, the functions $F_1,\dots,F_t,G$ directly input and output the bits $r_\iota$. That is, these functions are defined as follows.\\

    $\sF_i(x,\sigma_x,\widetilde{v}_1,\dots,\widetilde{v}_i,\widetilde{w}_i,r_1,\dots,r_{i-1})$:
    \begin{itemize}
        \item Compute $(v_1,\dots,v_i,w_i) = \Dec_{k,L_i \dots L_1,\theta_i}(\widetilde{v_1},\dots,\widetilde{v}_i,\widetilde{w}_i)$ and output $\bot$ if the result is $\bot$.
        \item Compute $(\cdot,r_i) = f_i^{x,r_1,\dots,r_{i-1}}(v_1,\dots,v_i,w_i)$.
        \item Output $(\widetilde{v}_i,r_i)$.
    \end{itemize}

    $\sG(x,\sigma_x,\widetilde{v}_1,\dots,\widetilde{v}_{t+1},r_1,\dots,r_t):$
    \begin{itemize}
        \item Compute $(v_1,\dots,v_{t+1}) = \Dec_{k,L_{t+1} \dots L_1,\theta_{t+1}}(\widetilde{v}_1,\dots,\widetilde{v}_{t+1})$ and output $\bot$ if the result is $\bot$.
        \item Output $y = g^{x,r_1,\dots,r_t}(v_1,\dots,v_{t+1})$.
    \end{itemize}

    \item $\cH_4$: This is the same as $\cH_3$ except that the functions $\sF_1,\dots,\sF_t$ output $v_\iota$ rather than $\widetilde{v}_\iota$. That is, these functions are defined as follows.\\

    $\sF_i(x,\sigma_x,\widetilde{v}_1,\dots,\widetilde{v}_i,\widetilde{w}_i,r_1,\dots,r_{i-1})$:
    \begin{itemize}
        \item Compute $(v_1,\dots,v_i,w_i) = \Dec_{k,L_i \dots L_1,\theta_i}(\widetilde{v}_1,\dots,\widetilde{v}_i,\widetilde{w}_i)$ and output $\bot$ if the result is $\bot$.
        \item Compute $(\cdot,r_i) = f_i^{x,r_1,\dots,r_{i-1}}(v_1,\dots,v_i,w_i)$.
        \item Output $(v_i,r_i)$.
    \end{itemize}

    $\sG(x,\sigma_x,\widetilde{v}_1,\dots,\widetilde{v}_t,\widetilde{v}_{t+1},r_1,\dots,r_t):$
    \begin{itemize}
        \item Compute $v_1,\dots,v_{t+1} = \Dec_{k,L_{t+1} \dots L_1,\theta_{t+1}}(\widetilde{v}_1,\dots,\widetilde{v}_{t+1})$ and output $\bot$ if the result is $\bot$.
        \item Output $y = g^{x,r_1,\dots,r_t}(v_1,\dots,v_{t+1})$.
    \end{itemize}

\end{itemize}

To complete the proof, we combine the following observations.

\begin{itemize}
    \item $\cH_0 \approx_{\negl(\secp)} \cH_1$: This follows from the (superposition-query) security of the PRF.
    \item $\cH_1 \equiv \cH_2$: This follows from the correctness of the signature token (\cref{def:token-correctness}).
    \item $\cH_2 \approx_{\negl(\secp)} \cH_3$: The only difference between these hybrids occurs if in $\cH_2$, a query to $\sF_i$ or $\sG$ outputs $\bot$ due to the fact that $\ell_{\iota,0} = \ell_{\iota,1}$. Since $H$ is a uniformly random function, each $\ell_{\iota,0} = \ell_{\iota,1}$ with probability $1/2^\kappa = \negl(\secp)$, and the observation follows because $t = \poly(\secp)$.
    \item $\cH_3 \equiv \cH_4$: Starting with $\cH_4$, in which the logical measurements of $v_1,\dots,v_{t+1}$ are performed, we can imagine, after the $i$'th measurement, further collapsing the $V_i$ register to obtain the outcome $\widetilde{v}_i$ as in $\cH_3$. This has no effect on the rest of the computation, since these registers are no longer computed on after the $i$'th measurement, and will continue to decode to $v_i$.

    \item $\cH_4 \equiv \LMEval\left(x,\ket{\psi},\{L_i\}_{i \in [t+1]},\{\theta_i\}_{i \in [t+1]},\{f_i\}_{i \in [t]},g\right)$: This follows from the correctness of the authentication scheme (\cref{def:auth-correct}). Indeed, for a given key $k \in \Gen(1^\kappa,n)$, we can write the distribution sampled by $\cH_4$ as follows:

    \begin{itemize}
        \item Initialize register $\regC$ to $\Enc_k(\ket{\psi})$.
        \item Compute $((v_1,r_1),\regC) \gets \widetilde{M}_{\theta_1,f_1^x,k,L_1} \circ \LinEval_{L_1}(\regC)$.
        \item \dots
        \item Compute $((v_t,r_t),\regC) \gets \widetilde{M}_{\theta_t,f_t^{x,r_1,\dots,r_{t-1}},k,L_t \dots L_1} \circ \LinEval_{L_t}(\regC)$.
        \item Compute output $y \gets \widetilde{M}_{\theta_{t+1},g^{x,r_1,\dots,r_t},k,L_{t+1} \dots L_1} \circ \LinEval_{L_{t+1}}(\regC)$.
    \end{itemize}

    Now, we apply the expression in the definition of correctness (\cref{def:auth-correct}) to the first measurement to obtain an equivalent sampling procedure:

    \begin{itemize}
        \item Initialize register $\regM$ to $\ket{\psi}$.
        \item Compute $((v_1,r_1),\regM) \gets L_1^\dagger \circ M_{\theta_1,f_1^x} \circ L_1(\regM)$.
        \item Compute $\regC \gets \Enc_k(\regM)$
        \item Compute $((v_2,r_2),\regC) \gets \widetilde{M}_{\theta_2,f_2^{x,r_1},k,L_2L_1} \circ \LinEval_{L_2 L_1}(\regC)$.
        \item \dots
        \item Compute $((v_t,r_t),\regC) \gets \widetilde{M}_{\theta_t,f_t^{x,r_1,\dots,r_{t-1}},k,L_t \dots L_1} \circ \LinEval_{L_t}(\regC)$.
        \item Compute output $y \gets \widetilde{M}_{\theta_{t+1},g^{x,r_1,\dots,r_t},k,L_{t+1} \dots L_1} \circ \LinEval_{L_{t+1}}(\regC)$.
    \end{itemize}

    By applying the expression iteratively for each measurement, we obtain:

    \begin{itemize}
        \item Initialize register $\regM$ to $\ket{\psi}$.
        \item Compute $((v_1,r_1),\regM) \gets L_1^\dagger \circ M_{\theta_1,f_1^x} \circ L_1(\regM)$.
        \item Compute $((v_2,r_2),\regM) \gets L_1^\dagger L_2^\dagger \circ M_{\theta_2,f_2^{x,r_1}} \circ L_2L_1(\regM)$.
        \item \dots
        \item Compute $((v_t,r_t),\regM) \gets L_1^\dagger \dots L_t^\dagger \circ M_{\theta_t,f_t^{x,r_1,\dots,r_{t-1}}} \circ L_t\dots L_1(\regM)$.
        \item Compute output $y \gets M_{\theta_{t+1},g^{x,r_1,\dots,r_t}} \circ L_{t+1} \dots L_1(\regM)$.
    \end{itemize}

    By canceling $L_i^\dagger L_i = \cI$, we obtain $ \LMEval\left(x,\ket{\psi},\{L_i\}_{i \in [t+1]},\{\theta_i\}_{i \in [t+1]},\{f_i\}_{i \in [t]},g\right)$:

    \begin{itemize}
        \item Initialize register $\regM$ to $\ket{\psi}$.
        \item Compute $((v_1,r_1),\regM) \gets M_{\theta_1,f_1^x} \circ L_1(\regM)$.
        \item Compute $((v_2,r_2),\regM) \gets M_{\theta_2,f_2^{x,r_1}} \circ L_2(\regM)$.
        \item \dots
        \item Compute $((v_t,r_t),\regM) \gets M_{\theta_t,f_t^{x,r_1,\dots,r_{t-1}}} \circ L_t(\regM)$.
        \item Compute output $y \gets M_{\theta_{t+1},g^{x,r_1,\dots,r_t}} \circ L_{t+1}(\regM)$.
    \end{itemize}
    
\end{itemize}

\end{proof}

\section{Quantum State Obfuscation: Security}\label{sec:security}

In this section, we prove that the scheme described in \cref{fig:qobf} is an ideal quantum state obfuscator in the classical oracle model. We provide some intuition in \cref{subsec:proof-overview} and set up some notation in \cref{subsec:notation} before coming to the formal proof in \cref{subsec:main-theorem} - \cref{subsec:mapping-hardness}.

\subsection{Proof Intuition}\label{subsec:proof-overview}

We begin by discussing three main ideas used in our proof. This will not be a step-by-step outline of the proof, rather, it will try to convey the main intuitive ideas. Broadly speaking, we will want to simulate the oracles $\sF_1,\dots,\sF_t,\sG$ so that they no longer require access to the decoding functionality of the authentication scheme, and $\sG$ can get by with just oracle access to the induced functionality $Q$. Once this is done, we can appeal to privacy of the authentication scheme (\cref{def:auth-privacy}) in order to switch $\ket{\psi}$ to $\ket{0^n}$, thus removing all information about the input state. 

The first idea below will help us simulate the $\sG$ oracle, the second will us help simulate the $\sF$ oracles, and the third is a way to extract signature tokens from the adversary using a purified random oracle, which we will use when proving the indistinguishability of the simulated oracles.

\paragraph{Proving soundness by induction.} As mentioned in the technical overview (\cref{subsec:overview-GC}), one of the main steps in our proof of security is to show the following soundness guarantee. Fix any input $x^*$. Then we would like to show that, given $\ket*{\psi}$ and oracle access to $\sF_1,\dots,\sF_t,\sG$, the adversary cannot prepare a "bad" query $(x^*,\sigma_{x^*},\widetilde{v}_1,\dots,\widetilde{v}_{t+1},\ell_1,\dots,\ell_t)$ with the property that \[\sG(x^*,\sigma_{x^*},\widetilde{v}_1,\dots,\widetilde{v}_{t+1},\ell_1,\dots,\ell_t) \notin \{Q(x^*),\bot\}.\] That is, if $\sG$ does not abort on an input that starts with $x^*$, then it better be the case that it returns the correct output $Q(x^*)$.

To simplify the discussion for now, let's consider the simpler case of a garbled program, which only allows the adversary to evaluate on a single input $x^*$. That is, suppose we hard-code $x^*$ into the oracles $\sF_1[x^*],\dots,\sF_t[x^*],\sG[x^*]$, which now only accept inputs that begin with $x^*$.

Our goal will be to show that the adversary is ``forced'' to follow an honest evaluation path on input $x^*$. Since the honest evaluation path actually branches at each measurement, we will essentially analyze each of these possible branching executions. To do so, let's suppose by induction that the soundness condition holds for any program with $t-1$ measurement layers. Then, for each possible outcome $(v_1,r_1)$ of the first measurement (using input $x^*$) that occurs with non-zero probability, define $\Pi[x^*,v_1,r_1]$ to be the projector onto the space of initial states $\ket{\psi}$ that produce that outcome. Thus, we can write \[\ket{\psi} = \sum_{v_1,r_1} \Pi[x^*,v_1,r_1]\ket{\psi},\] and analyze each component $\ket*{\widetilde{\psi}[x^*,v_1,r_1]} \coloneqq \Enc_k(\Pi[x^*,v_1,r_1]\ket{\psi})$ separately. 

The key step is to show that, if the adversary is initialized with $\ket*{\widetilde{\psi}[x^*,v^*_1,r^*_1]}$ for some $(v_1^*,r_1^*)$, then we can \emph{hard-code} the measurement results $(v^*_1,r^*_1)$ into the oracles $\sF_1[x^*],\dots,\sF_t[x^*]$ without the adversary noticing. That is, we define oracles $\sF_1[x^*,v^*_1,r^*_1],\dots,\sF_t[x^*,v^*_1,r^*_1]$ that operate like $\sF_1[x^*],\dots,\sF_t[x^*]$, except that $\sF_1[x^*,v^*_1,r^*_1]$ always outputs the label representing $r^*_1$, and $\sF_2[x^*,v^*_1,r^*_1],\dots,\sF_t[x^*,v^*_1,r^*_1]$ use $(v^*_1,r^*_1)$ instead of decoding their inputs $\widetilde{v}_1$ and $\ell_1$.

We will prove the indistinguishability of $\sF_1[x^*],\dots,\sF_t[x^*]$ and $\sF_1[x^*,v^*_1,r^*_1],\dots,\sF_t[x^*,v^*_1,r^*_1]$ by reducing to security of the authentication scheme. Note that distinguishing these oracles requires the adversary to find a \emph{differing input} to one of the oracles. Now, assuming that the oracles can be simulated using $\Ver_{k,\cdot,\cdot}(\cdot)$ instead of $\Dec_{k,\cdot,\cdot}(\cdot)$ (which we have yet to argue, but will address in the following section), this means that it suffices to show that the adversary cannot map \[\ket*{\widetilde{\psi}[x^*,v^*_1,r^*_1]} \to \ket*{\widetilde{\psi}[x^*,v_1,r_1]}\] for some $(v_1,r_1) \neq (v^*_1,r^*_1)$ just given access to the verification oracle $\Ver_{k,\cdot,\cdot}(\cdot)$. However, doing so would certainly imply that the adversary can change the measurement results of an authenticated state (given the verification oracle), which violates the security of the authentication scheme. 

Finally, we view the state $\ket*{\widetilde{\psi}[x^*,v^*_1,r^*_1]}$ and oracles $\sF_1[x^*,v^*_1,r^*_1],\allowbreak\dots,\allowbreak\sF_t[x^*,v^*_1,r^*_1]$ as an example of a garbled $(t-1)$-layer program, and finish the proof of soundness by appealing to the induction hypothesis.

The formal inductive argument is given in \cref{subsec:induction}.

\paragraph{Collapsing the $\sF$ oracles.} Now, we address the claim made above that the $\sF$ oracles can be simulated given $\Ver_{k,\cdot,\cdot}(\cdot)$ instead of $\Dec_{k,\cdot,\cdot}(\cdot)$. Note that we can only hope that this simulation is indistinguishable to an adversary with no access to the $\sG$ oracle, since the real $\sF$ oracles can be used to actually implement the computation $x \to Q(x)$, while the simulated oracles cannot since they don't actually decode their inputs. However, it turns out that this suffices for us, since we can simulate the $\sG$ oracle when we need to apply this indistinguishability.

The main idea is to ``collapse'' the oracles $\sF_1,\dots,\sF_t$, showing that the adversary cannot distinguish them from oracles $\FSim_1,\dots,\FSim_t$ that \emph{always} output either the ``zero'' label \[H(x,\sigma_x,\widetilde{v}_1,\dots,\widetilde{v}_i,\ell_1,\dots,\ell_{i-1},0)\] or $\bot$. These oracles now do not have to actually run $f_i^{x,r_1,\dots,r_{i-1}}(v_1,\dots,v_i,w_i)$ to compute the bit $r_i$, meaning that the decoding operation in $\sF_i$ can be replaced with a verification operation.

But how do we show that the oracles can be collapsed? Again, we will use the idea of splitting $\ket{\psi}$ up into orthogonal components and analyzing each component separately. Here is where we make use of the \emph{standard-basis-collapsible $W$ wires} property of the $\LM$ quantum program (\cref{def:LM-properties}). In particular, we will define the orthogonal components by measuring the wires $W_1,\dots,W_t$ in the standard basis. That is, we will write \[\ket{\psi} = \sum_{w}\Pi[w]\ket{\psi},\] where $\Pi[w]$ is the projection of wires $W_1,\dots,W_t$ onto standard basis measurement results $w = (w_1,\dots,w_t)$. 

The point is that if the oracle $\sF_i$ only receives inputs that include encodings $\widetilde{w} = (\widetilde{w}_1,\dots,\widetilde{w}_t)$ of some \emph{fixed} $w = (w_1,\dots,w_t)$, then, for each "prefix" $(x,\sigma_x,\widetilde{v}_1,\dots,\widetilde{v}_i,\ell_1,\dots,\ell_{i-1})$, it will only ever query the random oracle $H$ on \[\text{either} \ \ \  (x,\sigma_x,\widetilde{v}_1,\dots,\widetilde{v}_i,\ell_1,\dots,\ell_{i-1},0) \ \ \ \text{or} \ \ \ (x,\sigma_x,\widetilde{v}_1,\dots,\widetilde{v}_i,\ell_1,\dots,\ell_{i-1},1),\] where the last bit is a deterministic function of $w$ and the prefix. But since each of these values is distributed as a uniformly random string, this behavior is identical (from the adversary's perspective) to, for each prefix, always querying the zero label \[H(x,\sigma_x,\widetilde{v}_1,\dots,\widetilde{v}_i,\ell_1,\dots,\ell_{i-1},0).\]

Thus, it suffices to show, roughly, that given $\ket*{\widetilde{\psi}[w]} \coloneqq \Enc_k(\Pi[w]\ket{\psi})$, the adversary cannot map \[\ket*{\widetilde{\psi}[w]} \to \ket*{\widetilde{\psi}[w']}\] for $w' \neq w$, given access to the verification oracle $\Ver_{k,\cdot,\cdot}(\cdot)$. This again follows directly from security of our authentication scheme.

The ideas sketched here are used to simulate the $\sF$ oracles in our main sequence of hybrids given in \cref{subsec:main-theorem}, and also in lower-level hybrids in \cref{subsec:mapping-hardness}.

\paragraph{Extracting signature tokens.} Recall that the inductive argument sketched above for proving the soundness condition assumed that the oracles only respond on a single fixed input $x^*$. Unfortunately, as discussed in \cref{subsec:overview-GC}, the situation gets more complicated when we grant the adversary access to the oracles on any input $x$ of their choice. The reason is that it is no longer clear that the adversary cannot perform the map 

\[\ket*{\widetilde{\psi}[x^*,v^*_1,r^*_1]} \to \ket*{\widetilde{\psi}[x^*,v_1,r_1]}\] by using an oracle query on an input $x \neq x^*$. Indeed, while the $(V_1,W_1)$ registers of $\ket*{\widetilde{\psi}[x^*,v^*_1,r^*_1]}$ are collapsed to a state that yields a fixed $(v^*_1,r^*_1) \gets f_1^{x^*}(V_1,W_1)$, applying $f_1^{x}(V_1,W_1)$ for some $x \neq x^*$ might \emph{disturb} the registers $V_1,W_1$ (since $f_1^{x^*}$ and $f_1^{x}$ might be different functions!), thus changing the outcome of $f_1^{x^*}(V_1,W_1)$.

Now, as discussed in \cref{subsec:overview-GC}, we use signature tokens to prevent this potential attack. Recall that the oracle $\sF_1$ only responds on input $x$ if additionally given a valid signature token $\sigma_x$. Thus, we will want to formalize the following claim: if the adversary uses $\sF_1(x,\dots)$ to perform some "non-trivial" operation on the authenticated state $\ket*{\widetilde{\psi}[x^*,v^*_1,r^*_1]}$, it is possible to \emph{extract} a valid signature $\sigma_x$ from the adversary. Once this $\sigma_x$ is extracted, the security of the signature token scheme implies that the adversary won't be able to continue evaluating on $x^*$, and, in particular, we won't have to worry about the adversary breaking the soundness condition for input $x^*$.

We will show this claim by purifying the random oracle \cite{C:Zhandry19}, introducing a ``database'' register that is initialized with a different state in uniform superposition for each input to $H$. Then, we'll argue that if the adversary has used $\sF_i(x,\cdot)$ to execute a measurement on the authenticated state, the database register must be disturbed at some inputs that begin with $(x,\sigma_x)$. Thus, an extractor can simply measure the database register in the Hadamard basis, and obtain a signature on $x$ by observing which registers were no longer in uniform superposition.

For the purpose of this overview, we consider a simplified version of this problem that still conveys the fundamental ideas in our proof. We'll take the random oracle to have a single bit of output, only consider the authentication of a single qubit state, and analyze the concrete authentication scheme based on coset states (though we stress that our eventual proof just makes use of generic properties of the authentication scheme). Here is the setup:
\begin{itemize}
    \item An authentication key $k = (S,\Delta,x,z)$ is sampled. 
	\item The adversary $A$ is given an authenticated 0 state $X^xZ^z\ket{S}$ along with access to the following oracle $O$ that can be used to implement a logical Hadamard basis measurement:
	\[O(\widetilde{v}) = \begin{cases}H(0) \text{ if } \widetilde{v} \in \widehat{S} + z \\ H(1) \text{ if } \widetilde{v} \in \widehat{S} + \widehat{\Delta}+ z \\ \bot \text{ otherwise}\end{cases}, \] where $H$ is a random oracle $\{0,1\} \to \{0,1\}$. $H$ will be purified and implemented using a database register initialized to $\ket{++}$.
	\item We claim that the adversary cannot produce any vector in $S+\Delta + x$ (that is, a vector in the support of an authenticated 1 state) at the same time that the database register is in the state $\ket{++}$:
	\[\E\left[\big\| \left(\Pi[S+\Delta+x] \otimes \ketbra{++}{++} \right)A^{O}\left(X^xZ^z\ket{S}\right)\ket{++}\big\|^2\right] = \negl(\secp).\]  To be clear, $A$ operates on input state $X^xZ^z\ket{S}$ (along with some potential extra workspace), and the database registers are operated on by $O$ when answering $A$'s queries.
\end{itemize}

Notice that it is easy for the adversary to produce just a vector in $S+\Delta+x$ (with constant probability) by using the oracle $O$ to honestly to implement a Hadamard basis measurement. The trick is to show that once they do this, it is impossible for them to make queries to $O$ that return the state of the database to $\ket{++}$, while still remembering their vector in  $S+\Delta+x$.

Our first step is to decompose $X^xZ^z\ket{S}$ into orthogonal components corresponding to the authenticated plus and minus state. That is, \[X^xZ^z\ket{S} = \frac{1}{\sqrt{2}}H^{\otimes 2\secp+1}X^zZ^x\ket*{\widehat{S}} + \frac{1}{\sqrt{2}}H^{\otimes 2\secp+1}X^zZ^x\ket*{\widehat{S} + \widehat{\Delta}} \coloneqq  \ket*{\widetilde{+}} + \ket*{\widetilde{-}}.\] Then, for $b \in \{0,1\}$, we define oracles 

\[O[b](\widetilde{v}) = \begin{cases}H(b) \text{ if } \widetilde{v} \in \widehat{S}_{\widehat{\Delta}} + z  \\ \bot \text{ otherwise}\end{cases}, \] that are identical to $O$, except that they always query the random oracle on bit $b$. Then we observe that  

\begin{align*}
	A^{O}\ket*{\widetilde{+}}\approx_{\negl(\secp)} A^{O[0]}\ket*{\widetilde{+}}, ~~~~~ \text{and} ~~~~ A^{O}\ket*{\widetilde{-}}\approx_{\negl(\secp)} A^{O[1]}\ket*{\widetilde{-}}.
\end{align*}

This follows from the security of the authentication scheme, which implies that $A$ cannot map between $\ket*{\widetilde{+}}$ and $\ket*{\widetilde{-}}$. That is, on input $\ket*{\widetilde{+}}$, $A$ won't be able to find any input on which $O$ and $O[0]$ differ, and on input $\ket*{\widetilde{-}}$, $A$ won't be able to find any input on which $O$ and $O[1]$ differ.

Next, we observe that the state of the system that results from using oracle $O[1]$ is actually equivalent to the state that results from first swapping the database registers of $H$, using $O[0]$, and then swapping back. Thus, it holds that

\begin{align*}
	&\E\left[\big\| \left(\Pi[S+\Delta+x] \otimes \ketbra{++}{++} \right)A^{O}\left(X^xZ^z\ket{S}\right)\ket{++}\big\|^2\right] \\
	&= \E\left[\big\| \left(\Pi[S+\Delta+x] \otimes \ketbra{++}{++} \right)\left(A^{O}\ket*{\widetilde{+}}\ket{++} + A^{O}\ket*{\widetilde{-}}\ket{++}\right)\big\|^2\right]\\
	&\approx_{\negl(\secp)} \E\left[\big\| \left(\Pi[S+\Delta+x] \otimes \ketbra{++}{++} \right)\left(A^{O[0]}\ket*{\widetilde{+}}\ket{++} + A^{O[1]}\ket*{\widetilde{-}}\ket{++} \right)\big\|^2\right]\\
	&=\E\left[\big\| \left(\Pi[S+\Delta+x] \otimes \ketbra{++}{++} \right)\left(A^{O[0]}\ket*{\widetilde{+}}\ket{++} + \mathsf{SWAP}A^{O[0]}\ket*{\widetilde{-}}\mathsf{SWAP}\ket{++} \right)\big\|^2\right]\\
	&=\E\left[\big\| \Pi[S+\Delta+x] \left(\ketbra{++}{++}A^{O[0]}\ket*{\widetilde{+}}\ket{++} + \ketbra{++}{++}\mathsf{SWAP}A^{O[0]}\ket*{\widetilde{-}}\mathsf{SWAP}\ket{++} \right)\big\|^2\right]\\
	&=\E\left[\big\| \Pi[S+\Delta+x] \left(\ketbra{++}{++}A^{O[0]}\ket*{\widetilde{+}}\ket{++} + \ketbra{++}{++}A^{O[0]}\ket*{\widetilde{-}}\ket{++} \right)\big\|^2\right]\\
	&=\E\left[\big\| \left(\Pi[S+\Delta+x] \otimes \ketbra{++}{++}\right)\left(A^{O[0]}\ket*{\widetilde{+}}\ket{++} + A^{O[0]}\ket*{\widetilde{-}}\ket{++} \right)\big\|^2\right]\\
	&=\E\left[\big\| \left(\Pi[S+\Delta+x] \otimes \ketbra{++}{++}\right)A^{O[0]}X^xZ^z\ket{S}\ket{++}\big\|^2\right]\\
	&=\negl(\secp),
\end{align*}

where the last step follows from security of the authentication scheme, since $O[0]$ can be implemented with just the verification oracle of the authentication scheme. Note that we crucially used the fact that we are projecting back onto $\ketbra{++}{++}$ in the step where we remove the left-most $\mathsf{SWAP}$ operation, which follows because $\mathsf{SWAP}\ket{++} = \ket{++}$. Indeed, as discussed above, the claim would not be true without the projection onto $\ketbra{++}{++}$, since the adversary \emph{can} obtain a vector in $\Pi[S+\Delta+x]$ while disturbing the database register.

To conclude, we note that this same logic can be extended to more general measurements on more general authenticated states, which ultimately will be used to extract a valid signature on $x$ from any adversary that is actively using the oracles $\sF_1,\dots,\sF_t,\sG$ to evaluate the computation on input $x$. This implies that the adversary cannot launch mixed input attacks, which is one of the main hurdles to overcome in proving the security of our quantum state obfuscator.

The ideas sketched here are used in \cref{subsec:mapping-hardness}, and in particular in the proof of \cref{lemma:mapping-hardness-3}.

\subsection{Notation}\label{subsec:notation}

In this section, we review some important notation and define new notation that will be used throughout the proof. 
\begin{itemize}
    \item $\ket{\psi}$ is the $n$-qubit input state.
    \item $\{L_i\}_{i \in [t+1]},\{\theta_i\}_{i \in [t+1]},\{f_i\}_{i \in [t]},g$ is the classical part of the description of an $\LM$ quantum program.
    \item Parameters: $m$ is the size of the classical input, and $\kappa$ is a derived security parameter that is sufficiently larger than $\secp,n$. We will often use the facts that $n \geq t,m$ and $\kappa = \omega(n)$.
    \item See \cref{def:LM-circuit} for the definition of sets $(V_1,\dots,V_{t+1})$ and $(W_1,\dots,W_t)$. The $i$'th measurement for $i \in [t]$ operates on $(V_1,\dots,V_i,W_i)$ and the $t+1$'st measurement operates on $(V_1,\dots,V_{t+1}) = [n]$. We will assume that the $\LM$ quantum program has \emph{standard-basis-collapsible $W$ wires} (\cref{def:LM-properties}) so in particular, $\theta_{i,j} = 0$ for all $j \in W_i$ (that is, $W_i$ are standard basis registers for the $i$'th measurement).
    \item $k = (S,\Delta,x,z)$ is a key for the authentication scheme.
    \item See \cref{subsec:auth-def} for the description of our partial $ZX$ measurement notation $M_{\theta,f}$ and $\widetilde{M}_{\theta,f,k,L}$. In particular, \[\left\{M_{\theta_i,f_i^{x,r_1,\dots,r_{i-1}}}\right\}_{i \in [t]},M_{\theta_{t+1},g^{x,r_1,\dots,r_t}}\] are the sequence of measurements applied by the $\LM$ quantum program, and \[\left\{\widetilde{M}_{\theta_i,f_i^{x,r_1,\dots,r_{i-1}},k,L_i\dots L_1}\right\}_{i \in [t]},\widetilde{M}_{\theta_{t+1},g_i^{x,r_1,\dots,r_t},k,L_{t+1}\dots L_1}\] are the corresponding sequence of measurements applied to qubits authenticated using the key $k$. 
    \item We will often refer to a partial set of measurement results $\{v^*_\iota,r^*_\iota\}_{\iota \in [\tau]}$ for some $\tau \in [t+1]$. Note that in the case $\tau = t+1$, the value $r^*_{t+1}$ will always be empty, since the final measurement of an $\LM$ quantum program only outputs $v_{t+1}$. We only include this empty value so that the notation is consistent across each measurement layer. 
    \item Fix any input $x^*$ and partial set of measurement results $\{v^*_\iota,r^*_\iota\}_{\iota \in [\tau]}$ for some $\tau \in [t+1]$. First, we explicitly define the projectors that constitute the $M$ measurements:
    \[M_{\theta_\tau,f_\tau^{x^*,r^*_1,\dots,r^*_{\tau-1}}} \coloneqq \left\{\Pi^{x^*,r^*_1,\dots,r^*_{\tau-1}}[v_\tau,r_\tau]\right\}_{v_\tau,r_\tau},\]
    
    \[ M_{\theta_{t+1},g^{x^*,r^*_1,\dots,r^*_t}} \coloneqq \left\{\Pi^{x^*,r^*_1,\dots,r^*_t}[v_{t+1}]\right\}_{v_{t+1}}.\]

    Next, we define a (sub-normalized) initial state for each set of partial measurement results $\{v^*_\iota,r^*_\iota\}_{\iota \in [\tau]}$: \[\ket*{\psi[x^*,\{v^*_\iota,r^*_\iota\}_{\iota \in [\tau]}]} \coloneqq L_1^\dagger\dots L_\tau^\dagger\Pi^{x^*,r^*_1,\dots,r^*_{\tau-1}}[v^*_\tau,r^*_\tau]L_\tau\dots\Pi^{x^*}[v^*_1,r^*_1]L_1\ket{\psi}.\] 
     
    \item During the proof, we will make use of the collapsible $W$ wires property, and consider measuring (some subset of) the $W$ wires in the standard basis at the beginning of the computation. For any string of measurement results $w^*_i \in \{0,1\}^{|W_i|}$, define corresponding sets 

    \[P[w^*_i] \coloneqq \left\{\widetilde{w}_i : \Dec_{k,\emptyset,\theta_i[W_i]}(\widetilde{w}_i) = w^*_i\right\}, ~~ P[\neg w^*_i] \coloneqq \left\{\widetilde{w}_i : \Dec_{k,\emptyset,\theta_i[W_i]}(\widetilde{w}_i) \notin \{w^*_i,\bot\}\right\},\] where $\emptyset$ indicates an empty sequence of CNOT gates, in other words, the identity. Also recall the notation $\theta[V]$ defined in \cref{subsec:auth-def} indicating a string that is equal to $\theta$ on the subset of indices $V$ and $\bot$ everywhere else. Next, define the projectors 

    \[\Pi[w^*] \coloneqq \Pi[P[w^*_i]], ~~ \Pi[\neg w^*] \coloneqq \Pi[P[\neg w^*_i]].\]

    Finally, for any set $S \subseteq [t]$ and $\{w^*_i\}_{i \in S}$ where each $w^*_i \in \{0,1\}^{|W_i|}$, define 

    \[\Pi[\{w^*_i\}_{i \in S}] \coloneqq \bigotimes_{i \in S}\Pi[w^*_i], ~~ \Pi[\neg \{w^*_i\}_{i \in S}] \coloneqq \sum_{i \in S} \Pi[\neg w_i^*].\]

\end{itemize}

\subsection{Main Theorem}\label{subsec:main-theorem}

\begin{theorem}
For any $\epsilon = \epsilon(\secp) = \negl(\secp) \cdot 2^{-2m(\secp)}$, the scheme described in \proref{fig:qobf} is a quantum state obfuscator that satisfies ideal obfuscation (\cref{def:obf}) for $\epsilon$-pseudo-deterministic families of quantum programs.
\end{theorem}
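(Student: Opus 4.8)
The plan is to construct the simulator $\Sim_\secp$ and argue indistinguishability from the real obfuscation via a sequence of hybrids, using the three main ideas outlined in \cref{subsec:proof-overview}. Recall that the obfuscation consists of the authenticated state $\ket*{\psi_k} = \Enc_k(\ket{\psi})$, the signing key $\ket{\sk}$, and the oracles $\sF_1,\dots,\sF_t,\sG$. The goal is to transform this distribution into one that depends on $\ket{\psi}$ only through oracle access to the induced map $Q$, at which point privacy of the authentication scheme (\cref{def:auth-privacy}) lets us replace $\ket{\psi}$ with $\ket{0^n}$, and the result can be produced by a simulator given only $Q$, $n$, $m$, $m'$. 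Concretely, the simulator samples a fresh authentication key $k$, a fresh signature token, a random oracle $H$ (instantiated by a PRF), sets $\ket*{\widetilde\psi} = \Enc_k(\ket{0^n})\ket{\sk}$, and builds oracles $\widehat\sF_1,\dots,\widehat\sF_t$ that always output the ``zero label'' (or $\bot$, determined only via the verification functionality $\Ver_{k,\cdot,\cdot}$) together with an oracle $\widehat\sG$ that, on a valid input, queries $Q(x)$ and returns it, also using only $\Ver$ and not $\Dec$.

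The hybrids proceed as follows. First, replace the PRF $H$ by a purified random oracle (as in the correctness proof and \cite{C:Zhandry19}); this is needed both for the collapsing arguments and for signature-token extraction. Second, perform the \emph{soundness} step: using the inductive argument of \cref{subsec:induction} together with signature-token extraction (\cref{subsec:mapping-hardness}, in particular the toy-problem logic behind \cref{lemma:mapping-hardness-3}), argue that no QPT adversary can ever submit to $\sG$ an input beginning with some $x^*$ that makes $\sG$ output a value outside $\{Q(x^*),\bot\}$; hence we may replace the decoding step inside $\sG$ with: check validity using $\Ver_{k,\cdot,\cdot}$ and the label bookkeeping, and if it passes, output $Q(x)$ via the oracle. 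This is the step where $\sG$ stops needing $\Dec$. Third, perform the \emph{collapsing} step on $\sF_1,\dots,\sF_t$: using the standard-basis-collapsible $W$-wires property (\cref{def:LM-properties}) and security of the authentication scheme (\cref{def:auth-sec}), decompose $\ket{\psi}$ over standard-basis measurement outcomes $w$ of $W_1,\dots,W_t$ and argue that since the adversary cannot map $\ket*{\widetilde\psi[w]}$ to $\ket*{\widetilde\psi[w']}$ given only $\Ver$, the oracles $\sF_i$ are indistinguishable from $\widehat\sF_i$ that always output the zero label; these $\widehat\sF_i$ need only $\Ver$, not $\Dec$. Fourth, now that all oracles depend on $\ket{\psi}$ only through $\Enc_k(\ket{\psi})$ and $\Ver_{k,\cdot,\cdot}$, invoke \cref{def:auth-privacy} to switch $\ket{\psi} \to \ket{0^n}$; the resulting distribution is exactly what $\Sim_\secp^{Q}$ produces, which completes the proof. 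Throughout, we invoke \cref{lemma:puncture} to switch between oracles differing on hard-to-find inputs, and the pseudo-determinism parameter $\epsilon = \negl(\secp)\cdot 2^{-2m}$ appears because the soundness/extraction arguments must union-bound over (or guess) the classical input $x \in \{0,1\}^m$, costing a factor $2^m$ (or $2^{2m}$) in the security loss, which must still be negligible.

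\textbf{Main obstacle.} The hard part will be the soundness step together with the signature-token extraction that supports it. The inductive argument requires that, conditioned on the adversary being initialized with a component $\ket*{\widetilde\psi[x^*,v_1^*,r_1^*]}$, we can hard-code $(v_1^*,r_1^*)$ into all downstream oracles without detection; this is fine if the adversary only ever queries on $x^*$, but the mixed-input attack (querying $\sF_i$ on some $x \neq x^*$) could in principle disturb the $V_1,W_1$ registers. Ruling this out requires the delicate purified-random-oracle argument: one must show that any adversary actively using $\sF_i(x,\cdot)$ to perform a nontrivial measurement necessarily disturbs the database register at inputs prefixed by $(x,\sigma_x)$, so that a Hadamard-basis measurement of the database extracts a valid signature $\sigma_x$ — and then unforgeability (\cref{def:unforgeability}) forbids simultaneously continuing the $x^*$ evaluation. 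Making this precise at the level of general $\LM$ programs (rather than the single-qubit toy example) and threading it through the induction, while keeping track of the $2^m$-style losses and the many applications of \cref{lemma:map-distinguish}, \cref{lemma:project}, and \cref{lemma:puncture}, is the technical core of the proof; I would expect to isolate it as a sequence of lemmas (the analogues of \cref{lemma:mapping-hardness-3}) proved independently of the main hybrid sequence.
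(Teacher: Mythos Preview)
Your proposal is correct and follows essentially the same approach as the paper: the hybrid sequence (PRF $\to$ random oracle, simulate $\sG$ via the inductive soundness argument backed by signature-token extraction, collapse the $\sF_i$ oracles using the standard-basis-collapsible $W$-wires property, then invoke privacy to swap $\ket\psi\to\ket{0^n}$) matches the paper's hybrids $\cH_0$--$\cH_7$, and you have correctly located the technical core in the mapping-hardness/extraction lemma (the analogue of \cref{lemma:mapping-hardness-3}). The only minor imprecision is that purification of the random oracle is needed specifically for the signature-token extraction step, not for the collapsing arguments themselves, and the $2^{-2m}$ in $\epsilon$ arises because gentle measurement costs $\sqrt{\epsilon}$ on top of the $2^{-m}$ union bound over inputs.
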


\begin{remark}
One might hope that the above theorem could be shown for any $\epsilon = \negl(\secp)$, and we leave this open. However, we remark that in the case where the input program has a completely classical description (e.g.\ the case handled by \cite{BKNY23}), one can first repeat the circuit $\poly(\secp)$ times to generically go from $\negl(\secp)$-pseudo-determinism to $\negl(\secp)\cdot 2^{-2m(\secp)}$-pseudo-determinism. Thus, this result captures a strictly more general class of programs than \cite{BKNY23}. Moreover, the application to best-possible copy-protection \cite{CG23} only requires obfuscating \emph{fully} deterministic computation. 
\end{remark}

\begin{proof}
Throughout this proof, we will often drop the dependence of functions and circuit families on the parameter $\secp$ in order to reduce notational clutter. Let $n, m, m'$ be any polynomials (in $\secp$), and suppose that $\ket{\psi},\{L_i\}_{i \in [t+1]},\{\theta_i\}_{i \in [t+1]},\{f_i\}_{i \in [t]},g$ is an $\LM$ quantum program such that $\ket{\psi}$ has at most $n$ qubits, there are at most $n$ gates in the circuit, and the classical input has $m$ bits. Suppose that this $\LM$ quantum program is $\epsilon$-pseudo-deterministic for a small enough $\epsilon$ as specified by the theorem statement, and let $Q$ be the induced map of this $\LM$ quantum program. Consider any QPT\footnote{The only reason that we restrict our adversary to be quantum polynomial-\emph{time} as opposed to quantum polynomial-\emph{query} is the very first step in our proof, where we replace the PRF with a random oracle. If we allow the obfuscation scheme to use a true random oracle (thus sacrificing the efficiency of the oracles), then we obtain security against any QPQ adversary.} adversary $A$ and distinguisher $D$, and define $D[A]$ to be the procedure that runs $A$, feeds its output to $D$, and then runs $D$ to produce a binary-valued outcome. Thus, we can write the "real" obfuscation experiment as 

\[\Pr\left[1 \gets D\left(A\left(\QObf\left(1^\secp,\ket{\psi}\right)\right)\right)\right] = \E_{\left(\ket*{\widetilde{\psi}},O\right)\gets \QObf(1^\secp,\ket{\psi})}\left[\big\|D[A]^{O}\ket*{\widetilde{\psi}}\big\|^2\right].\]

Now, we will consider a sequence of hybrid distributions over (state, oracle) pairs $(\ket*{\widetilde{\psi}},O)$, beginning with the real distribution $\QObf_0 \coloneqq \QObf$, and ending with a fully simulated distribution $\QObf_6$ that no longer needs to take $\ket{\psi}$ as input (and instead uses oracle access to $Q$). Our first step will be to switch the oracle $\sG$ to a simulated oracle $\GSim$ that \emph{verifies} rather than \emph{decodes} the intermediate labels and final authenticated measurement, and uses oracle access to $Q$ to respond in the case that verification passes. Next, we'll "collapse" the oracles $\sF_1,\dots,\sF_t$ as described in \cref{subsec:proof-overview}, using a strategy derived from the collapsible $W$ wires property of the $\LM$ quantum program. Finally, we'll replace the input $\ket{\psi}$ with the all zeros input $\ket{0^n}$.

The description of these distributions follow (but no claims about indistinguishability yet). The difference between adjacent distributions are highlighted in red, and whenever we write $w^*$, we parse it at $w^* = \{w^*_i\}_{i \in [t]}$, where each $w^*_i \in \{0,1\}^{|W_i|}$.\\

\begin{tcolorbox}[breakable, enhanced]\footnotesize
    $\QObf_1(1^\secp,\ket{\psi})$:
    \begin{itemize}
        \item Sample $k \gets \Gen(1^\kappa,n)$, and compute $\ket*{\psi_k} = \Enc_k(\ket{\psi})$.
        \item Sample a signature token $(\vk,\ket{\sk}) \gets \TokGen(1^\kappa)$.
        \item \textcolor{red}{Let $H : \{0,1\}^* \to \{0,1\}^\kappa$ be a random oracle.}
        \item Define $\sF_1,\dots,\sF_t$ as in $\QObf_0$.
        \item Define $\sG$ as in $\QObf_0$.
        \item Output $\ket*{\widetilde{\psi}} = \ket*{\psi_k}\ket{\sk}, O = \left(\sF_1,\dots,\sF_t,\sG\right).$
    \end{itemize}

    \vspace{0.2cm}\hrule\vspace{0.3cm}

    $\QObf_2(1^\secp,\ket{\psi})$:
    \begin{itemize}
        \item Sample $k \gets \Gen(1^\kappa,n)$, and compute $\ket*{\psi_k} = \Enc_k(\ket{\psi})$.
        \item Sample a signature token $(\vk,\ket{\sk}) \gets \TokGen(1^\kappa)$.
        \item Let $H : \{0,1\}^* \to \{0,1\}^\kappa$ be a random oracle.
        \item Define $\sF_1,\dots,\sF_t$ as in $\QObf_0$.
        \item Define the function $\textcolor{red}{\GSim}\left(x,\sigma_x,\widetilde{v}_1,\dots,\widetilde{v}_{t+1},\ell_1,\dots,\ell_t\right):$
        \begin{itemize}
            \item Output $\bot$ if $\TokVer(\vk,x,\sigma_x) = \bot$.
            \item For each $\iota \in [t]$, let \[\ell_{\iota,0} = H(x,\sigma_x,\widetilde{v}_1,\dots,\widetilde{v}_\iota,\ell_1,\dots,\ell_{\iota-1},0), ~~ \ell_{\iota,1} = H(x,\sigma_x,\widetilde{v}_1,\dots,\widetilde{v}_\iota,\ell_1,\dots,\ell_{\iota-1},1),\] and output $\bot$ if $\ell_{\iota,0} = \ell_{\iota,1}$ or $\ell_\iota \notin \{\ell_{\iota,0},\ell_{\iota,1}\}$. 
            \item \textcolor{red}{Output $\bot$ if $\Ver_{k,L_{t+1} \dots L_1,\theta_{t+1}}(\widetilde{v}_1,\dots,\widetilde{v}_{t+1}) = \bot$.} 
            \item Output \textcolor{red}{$Q(x)$}.
        \end{itemize}
        \item Output $\ket*{\widetilde{\psi}} = \ket*{\psi_k}\ket{\sk}, O = \left(\sF_1,\dots,\sF_t,\textcolor{red}{\GSim}\right).$
    \end{itemize}

    \vspace{0.2cm}\hrule\vspace{0.3cm}

    $\QObf_3(1^\secp,\ket{\psi})$:
    \begin{itemize}
        \item Sample $k \gets \Gen(1^\kappa,n)$, and compute $\ket*{\psi_k} = \Enc_k(\ket{\psi})$.
        \item Sample a signature token $(\vk,\ket{\sk}) \gets \TokGen(1^\kappa)$.
        \item Let $H : \{0,1\}^* \to \{0,1\}^\kappa$ be a random oracle.
        \item For each $i \in [t]$, define the function $\sF_i\textcolor{red}{[w^*]}\left(x,\sigma_x,\widetilde{v}_1,\dots,\widetilde{v}_i,\widetilde{w}_i,\ell_1,\dots,\ell_{i-1}\right)$:
        \begin{itemize}
            \item Output $\bot$ if $\TokVer(\vk,x,\sigma_x) = \bot$.
            \item For each $\iota \in [i-1]$, let \[\ell_{\iota,0} = H(x,\sigma_x,\widetilde{v}_1,\dots,\widetilde{v}_{\iota},\ell_1,\dots,\ell_{\iota-1},0),  ~~ \ell_{\iota,1} = H(x,\sigma_x,\widetilde{v}_1,\dots,\widetilde{v}_{\iota},\ell_1,\dots,\ell_{\iota-1},1),\]  and output $\bot$ if $\ell_{\iota,0} = \ell_{\iota,1}$ or $\ell_\iota \notin \{\ell_{\iota,0},\ell_{\iota,1}\}$. Otherwise, let $r_\iota$ be such that $\ell_\iota = \ell_{\iota,r_\iota}$.
            \item Compute $(v_1,\dots,v_i,\cdot) = \Dec_{k,L_i \dots L_1,\theta_i}(\widetilde{v}_1,\dots,\widetilde{v}_i,\widetilde{w}_i)$ and output $\bot$ if the result is $\bot$.
            \item Compute $(\cdot,r_i) = f_i^{x,r_1,\dots,r_{i-1}}(v_1,\dots,v_i,\textcolor{red}{w^*_i})$.
            \item Set $\ell_i \coloneqq H(x,\sigma_x,\widetilde{v}_1,\dots,\widetilde{v}_i,\ell_1,\dots,\ell_{i-1},r_i),$ and output $(\widetilde{v}_i,\ell_i)$.
       
        \end{itemize}
        \item Define $\GSim$ as in $\QObf_2$.
        \item Output $\ket*{\widetilde{\psi}} = \ket*{\psi_k}\ket{\sk}, O = \left(\sF_1\textcolor{red}{[\cdot]},\dots,\sF_t\textcolor{red}{[\cdot]},\GSim\right).$
    \end{itemize}

    \vspace{0.2cm}\hrule\vspace{0.3cm}

    $\QObf_4(1^\secp,\ket{\psi})$:
    \begin{itemize}
        \item Sample $k \gets \Gen(1^\kappa,n)$, and compute $\ket*{\psi_k} = \Enc_k(\ket{\psi})$.
        \item Sample a signature token $(\vk,\ket{\sk}) \gets \TokGen(1^\kappa)$.
        \item Let $H : \{0,1\}^* \to \{0,1\}^\kappa$ be a random oracle.
        \item For each $i \in [t]$, define the function $\sF_i[w^*]\left(x,\sigma_x,\widetilde{v}_1,\dots,\widetilde{v}_i,\widetilde{w}_i,\ell_1,\dots,\ell_{i-1}\right)$:
        \begin{itemize}
            \item Output $\bot$ if $\TokVer(\vk,x,\sigma_x) = \bot$.
            \item For each $\iota \in [i-1]$, let \[\ell_{\iota,0} = H(x,\sigma_x,\widetilde{v}_1,\dots,\widetilde{v}_{\iota},\ell_1,\dots,\ell_{\iota-1},0),  ~~ \ell_{\iota,1} = H(x,\sigma_x,\widetilde{v}_1,\dots,\widetilde{v}_{\iota},\ell_1,\dots,\ell_{\iota-1},1),\]  and output $\bot$ if $\ell_{\iota,0} = \ell_{\iota,1}$ or $\ell_\iota \notin \{\ell_{\iota,0},\ell_{\iota,1}\}$. \st{Otherwise, let $r_\iota$ be such that $\ell_\iota = \ell_{\iota,r_\iota}$.}
            \item Compute $(v_1,\dots,v_i,\cdot) = \Dec_{k,L_i \dots L_1,\theta_i}(\widetilde{v}_1,\dots,\widetilde{v}_i,\widetilde{w}_i)$ and output $\bot$ if the result is $\bot$.
            \item \textcolor{red}{For $\iota \in [i]$, compute $(\cdot,r_\iota) = f_\iota^{x,r_1,\dots,r_{\iota-1}}(v_1,\dots,v_\iota,w^*_\iota)$.}
            \item Set $\ell_i \coloneqq H(x,\sigma_x,\widetilde{v}_1,\dots,\widetilde{v}_i,\ell_1,\dots,\ell_{i-1},r_i),$ and output $(\widetilde{v}_i,\ell_i)$.
       
        \end{itemize}
        \item Define $\GSim$ as in $\QObf_2$.
        \item Output $\ket*{\widetilde{\psi}} = \ket*{\psi_k}\ket{\sk}, O = \left(\sF_1[\cdot],\dots,\sF_t[\cdot],\GSim\right).$
    \end{itemize}

    \vspace{0.2cm}\hrule\vspace{0.3cm}
    
    $\QObf_5(1^\secp,\ket{\psi})$:
    \begin{itemize}
        \item Sample $k \gets \Gen(1^\kappa,n)$, and compute $\ket*{\psi_k} = \Enc_k(\ket{\psi})$.
        \item Sample a signature token $(\vk,\ket{\sk}) \gets \TokGen(1^\kappa)$.
        \item Let $H : \{0,1\}^* \to \{0,1\}^\kappa$ be a random oracle.
        \item For each $i \in [t]$, define the function $\textcolor{red}{\FSim_i}\left(x,\sigma_x,\widetilde{v}_1,\dots,\widetilde{v}_i,\widetilde{w}_i,\ell_1,\dots,\ell_{i-1}\right)$:
        \begin{itemize}
            \item Output $\bot$ if $\TokVer(\vk,x,\sigma_x) = \bot$.
            \item For each $\iota \in [i-1]$, let \[\ell_{\iota,0} = H(x,\sigma_x,\widetilde{v}_1,\dots,\widetilde{v}_\iota,\ell_1,\dots,\ell_{\iota-1},0), ~~ \ell_{\iota,1} = H(x,\sigma_x,\widetilde{v}_1,\dots,\widetilde{v}_\iota,\ell_1,\dots,\ell_{\iota-1},1),\] and output $\bot$ if $\ell_{\iota,0} = \ell_{\iota,1}$ or $\ell_\iota \notin \{\ell_{\iota,0},\ell_{\iota,1}\}$. 
            \item \textcolor{red}{Output $\bot$ if $\Ver_{k,L_i \dots L_1,\theta_i}(\widetilde{v}_1,\dots,\widetilde{v}_i,\widetilde{w}_i) = \bot$.}
            \item Set $\ell_i \coloneqq H(x,\sigma_x,\widetilde{v}_1,\dots,\widetilde{v}_i,\ell_1,\dots,\ell_{i-1},\textcolor{red}{0}),$ and output $(\widetilde{v}_i,\ell_i)$.
        \end{itemize}
        \item Define $\GSim$ as in $\QObf_2$.
        \item Output $\ket*{\widetilde{\psi}} = \ket*{\psi_k}\ket{\sk}, O = \left(\textcolor{red}{\FSim_1},\dots,\textcolor{red}{\FSim_t},\GSim\right).$
    \end{itemize}

    \vspace{0.2cm}\hrule\vspace{0.3cm}
    
    $\QObf_6(1^\secp)$:
    \begin{itemize}
        \item Sample $k \gets \Gen(1^\kappa,n)$, and compute $\ket*{\psi_k} = \Enc_k(\textcolor{red}{\ket{0^n}})$.
        \item Sample a signature token $(\vk,\ket{\sk}) \gets \TokGen(1^\kappa)$.
        \item Let $H : \{0,1\}^* \to \{0,1\}^\kappa$ be a random oracle.
        \item Define $\FSim_1,\dots,\FSim_t$ as in $\QObf_4$.
        \item Define $\GSim$ as in $\QObf_2$.
        \item Output $\ket*{\widetilde{\psi}} = \ket*{\psi_k}\ket{\sk}, O = \left(\FSim_1,\dots,\FSim_t,\GSim\right).$
    \end{itemize}
\end{tcolorbox}

Later, we will use these distributions to define a sequence of hybrids starting with the real obfuscation experiment and ending with the simulated obfuscation experiment. But first, we will establish several claims about these distributions that will be useful while arguing indistinguishability of the hybrids. 

This first claim establishes that, once the oracles are simulated, no adversary can map a state whose $W$ wires have been collapsed to outcome $w^*$ onto the support of a state with different outcomes $w \neq w^*$. At the end of this sequence of claims, we will have established that this property holds \emph{even in $\QObf_2$}, where the $\sF_i$ oracles are not yet simulated.

\begin{claim}\label{claim:prelim-wmapping}
    For any QPQ unitary $U$ and any $w^* = \{w^*_i\}_{i \in [t]}$, it holds that 
    \[\E\left[\big\| \Pi[\neg w^*]U^O\Pi[w^*]\ket*{\widetilde{\psi}} \big\|^2 : \ket*{\widetilde{\psi}},O \gets \QObf_5(1^\secp,\ket{\psi})\right]= 2^{-\Omega(\kappa).}\] 
\end{claim}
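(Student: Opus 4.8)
\textbf{Proof proposal for \cref{claim:prelim-wmapping}.}

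The plan is to reduce directly to the mapping security of the authentication scheme (\cref{def:auth-mapping-sec}), after first checking that in $\QObf_5$ the oracle $O = (\FSim_1,\dots,\FSim_t,\GSim)$ can be implemented using only the verification oracle $\Ver_{k,\cdot,\cdot}(\cdot)$ together with $\vk,\ket{\sk}$ and oracle access to $Q$ (none of which depend on the encoded state $\ket*{\widetilde\psi}$ beyond the $\ket{\sk}$ register, which is in tensor product with $\ket*{\psi_k}$). Indeed, inspecting the descriptions: $\FSim_i$ only calls $\TokVer(\vk,\cdot,\cdot)$, the random oracle $H$, and $\Ver_{k,L_i\dots L_1,\theta_i}(\cdot)$; and $\GSim$ only calls $\TokVer$, $H$, $\Ver_{k,L_{t+1}\dots L_1,\theta_{t+1}}(\cdot)$, and $Q$. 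So a reduction that holds $\vk,\ket{\sk}$, lazily simulates $H$, and is given oracle access to $Q$ and to $\Ver_{k,\cdot,\cdot}(\cdot)$ can perfectly simulate the view of $U^O$ acting on $\ket*{\psi_k}\ket{\sk}$.

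Next I would set up the reduction to \cref{def:auth-mapping-sec}. Recall $\Pi[w^*] = \bigotimes_{i\in[t]}\Pi[P[w^*_i]]$ projects onto encoded states whose $W_1,\dots,W_t$ registers decode (in the standard basis, via $\Dec_{k,\emptyset,\theta_i[W_i]}$) to the fixed strings $w^*_i$, and $\Pi[\neg w^*] = \sum_{i\in[t]}\Pi[P[\neg w^*_i]]$ is supported on encoded states for which \emph{some} $W_i$ register decodes to a value outside $\{w^*_i,\bot\}$. The key point is that the honest measurement $M_{\theta_i[W_i],\mathrm{id}}\circ L_i'$ (where $L_i'$ is the appropriate identity, i.e.\ $L = \emptyset$) applied to the \emph{pre-image} state $\Pi[w^*_i]^{\mathrm{logical}}\ket{\psi}$ — the projection of $\ket\psi$ onto logical standard-basis outcome $w^*_i$ on wires $W_i$ — yields outcome $w^*_i$ with probability $1$, hence outcome in $\{w^*_i\}$, so $B = \{\text{strings decoding to something}\notin\{w^*_i,\bot\}\}$ is a ``forbidden'' set in the sense of \cref{def:auth-mapping-sec}. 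Therefore, writing $\ket\psi = \sum_{w}\Pi^{\mathrm{logical}}[w]\ket\psi$ over logical standard-basis outcomes $w=(w_1,\dots,w_t)$ of the $W$ wires, we have $\Pi[w^*]\ket*{\psi_k} = \Enc_k(\Pi^{\mathrm{logical}}[w^*]\ket\psi)$, and the adversary $A := U^O$ (simulated from $\Ver$ as above) is precisely of the form to which mapping security applies: it receives an authenticated state and oracle access to $\Ver$, and we bound the probability that the measured vectors land in the forbidden set $B$. For each fixed $i\in[t]$, mapping security gives $\E\|\Pi[P[\neg w^*_i]]U^O\Pi[w^*]\ket*{\widetilde\psi}\|^2 = 2^{-\Omega(\kappa)}$ (note the authentication scheme is run with parameter $\kappa$ in $\QObf$, so the bound is $2^{-\Omega(\kappa)}$ rather than $2^{-\Omega(\secp)}$). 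Summing over $i\in[t]$ and using $t=\poly(\secp)=2^{o(\kappa)}$ absorbs the union bound into $2^{-\Omega(\kappa)}$.

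One technical wrinkle to address carefully: \cref{def:auth-mapping-sec} is stated for a measurement $\widetilde M_{\theta,f,k,L}$ applied after the adversary, with a single $\theta,f,L$, whereas here $\Pi[P[\neg w^*_i]]$ corresponds to the ZX-measurement in bases $\theta_i[W_i]$ (all standard basis on $W_i$) with $L = \emptyset$ and $f$ the identity, and with forbidden set $B = \{w : w\notin\{w^*_i,\bot\}\}$ — but one must confirm that $\Pi[P[\neg w^*_i]]$ as defined in \cref{subsec:notation} indeed equals the sum of the $\widetilde M$-projectors onto such $w$, which is immediate from the definitions of $P[\neg w^*_i]$ and $\Dec$. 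I would also note that the collapsible-$W$-wires property (\cref{def:LM-properties}) is not actually needed for \emph{this} claim — it only enters later when arguing that such collapsing is \emph{undetectable} — so the proof here is a clean application of mapping security. I expect the main (minor) obstacle to be bookkeeping: making sure the oracle-simulation-from-$\Ver$ step is spelled out precisely enough that $A = U^O$ genuinely has no dependence on $\ket*{\psi_k}$ except through the verification oracle, so that mapping security (which quantifies over all oracle-aided $A$) applies verbatim.
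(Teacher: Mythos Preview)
Your proposal is correct and follows essentially the same approach as the paper's proof: observe that in $\QObf_5$ the oracles $\FSim_1,\dots,\FSim_t,\GSim$ need only $\Ver_{k,\cdot,\cdot}(\cdot)$ (not $\Dec$), then invoke mapping security (\cref{def:auth-mapping-sec}) of the authentication scheme. The paper's proof is a two-sentence sketch of exactly this argument; your version simply spells out the bookkeeping (oracle simulation, the identity $\Pi[w^*]\Enc_k = \Enc_k\Pi^{\mathrm{logical}}[w^*]$ via correctness, and the union bound over $i\in[t]$) that the paper leaves implicit.
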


\begin{proof}
    The key point is that in $\QObf_5$, none of the oracles $\FSim_1,\dots,\FSim_t,\GSim$ require access to the the decryption oracle $\Dec_{k,\cdot,\cdot}(\cdot)$ for the authentication scheme. Rather, they can be implemented just given access to the verification oracle $\Ver_{k,\cdot,\cdot}(\cdot)$. Now, since the $W$ wires are authenticated, the hardness of mapping from the support of $w^*$ to $w$ for any $w \neq w^*$ follows directly from the security of the authentication scheme (\cref{thm:auth-sec}), and in particular that it satisfies \cref{def:auth-mapping-sec} (mapping security).
\end{proof}

In the proof of the next two claims, we will use the following notation. Fix a key $k$ and $w^* = \{w^*_i\}_{i \in [t]}$, and define the following function.\\

\noindent $R[k,w^*] : (x,\widetilde{v}_1,\dots,\widetilde{v}_i) \to r_i$
\begin{itemize}
    \item Compute $(v_1,\dots,v_i) = \Dec_{k,L_i\dots L_1,\theta_i[V_1,\dots,V_i]}(\widetilde{v}_1,\dots,\widetilde{v}_i)$ and output $\bot$ if the result is $\bot$.
    \item For $\iota \in [i],$ compute $(\cdot,r_\iota) = f_\iota^{x,r_1,\dots,r_{\iota-1}}(v_1,\dots,v_\iota,w^*_\iota)$.
    \item Output $r_i$.
\end{itemize}

This function determines the bit $r_i$ when the $w^*$ outcomes have been hard-coded into the oracles, and we will use it when showing indistinguishability between $\QObf_3,\QObf_4$, and $\QObf_5$ in the case where the $W$ wires of the input state have been collapsed to outcome $w^*$.

\begin{claim}\label{claim:prelim4-5}
    For any (unbounded) distinguisher $D$ and any $w^* = \{w^*_i\}_{i \in [t]}$, it holds that 
    \begin{align*}&\E\left[\big\|D^{\sF_1[w^*],\dots,\GSim}\left(k,\Pi[w^*]\ket*{\widetilde{\psi}}\right)\big\|^2 : \ket*{\widetilde{\psi}},(\sF_1[\cdot],\dots,\sF_t[\cdot],\GSim) \gets \QObf_4(1^\secp,\ket{\psi})\right]\\ &= \E\left[\big\|D^{\FSim_1,\dots,\GSim}\left(k,\Pi[w^*]\ket*{\widetilde{\psi}}\right)\big\|^2 : \ket*{\psi},(\FSim_1,\dots,\FSim_t,\GSim) \gets \QObf_5(1^\secp,\ket{\psi})\right],\end{align*}
    where $D$'s input includes the key $k$ sampled by $\QObf_4,\QObf_5$.
\end{claim}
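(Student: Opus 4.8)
\textbf{Proof proposal for Claim~\ref{claim:prelim4-5}.}
The plan is to exhibit a measure-preserving permutation $\pi$ of the space of random oracles $H\colon\{0,1\}^*\to\{0,1\}^\kappa$ under which the $\QObf_4$ experiment and the $\QObf_5$ experiment coincide \emph{pointwise}; averaging over $H$ then yields the stated equality of expectations. The only difference between the two families of oracles is in the $\sF$'s: (i) $\sF_i[w^*]$ calls $\Dec_{k,L_i\dots L_1,\theta_i}$ while $\FSim_i$ calls $\Ver_{k,L_i\dots L_1,\theta_i}$ — but these abort on exactly the same inputs by definition of $\Ver$, and the decoded values are used by $\sF_i[w^*]$ only to recompute the bit $r_i$; and (ii) $\sF_i[w^*]$ outputs the label $H(x,\sigma_x,\widetilde v_1,\dots,\widetilde v_i,\ell_1,\dots,\ell_{i-1},r_i)$ whereas $\FSim_i$ outputs $H(x,\sigma_x,\widetilde v_1,\dots,\widetilde v_i,\ell_1,\dots,\ell_{i-1},0)$.

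The first step is to observe that the bit $r_i$ computed inside $\sF_i[w^*]$ equals, whenever the decoding step does not abort, exactly $R[k,w^*](x,\widetilde v_1,\dots,\widetilde v_i)$: it is a deterministic function of $x$, the ``$\widetilde v$-prefix'', and the fixed data $k,w^*$, and is in particular \emph{independent of $H$}, of $\sigma_x$, and of the labels $\ell_1,\dots,\ell_{i-1}$. Here one uses crucially that $\QObf_4$ was set up to recompute all of $r_1,\dots,r_i$ from scratch via the $f_\iota$'s (the line reading $r_\iota$ off the fed-in label is struck out), so they never depend on the adversary's labels. Then define $\pi$ by: for every layer index $i$, every prefix $p=(x,\sigma_x,\widetilde v_1,\dots,\widetilde v_i,\ell_1,\dots,\ell_{i-1})$ of the appropriate format, and every $b\in\{0,1\}$, set $\pi(H)(p,b) \coloneqq H(p,\,b\oplus\rho_i(p))$ where $\rho_i(p) \coloneqq R[k,w^*](x,\widetilde v_1,\dots,\widetilde v_i)$ (and $\rho_i(p)\coloneqq 0$ if that decode fails), while $\pi(H)(s) \coloneqq H(s)$ on all strings $s$ not of this form. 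Since $\rho_i(p)$ does not depend on $H$, the map $\pi$ is a well-defined involution that, for each prefix $p$, either swaps or fixes the pair $\{H(p,0),H(p,1)\}$; hence $\pi$ preserves the uniform measure on oracles.

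The second step is to check that $(\sF_1[w^*]^H,\dots,\sF_t[w^*]^H,\GSim^H)=(\FSim_1^{\pi(H)},\dots,\FSim_t^{\pi(H)},\GSim^{\pi(H)})$ \emph{as tuples of classical functions}, for every $H$: the token check and the per-layer consistency check appear identically in all oracles, and the consistency check only inspects the \emph{unordered pair} $\{H(p_\iota,0),H(p_\iota,1)\}$, which $\pi$ leaves invariant; similarly $\GSim$ touches $H$ only through such pairs (and then outputs $Q(x)$), so $\GSim^H=\GSim^{\pi(H)}$. For $\sF_i$: when the verify/decode step does not abort, the $r_i$ computed by $\sF_i[w^*]$ equals $\rho_i(p_i)$, so its output label $H(p_i,r_i)$ equals $\pi(H)(p_i,0)$, which is exactly the label $\FSim_i^{\pi(H)}$ outputs; when it aborts, $\FSim_i$ aborts on the same input (as $\Ver=\bot\iff\Dec=\bot$) and both return $\bot$. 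Finally, since the input state $\Pi[w^*]\ket*{\widetilde\psi}$ and the key $k$ given to $D$ are identical in both distributions (the token randomness can be coupled trivially), $D$ runs the literally identical quantum process in the $\QObf_4$ experiment with oracle $H$ and in the $\QObf_5$ experiment with oracle $\pi(H)$, so $\|D^{\cdots}(k,\Pi[w^*]\ket*{\widetilde\psi})\|^2$ agrees pointwise; averaging over $H$ on the left and substituting $H\mapsto\pi(H)$ on the right gives the claim. I expect no real obstacle here — the claim is essentially a bookkeeping argument — with the only point needing care being the verification that $\rho_i(p)$ is genuinely $H$-independent and that every abort condition in the oracles is phrased to be invariant under swapping $H(p,0)\leftrightarrow H(p,1)$.
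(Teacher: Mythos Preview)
Your proposal is correct and follows essentially the same approach as the paper: both identify that the bit $r_i$ computed in $\QObf_4$ equals $R[k,w^*](x,\widetilde v_1,\dots,\widetilde v_i)$, a function independent of $H$, and conclude that the two experiments are identically distributed over the randomness of $H$. The paper leaves the coupling implicit (``identically distributed over the randomness of the random oracle $H$''), whereas you spell it out via the explicit measure-preserving involution $\pi$; this is the same argument made more concrete.
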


\begin{proof}

    In $\QObf_4$, we have that 

    \[\sF_i[w^*](x,\sigma_x,\widetilde{v}_1,\dots,\widetilde{v}_i,\widetilde{w}_i,\ell_1,\dots,\ell_{i-1}) \in \{H(x,\sigma_x,\widetilde{v}_1,\dots,\widetilde{v}_i,\ell_1,\dots,\ell_{i-1},R[k,w^*](\widetilde{v}_1,\dots,\widetilde{v}_i)),\bot\},\] while in $\QObf_5$, we have that 

    \[\FSim_i[w^*](x,\sigma_x,\widetilde{v}_1,\dots,\widetilde{v}_i,\widetilde{w}_i,\ell_1,\dots,\ell_{i-1}) \in \{H(x,\sigma_x,\widetilde{v}_1,\dots,\widetilde{v}_i,\ell_1,\dots,\ell_{i-1},0),\bot\}.\]

    Both implementations of the oracles will always output $\bot$ on the same set of inputs, since this is true of $\Dec_{k,\cdot,\cdot}(\cdot)$ and $\Ver_{k,\cdot,\cdot}(\cdot)$ by definition. Finally, their non-$\bot$ answers are identically distributed over the  randomness of the random oracle $H$, since each $(x,\sigma_x,\widetilde{v}_1,\dots,\widetilde{v}_i,\widetilde{w}_i,\ell_1,\dots,\ell_{i-1})$ fixes a single choice of bit $R[k,w^*](\widetilde{v}_1,\dots,\widetilde{v}_i) \in \{0,1\}$, and for any $(x,\sigma_x,\widetilde{v}_1,\dots,\widetilde{v}_i,\widetilde{w}_i,\allowbreak\ell_1,\dots,\allowbreak\ell_{i-1})$, \[H(x,\sigma_x,\widetilde{v}_1,\dots,\widetilde{v}_i,\widetilde{w}_i,\ell_1,\dots,\ell_{i-1},0) \ \ \ \text{and} \ \ \  H(x,\sigma_x,\widetilde{v}_1,\dots,\widetilde{v}_i,\widetilde{w}_i,\ell_1,\dots,\ell_{i-1},1)\] are identically distributed (each is a uniformly random string).
\end{proof}

\begin{claim}\label{claim:prelim3-4}
    For any QPQ distinguisher $D$ and any $w^* = \{w^*_i\}_{i \in [t]}$, it holds that 
    \begin{align*}&\bigg|\E\left[\big\|D^{\sF_1[w^*],\dots,\GSim}\left(k,\Pi[w^*]\ket*{\widetilde{\psi}}\right)\big\|^2 : \ket*{\widetilde{\psi}},(\sF_1[\cdot],\dots,\sF_t[\cdot],\GSim) \gets \QObf_3(1^\secp,\ket{\psi})\right]\\ &- \E\left[\big\|D^{\sF_1[w^*],\dots,\GSim}\left(k,\Pi[w^*]\ket*{\widetilde{\psi}}\right)\big\|^2 : \ket*{\widetilde{\psi}},(\sF_1[\cdot],\dots,\sF_t[\cdot],\GSim) \gets \QObf_4(1^\secp,\ket{\psi})\right]\bigg| = 2^{-\Omega(\kappa)},\end{align*}
    where $D$'s input includes the key $k$ sampled by $\QObf_3,\QObf_4$.
\end{claim}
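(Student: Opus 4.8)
\textbf{Proof proposal for \cref{claim:prelim3-4}.}

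The plan is to reduce the indistinguishability of $\QObf_3$ and $\QObf_4$ (on inputs whose $W$ wires have been collapsed to $w^*$) to the mapping hardness established in \cref{claim:prelim-wmapping}, together with a standard oracle-hybrid argument (\cref{lemma:puncture}). First I would identify precisely where the two distributions differ. In $\QObf_3$, the oracle $\sF_i[w^*]$ recovers the bits $r_1,\dots,r_{i-1}$ by inverting the labels $\ell_1,\dots,\ell_{i-1}$ through the random oracle $H$, and then computes $(\cdot,r_i) = f_i^{x,r_1,\dots,r_{i-1}}(v_1,\dots,v_i,w^*_i)$. In $\QObf_4$, the oracle instead recomputes \emph{all} of $r_1,\dots,r_i$ directly from the decoded values $v_1,\dots,v_i$ and the hard-coded $w^*$, via the function $R[k,w^*]$ introduced just before the claim, ignoring the labels $\ell_\iota$ for the purpose of determining the $r_\iota$ (they are still checked for well-formedness). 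The key observation is that these two ways of computing $r_i$ agree \emph{as long as} the input labels $\ell_1,\dots,\ell_{i-1}$ are the ``honest'' labels for a consistent computation path, i.e.\ $\ell_\iota = H(x,\sigma_x,\widetilde{v}_1,\dots,\widetilde{v}_\iota,\ell_1,\dots,\ell_{\iota-1},R[k,w^*](x,\widetilde{v}_1,\dots,\widetilde{v}_\iota))$. So the two oracles differ only on inputs where the label-decoded bit $r_\iota$ disagrees with $R[k,w^*](x,\widetilde{v}_1,\dots,\widetilde{v}_\iota)$ for some $\iota < i$; call such an input a ``differing input.''

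Next I would argue that, starting from $\Pi[w^*]\ket*{\widetilde\psi}$, the adversary $D$ cannot find a differing input except with probability $2^{-\Omega(\kappa)}$. The idea is an induction on the measurement layer. Since $H$ is a random oracle and the labels are $\kappa$-bit strings, for any prefix $(x,\sigma_x,\widetilde v_1,\dots,\widetilde v_\iota,\ell_1,\dots,\ell_{\iota-1})$ the value $\ell_{\iota,1-r_\iota}$ (the label for the ``wrong'' bit, where $r_\iota = R[k,w^*](\cdot)$) is a uniformly random string that the oracles never query on any honestly-derived input; so to produce a label $\ell_\iota$ equal to $\ell_{\iota,1-r_\iota}$ the adversary would essentially have to guess a fresh random string, which happens with probability $2^{-\Omega(\kappa)}$ per query. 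More carefully: I would set up a sequence of sub-hybrids over $\iota = 1,\dots,i-1$, where in the $\iota$-th sub-hybrid the oracles behave as in $\QObf_4$ on the first $\iota$ label-slots and as in $\QObf_3$ on the rest. To switch between consecutive sub-hybrids, I would invoke \cref{lemma:puncture} with the ``bad set'' $S_k$ being inputs whose $\iota$-th label $\ell_\iota$ decodes (under $H$) to $1-R[k,w^*](x,\widetilde v_1,\dots,\widetilde v_\iota)$; condition (2) of \cref{lemma:puncture} — that such inputs are hard to find — follows from the randomness of $H$ on the relevant fresh points, using that the decoding oracle $\Dec$ (and hence $R[k,w^*]$) is well-defined and that the $\widetilde v$'s reaching the oracle must pass verification, which by \cref{claim:prelim-wmapping}-type reasoning confines them to the correct coset structure. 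Once all differing inputs are puncturable, the two oracle families are perfectly indistinguishable on the punctured domain, and summing the $\poly(\secp)$-many switches yields the $2^{-\Omega(\kappa)}$ bound.

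The main obstacle I anticipate is making rigorous the claim that the adversary ``never sees'' the wrong-bit label $\ell_{\iota,1-r_\iota}$ before having to produce it — this is a classic adaptive random-oracle subtlety, since the adversary \emph{does} get to see $\ell_{\iota,0}$ and $\ell_{\iota,1}$ for many prefixes through the legitimate operation of $\sF_\iota$ and $\GSim$ (both oracles explicitly compute both $\ell_{\iota,0}$ and $\ell_{\iota,1}$ to check well-formedness). So it is not literally true that the wrong-bit label is unqueried; rather, the point is that the oracles never \emph{reveal which} of $\ell_{\iota,0},\ell_{\iota,1}$ is the ``right'' one unless the adversary presents a state with the correct authenticated $\widetilde v$'s and $\widetilde w$'s. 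I would handle this by first reducing (as in \cref{claim:prelim4-5} and the ``Collapsing the $\sF$ oracles'' discussion) to the setting where the $W$ wires are fixed to $w^*$, so that for each prefix there is a single deterministic ``right'' bit, and then arguing via \cref{claim:prelim-wmapping} that the adversary cannot produce authenticated vectors inconsistent with a single honest evaluation path. I would also need to be careful that $D$ here is QPQ (computationally unbounded but polynomially many queries), so all bounds must be query-based and information-theoretic, which is exactly the regime \cref{lemma:puncture} is stated for. Finally, I would double-check the edge behavior that both oracles abort on exactly the same set of inputs (inherited from $\Dec$ vs.\ the same $\Dec$, plus the identical label well-formedness checks), so that the only discrepancy is in the \emph{value} of a non-$\bot$ output, which is what the puncturing argument controls.
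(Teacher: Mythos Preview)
Your high-level approach (identify the set of differing inputs and apply \cref{lemma:puncture}) matches the paper, and your characterization of the differing set is correct. However, the hardness argument you sketch is substantially more complicated than necessary, and the obstacle you flag rests on a misconception.

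You write that the adversary ``does get to see $\ell_{\iota,0}$ and $\ell_{\iota,1}$ for many prefixes'' because the oracles compute both for the well-formedness check. This is not right: the oracles compute both values \emph{internally}, but the adversary only observes the oracle's output, which is either $\bot$ or the single label $\ell_i = H(\dots,r_i)$. The well-formedness check leaks at most one bit per query (abort vs.\ not) about whether the submitted $\ell_\iota$ equals one of the two hash values, but it never reveals the string $\ell_{\iota,1-r_\iota}$ itself. So the ``wrong'' label is a fresh $\kappa$-bit random-oracle value that the adversary has never seen output.

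With this in mind, the paper's argument is one step: in $\QObf_4$ the output bit is always $r_i = R[k,w^*](x,\widetilde v_1,\dots,\widetilde v_i)$ by construction (the labels $\ell_\iota$ are checked for well-formedness but not used to compute $r_i$), hence the oracles in $\QObf_4$ \emph{never} output a label of the form $H(\dots,1-R[k,w^*](\dots))$. Since the adversary has no direct access to $H$, any such label is unguessable with probability $2^{-\kappa}$ per attempt, and a single application of \cref{lemma:puncture} (with $\QObf_4$ playing the role of $O_k^0$) finishes. No induction over $\iota$, no sub-hybrids, and no appeal to the authentication scheme or \cref{claim:prelim-wmapping}-type reasoning is needed here; the whole argument is purely about the unpredictability of unqueried random-oracle points.
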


\begin{proof}
    Observe that the oracles $\sF_i[w^*]$ in these experiments are identical except for on inputs \[(x,\sigma_x,\widetilde{v}_1,\dots,\widetilde{v}_i,\widetilde{w}_i,\ell_1,\dots,\ell_{i-1})\] such that there exists an $\iota \in [i-1]$ with \[\ell_{\iota} = H(x,\sigma_x,\widetilde{v}_1,\dots,\widetilde{v}_\iota,\ell_1,\dots,\ell_{\iota-1},1-R[k,w^*](\widetilde{v}_1,\dots,\widetilde{v}_\iota)).\] However, the oracles $\sF_i[w^*]$ in $\QObf_4$ are defined to never output such an $\ell_\iota$, and thus such an input can only be guessed with probability $2^{-\kappa}$ over the randomness of $H$. The claim follows by applying \cref{lemma:puncture} (a standard oracle hybrid argument).
\end{proof}

Next, we combine what we have shown so far - the hardness of mapping between $\Pi[w^*]$ and $\Pi[\neg w^*]$ in $\QObf_5$ and the indistinguishability of $\QObf_3$ and $\QObf_5$ - to show the indistinguishability of $\QObf_2$ and $\QObf_3$ (in the case where the $W$ wires are collapsed to some $w^*$).

\begin{claim}\label{claim:prelim2-3}
     For any QPQ distinguisher $D$ and any $w^* = \{w^*_i\}_{i \in [t]}$, it holds that 
    \begin{align*}&\bigg|\E\left[\big\|D^{\sF_1,\dots,\GSim}\left(k,\Pi[w^*]\ket*{\widetilde{\psi}}\right)\big\|^2 : \ket*{\widetilde{\psi}},(\sF_1,\dots,\sF_t,\GSim) \gets \QObf_2(1^\secp,\ket{\psi})\right]\\ &- \E\left[\big\|D^{\sF_1[w^*],\dots,\GSim}\left(k,\Pi[w^*]\ket*{\widetilde{\psi}}\right)\big\|^2 : \ket*{\widetilde{\psi}},(\sF_1[\cdot],\dots,\sF_t[\cdot],\GSim) \gets \QObf_3(1^\secp,\ket{\psi})\right]\bigg| = 2^{-\Omega(\kappa)},\end{align*}
    where $D$'s input includes the key $k$ sampled by $\QObf_2,\QObf_3$.
\end{claim}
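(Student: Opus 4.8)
The plan is to derive \cref{claim:prelim2-3} from the oracle‑puncturing lemma (\cref{lemma:puncture}) fed with a $\QObf_3$‑version of the authentication mapping hardness. First I would pin down where $\QObf_2$ and $\QObf_3$ differ. Inspecting the two boxes, the token check, the label‑consistency processing, and — crucially — the decoding step (which rejects iff $\Dec_{k,L_i\dots L_1,\theta_i}(\cdots)=\bot$) are identical; the only change is that $\sF_i$ evaluates $f_i$ on the decoded string $w_i$ while $\sF_i[w^*]$ evaluates it on the hard‑coded $w^*_i$. So the two oracle tuples agree on every input on which decoding fails and on every input on which $\widetilde w_i$ decodes to $w^*_i$; they can disagree only on inputs to some $\sF_i$ ($i\in[t]$) whose $\widetilde w_i$‑block decodes to a valid string other than $w^*_i$, i.e.\ lies in $P[\neg w^*_i]$. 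I would take $S_k\coloneqq\bigcup_{i\in[t]}\{\text{inputs to }\sF_i\text{ with }\widetilde w_i\in P[\neg w^*_i]\}$ as the puncturing set; it contains all genuine differing inputs, so the off‑$S_k$ agreement hypothesis of \cref{lemma:puncture} holds with $O^0$ the $\QObf_3$ tuple and $O^1$ the $\QObf_2$ tuple.

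Next I would transport \cref{claim:prelim-wmapping} to the $\QObf_3$ oracles: for every QPQ unitary $U$,
\[\E\Big[\big\|\Pi[\neg w^*]\,U^{O}\,\Pi[w^*]\ket*{\widetilde\psi}\big\|^2:\ket*{\widetilde\psi},O\gets\QObf_3(1^\secp,\ket{\psi})\Big]=2^{-\Omega(\kappa)}.\]
Indeed, consider the distinguisher that runs $U$ on $\Pi[w^*]\ket*{\widetilde\psi}$ with the oracles and then, using the key $k$ it is handed, measures $\{\Pi[\neg w^*],\cI-\Pi[\neg w^*]\}$ on the canonical register $\regC$ and outputs the first outcome: its acceptance probability equals the above expectation, is unchanged up to $2^{-\Omega(\kappa)}$ across $\QObf_3,\QObf_4,\QObf_5$ by \cref{claim:prelim3-4,claim:prelim4-5}, and is $2^{-\Omega(\kappa)}$ in $\QObf_5$ by \cref{claim:prelim-wmapping}.

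Then I would verify condition 2 of \cref{lemma:puncture}, namely $\E\|\Pi[S_k]U^{O^0}\Pi[w^*]\ket*{\widetilde\psi}\|^2=2^{-\Omega(\kappa)}$ for every QPQ $U$, where $O^0$ is the $\QObf_3$ tuple. For fixed $i\in[t]$, form the QPQ unitary $U_i'$ that runs $U$ and then applies a $\mathsf{SWAP}$ between the $\widetilde W_i$‑block of $\regC$ and the $\widetilde w_i$‑block of $U$'s oracle‑query register; this adds no oracle queries. Whenever the query register's $\widetilde w_i$‑block lies in $P[\neg w^*_i]$, after the swap $\regC$ lies in the image of $\Pi[\neg w^*_i]$, and since the cross‑terms in $\|\sum_j\Pi[\neg w^*_j]\ket{\phi}\|^2$ are nonnegative we have $\|\Pi[\neg w^*_i]\ket{\phi}\|\le\|\Pi[\neg w^*]\ket{\phi}\|$; hence the $\sF_i$‑part of $\Pi[S_k]$ is dominated by the displayed $\QObf_3$ mapping bound applied to $U_i'$, which is $2^{-\Omega(\kappa)}$. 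Summing over the $t\le n\le\kappa$ values of $i$ keeps the total $2^{-\Omega(\kappa)}$. Taking $\epsilon=2^{-\Omega(\kappa)}$ in \cref{lemma:puncture} (with ``$U$'' the unitary part of the distinguisher $D$ from the claim, carrying $k$ inside the state in the routine way, ``$D$'' its final output measurement, and $q=\poly(\secp)$) yields $|\cdots|\le 4q\sqrt{\epsilon}=2^{-\Omega(\kappa)}$, as desired.

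The hard part is the register mismatch exploited in the third paragraph: mapping security of the authentication scheme, and hence \cref{claim:prelim-wmapping}, speaks about a fixed authenticated register, whereas a cheating evaluator may feed an arbitrary register into the $\widetilde w_i$ slot of $\sF_i$. The $\mathsf{SWAP}$ relabeling resolves this, but only because (i) the mapping‑hardness statement has already been moved onto the \emph{actual} $\QObf_3$ oracles — which is exactly why the detour through \cref{claim:prelim4-5,claim:prelim3-4} must precede this step — and (ii) appending a $\mathsf{SWAP}$ costs no oracle queries, so the $\QObf_3$ mapping bound still applies to the modified unitary. A minor point to keep honest is that the coarse choice of $S_k$ still satisfies the off‑$S_k$ agreement hypothesis of \cref{lemma:puncture}, which it does since the true set of differing inputs is contained in it.
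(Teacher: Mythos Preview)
Your proposal is correct and follows essentially the same approach as the paper: identify the differing inputs as those with $\widetilde{w}_i \in P[\neg w^*_i]$, chain \cref{claim:prelim-wmapping,claim:prelim4-5,claim:prelim3-4} to bound the probability of producing such inputs under the $\QObf_3$ oracles, and conclude via \cref{lemma:puncture}. Your $\mathsf{SWAP}$ argument makes explicit a register-alignment step (moving a bad $\widetilde{w}_i$ from the query register onto the $W_i$ block of $\regC$ so that the transported mapping bound applies) that the paper's proof leaves implicit.
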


\begin{proof}
    Observe that the oracles $\sF_i, \sF_i[w^*]$ in these experiments are identical except for on inputs \[(x,\sigma_x,\widetilde{v}_1,\dots,\widetilde{v}_i,\widetilde{w}_i,\ell_1,\dots,\ell_{i-1})\]such that $\widetilde{w}_i \in P[\neg w^*_i]$. By combining the previous three claims, we see that no QPQ adversary can find such a $\widetilde{w}_i$ in $\QObf_3$ except with probability $2^{-\Omega(\kappa)}$. The claim follows by applying \cref{lemma:puncture} (a standard oracle hybrid argument).
\end{proof}

Next, we state a direct corollary of these four claims, which is the hardness of mapping between $\Pi[w^*]$ and $\Pi[\neg w^*]$ even in $\QObf_2$.

\begin{corollary}\label{cor:wmapping}
    For any QPQ unitary $U$, and any $w^* = \{w^*_i\}_{i \in [t]}$, it holds that
    \[\E\left[\big\| \Pi[\neg w^*_i]U^O\Pi[w^*]\ket*{\widetilde{\psi}}\big\|^2 : \ket*{\widetilde{\psi}},O \gets \QObf_2(1^\secp,\ket{\psi})\right]= 2^{-\Omega(\kappa).}\]
\end{corollary}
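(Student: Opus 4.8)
The plan is to obtain \cref{cor:wmapping} by chaining the four preceding claims. We want to bound $\E\left[\big\|\Pi[\neg w^*]U^O\Pi[w^*]\ket*{\widetilde\psi}\big\|^2\right]$ in $\QObf_2$, and we already know from \cref{claim:prelim-wmapping} that the analogous quantity is $2^{-\Omega(\kappa)}$ in $\QObf_5$. So the first step is to connect the $\QObf_2$ experiment to the $\QObf_5$ experiment. The natural device is to view the quantity $\big\|\Pi[\neg w^*]U^O\Pi[w^*]\ket*{\widetilde\psi}\big\|^2$ itself as the acceptance probability of a distinguisher $D$ acting on the collapsed state $\Pi[w^*]\ket*{\widetilde\psi}$ (normalized appropriately, or just keeping it sub-normalized — the claims are stated for sub-normalized inputs $\Pi[w^*]\ket*{\widetilde\psi}$). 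Concretely, $D$ runs $U^O$ and then measures the projector $\Pi[\neg w^*]$; note $D$ is QPQ since $U$ is and $\Pi[\neg w^*]$ is an efficiently-checkable membership projector given $k$ (which $D$ receives as input, consistent with the hypotheses of \cref{claim:prelim2-3,claim:prelim3-4,claim:prelim4-5}). This $D$ has exactly the form required by the three indistinguishability claims.

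The second step is the actual chaining. By \cref{claim:prelim2-3}, this acceptance probability in $\QObf_2$ differs from that in $\QObf_3$ (with oracles $\sF_1[w^*],\dots$) by at most $2^{-\Omega(\kappa)}$. By \cref{claim:prelim3-4}, the $\QObf_3$ value differs from the $\QObf_4$ value by at most $2^{-\Omega(\kappa)}$. By \cref{claim:prelim4-5}, the $\QObf_4$ value equals the $\QObf_5$ value exactly. Finally, the $\QObf_5$ value is $2^{-\Omega(\kappa)}$ by \cref{claim:prelim-wmapping}. Summing the (constantly many) error terms — each of the form $2^{-\Omega(\kappa)}$ — gives $2^{-\Omega(\kappa)}$ overall, which is the desired bound. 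The one bookkeeping point to get right is that \cref{claim:prelim-wmapping,claim:prelim2-3,claim:prelim3-4,claim:prelim4-5} are all stated with the input state $\Pi[w^*]\ket*{\widetilde\psi}$ and, where relevant, give $D$ the key $k$ — precisely the setup of the distinguisher we build — so the applications go through without reweighting.

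Honestly, there isn't much of an obstacle here: \cref{cor:wmapping} is explicitly billed as "a direct corollary of these four claims," so the work has all been done in \cref{claim:prelim-wmapping} through \cref{claim:prelim2-3}. The only thing to be careful about is making the reduction-to-distinguisher move cleanly: we must check that $\big\|\Pi[\neg w^*]U^O\Pi[w^*]\ket*{\widetilde\psi}\big\|^2$ really is of the form $\big\|D^{O}(k,\Pi[w^*]\ket*{\widetilde\psi})\big\|^2$ for a QPQ $D$ that is allowed the key $k$, so that all four claims apply verbatim to the same $D$. Given that, the proof is a two-line triangle-inequality argument over four hybrids. I would write:

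\begin{proof}
    Fix any $w^* = \{w^*_i\}_{i \in [t]}$ and QPQ unitary $U$, and let $D$ be the distinguisher that, on input $(k,\ket{\phi})$ and with oracle access to $(\sF_1',\dots,\sF_t',\sG')$ (where $\sF_i'$ is whichever variant of the $i$'th oracle is present in the experiment), applies $U$ to $\ket{\phi}$ using these oracles, and then applies the binary-outcome measurement $\{\Pi[\neg w^*], \cI - \Pi[\neg w^*]\}$, outputting $0$ on the first outcome. Since $U$ is QPQ and membership in the set $P[\neg w^*_i]$ is efficiently checkable given $k$, $D$ is QPQ. Then for each $j \in \{2,3,4,5\}$, \[\E\left[\big\|\Pi[\neg w^*]U^O\Pi[w^*]\ket*{\widetilde{\psi}}\big\|^2 : \ket*{\widetilde{\psi}},O \gets \QObf_j(1^\secp,\ket{\psi})\right] = \E\left[\big\|D^{O}\left(k,\Pi[w^*]\ket*{\widetilde{\psi}}\right)\big\|^2 : \ket*{\widetilde{\psi}},O \gets \QObf_j(1^\secp,\ket{\psi})\right],\] where in the case $j \in \{3,4\}$ the oracles $\sF_i$ are instantiated with the hard-coded variant $\sF_i[w^*]$. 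Now by \cref{claim:prelim-wmapping}, the $j=5$ quantity is $2^{-\Omega(\kappa)}$. By \cref{claim:prelim4-5}, the $j=4$ quantity equals the $j=5$ quantity. By \cref{claim:prelim3-4}, the $j=3$ quantity differs from the $j=4$ quantity by at most $2^{-\Omega(\kappa)}$. By \cref{claim:prelim2-3}, the $j=2$ quantity differs from the $j=3$ quantity by at most $2^{-\Omega(\kappa)}$. Summing these bounds, the $j=2$ quantity is $2^{-\Omega(\kappa)}$, as desired.
\end{proof}
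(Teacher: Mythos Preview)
Your proposal is correct and is exactly the argument the paper has in mind: the paper does not give a separate proof but simply labels the statement ``a direct corollary of these four claims,'' and your chaining of \cref{claim:prelim-wmapping}, \cref{claim:prelim4-5}, \cref{claim:prelim3-4}, and \cref{claim:prelim2-3} via the obvious distinguisher $D$ (which uses $k$ to implement $\Pi[\neg w^*]$) is precisely that derivation.
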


Finally, we consider a sequence of hybrids beginning with the real obfuscation experiment as described at the beginning of the proof, and ending with the simulated experiment using a simulator that we define below.

\begin{itemize}
        \item The real experiment:
        \[\cH_0 = \E\left[\big\|D[A]^{\sF_1,\dots,\sG}\ket*{\widetilde{\psi}}\big\|^2 : \ket*{\widetilde{\psi}},(\sF_1,\dots,\sF_t,\sG) \gets \QObf_0(1^\secp,\ket{\psi})\right]\] 
        \item Replace PRF with random oracle:
        \[\cH_1 = \E\left[\big\|D[A]^{\sF_1,\dots,\sG}\ket*{\widetilde{\psi}}\big\|^2 : \ket*{\widetilde{\psi}},(\sF_1,\dots,\sF_t,\sG) \gets \QObf_1(1^\secp,\ket{\psi})\right]\] 
        \item Simulate the $\sG$ oracle:
        \[\cH_2 = \E\left[\big\|D[A]^{\sF_1,\dots,\GSim}\ket*{\widetilde{\psi}}\big\|^2 : \ket*{\widetilde{\psi}},(\sF_1,\dots,\sF_t,\GSim) \gets \QObf_2(1^\secp,\ket{\psi})\right]\] 
        \item Split $\ket*{\widetilde{\psi}}$ into orthogonal components:
        \[\cH_3=\E\left[\sum_{w^*}\big\|D[A]^{\sF_1,\dots,\GSim} \Pi[w^*]\ket*{\widetilde{\psi}}\big\|^2 : \ket*{\widetilde{\psi}},(\sF_1,\dots,\sF_t,\GSim) \gets \QObf_2(1^\secp,\ket{\psi})\right].\]
        \item Hard-code the $w^*$ measurement results:
        \[\cH_4=\E\left[\sum_{w^*}\big\|D[A]^{\sF_1[w^*],\dots,\GSim}\Pi[w^*]\ket*{\widetilde{\psi}}\big\|^2 : \ket*{\widetilde{\psi}},(\sF_1[\cdot],\dots,\sF_t[\cdot],\GSim) \gets \QObf_3(1^\secp,\ket{\psi})\right].\]

        \item Simulate the $\sF_1,\dots,\sF_t$ oracles:

        \[\cH_5 = \E\left[\sum_{w^*}\big\|D[A]^{\FSim_1,\dots,\GSim}\Pi[w^*]\ket*{\widetilde{\psi}}\big\|^2 : \ket*{\widetilde{\psi}},(\FSim_1,\dots,\FSim_t,\GSim) \gets \QObf_5(1^\secp,\ket{\psi})\right].\] 
        
        \item Put $\ket*{\widetilde{\psi}}$ back together: \[\cH_6=\E\left[\big\|D[A]^{\FSim_1,\dots,\GSim}\ket*{\widetilde{\psi}}\big\|^2 : \ket*{\widetilde{\psi}},(\FSim_1,\dots,\FSim_t,\GSim) \gets \QObf_5(1^\secp,\ket{\psi})\right].\]
        \item Simulate the state: \[\cH_7=\E\left[\big\|D[A]^{\FSim_1,\dots,\GSim}\ket*{\widetilde{\psi}}\big\|^2 : \ket*{\widetilde{\psi}},(\FSim_1,\dots,\FSim_t,\GSim) \gets \QObf_6(1^\secp)\right].\] 
    \end{itemize}

\noindent Now, we are ready to define the simulator $\Sim^Q(1^\secp,n,m,m')$ for our obfuscation scheme:
\begin{itemize}
    \item Sample $\ket*{\widetilde{\psi}},(\FSim_1,\dots,\FSim_t,\GSim) \gets \QObf_6(1^\secp)$.
    \item Output $A^{\FSim_1,\dots,\FSim_t,\GSim}\ket*{\widetilde{\psi}}$.
\end{itemize}

\noindent Thus, the simulated experiment is exactly

\begin{align*}
&\Pr\left[1 \gets D\left(\Sim^Q\left(1^\secp,n,m,m'\right)\right)\right] = \cH_7.
\end{align*}

\noindent The following set of claims then completes the proof.

\begin{claim}\label{claim:0-1}
    $|\cH_0-\cH_1| = \negl(\secp)$.
\end{claim}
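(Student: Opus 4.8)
\textbf{Proof proposal for \cref{claim:0-1}.}

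The plan is to reduce the gap $|\cH_0 - \cH_1|$ directly to the (superposition-query) security of the pseudorandom function $F_{k'}$. Recall that $\cH_0$ is the real obfuscation experiment, in which all of the oracles $\sF_1,\dots,\sF_t,\sG$ internally evaluate the hash $H(\cdot) = F_{k'}(\cdot)$, and $\cH_1$ is identical except that $H$ is replaced by a uniformly random function $\{0,1\}^* \to \{0,1\}^\kappa$. Since everything else in the two experiments (the authentication key $k$, the signature token $(\vk,\ket{\sk})$, the input state $\ket{\psi}$, the classical $\LM$-program data, and the adversary/distinguisher pair $D[A]$) is sampled and run identically, the only difference is the source of answers to $H$-queries.

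First I would describe the PRF distinguisher $\cB$. On input $1^\secp$ and given (superposition) oracle access to a function $\cO$ that is either $F_{k'}(\cdot)$ for a random key $k'$ or a uniformly random function, $\cB$ samples $k \gets \Gen(1^\kappa,n)$, $(\vk,\ket{\sk}) \gets \TokGen(1^\kappa)$, forms $\ket*{\psi_k} = \Enc_k(\ket{\psi})$, and constructs the oracles $\sF_1,\dots,\sF_t,\sG$ exactly as in $\QObf$ but routing every internal call to $H$ through its own oracle $\cO$. Each query that $A$ (or $\QEval$-style internal computation, though here it is just $D[A]$) makes to one of the $\sF_i$ or $\sG$ oracles triggers a bounded number — at most $O(t) = \poly(\secp)$ — of queries to $\cO$, and a $\poly(\secp)$-time adversary makes $\poly(\secp)$ such oracle calls, so $\cB$ is a QPT algorithm making $\poly(\secp)$ (superposition) queries to $\cO$. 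Then $\cB$ runs $D[A]$ on $\ket*{\psi_k}\ket{\sk}$ with these oracles and outputs whatever $D[A]$ outputs. When $\cO = F_{k'}$, the view of $D[A]$ is exactly that of $\cH_0$; when $\cO$ is a random function, it is exactly that of $\cH_1$. Hence $|\cH_0 - \cH_1| = |\Pr[\cB^{F_{k'}} = 1] - \Pr[\cB^{\text{RF}} = 1]|$, which is $\negl(\secp)$ by the assumed quantum-query security of the PRF \cite{6375347}.

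Two small points deserve a remark rather than real work. One must note that the oracles $\sF_i,\sG$, and the full interaction, only ever invoke $H$ as a classical-input deterministic map embedded in a coherently-implemented unitary, so quantum-accessible PRF security (as opposed to merely classical-query security) is the right notion to invoke — this matches the choice of building block listed before \proref{fig:qobf}. Also, although $\cB$ is technically computationally unbounded in its non-query operations nowhere, it is in fact fully QPT, which is why this is the single step of the proof that forces the adversary to be polynomial-time rather than merely polynomial-query (as flagged in the footnote in the theorem statement); if one instead instantiates the scheme with a true random oracle, $\cH_0 = \cH_1$ identically and the claim is trivial. I do not anticipate any genuine obstacle here — this is a textbook oracle-indistinguishability reduction — the only mild care needed is to confirm that the number of $\cO$-queries made by $\cB$ remains polynomial, which follows since $t = \poly(\secp)$ and $A$ is QPT.
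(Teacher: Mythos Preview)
Your proposal is correct and takes essentially the same approach as the paper, which simply states that the claim ``follows directly from the security of the PRF against quantum superposition-query attacks.'' You have merely spelled out the standard reduction that the paper leaves implicit, including the observation (matching the paper's footnote) that this is the unique step requiring the adversary to be QPT rather than QPQ.
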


\begin{proof}
    This follows directly from the security of the PRF against quantum superposition-query attacks.
\end{proof}

\begin{claim}\label{claim:1-2}
    $|\cH_1-\cH_2| = \negl(\secp)$.
\end{claim}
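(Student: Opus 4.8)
The plan is to reduce the claim to a single ``soundness'' statement about the oracles $\sF_1,\dots,\sF_t$ and then finish with the oracle hybrid lemma \cref{lemma:puncture}. First I would locate exactly where $\QObf_1$ and $\QObf_2$ differ: both produce the same obfuscated state $\ket*{\widetilde{\psi}} = \ket*{\psi_k}\ket{\sk}$, the same key material, and the same oracles $\sF_1,\dots,\sF_t$, differing only in the last oracle, $\sG$ versus $\GSim$. Moreover $\sG$ and $\GSim$ run the identical token check $\TokVer(\vk,x,\sigma_x)$ and identical label-consistency checks on $\ell_1,\dots,\ell_t$, and they abort on $(\widetilde{v}_1,\dots,\widetilde{v}_{t+1})$ on exactly the same set, since $\Ver_{k,L_{t+1}\dots L_1,\theta_{t+1}}$ returns $\bot$ precisely when $\Dec_{k,L_{t+1}\dots L_1,\theta_{t+1}}$ does (by definition of $\Ver$). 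Hence $\sG$ and $\GSim$ agree except on the set $S_k$ of tuples $(x,\sigma_x,\widetilde{v}_1,\dots,\widetilde{v}_{t+1},\ell_1,\dots,\ell_t)$ that pass all of $\sG$'s token, label, and verification checks and yet satisfy $g^{x,r_1,\dots,r_t}(v_1,\dots,v_{t+1}) \neq Q(x)$, where $(v_1,\dots,v_{t+1}) = \Dec_{k,L_{t+1}\dots L_1,\theta_{t+1}}(\widetilde{v}_1,\dots,\widetilde{v}_{t+1})$ and the $r_\iota$ are the bits recovered from the labels.

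Given this, I would invoke \cref{lemma:puncture} with $O^0$ the oracle tuple $(\sF_1,\dots,\sF_t,\GSim)$ from $\QObf_2$, $O^1$ the tuple $(\sF_1,\dots,\sF_t,\sG)$ from $\QObf_1$, differing set $S_k$, and $U$ the oracle-aided unitary part of $D[A]$ making $q = \poly(\secp)$ queries. Condition~1 of the lemma is the previous paragraph, so it remains to verify Condition~2, the soundness statement
\[ \E\left[\big\| \Pi[S_k]\, U^{O}\ket*{\widetilde{\psi}}\big\|^2 : \ket*{\widetilde{\psi}},O \gets \QObf_2(1^\secp,\ket{\psi})\right] = \negl(\secp) \]
for every polynomial-query $U$. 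Once this holds, \cref{lemma:puncture} yields $|\cH_1 - \cH_2| \le 4q\sqrt{\negl(\secp)} = \negl(\secp)$. I chose $\GSim$ (not $\sG$) in $O^0$ deliberately: the final oracle then leaks only $Q(x)$, which an adversary could already obtain by running the honest evaluator (\cref{thm:LM-compiler}), so the $\sF$ oracles remain the only source of decoding power against which soundness must be argued.

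The substantive content is the soundness bound above, which I would prove by induction on the number $t$ of measurement layers, after fixing an input $x^*$ and union-bounding over the $2^{m}$ choices of $x^*$; this union bound, together with the losses in the inductive argument and the $\epsilon$-pseudo-determinism needed so that $Q(x^*)$ is simultaneously well-defined for all $x^*$, is the reason the theorem assumes $\epsilon = \negl(\secp)\cdot 2^{-2m(\secp)}$. The base case $t=0$ is essentially mapping security of the authentication scheme (\cref{def:auth-mapping-sec}, which follows from \cref{thm:auth-sec}): any $(\widetilde{v}_1,\dots,\widetilde{v}_{t+1})$ that passes $\Ver$ decodes, up to negligible weight, to the honest logical values, on which $g$ evaluates to $Q(x^*)$. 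For the inductive step I would split $\ket{\psi}$ into orthogonal components $\Pi[x^*,v_1,r_1]\ket{\psi}$ according to the outcome of the first logical measurement, argue that for each component one may hard-code $(v_1^*,r_1^*)$ into $\sF_1,\dots,\sF_t$ without the adversary noticing, and then apply the induction hypothesis to the residual $(t-1)$-layer garbled program built from $\Enc_k(\Pi[x^*,v_1^*,r_1^*]\ket{\psi})$.

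The hard-coding step is where the work concentrates, and it is the main obstacle. It reduces to security of the authentication scheme (\cref{thm:auth-sec}), which first requires answering $\sF_1,\dots,\sF_t$ using only the verification oracle $\Ver_{k,\cdot,\cdot}(\cdot)$ rather than $\Dec_{k,\cdot,\cdot}(\cdot)$ — the ``collapsing'' argument made possible by the standard-basis-collapsible $W$ wires property and already prepared in \cref{cor:wmapping} and the surrounding preliminary claims — and it also requires ruling out \emph{mixed-input attacks}, namely the possibility that a call to $\sF_i(x,\cdot)$ with $x \neq x^*$ disturbs the hard-coded state and changes the first measurement outcome. Blocking mixed-input attacks is handled by extracting a signature token on $x$ from any adversary actively using $\sF_i(x,\cdot)$, via the purified-random-oracle technique previewed in \cref{subsec:proof-overview}; the formal inductive argument and this extraction are carried out in \cref{subsec:induction} and \cref{subsec:mapping-hardness}, and I would invoke them to close the proof.
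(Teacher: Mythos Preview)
Your proposal is correct and follows essentially the same approach as the paper: identify the differing set between $\sG$ and $\GSim$, invoke \cref{lemma:puncture}, union-bound over $x^*$, post-select on the correct output via $\epsilon$-pseudo-determinism, and defer the soundness bound to the inductive argument and the signature-token extraction in \cref{subsec:induction} and \cref{subsec:mapping-hardness}. The only cosmetic difference is that you frame the induction as being on the number $t$ of measurement layers, whereas the paper's \cref{lemma:induction-1} fixes $t$ and inducts on a parameter $\tau$ counting how many measurement results have already been hard-coded (base case $\tau = t+1$, where the bad set is empty); these are the same ``peel off one layer'' strategy packaged differently.
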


\begin{proof}
    Suppose that we sample $\ket*{\widetilde{\psi}},(\sF_1,\dots,\sF_t,\sG) \gets \QObf_1(1^\secp,\ket{\psi})$, and then define the set
 
    \[B = \left\{(x,\sigma_x,\widetilde{v}_1,\dots,\widetilde{v}_{t+1},\ell_1,\dots,\ell_t) : \sG(x,\sigma_x,\widetilde{v}_1,\dots,\widetilde{v}_{t+1},\ell_1,\dots,\ell_t) \notin \{Q(x),\bot\}\right\}.\] Observe that the only difference between $\QObf_1$ and $\QObf_2$ is the definition of the oracles $\sG,\GSim$, and that these oracles are identical outside of the set $B$. Suppose for contradiction that the claim is false. Then by \cref{lemma:puncture} (which is a standard oracle hybrid argument), there must exist an adversary that can find an input on which $\sG$ and $\GSim$ differ with non-negligible probability. That is, there exists a QPQ unitary $U$ such that

    \[\E\left[\big\| \Pi[B]U^{\sF_1,\dots,\GSim}\ket*{\widetilde{\psi}}\big\|^2 : \ket*{\widetilde{\psi}},(\sF_1,\dots,\sF_t,\GSim) \gets \QObf_2(1^\secp,\ket{\psi})\right] = \delta(\secp)\] for some $\delta(\secp) = \nonnegl(\secp)$. Now, for each input $x$,
    define \[B[x] \coloneqq \{(x,\cdot) : (x,\cdot) \in B\}.\] Then by a union bound, there must exist \emph{some} $x^* \in \{0,1\}^m$ such that \[\E\left[\big\| \Pi\left[B[x^*]\right]U^{\sF_1,\dots,\sF_t,\GSim}\ket*{\widetilde{\psi}}\big\|^2 : \ket*{\widetilde{\psi}},(\sF_1,\dots,\sF_t,\GSim) \gets \QObf_2(1^\secp,\ket{\psi})\right] \geq \delta(\secp) \cdot 2^{-m}.\]

    Next, define $\mathsf{Coh}\text{-}\LMEval[x^*]$ to be the unitary that coherently applies the evaluation procedure $\LMEval(x^*,\cdot,\{L_i\}_{i \in [t+1]},\{\theta_i\}_{i \in [t+1]},\{f_i\}_{i \in [t]},g)$ for the $\LM$ quantum program that we are obfuscating, and define

    \[\ket{\psi_{x^*}'} \coloneqq \mathsf{Coh}\text{-}\LMEval[x^*]^\dagger \ketbra{Q(x^*)}{Q(x^*)} \mathsf{Coh}\text{-}\LMEval[x^*]\ket{\psi}, ~~ \ket*{\psi_{x^*}} \coloneqq \frac{\ket*{\psi_{x^*}'}}{\| \ket{\psi_{x^*}'}\|}.\] That is, $\ket*{\psi_{x^*}}$ is the result of running the $\LM$ quantum program coherently on input $x^*$ and post-selecting on obtaining the ``correct'' output $Q(x^*)$. By the $\epsilon$-pseudo-determinism of $Q$ and Gentle Measurement (\cref{lemma:gentle-measurement}), there is some $\delta'(\secp) = \nonnegl(\secp)$ such that 

    \begin{align*}
    &\E\left[\big\| \Pi[B[x^*]]U^{\sF_1,\dots,\GSim}\ket*{\widetilde{\psi}_{x^*}}\big\|^2 : \ket*{\widetilde{\psi}_{x^*}},(\sF_1,\dots,\sF_t,\GSim) \gets \QObf_2(1^\secp,\ket{\psi_{x^*}})\right]\\ &\geq \delta'(\secp) \cdot 2^{-m} \geq \delta'(\secp) \cdot 2^{-n}.
    \end{align*} However, this violates \cref{lemma:induction-1} with $\tau = 0$, which is proven in the following section. Indeed, plugging in $\kappa = n^4$, the lemma states that, for some constant $c$,
    \begin{align*}
    &\E\left[\big\| \Pi[B[x^*]]U^{\sF_1,\dots,\GSim}\ket*{\widetilde{\psi}_{x^*}}\big\|^2 : \ket*{\widetilde{\psi}_{x^*}},(\sF_1,\dots,\sF_t,\GSim) \gets \QObf_2(1^\secp,\ket{\psi_{x^*}})\right]\\ &= 2^{3n(t+1)-c n^4} = 2^{-\Omega(n^2)} < \delta'(\secp) \cdot 2^{-n},
    \end{align*}
    which gives us the contradiction.
\end{proof}

\begin{claim}
    $|\cH_2-\cH_3| \leq 2^{-\Omega(\kappa)}$.
\end{claim}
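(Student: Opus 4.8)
\textbf{Proof plan for $|\cH_2 - \cH_3| \leq 2^{-\Omega(\kappa)}$.}

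The two hybrids $\cH_2$ and $\cH_3$ differ only in that $\cH_3$ inserts a projective measurement $\{\Pi[w^*]\}_{w^*}$ of the $W$ wires of $\ket*{\widetilde{\psi}}$ (in the authenticated standard basis, via $\Dec_{k,\emptyset,\theta_i[W_i]}$) immediately before running $D[A]$, and then sums the resulting branch probabilities. Writing $\ket*{\widetilde{\psi}} = \sum_{w^*}\Pi[w^*]\ket*{\widetilde{\psi}}$ (this is a genuine decomposition into orthogonal components, since the sets $P[w^*_i]$ partition the relevant subspace up to the $\bot$-component, and the $W_i$ are disjoint by the standard-basis-collapsible property), we have
\[
\cH_2 = \E\left[\Big\|\sum_{w^*} D[A]^O \Pi[w^*]\ket*{\widetilde{\psi}}\Big\|^2\right], \qquad \cH_3 = \E\left[\sum_{w^*}\big\|D[A]^O \Pi[w^*]\ket*{\widetilde{\psi}}\big\|^2\right].
\]
The difference is exactly the sum of cross terms $\sum_{w^* \neq w'}\bra*{\widetilde{\psi}}\Pi[w^*]\, (D[A]^O)^\dagger D[A]^O\, \Pi[w']\ket*{\widetilde{\psi}}$, so the plan is to bound these cross terms using \cref{cor:wmapping}, which says that in $\QObf_2$ no QPQ unitary can map from the support of $\Pi[w^*]$ onto the support of $\Pi[\neg w^*]$ except with probability $2^{-\Omega(\kappa)}$.

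The cleanest route is to invoke \cref{lemma:map-distinguish} with the key set $\cK$ being the randomness of $\QObf_2$ (the authentication key $k$, the signature token, and the random oracle $H$), the states being $\ket*{\widetilde{\psi}}$, the submeasurements being $\{\Pi[w^*]\}_{w^*}$, the oracle being $O = (\sF_1,\dots,\GSim)$, and the distinguisher being $D[A]$ (augmented to an oracle-aided binary-outcome projector in the standard way). The lemma gives
\[
\cH_2 - \cH_3 \;=\; \E_k\big[\|D[A]^O\ket*{\widetilde{\psi}}\|^2\big] - \sum_{w^*}\E_k\|D[A]^O\Pi[w^*]\ket*{\widetilde{\psi}}\|^2 \;\leq\; N\cdot\Big[\sum_{w^*\neq w'}\E_k\|\Pi[w']D[A]^O\Pi[w^*]\ket*{\widetilde{\psi}}\|^2\Big]^{1/2},
\]
where $N = \prod_i 2^{|W_i|} \leq 2^{n}$ is the number of branches. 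For each off-diagonal term, $\Pi[w']$ with $w' \neq w^*$ is a sub-projector of $\Pi[\neg w^*]$ (they disagree on at least one index $i$, and $P[w'_i] \subseteq P[\neg w^*_i]$), so $\|\Pi[w']D[A]^O\Pi[w^*]\ket*{\widetilde{\psi}}\|^2 \leq \|\Pi[\neg w^*]D[A]^O\Pi[w^*]\ket*{\widetilde{\psi}}\|^2$, and by \cref{cor:wmapping} each such term is $2^{-\Omega(\kappa)}$ in expectation over the $\QObf_2$ randomness. Summing over at most $N^2 \leq 2^{2n}$ ordered pairs and taking the square root, we get $|\cH_2 - \cH_3| \leq 2^n \cdot (2^{2n}\cdot 2^{-\Omega(\kappa)})^{1/2} = 2^{2n - \Omega(\kappa)}$, which is $2^{-\Omega(\kappa)}$ since $\kappa \geq n^4 = \omega(n)$.

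The main subtlety to get right is not any hard estimate but the bookkeeping: (i) confirming that $\{\Pi[w^*]\}_{w^*}$ together with the $\bot$-component $\cI - \sum_{w^*}\Pi[w^*]$ really is a projective (sub-)measurement with $\ket*{\widetilde{\psi}}$ in the image of $\sum_{w^*}\Pi[w^*]$ — this uses that $\ket*{\widetilde{\psi}} = \Enc_k(\ket\psi)$ has each $W_i$ register supported on $S_\Delta + x_{L,i}$, which decodes to a well-defined bit string, so the $\bot$ component is null and the decomposition is exact; (ii) checking that \cref{cor:wmapping} as stated (for a QPQ \emph{unitary} $U$ and a single $w^*$, mapping onto $\Pi[\neg w^*]$) applies to $D[A]^O$ after purifying $D[A]$ to a unitary on an enlarged workspace — this is routine since $D[A]$ is QPT hence QPQ, and the $\Pi[\neg w^*]$ there dominates each $\Pi[w']$; and (iii) making sure the factor $N$ in \cref{lemma:map-distinguish} is at most exponential in $n$ (it is, since $\sum_i |W_i| \leq n$). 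Given these, the bound follows immediately.
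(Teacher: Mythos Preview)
Your proposal is correct and follows essentially the same approach as the paper: apply \cref{lemma:map-distinguish} with the $\{\Pi[w^*]\}_{w^*}$ decomposition, bound each off-diagonal term by $\|\Pi[\neg w^*]D[A]^O\Pi[w^*]\ket*{\widetilde{\psi}}\|^2$, and invoke \cref{cor:wmapping} to get $2^{-\Omega(\kappa)}$ per term, absorbing the at-most-$2^{O(n)}$ combinatorial factors into the exponent since $\kappa = \omega(n)$. Your bookkeeping points (i)--(iii) are exactly the right checks, and the paper handles them the same way (implicitly).
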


\begin{proof}

    \begin{align*}
        &|\cH_2 - \cH_3| \\
        &\leq 2^n \cdot \left[\sum_{w^* \neq {w'}^*}\E\left[\big\| \Pi[{w'}^*]D_1^{\sF_1,\dots,\GSim}\Pi[{w}^*]\ket*{\widetilde{\psi}}\big\|^2 : \ket*{\widetilde{\psi}},(\sF_1,\dots,\sF_t,\GSim) \gets \QObf_2(1^\secp,\ket{\psi})\right]\right]^{1/2} \\
        &\leq 2^n \cdot \left[2^n \cdot \sum_{w^*}\E\left[\big\| \Pi[\neg w^*]D_1^{\sF_1,\dots,\GSim}\Pi[w^*]\ket*{\widetilde{\psi}}\big\|^2 : \ket*{\widetilde{\psi}},(\sF_1,\dots,\sF_t,\GSim) \gets \QObf_2(1^\secp,\ket{\psi})\right]\right]^{1/2} \\
        &\leq 2^{2n} \cdot 2^{-\Omega(\kappa)} = 2^{-\Omega(\kappa)},
    \end{align*}

    where the first inequality follows from \cref{lemma:map-distinguish} (which is an application of Cauchy-Schwarz) and the third follows from \cref{cor:wmapping} proven above.
    
\end{proof}

\begin{claim}
    $|\cH_3-\cH_4|=2^{-\Omega(\kappa)}$.
\end{claim}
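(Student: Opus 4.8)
The plan is to fix $w^*$ and observe that the only difference between the oracles of $\QObf_2$ and those of $\QObf_3$ that is relevant to the $w^*$-th summand is that $\sF_i$ feeds the \emph{decoded} $w_i$ into $f_i^{x,r_1,\dots,r_{i-1}}$ while $\sF_i[w^*]$ feeds the hard-coded $w^*_i$; the oracles $\GSim$ are identical. Hence, for each $i \in [t]$, $\sF_i$ and $\sF_i[w^*]$ agree on every input $(x,\sigma_x,\widetilde v_1,\dots,\widetilde v_i,\widetilde w_i,\ell_1,\dots,\ell_{i-1})$ except those on which $\Dec_{k,L_i\cdots L_1,\theta_i}(\widetilde v_1,\dots,\widetilde v_i,\widetilde w_i)$ does not abort and the $W_i$-component it decodes differs from $w^*_i$. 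Using the \emph{standard-basis-collapsible $W$ wires} property (\cref{def:LM-properties}) — in particular that $W_i$ consists of standard-basis indices on which $L_1,\dots,L_{i-1}$ do not act and which $L_i$ does not touch as a target — the relevant $X$-keys for the $W_i$ registers are never updated by the CNOT update rule, so the $W_i$-decoding performed inside $\sF_i$ coincides with membership testing against $\Dec_{k,\emptyset,\theta_i[W_i]}$. Consequently the set $S_{w^*}$ of inputs on which the two oracle tuples differ is contained in $\bigcup_{i\in[t]}\{\text{inputs whose }\widetilde w_i\text{-component lies in }P[\neg w^*_i]\}$, i.e.\ $\Pi[S_{w^*}] \le \sum_{i\in[t]}\Pi[\neg w^*_i]$ acting on the corresponding oracle query registers.

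Next I would invoke \cref{lemma:puncture}, taking $O^0$ to be the $\QObf_2$ tuple $(\sF_1,\dots,\sF_t,\GSim)$, $O^1$ the $w^*$-hard-coded $\QObf_3$ tuple $(\sF_1[w^*],\dots,\sF_t[w^*],\GSim)$, and $\ket*{\widetilde\psi}$ replaced by the sub-normalized state $\Pi[w^*]\ket*{\widetilde\psi}$. Its first hypothesis is the agreement outside $S_{w^*}$ established above. Its second hypothesis asks that for every QPQ unitary $U$, $\E[\,\|\Pi[S_{w^*}]\,U^{O^0}\,\Pi[w^*]\ket*{\widetilde\psi}\|^2 : \ket*{\widetilde\psi},O^0 \gets \QObf_2\,]$ be small; by the containment $\Pi[S_{w^*}] \le \sum_i \Pi[\neg w^*_i]$ and a union bound over $i \in [t]$ this is at most $\sum_{i\in[t]}\E[\,\|\Pi[\neg w^*_i]\,U^{O^0}\,\Pi[w^*]\ket*{\widetilde\psi}\|^2\,]$, which is $2^{-\Omega(\kappa)}$ by \cref{cor:wmapping} (absorbing the factor $t \le n$). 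Since scaling the input vector down by $\|\Pi[w^*]\ket*{\widetilde\psi}\|$ scales both the hardness parameter and the resulting distinguishing gap by the same factor, \cref{lemma:puncture} applies verbatim and yields
\[
\Big|\,\E\big[\|D[A]^{O^0}\Pi[w^*]\ket*{\widetilde\psi}\|^2\big] - \E\big[\|D[A]^{O^1}\Pi[w^*]\ket*{\widetilde\psi}\|^2\big]\,\Big| \le 4q\cdot 2^{-\Omega(\kappa)},
\]
where $q = \poly(\secp)$ bounds the number of oracle queries made by $D[A]$.

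Finally I would sum over $w^*$: since $W_1,\dots,W_t$ are disjoint subsets of $[n]$, there are at most $2^{\sum_i |W_i|} \le 2^n$ choices of $w^* = (w^*_1,\dots,w^*_t)$, so by the triangle inequality $|\cH_3 - \cH_4| \le 2^n \cdot 4q \cdot 2^{-\Omega(\kappa)}$. Plugging in $\kappa \ge n^4$ and $q = \poly(\secp) = \poly(n)$, the right-hand side is $2^{-\Omega(\kappa)}$, as claimed.

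I expect the main obstacle to be the bookkeeping in the first paragraph: precisely describing $S_{w^*}$ and verifying $\Pi[S_{w^*}] \le \sum_i \Pi[\neg w^*_i]$ on the correct registers, which requires tracking how the $\LinEval_{L_i}$ operations reposition the authenticated $\widetilde W_i$ registers relative to the oracles' query registers and confirming that the collapsible-$W$-wires structure makes the $W_i$-decoding rule used inside $\sF_i$ match the definition of $P[\neg w^*_i]$. Everything afterward is a routine combination of \cref{lemma:puncture}, \cref{cor:wmapping}, and a counting bound, exactly as in the preceding claim $|\cH_2 - \cH_3| \le 2^{-\Omega(\kappa)}$.
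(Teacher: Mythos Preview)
Your proposal is correct and follows the same approach as the paper. The paper's own proof is a single line citing \cref{cor:wmapping} and \cref{lemma:puncture}; your write-up is essentially a careful unpacking of that citation (and in fact re-derives the content of \cref{claim:prelim2-3} along the way), including the bookkeeping about why the $W_i$-decoding inside $\sF_i$ matches the definition of $P[\neg w^*_i]$, which the paper leaves implicit.
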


\begin{proof}
    This follows from \cref{cor:wmapping} proven above and \cref{lemma:puncture} (which is a standard oracle hybrid argument).
\end{proof}

\begin{claim}
    $|\cH_4-\cH_5|=2^{-\Omega(\kappa)}$.
\end{claim}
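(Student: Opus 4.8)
The plan is to reduce the claim directly to the preliminary \cref{claim:prelim3-4} and \cref{claim:prelim4-5}, together with a union bound over $w^*$. First I would note that the distributions over the pair $(\ket*{\widetilde{\psi}},k)$ produced by $\QObf_3$, $\QObf_4$, and $\QObf_5$ are identical: all three sample $k \gets \Gen(1^\kappa,n)$, set $\ket*{\psi_k} = \Enc_k(\ket{\psi})$, sample a token $(\vk,\ket{\sk}) \gets \TokGen(1^\kappa)$, and output $\ket*{\widetilde{\psi}} = \ket*{\psi_k}\ket{\sk}$; the only difference among them is the definition of the $\sF$-type oracles, while $\GSim$ is defined identically (as in $\QObf_2$). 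Hence both $\cH_4$ and $\cH_5$ are expectations over this common distribution, and it suffices to bound, for each fixed $w^* = \{w^*_i\}_{i \in [t]}$, the difference between the $w^*$-summand of $\cH_4$ (using the $\QObf_3$ oracles $\sF_1[w^*],\dots,\sF_t[w^*],\GSim$) and the $w^*$-summand of $\cH_5$ (using the $\QObf_5$ oracles $\FSim_1,\dots,\FSim_t,\GSim$).

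For each fixed $w^*$ I would instantiate the two preliminary claims with distinguisher $D := D[A]$, which is QPT (hence QPQ) and simply ignores the key $k$ that those claims provide as input. \cref{claim:prelim3-4} bounds the difference between the $\QObf_3$- and $\QObf_4$-summands (both using the oracles $\sF_1[w^*],\dots,\sF_t[w^*],\GSim$) by $2^{-\Omega(\kappa)}$, and \cref{claim:prelim4-5} shows the $\QObf_4$-summand equals the $\QObf_5$-summand (with $\FSim_1,\dots,\FSim_t$ replacing $\sF_1[w^*],\dots,\sF_t[w^*]$) exactly. Chaining these via the triangle inequality, the $w^*$-summands of $\cH_4$ and $\cH_5$ differ by at most $2^{-\Omega(\kappa)}$, with an implicit constant depending only on the fixed polynomial query count of $D[A]$, and in particular uniform over $w^*$.

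Summing over all $w^*$ by the triangle inequality, and using that $W_1,\dots,W_t$ are pairwise disjoint subsets of $[n]$ (so there are $\prod_{i} 2^{|W_i|} = 2^{\sum_i |W_i|} \le 2^n$ choices of $w^*$), I obtain $|\cH_4 - \cH_5| \le 2^n \cdot 2^{-\Omega(\kappa)}$. The only point requiring care — and precisely the reason $\kappa$ is set to $\max\{\secp,n^4\}$ — is that this product must still be $2^{-\Omega(\kappa)}$: since $\kappa \ge n^4$ we have $n \le \kappa^{1/4}$, so $2^n \cdot 2^{-c\kappa} \le 2^{\kappa^{1/4} - c\kappa} = 2^{-\Omega(\kappa)}$ for $\kappa$ sufficiently large. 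I do not expect any genuine obstacle in this step: all the substantive content — mapping security of the authentication scheme, the collapsible-$W$-wires structure, and the observation that the ``zero label'' and the ``$r_i$ label'' are identically distributed over the randomness of $H$ — has already been extracted into the preliminary claims, so this is essentially hybrid/union-bound bookkeeping.
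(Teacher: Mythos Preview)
Your proposal is correct and takes essentially the same approach as the paper: the paper's proof is the single sentence ``This follows by combining \cref{claim:prelim3-4} and \cref{claim:prelim4-5} proven above,'' and you have simply spelled out the implicit per-$w^*$ triangle inequality and union bound (using $\kappa = \omega(n)$ to absorb the $2^n$ factor) that this one-liner leaves to the reader.
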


\begin{proof}
    This follows by combining \cref{claim:prelim3-4} and \cref{claim:prelim4-5} proven above.
\end{proof}

\begin{claim}
    $|\cH_5-\cH_6| \leq 2^{-\Omega(\kappa)}$.
\end{claim}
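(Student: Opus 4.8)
The claim asserts $|\cH_5 - \cH_6| \leq 2^{-\Omega(\kappa)}$, where the only difference between the two hybrids is that $\cH_5$ sums over the orthogonal decomposition $\sum_{w^*} \Pi[w^*] \ket*{\widetilde{\psi}}$ while $\cH_6$ runs $D[A]$ directly on $\ket*{\widetilde{\psi}}$, with both using the fully simulated oracles $(\FSim_1,\dots,\FSim_t,\GSim)$ sampled from $\QObf_5$. Writing $\ket*{\widetilde{\psi}} = \sum_{w^*}\Pi[w^*]\ket*{\widetilde{\psi}}$ (the $\Pi[w^*]$ are orthogonal projectors summing to a projector onto the authenticated $W$-wire subspace, and $\ket*{\widetilde{\psi}}$ lies in that image), the difference $\|D[A]^O\ket*{\widetilde{\psi}}\|^2 - \sum_{w^*}\|D[A]^O\Pi[w^*]\ket*{\widetilde{\psi}}\|^2$ is exactly the sum of cross terms $\sum_{w^* \neq w'^*}\bra*{\widetilde{\psi}}\Pi[w'^*]\,(D[A]^O)^\dagger D[A]^O\,\Pi[w^*]\ket*{\widetilde{\psi}}$, which is precisely the quantity controlled by the Cauchy-Schwarz bound of \cref{lemma:map-distinguish}.

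The plan is to invoke \cref{lemma:map-distinguish} with the key set being the randomness of $\QObf_5$, the index set being the (at most $2^n$-size) set of $w^*$ strings, the projectors being $\{\Pi[w^*]\}$, and the distinguisher being (the binary-outcome projector implemented by) $D[A]$. This yields
\[|\cH_5 - \cH_6| \leq 2^n \cdot \left[\sum_{w^* \neq w'^*}\E\left[\big\|\Pi[w'^*]\,D[A]^{\FSim_1,\dots,\GSim}\,\Pi[w^*]\ket*{\widetilde{\psi}}\big\|^2\right]\right]^{1/2},\]
exactly as in the $\cH_2$--$\cH_3$ claim but now with the oracles of $\QObf_5$. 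The next step is to bound each off-diagonal term: note $\Pi[w'^*] \leq \Pi[\neg w^*]$ for $w'^* \neq w^*$ (since $w'^*$ differs from $w^*$ in some coordinate, so it lies in the support of some $\Pi[\neg w_i^*]$), hence $\sum_{w^* \neq w'^*}\|\Pi[w'^*]D[A]^O\Pi[w^*]\ket*{\widetilde{\psi}}\|^2 \leq 2^n \sum_{w^*}\|\Pi[\neg w^*]D[A]^O\Pi[w^*]\ket*{\widetilde{\psi}}\|^2$. Then apply \cref{claim:prelim-wmapping} (the hardness of mapping $\Pi[w^*] \to \Pi[\neg w^*]$ in $\QObf_5$, which follows from mapping security of the authentication scheme since all of $\FSim_1,\dots,\FSim_t,\GSim$ need only the verification oracle $\Ver_{k,\cdot,\cdot}(\cdot)$) to get $\E[\|\Pi[\neg w^*]D[A]^O\Pi[w^*]\ket*{\widetilde{\psi}}\|^2] = 2^{-\Omega(\kappa)}$ for each $w^*$. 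Summing and plugging back in gives $|\cH_5 - \cH_6| \leq 2^n \cdot (2^n \cdot 2^n \cdot 2^{-\Omega(\kappa)})^{1/2} = 2^{2n} \cdot 2^{-\Omega(\kappa)} = 2^{-\Omega(\kappa)}$, using $\kappa = \omega(n)$ (indeed $\kappa \geq n^4$).

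This is essentially a verbatim repetition of the argument establishing $|\cH_2 - \cH_3| \leq 2^{-\Omega(\kappa)}$, with \cref{cor:wmapping} (mapping hardness in $\QObf_2$) replaced by \cref{claim:prelim-wmapping} (mapping hardness in $\QObf_5$), which is the more direct of the two since $\QObf_5$'s oracles are manifestly implementable from $\Ver$ alone. I do not anticipate a genuine obstacle here; the only point requiring a moment of care is confirming that $D[A]$ can be taken as an oracle-aided binary-outcome projector (absorb any measurement and post-processing into a unitary on an enlarged workspace and project onto the ``output $1$'' flag), so that \cref{lemma:map-distinguish} applies directly, and that $\ket*{\widetilde{\psi}}$ indeed lies in $\mathsf{Im}(\sum_{w^*}\Pi[w^*])$ — this holds because $\ket*{\psi_k} = \Enc_k(\ket{\psi})$ is supported on valid authenticated codewords on the $W$ wires, so every $W_i$-register component decodes to some $w_i^* \neq \bot$.
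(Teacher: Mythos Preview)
Your proposal is correct and follows essentially the same approach as the paper: apply \cref{lemma:map-distinguish} to bound $|\cH_5 - \cH_6|$ by the cross terms, replace each $\Pi[{w'}^*]$ by $\Pi[\neg w^*]$ at the cost of a $2^n$ factor, and then invoke \cref{claim:prelim-wmapping} (mapping hardness in $\QObf_5$) to conclude. The paper's proof is exactly this chain of inequalities, mirroring the $\cH_2$--$\cH_3$ argument with \cref{cor:wmapping} swapped for \cref{claim:prelim-wmapping}, just as you anticipated.
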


\begin{proof}

    \begin{align*}
        &|\cH_5 - \cH_6| \\
        &\leq 2^n \cdot \left[\sum_{w^* \neq {w'}^*}\E\left[\big\| \Pi[{w'}^*]D_1^{\FSim_1,\dots,\GSim}\Pi[w^*]\ket*{\widetilde{\psi}}\big\|^2 : \ket*{\widetilde{\psi}},\sF_1,\dots,\sF_t,\GSim \gets \QObf_5(1^\secp,\ket{\psi})\right]\right]^{1/2} \\
        &\leq 2^n \cdot \left[2^n \cdot \sum_{w^*}\E\left[\big\| \Pi[\neg w^*]D_1^{\FSim_1,\dots,\GSim}\Pi[w^*]\ket*{\widetilde{\psi}}\big\|^2 : \ket*{\widetilde{\psi}},\sF_1,\dots,\sF_t,\GSim \gets \QObf_5(1^\secp,\ket{\psi})\right]\right]^{1/2} \\
        &\leq 2^{2n} \cdot 2^{-\Omega(\kappa)} = 2^{-\Omega(\kappa)},
    \end{align*}

    where the first inequality follows from \cref{lemma:map-distinguish} (which is an application of Cauchy-Schwarz) and the third follows from \cref{claim:prelim-wmapping} proven above.

\end{proof}

\begin{claim}
    $|\cH_6-\cH_7| \leq 2^{-\Omega(\kappa)}$.
\end{claim}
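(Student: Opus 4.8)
The plan is to show $|\cH_6 - \cH_7| \leq 2^{-\Omega(\kappa)}$ by invoking the privacy property of the authentication scheme (\cref{def:auth-privacy}, established in \cref{thm:auth-privacy}). The only difference between $\QObf_5$ and $\QObf_6$ is that the former computes $\ket*{\psi_k} = \Enc_k(\ket{\psi})$ while the latter computes $\ket*{\psi_k} = \Enc_k(\ket{0^n})$; everything else — the signature token $(\vk,\ket{\sk})$, the random oracle $H$, and the descriptions of $\FSim_1,\dots,\FSim_t,\GSim$ — is sampled identically. So I would argue that the entire distinguishing procedure $D[A]$, together with all the oracle machinery, can be repackaged as a single oracle-aided binary-outcome projector acting on the authenticated state and making queries only to a verification oracle $\Ver_{k,\cdot,\cdot}(\cdot)$.

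The key step is to observe that in $\QObf_5$ (and $\QObf_6$), none of the simulated oracles $\FSim_1,\dots,\FSim_t,\GSim$ require the decoding oracle $\Dec_{k,\cdot,\cdot}(\cdot)$ — they only ever call $\Ver_{k,\cdot,\cdot}(\cdot)$ (for $\FSim_i$ this is the line ``Output $\bot$ if $\Ver_{k,L_i\dots L_1,\theta_i}(\cdots) = \bot$'', and similarly for $\GSim$), use $Q$ as an oracle, and apply the random oracle $H$, the signature-token verification $\TokVer(\vk,\cdot,\cdot)$, and the signing key $\ket{\sk}$. Thus I would fix the randomness of $H$, $(\vk,\ket{\sk})$, and $Q$ (note $Q$ is a fixed classical function independent of $k$), and regard the resulting composed object $D[A]$ together with $\FSim_1,\dots,\FSim_t,\GSim$ as an oracle-aided binary projector $D'$ that takes $\ket*{\psi_k}\ket{\sk}$ as input and has oracle access to $\Ver_{k,\cdot,\cdot}(\cdot)$. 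Since $\ket{\sk}$ is independent of $k$ and $\ket{\psi}$, I can absorb it into the description of $D'$, leaving a binary projector $D''$ acting on $\Enc_k(\ket{\psi})$ (resp.\ $\Enc_k(\ket{0^n})$) with access to $\Ver_{k,\cdot,\cdot}(\cdot)$. Applying \cref{thm:auth-privacy} with $\ket{\psi_0} = \ket{\psi}$ and $\ket{\psi_1} = \ket{0^n}$ then gives that the two experiments differ by at most $2^{-\Omega(\kappa)}$ (recalling that the authentication scheme is invoked with security parameter $\kappa$, so privacy gives a $2^{-\Omega(\kappa)}$ bound), and averaging over the fixed randomness of $H,\vk,\ket{\sk}$ preserves the bound.

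The main subtlety — and the thing I would be careful to spell out — is the reordering of which randomness is ``inside'' versus ``outside'' the privacy reduction. In particular, $H$ and $\vk$ must be sampled independently of the authentication key $k$ in both $\QObf_5$ and $\QObf_6$, which they are by inspection, so conditioning on them is legitimate and the privacy adversary $D''$ can have them hard-coded. One also needs that the $\GSim$ oracle's use of $Q$ is consistent across the two hybrids: this holds because $Q$ is the induced map of the public classical part of the $\LM$ program and does not depend on $\ket{\psi}$ at all. I do not anticipate a genuine obstacle here; this is the cleanest of the hybrid transitions, essentially a one-line appeal to \cref{thm:auth-privacy} modulo the bookkeeping of packaging everything into a single projector. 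I would close the claim's proof with that invocation, and then note that chaining \cref{claim:0-1} through this claim (together with the fact that each of $\cH_2$–$\cH_7$ is within $2^{-\Omega(\kappa)} = \negl(\secp)$ of its neighbor and $\kappa \geq \secp$) yields $|\cH_0 - \cH_7| = \negl(\secp)$, which is exactly the ideal-obfuscation guarantee.
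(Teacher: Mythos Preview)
Your proposal is correct and matches the paper's approach: both reduce directly to the privacy property of the authentication scheme (\cref{thm:auth-privacy}), using the observation that in $\QObf_5$/$\QObf_6$ the oracles $\FSim_1,\dots,\FSim_t,\GSim$ only invoke $\Ver_{k,\cdot,\cdot}(\cdot)$ and never $\Dec_{k,\cdot,\cdot}(\cdot)$. One small imprecision: $Q$ \emph{does} depend on $\ket{\psi}$ (it is the induced map of the program $(\ket{\psi},C)$), but this is harmless since $Q$ is a fixed classical function used identically in both hybrids and can simply be hard-coded into the privacy adversary.
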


\begin{proof}
    This follows from privacy of the authentication scheme (\cref{thm:auth-privacy}), since the oracles in $\cH_6$ can be implemented with oracle access to $\Ver_{k,\cdot,\cdot}(\cdot)$ rather than $\Dec_{k,\cdot,\cdot}(\cdot)$.
\end{proof}

\end{proof}

\subsection{Inductive Argument}\label{subsec:induction}

In this section, we give an inductive proof of \cref{lemma:induction-1}, which was required by \cref{claim:1-2} above. First, we describe a variant of $\QObf$ that we call $\ParMeas$, which supports hard-coding an input $x^*$ and the first $\tau$ partial measurement results $\{v^*_\iota,r^*_\iota\}_{\iota \in [\tau]}$ into the oracles $\sF_1,\dots,\sF_t$. When these results are hard-coded, the inputs $\widetilde{v}_1,\dots,\widetilde{v}_\tau,\ell_1,\dots,\ell_\tau$ are merely verified rather than decoded by the oracles $\sF_1,\dots,\sF_t$, and the hard-coded results $\{v^*_\iota,r^*_\iota\}_{\iota \in [\tau]}$ are used in place of the decoded results.

We define the distribution to include two additional outputs:
\begin{itemize}
    \item The set $B[x^*,\{v^*_\iota,r^*_\iota\}_{\iota \in [\tau]}]$ contains the "bad" set of inputs that verify properly but decode to an incorrect output $Q(x^*)$, when using the hard-coded values $\{v^*_\iota,r^*_\iota\}_{\iota \in [\tau]}$.
    \item The set $C[x^*,\{v^*_\iota,r^*_\iota\}_{\iota \in [\tau]}]$ contains the set of inputs on which the oracles would differ had the latest measurement $(v^*_\tau,r^*_\tau)$ not been hard-coded.
\end{itemize}

\begin{tcolorbox}[breakable, enhanced]\footnotesize
$\ParMeas(1^\secp,\ket{\psi})$:
\begin{itemize}
        \item Sample $k \gets \Gen(1^\kappa,n)$, and compute $\ket*{\psi_k} = \Enc_k(\ket{\psi})$.
        \item Sample a signature token $(\vk,\ket{\sk}) \gets \TokGen(1^\kappa)$.
        \item Let $H : \{0,1\}^* \to \{0,1\}^\kappa$ be a random oracle. 
        \item For each $i \in [t]$, define the function $\sF_i[x^*,\{v^*_\iota,r^*_\iota\}_{\iota \in [\tau]}](x,\sigma_{x},\widetilde{v}_1,\dots,\widetilde{v}_i,\widetilde{w}_i,\ell_1,\dots,\ell_{i-1})$: 
        \begin{itemize}
            \item If $x \neq x^*$, output $\sF_i\left(x,\sigma_{x},\widetilde{v}_1,\dots,\widetilde{v}_i,\widetilde{w}_i,\ell_1,\dots,\ell_{i-1}\right)$, where $\sF_i$ is defined as in $\QObf$.
            \item Output $\bot$ if $\TokVer(\vk,x,\sigma_x) = \bot$.
    
            \item For each $\iota \in [\min\{\tau,i-1\}]$, let \[\ell_{\iota,0} = H(x,\sigma_x,\widetilde{v}_1,\dots,\widetilde{v}_\iota,\ell_1,\dots,\ell_{\iota-1},0), ~~ \ell_{\iota,1} = H(x,\sigma_x,\widetilde{v}_1,\dots,\widetilde{v}_\iota,\ell_1,\dots,\ell_{\iota-1},1),\] and output $\bot$ if $\ell_{\iota,0} = \ell_{\iota,1}$ or $\ell_\iota \notin \{\ell_{\iota,0},\ell_{\iota,1}\}$. 
            
            \item If $i \leq \tau$:
            \begin{itemize}
                
                \item Output $\bot$ if $\Ver_{k,L_i \dots L_1,\theta_i}(\widetilde{v}_1,\dots,\widetilde{v}_i,\widetilde{w}_i) = \bot$. 
                 \item Set $\ell_i \coloneqq H(x,\sigma_x,\widetilde{v}_1,\dots,\widetilde{v}_i,\ell_1,\dots,\ell_{i-1},r^*_i)$, and output $(\ell_i,\widetilde{v}_i)$. 
            \end{itemize}

            \item If $i > \tau$:
            \begin{itemize}
                
                \item For each $\iota \in [\tau+1,i-1]$, let \[\ell_{\iota,0} = H(x,\sigma_x,\widetilde{v}_1,\dots,\widetilde{v}_\iota,\ell_1,\dots,\ell_{\iota-1},0), ~~ \ell_{\iota,1} = H(x,\sigma_x,\widetilde{v}_1,\dots,\widetilde{v}_\iota,\ell_1,\dots,\ell_{\iota-1},1),\] and output $\bot$ if $\ell_{\iota,0} = \ell_{\iota,1}$ or $\ell_\iota \notin \{\ell_{\iota,0},\ell_{\iota,1}\}$. Otherwise, let $r_\iota$ be such that $\ell_\iota = \ell_{\iota,r_\iota}$.
                \item Output $\bot$ if $\Ver_{k,L_i \dots L_1,\theta_i[V_1,\dots,V_\tau]}(\widetilde{v}_1,\dots,\widetilde{v}_\tau) = \bot$. 
                \item Compute $(v_{\tau+1},\dots,v_i,w_i) = \Dec_{k,L_i\dots L_1,\theta_i[V_{\tau+1},\dots,V_i,W_i]}(\widetilde{v}_{\tau+1},\dots,\widetilde{v}_i,\widetilde{w}_i)$, and output $\bot$ if the result is $\bot$.
                \item Compute $(\cdot,r_i) = f_i^{x,r^*_1,\dots,r^*_\tau,r_{\tau+1},\dots,r_{i-1}}(v^*_1,\dots,v^*_\tau,v_{\tau+1},\dots,v_i,w_i)$.
                 \item Set $\ell_i \coloneqq H(x,\sigma_x,\widetilde{v}_1,\dots,\widetilde{v}_i,\ell_1,\dots,\ell_{i-1},r_i)$, and output $(\ell_i,\widetilde{v}_i)$. 
            \end{itemize}
        \end{itemize}
        \item Define $\GSim$ as in $\QObf_2$.

        \item Let $B[x^*,\{v^*_\iota,r^*_\iota\}_{\iota \in [\tau]}]$ be the set of $(\sigma_{x^*},\widetilde{v}_1,\dots,\widetilde{v}_{t+1},\ell_1,\dots,\ell_t)$ such that the output of the following procedure is $\notin \{Q(x),\bot\}$:
            \begin{itemize}
                \item Output $\bot$ if $\TokVer(\vk,x,\sigma_x) = \bot$.
                \item For each $\iota \in [t]$, let \[\ell_{\iota,0} = H(x,\sigma_x,\widetilde{v}_1,\dots,\widetilde{v}_\iota,\ell_1,\dots,\ell_{\iota-1},0), ~~ \ell_{\iota,1} = H(x,\sigma_x,\widetilde{v}_1,\dots,\widetilde{v}_\iota,\ell_1,\dots,\ell_{\iota-1},1),\] and output $\bot$ if $\ell_{\iota,0} = \ell_{\iota,1}$ or $\ell_\iota \notin \{\ell_{\iota,0},\ell_{\iota,1}\}$. If $\iota > \tau$, let $r_\iota$ be such that $\ell_\iota = \ell_{\iota,r_\iota}$.
                \item Output $\bot$ if $\Ver_{k,L_{t+1} \dots L_1,\theta_{t+1}[V_1,\dots,V_\tau]}(\widetilde{v}_1,\dots,\widetilde{v}_\tau) = \bot$. 
                \item Compute $(v_{\tau+1},\dots,v_{t+1}) = \Dec_{k,L_{t+1}\dots L_1,\theta_{t+1}[V_{\tau+1},\dots,V_{t+1}]}(\widetilde{v}_{\tau+1},\dots,\widetilde{v}_{t+1})$, and output $\bot$ if the result is $\bot$.
                \item Output $g^{x,r^*_1,\dots,r^*_\tau,r_{\tau+1},\dots,r_t}(v_1^*,\dots,v_\tau^*,v_{\tau+1},\dots,v_{t+1}).$
            \end{itemize}
        \item Let $C[x^*,\{v^*_\iota,r^*_\iota\}_{\iota \in [\tau]}]$ be the set that includes, for any $i \in [\tau]$, all $(\sigma_{x^*},\widetilde{v}_1,\dots,\widetilde{v}_{i},\widetilde{w}_i,\ell_1,\dots,\ell_{i-1})$ such that 
        \begin{align*}
        &\sF_i[x^*,\{v^*_\iota,r^*_\iota\}_{\iota \in [\tau-1]}](x^*,\sigma_{x^*},\widetilde{v}_1,\dots,\widetilde{v}_{i},\widetilde{w}_i,\ell_1,\dots,\ell_{i-1})\\ &\neq \sF_i[x^*,\{v^*_\iota,r^*_\iota\}_{\iota \in [\tau]}](x^*,\sigma_{x^*},\widetilde{v}_1,\dots,\widetilde{v}_{i},\widetilde{w}_i,\ell_1,\dots,\ell_{i-1}),
        \end{align*} and all $(\sigma_{x^*},\widetilde{v}_1,\dots,\widetilde{v}_{t+1},\ell_1,\dots,\ell_{t})$ such that 
        \[(\sigma_{x^*},\widetilde{v}_1,\dots,\widetilde{v}_{t+1},\ell_1,\dots,\ell_{t}) \in B[x^*,\{v^*_\iota,r^*_\iota\}_{\iota \in [\tau-1]}] \setminus B[x^*,\{v^*_\iota,r^*_\iota\}_{\iota \in [\tau]}].\]
        
        \item Output $\ket*{\widetilde{\psi}} = \ket{\psi_k}\ket{\sk}, O = \left(\sF_1[\cdot],\dots,\sF_t[\cdot],\GSim\right), B[\cdot], C[\cdot].$
    \end{itemize}

\end{tcolorbox}

We now make a few remarks.

\begin{itemize}
    \item Throughout the remainder of the proof, we will be working with \emph{fully-deterministic} $\LM$ quantum programs for a given input $x^*$, i.e.
    \[\Pr[\LMEval(x^*,\ket{\psi},\{L_i\}_{i \in [t+1]},\{\theta_i\}_{i \in [t+1]}, \{f_i\}_{i \in [t+1]},g) \to Q(x^*)] = 1.\] Indeed, recall that we performed a post-selection on the correct outcome $Q(x^*)$ during the proof of \cref{claim:1-2} above.
    
    \item Whenever we reference an input $x^*$ and partial measurement results $\{v^*_\iota,r^*_\iota\}_{\iota \in [\tau]}$, we always mean measurement results that occur with \emph{non-zero} probability, i.e.\ they are in the support of the partial evaluation of $\ket{\psi}$ on input $x^*$.

    \item For $\tau = 0$ (i.e.\ no partial measurements), the above distribution $\ParMeas$ is identical to $\QObf_2$ augmented with the set $B[x^*]$ as defined in the proof of \cref{claim:1-2} (and $C[x^*]$ is undefined in this case since it requires $\tau \geq 1$).
    
    \item For any input $x^*$ and full set of measurement results $\{v^*_\iota,r^*_\iota\}_{\iota \in [t+1]}$, the set $B[x^*,\{v^*_\iota,r^*_\iota\}_{\iota \in [t+1]}]$ is empty, by virtue of the fact that these measurement results occur with non-zero probability, and the program outputs $Q(x^*)$ with probability 1.

    \item In the remainder of the proof, we will make use of the notation  $\ket{\psi[x^*,\{v^*_\iota,r^*_\iota\}_{\iota \in [\tau]}]}$ as defined in \cref{subsec:notation}.
\end{itemize}

Before proving the main inductive lemma of this section, we show the following statement, which essentially says that it is hard to find an element of

\[B[x^*,\{v^*_\iota,r^*_\iota\}_{\iota \in [\tau-1]}] \setminus B[x^*,\{v^*_\iota,r^*_\iota\}_{\iota \in [\tau]}]\] given the authenticated version of $\ket{\psi[x^*,\{v^*_\iota,r^*_\iota\}_{\iota \in [\tau]}]}$ and the oracles with $x^*$ and $\{v^*_\iota,r^*_\iota\}_{\iota \in [\tau]}$ hard-coded. The meat of this proof is actually deferred to the following section, in which we prove the "hardness of mapping" lemma, \cref{lemma:mapping-hardness}.

\begin{lemma}\label{lemma:parmeas-difference}
    For any input $x^*$, $\tau \in [1,\dots,t+1]$, measurement results $\{v^*_\iota,r^*_\iota\}_{\iota \in [\tau]}$, and QPQ unitary $U$, it holds that 

    \begin{align*}
        \bigg|&\E\left[\big\|\Pi\left[B\left[x^*,\{v^*_\iota,r^*_\iota\}_{\iota \in [\tau-1]}\right]\right]U^{\sF_1[x^*,\{v^*_\iota,r^*_\iota\}_{\iota \in [\tau-1]}],\dots,\sF_t[x^*,\{v^*_\iota,r^*_\iota\}_{\iota \in [\tau-1]}],\GSim}\ket*{\widetilde{\psi}}\big\|^2\right] \\
        &- \E\left[\big\|\Pi\left[B\left[x^*,\{v^*_\iota,r^*_\iota\}_{\iota \in [\tau]}\right]\right]U^{\sF_1[x^*,\{v^*_\iota,r^*_\iota\}_{\iota \in [\tau]}],\dots,\sF_t[x^*,\{v^*_\iota,r^*_\iota\}_{\iota \in [\tau]}],\GSim}\ket*{\widetilde{\psi}}\big\|^2\right]\bigg| \leq 2^{-\Omega(\kappa)},
    \end{align*}

    where both expectations are over 

    \[\ket*{\widetilde{\psi}},(\sF_1[\cdot],\sF_t[\cdot],\GSim),B[\cdot],C[\cdot] \gets \ParMeas(1^\secp,\ket{\psi[x^*,\{v^*_\iota,r^*_\iota\}_{\iota \in [\tau]}]}).\]

\end{lemma}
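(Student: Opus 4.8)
\textbf{Proof plan for \cref{lemma:parmeas-difference}.}

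The idea is to observe that the oracles on the two sides of the inequality are identical except on the set $C[x^*,\{v^*_\iota,r^*_\iota\}_{\iota \in [\tau]}]$, so that by \cref{lemma:puncture} (the standard oracle hybrid argument) it suffices to bound the probability that the adversary, running with the $\{v^*_\iota,r^*_\iota\}_{\iota \in [\tau]}$-hard-coded oracles on the state $\ket*{\widetilde{\psi}}$ that authenticates $\ket{\psi[x^*,\{v^*_\iota,r^*_\iota\}_{\iota \in [\tau]}]}$, ever produces (in its query register) an element of $C[x^*,\{v^*_\iota,r^*_\iota\}_{\iota \in [\tau]}]$. Concretely, one would show that for some QPQ unitary $U'$,
\[
\E\left[\big\|\Pi\left[C\left[x^*,\{v^*_\iota,r^*_\iota\}_{\iota \in [\tau]}\right]\right]U'^{\sF_1[x^*,\{v^*_\iota,r^*_\iota\}_{\iota \in [\tau]}],\dots,\GSim}\ket*{\widetilde{\psi}}\big\|^2\right] = 2^{-\Omega(\kappa)},
\]
and that the two conditions (1) the oracles agree outside $C[\cdot]$ and (2) this membership-probability bound are exactly the hypotheses of \cref{lemma:puncture}, which then yields the claimed $2^{-\Omega(\kappa)}$ bound on the difference of the two acceptance probabilities. (The fact that $B[x^*,\{v^*_\iota,r^*_\iota\}_{\iota\in[\tau-1]}]\setminus B[x^*,\{v^*_\iota,r^*_\iota\}_{\iota\in[\tau]}] \subseteq C[\cdot]$ and that $B[x^*,\{v^*_\iota,r^*_\iota\}_{\iota\in[\tau]}]$-membership is preserved under the oracle switch — both by construction of $C[\cdot]$ — is what lets us replace the two different $B$-projectors by a single statement about $C[\cdot]$.)

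So the real content is the membership bound for $C[x^*,\{v^*_\iota,r^*_\iota\}_{\iota \in [\tau]}]$. Unpacking the definition of $C[\cdot]$: an element of it is a query on which hard-coding the $\tau$-th measurement result $(v^*_\tau,r^*_\tau)$ actually changes the oracle's behavior — i.e. a query containing vectors $\widetilde v_1,\dots,\widetilde v_i$ (or $\widetilde v_1,\dots,\widetilde v_{t+1}$) that verify under $\Ver_{k,\cdot,\cdot}(\cdot)$ but \emph{decode} to a $\tau$-th logical measurement outcome different from $(v^*_\tau,r^*_\tau)$. Thus producing such a query amounts to taking the authenticated state $\Enc_k(\ket{\psi[x^*,\{v^*_\iota,r^*_\iota\}_{\iota \in [\tau]}]})$ — whose $(V_\tau,W_\tau)$ registers are, by definition of $\ket{\psi[\cdot]}$, collapsed to logical outcome $(v^*_\tau,r^*_\tau)$ — and mapping it (using only $\Ver_{k,\cdot,\cdot}(\cdot)$-style oracle access, since with $\{v^*_\iota,r^*_\iota\}_{\iota\in[\tau]}$ hard-coded the oracles never decode the first $\tau$ blocks) onto the support of a state whose $\tau$-th logical outcome is something else. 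This is precisely a violation of the mapping security of the authentication scheme (\cref{def:auth-mapping-sec}, implied by \cref{thm:auth-sec}). The only subtlety is that the downstream oracles $\sF_{\tau+1}[\cdot],\dots,\GSim$ still perform a genuine $\Dec$ on the \emph{later} blocks $V_{\tau+1},\dots$ and $W_{\tau+1},\dots$; but the collapsing-the-$\sF$-oracles machinery of \cref{subsec:proof-overview} (and the $W$-wire collapse already built into $\QObf_2$ via \cref{cor:wmapping}) lets us argue that the adversary cannot exploit this to help it map the $\tau$-th block — formally this is packaged by the hardness-of-mapping lemma \cref{lemma:mapping-hardness} referenced in the statement, which is proved in the following section and whose hypotheses this setup is designed to match.

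The plan, then, is: (i) state that the two oracle tuples differ only on $C[x^*,\{v^*_\iota,r^*_\iota\}_{\iota\in[\tau]}]$ and that membership in $B[x^*,\{v^*_\iota,r^*_\iota\}_{\iota\in[\tau]}]$ is an oracle-switch-invariant event together with the inclusion $B[x^*,\{v^*_\iota,r^*_\iota\}_{\iota\in[\tau-1]}]\setminus B[x^*,\{v^*_\iota,r^*_\iota\}_{\iota\in[\tau]}]\subseteq C[x^*,\{v^*_\iota,r^*_\iota\}_{\iota\in[\tau]}]$, reducing the left-hand side to a difference controlled by the probability of querying inside $C[\cdot]$; (ii) invoke \cref{lemma:mapping-hardness} (from the next subsection) to conclude that this querying probability is $2^{-\Omega(\kappa)}$, since finding an element of $C[\cdot]$ means changing the $\tau$-th authenticated measurement outcome given only verification-oracle access; (iii) apply \cref{lemma:puncture} with $\epsilon = 2^{-\Omega(\kappa)}$ and $q = \poly(\secp)$ (so $4q\sqrt\epsilon = 2^{-\Omega(\kappa)}$) to finish. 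I expect the main obstacle to be cleanly verifying step (ii) — i.e. that the distracting downstream decoding of the $V_{\tau+1},\dots,V_{t+1},W_{\tau+1},\dots$ blocks in $\sF_{\tau+1}[\cdot],\dots,\GSim$ really can be simulated from verification-oracle access on the first $\tau$ blocks, so that the whole experiment fits the form required by the authentication scheme's mapping security. This is exactly why the heavy lifting is isolated into \cref{lemma:mapping-hardness} in the following section, and here I would simply cite it with the appropriate instantiation of its parameters.
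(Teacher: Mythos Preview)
Your plan is correct and matches the paper's proof: reduce via \cref{lemma:puncture} to bounding the probability of landing in $C[x^*,\{v^*_\iota,r^*_\iota\}_{\iota\in[\tau]}]$, then invoke \cref{lemma:mapping-hardness}. Two concrete details the paper spells out that your sketch glosses over: (1) not every element of $C[\cdot]$ corresponds to a wrong decoded $(v_\tau,w_\tau)$ --- for $i>\tau$ the oracles also differ on inputs whose label $\ell_\tau$ equals $H(\dots,1-r^*_\tau)$, and this case is handled separately by observing that the $[\tau]$-hard-coded oracles never emit such a label, so it can only be guessed with probability $2^{-\kappa}$; (2) every element of $C[\cdot]$ necessarily carries a valid signature on $x^*$, which is why the projector $\Pi[x^*,\vk]$ appears in \cref{lemma:mapping-hardness} and is what you need to match when instantiating it.
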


\begin{proof}

Note that the set $C[x^*,\{v^*_\iota,r^*_\iota\}_{\iota \in [\tau]}]$ includes all elements of \[B[x^*,\{v^*_\iota,r^*_\iota\}_{\iota \in [\tau-1]}] \setminus B[x^*,\{v^*_\iota,r^*_\iota\}_{\iota \in [\tau]}]\] and all inputs on which $\sF_i[x^*,\{v^*_\iota,r^*_\iota\}_{\iota \in [\tau-1]}]$ and $\sF_i[x^*,\{v^*_\iota,r^*_\iota\}_{\iota \in [\tau]}]$ differ for any $i \in [t]$. Thus, by \cref{lemma:puncture} (a standard oracle hybrid argument) it suffices to show that
    \[\E\left[\big\|\Pi\left[C\left[x^*,\{v^*_\iota,r^*_\iota\}_{\iota \in [\tau]}\right]\right]U^{\sF_1[x^*,\{v^*_\iota,r^*_\iota\}_{\iota \in [\tau]}],\dots,\sF_t[x^*,\{v^*_\iota,r^*_\iota\}_{\iota \in [\tau]}],\GSim}\ket*{\widetilde{\psi}}\big\|^2\right] \leq 2^{-\Omega(\kappa)},\] where the expectation is over \[\ket*{\widetilde{\psi}},(\sF_1[\cdot],\dots,\sF_t[\cdot],\GSim),B[\cdot],C[\cdot] \gets \ParMeas(1^\secp,\ket*{\psi[x^*,\{v^*_\iota,r^*_\iota\}_{\iota \in [\tau]}]}).\]

Now we consider all possible elements of the set $C[x^*,\{v^*_\iota,r^*_\iota\}_{\iota \in [\tau]}]$. By inspecting the definition, we see that any element of $C[x^*,\{v^*_\iota,r^*_\iota\}_{\iota \in [\tau]}]$ must fall into one of the following categories:

\begin{itemize}
    \item An input to $\sF_\tau[x^*,\{v_\iota^*,r_\iota^*\}_{\iota \in [\tau]}]$ that contains a sub-string $(\widetilde{v}_\tau,\widetilde{w}_\tau)$ such that \[f_\tau^{x^*,r_1^*,\dots,r^*_{\tau-1}}(v_1^*,\dots,v_{\tau-1}^*,\Dec_{k,L_\tau \dots L_1, \theta_\tau[V_\tau,W_\tau]}(\widetilde{v}_\tau,\widetilde{w}_\tau)) = (\cdot, 1-r_\tau^*),\] where $\Dec_{k,L_\tau \dots L_1, \theta_\tau[V_\tau,W_\tau]}(\widetilde{v}_\tau,\widetilde{w}_\tau) = (v_\tau,w_\tau) \neq \bot$.

    \item An input to $\sF_i[x^*,\{v_\iota^*,r_\iota^*\}_{\iota \in [\tau]}]$ for $i > \tau$ or an element of $B[x^*,\{v^*_\iota,r^*_\iota\}_{\iota \in [\tau-1]}] \setminus B[x^*,\{v^*_\iota,r^*_\iota\}_{\iota \in [\tau]}]$ that contains either: 
    
    \begin{itemize}
        \item $\widetilde{v}_\tau$ such that $\Dec_{k,L_\tau \dots L_1, \theta_\tau[V_\tau]}(\widetilde{v}_\tau) \notin \{v^*_\tau,\bot\}$, or
        \item $\ell_\tau$ such that $\ell_\tau = H(\dots, 1-r_\tau^*)$.
    \end{itemize} 
    
\end{itemize}

First, notice that by definition, the oracles $\sF_1[x^*,\{v^*_\iota,r^*_\iota\}_{\iota \in [\tau]}],\dots,\sF_t[x^*,\{v^*_\iota,r^*_\iota\}_{\iota \in [\tau]}],\GSim$ never output a label $H(\dots,1-r^*_\tau)$. Thus, $U$ can only successfully guess an $\ell_\tau$ such that $\ell_\tau = H(\dots,1-r^*_\tau)$ with probability $2^{-\kappa}$ over the randomness of the random oracle. 

Now, define $\Pi[\neg r^*_\tau]$ to be the projection onto strings $(\widetilde{v}_\tau,\widetilde{w}_\tau)$ such that \[f_\tau^{x^*,r_1^*,\dots,r^*_{\tau-1}}(v_1^*,\dots,v_{\tau-1}^*,\Dec_{k,L_\tau \dots L_1, \theta_\tau[V_\tau,W_\tau]}(\widetilde{v}_\tau,\widetilde{w}_\tau)) = (\cdot, 1-r_\tau^*),\] where $\Dec_{k,L_\tau \dots L_1, \theta_\tau[V_\tau,W_\tau]}(\widetilde{v}_\tau,\widetilde{w}_\tau) = (v_\tau,w_\tau) \neq \bot,$ and define $\Pi[\neg v^*_\tau]$ to be the projection onto strings $\widetilde{v}_\tau$ such that 
\[ \Dec_{k,L_\tau\dots L_1,\theta_\tau[V_\tau]}(\widetilde{v}_\tau) \notin \{v^*_\tau,\bot\}.\] Then, define \[\Pi[\neg (v^*_\tau,r^*_\tau)] \coloneqq \Pi[\neg r^*_\tau] + \Pi[\neg v^*_\tau].\]

Finally, given the sampled signature token verification key $\vk$, define the projector onto valid signatures of $x^*$: \[\Pi[x^*,\vk] \coloneqq \sum_{\sigma : \TokVer(\vk,x^*,\sigma) = \top}\ketbra{\sigma}{\sigma},\] and note that any element of $C\left[x^*,\{v^*_\iota,r^*_\iota\}_{\iota \in [\tau]}\right]$ must include a valid signature of $x^*$.
 
Now, by the preceding observations, we have that 
\begin{align*}
&\E\left[\big\|\Pi\left[C\left[x^*,\{v^*_\iota,r^*_\iota\}_{\iota \in [\tau]}\right]\right]U^{\sF_1[x^*,\{v^*_\iota,r^*_\iota\}_{\iota \in [\tau]}],\dots,\sF_t[x^*,\{v^*_\iota,r^*_\iota\}_{\iota \in [\tau]}],\GSim}\ket*{\widetilde{\psi}}\big\|^2\right]\\
&\leq \E\left[\big\|\left(\Pi[x^*,\vk] \otimes \Pi[\neg (v^*_\tau,r^*_\tau)]\right)U^{\sF_1[x^*,\{v^*_\iota,r^*_\iota\}_{\iota \in [\tau]}],\dots,\sF_t[x^*,\{v^*_\iota,r^*_\iota\}_{\iota \in [\tau]}],\GSim}\ket*{\widetilde{\psi}}\big\|^2\right] + 2^{-\Omega(\kappa)}\\ &\leq  2^{-\Omega(\kappa)},
\end{align*}
where the first inequality is due to the observation that $\ell_\tau$ such that $\ell_\tau = H(\dots,1-r^*_\tau)$ can only be guessed with probability $2^{-\kappa}$, and the second inequality is \cref{lemma:mapping-hardness} proven in the next section. This completes the proof.
\end{proof}

Now, we show the main lemma of the section.

\begin{lemma}\label{lemma:induction-1}

There exist a constant $c > 0$ such that for any input $x^*$, $\tau \in [0,\dots,t+1]$, measurement results $\{v^*_\iota,r^*_\iota\}_{\iota \in [\tau]}$, and QPQ unitary $U$, it holds that

\[\E\left[\big\|\Pi\left[B\left[x^*,\{v^*_\iota,r^*_\iota\}_{\iota \in [\tau]}\right]\right]U^{\sF_1[x^*,\{v^*_\iota,r^*_\iota\}_{\iota \in [\tau]}],\dots,\sF_t[x^*,\{v^*_\iota,r^*_\iota\}_{\iota \in [\tau]}],\GSim}\ket*{\widetilde{\psi}}\big\|^2\right] \leq 2^{3n(t+1-\tau)- c\kappa},\] where the expectation is over \[\ket*{\widetilde{\psi}},(\sF_1[\cdot],\dots,\sF_t[\cdot],\GSim),B[\cdot],C[\cdot] \gets \ParMeas(1^\secp,\ket*{\psi[x^*,\{v^*_\iota,r^*_\iota\}_{\iota \in [\tau]}]}).\]

\end{lemma}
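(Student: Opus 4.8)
\textbf{Proof plan for \cref{lemma:induction-1}.}

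The plan is to induct on $t+1-\tau$, i.e.\ on the number of measurement layers that remain after the $\tau$ that have been hard-coded. The base case is $\tau = t+1$: here all measurement results are fixed, and as remarked just above, the set $B[x^*,\{v^*_\iota,r^*_\iota\}_{\iota \in [t+1]}]$ is \emph{empty} (the fixed measurement results occur with nonzero probability and the program — after the post-selection performed in \cref{claim:1-2} — outputs $Q(x^*)$ with certainty on the corresponding branch). Hence $\Pi[B[\cdot]] = 0$ and the claimed bound $2^{3n\cdot 0 - c\kappa} = 2^{-c\kappa}$ holds trivially (indeed the LHS is exactly $0$).

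For the inductive step, suppose the bound holds for $\tau+1$; we want it for $\tau$. First, apply \cref{lemma:parmeas-difference} to pass from the oracles with $\{v^*_\iota,r^*_\iota\}_{\iota \in [\tau]}$ hard-coded to the oracles with one fewer measurement hard-coded — wait, that is the wrong direction. Rather: I would start from the $\tau$-hard-coded experiment and \emph{insert} the $(\tau+1)$'st measurement on the authenticated state. Concretely, since $\ket*{\psi[x^*,\{v^*_\iota,r^*_\iota\}_{\iota \in [\tau]}]}$ decomposes (by applying the $(\tau+1)$'st logical measurement $M_{\theta_{\tau+1},f_{\tau+1}^{x^*,r^*_1,\dots,r^*_\tau}}$ to it, pulled back through $L_{\tau+1}$) as a sum over outcomes $(v_{\tau+1},r_{\tau+1})$ of the components $\ket*{\psi[x^*,\{v^*_\iota,r^*_\iota\}_{\iota \in [\tau+1]}]}$, the authenticated state $\ket*{\widetilde\psi}$ decomposes correspondingly under the projectors $\Pi^{x^*,r^*_1,\dots,r^*_\tau}[v_{\tau+1},r_{\tau+1}]$ acting on the encoded $(V_{\tau+1},W_{\tau+1})$ registers (via $\LinEval_{L_{\tau+1}}$ and $\widetilde M$). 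There are at most $2^{n}$ such outcomes (the measurement touches at most $n$ wires), so by \cref{lemma:project} (the Cauchy–Schwarz lemma),
\[
\E\left[\big\|\Pi[B[x^*,\{v^*_\iota,r^*_\iota\}_{\iota \in [\tau]}]]U^{O_\tau}\ket*{\widetilde\psi}\big\|^2\right] \;\le\; 2^{n}\sum_{v_{\tau+1},r_{\tau+1}} \E\left[\big\|\Pi[B[\cdot]]\,U^{O_\tau}\,\Pi^{x^*,\dots}[v_{\tau+1},r_{\tau+1}]\ket*{\widetilde\psi}\big\|^2\right],
\]
where $O_\tau$ abbreviates the oracle tuple with $\{v^*_\iota,r^*_\iota\}_{\iota \in [\tau]}$ hard-coded. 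Now fix one outcome $(v_{\tau+1},r_{\tau+1})$; the projected state is (up to normalization) the authenticated version of $\ket*{\psi[x^*,\{v^*_\iota,r^*_\iota\}_{\iota \in [\tau+1]}]}$ with $v^*_{\tau+1}=v_{\tau+1}$, $r^*_{\tau+1}=r_{\tau+1}$. For this term I would: (i) use \cref{lemma:parmeas-difference} to replace $B[x^*,\{v^*_\iota,r^*_\iota\}_{\iota \in [\tau]}]$ by $B[x^*,\{v^*_\iota,r^*_\iota\}_{\iota \in [\tau+1]}]$ and simultaneously replace the $\tau$-hard-coded oracles by the $(\tau+1)$-hard-coded oracles $O_{\tau+1}$, at the cost of $2^{-\Omega(\kappa)}$ — this is legitimate precisely because that lemma is stated for the distribution $\ParMeas$ instantiated on $\ket{\psi[x^*,\{v^*_\iota,r^*_\iota\}_{\iota\in[\tau+1]}]}$, which is the state we now have in hand; (ii) invoke the inductive hypothesis at level $\tau+1$, bounding the resulting quantity by $2^{3n(t+1-(\tau+1))-c\kappa} = 2^{3n(t-\tau)-c\kappa}$. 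Summing over the $\le 2^n$ outcomes and accounting for the $2^n$ prefactor from \cref{lemma:project} yields
\[
2^{n}\cdot 2^{n}\cdot\left(2^{3n(t-\tau)-c\kappa} + 2^{-\Omega(\kappa)}\right) \;=\; 2^{3n(t+1-\tau)-c\kappa} + 2^{2n-\Omega(\kappa)},
\]
and since $\kappa \ge n^4 = \omega(n)$ the second term is absorbed (for a suitable choice of the constant $c$, and possibly after a harmless adjustment of $c$ in the first term).

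\textbf{Main obstacle.} The delicate point is step (i) — the interface between \cref{lemma:parmeas-difference} and the inductive hypothesis. \cref{lemma:parmeas-difference} is phrased as a difference of two norms, one with $B[\cdot]$ and oracles at level $\tau-1$ and one at level $\tau$; I need the analogous statement shifted up by one (level $\tau$ vs.\ level $\tau+1$), run on the state $\ket{\psi[x^*,\{v^*_\iota,r^*_\iota\}_{\iota\in[\tau+1]}]}$, and I must check that the component $\Pi^{x^*,\dots}[v_{\tau+1},r_{\tau+1}]\ket*{\widetilde\psi}$ obtained from decomposing $\QObf$-style experiment really is distributed as the authenticated state produced by $\ParMeas(1^\secp,\ket{\psi[x^*,\{v^*_\iota,r^*_\iota\}_{\iota\in[\tau+1]}]})$ — this requires the correctness property of the authentication scheme (\cref{def:auth-correct}) to commute the encoding past the measurement, exactly as in the $\cH_4$ analysis in the correctness proof, together with the observation that the $\ParMeas$ oracles with $\{v^*_\iota,r^*_\iota\}_{\iota\in[\tau+1]}$ hard-coded are \emph{identical functions} regardless of which decomposition we use to arrive at them. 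One must also be careful that the adversary $U$ in the inductive hypothesis is allowed to depend on $(v_{\tau+1},r_{\tau+1})$ — it can, since the hypothesis quantifies over all QPQ unitaries $U$ for each fixed choice of hard-coded values — and that the $2^n$-outcome count does not secretly blow up across the $t+1-\tau$ layers, which is why the exponent accumulates additively as $3n(t+1-\tau)$ rather than multiplicatively. Everything else is bookkeeping with \cref{lemma:puncture} and the $\kappa = \omega(n)$ slack.
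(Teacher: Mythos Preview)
Your proposal is correct and follows essentially the same approach as the paper's proof: downward induction on $\tau$ with the base case $\tau=t+1$ trivial (empty $B$), then for the inductive step decompose the authenticated state over the next measurement outcomes via \cref{lemma:project}, apply \cref{lemma:parmeas-difference} term-by-term to advance the hard-coding by one layer, and invoke the inductive hypothesis. Your ``main obstacle'' (identifying the projected component with the $\ParMeas$ experiment on the deeper state) is exactly the point the paper uses implicitly via linearity of $\Enc_k$ and \cref{def:auth-correct}; the only slip is the displayed arithmetic, where $2^{2n}\cdot 2^{3n(t-\tau)-c\kappa}=2^{3n(t+1-\tau)-n-c\kappa}\le 2^{3n(t+1-\tau)-c\kappa}$ rather than equality, which is harmless.
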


\begin{proof}

We will show this by induction on $\tau$, starting with $\tau = t+1$ and ending with $\tau = 0$. The base case $(\tau = t+1)$ is trivial because the set $B[x^*,\{v^*_\iota,r^*_\iota\}_{\iota \in [t+1]}]$ is empty, as noted above. 

Now, we will let $c$ be a constant such that $2^{-c\kappa}$ is an upper bound on the expression in the statement of \cref{lemma:parmeas-difference}. For the inductive step, suppose that \cref{lemma:induction-1} holds for some $\tau \in [t+1]$. Consider any $x^*$ and measurement results $\{v^*_\iota,r^*_\iota\}_{\iota \in [\tau]}$, and define the following two distributions over $\ket*{\widetilde{\psi}},\allowbreak(\sF_1[\cdot],\dots,\sF_t[\cdot],\allowbreak\GSim),\allowbreak B[\cdot]$ (dropping the set $C[\cdot]$ since we don't need it for this proof):

\begin{itemize}
    \item $\cD: \ParMeas(1^\secp,\ket*{\psi[x^*,\{v^*_\iota,r^*_\iota\}_{\iota \in [\tau-1]}]})$.
    \item $\cD[v^*_\tau,r^*_\tau]: \ParMeas(1^\secp,\ket*{\psi[x^*,\{v^*_\iota,r^*_\iota\}_{\iota \in [\tau]}]})$.
\end{itemize}

To show that \cref{lemma:induction-1} holds for $\tau-1$, we have that  

\begin{align*}
    &\E_{\ket*{\widetilde{\psi}},(\sF_1[\cdot],\dots,\sF_t[\cdot],\GSim),B[\cdot] \gets \cD}\left[\big\|\Pi\left[B\left[x^*,\{v^*_\iota,r^*_\iota\}_{\iota \in [\tau-1]}\right]\right]U^{\sF_1[x^*,\{v^*_\iota,r^*_\iota\}_{\iota \in [\tau-1]}],\dots,\GSim}\ket*{\widetilde{\psi}}\big\|^2\right] \\
    &\leq 2^n \cdot \sum_{v^*_\tau,r^*_\tau}\E_{\substack{\ket*{\widetilde{\psi}},(\sF_1[\cdot],\dots,\sF_t[\cdot],\GSim),\\B[\cdot] \gets \cD[v^*_\tau,r^*_\tau]}}\left[\big\|\Pi\left[B\left[x^*,\{v^*_\iota,r^*_\iota\}_{\iota \in [\tau-1]}\right]\right]U^{\sF_1[x^*,\{v^*_\iota,r^*_\iota\}_{\iota \in [\tau-1]}],\dots,\GSim}\ket*{\widetilde{\psi}}\big\|^2\right]\\
    &\leq 2^n \cdot \sum_{v^*_\tau,r^*_\tau}\E_{\substack{\ket*{\widetilde{\psi}},(\sF_1[\cdot],\dots,\sF_t[\cdot],\GSim),\\B[\cdot] \gets \cD[v^*_\tau,r^*_\tau]}}\left[\big\|\Pi\left[B\left[x^*,\{v^*_\iota,r^*_\iota\}_{\iota \in [\tau]}\right]\right]U^{\sF_1[x^*,\{v^*_\iota,r^*_\iota\}_{\iota \in [\tau]}],\dots,\GSim}\ket*{\widetilde{\psi}}\big\|^2\right] + 2^{2n-c\kappa}\\
    &\leq 2^{2n + 3n(t+1-\tau)-c\kappa} + 2^{2n-c\kappa} \\
    &\leq 2^{3n(t+1-(\tau-1))-c\kappa-n} + 2^{2n-c\kappa} \\
    &\leq 2^{3n(t+1-(\tau-1))-c\kappa},
\end{align*}

where 

\begin{itemize}
    \item The first inequality follows from \cref{lemma:project} (an application of Cauchy-Schwarz).
    \item The second inequality follows from \cref{lemma:parmeas-difference} proven above.

    \item The third inequality follows from \cref{lemma:induction-1} for $\tau$ (the induction hypothesis).
    \item The final inequality follows because the two summands can each be bounded by the quantity $2^{3n(t+1-(\tau-1))-c\kappa -1}$.
\end{itemize}

\end{proof}

\subsection{Hardness of Mapping}\label{subsec:mapping-hardness}

Our final step is to prove the "hardness of mapping" lemma, \cref{lemma:mapping-hardness}, that was required for \cref{lemma:parmeas-difference} above. But first, we will need to introduce some "partial simulation" hybrid distributions $\ParSim_i$. We let $\ParSim_0 = \ParMeas$ as defined in the preceding section. We will change the distribution gradually until it corresponds to a simulated distribution, equivalent to $\QObf_5$ from \cref{subsec:main-theorem}.

\begin{tcolorbox}[breakable, enhanced]\footnotesize

$\ParSim_1(1^\secp,\ket{\psi})$:
\begin{itemize}
\item Sample $k \gets \Gen(1^\kappa,n)$, and compute $\ket*{\psi_k} = \Enc_k(\ket{\psi})$.
        \item Sample a signature token $(\vk,\ket{\sk}) \gets \TokGen(1^\kappa)$.
        \item Let $H : \{0,1\}^* \to \{0,1\}^\kappa$ be a random oracle. 
        \item For each $i \in [t]$, define the function $\sF_i[x^*,\{v^*_\iota,r^*_\iota\}_{\iota \in [\tau]},\textcolor{red}{\{w^*_\iota\}_{\iota \in [\tau+1,t]}}](x,\sigma_{x},\widetilde{v}_1,\dots,\widetilde{v}_i,\widetilde{w}_i,\ell_1,\dots,\ell_{i-1})$: 
        \begin{itemize}
            \item If $x \neq x^*$, output $\sF_i\left(x,\sigma_{x},\widetilde{v}_1,\dots,\widetilde{v}_i,\widetilde{w}_i,\ell_1,\dots,\ell_{i-1}\right)$, where $\sF_i$ is defined as in $\QObf$.
            \item Output $\bot$ if $\TokVer(\vk,x,\sigma_x) = \bot$.
    
            \item For each $\iota \in [\min\{\tau,i-1\}]$, let \[\ell_{\iota,0} = H(x,\sigma_x,\widetilde{v}_1,\dots,\widetilde{v}_\iota,\ell_1,\dots,\ell_{\iota-1},0), ~~ \ell_{\iota,1} = H(x,\sigma_x,\widetilde{v}_1,\dots,\widetilde{v}_\iota,\ell_1,\dots,\ell_{\iota-1},1),\] and output $\bot$ if $\ell_{\iota,0} = \ell_{\iota,1}$ or $\ell_\iota \notin \{\ell_{\iota,0},\ell_{\iota,1}\}$. 
            
            \item If $i \leq \tau$:
            \begin{itemize}
                
                \item Output $\bot$ if $\Ver_{k,L_i \dots L_1,\theta_i}(\widetilde{v}_1,\dots,\widetilde{v}_i,\widetilde{w}_i) = \bot$. 
                 \item Set $\ell_i \coloneqq H(x,\sigma_x,\widetilde{v}_1,\dots,\widetilde{v}_i,\ell_1,\dots,\ell_{i-1},r^*_i)$, and output $(\ell_i,\widetilde{v}_i)$. 
            \end{itemize}

            \item If $i > \tau$:
            \begin{itemize}
                \item For each $\iota \in [\tau+1,i-1]$, let \[\ell_{\iota,0} = H(x,\sigma_x,\widetilde{v}_1,\dots,\widetilde{v}_\iota,\ell_1,\dots,\ell_{\iota-1},0), ~~ \ell_{\iota,1} = H(x,\sigma_x,\widetilde{v}_1,\dots,\widetilde{v}_\iota,\ell_1,\dots,\ell_{\iota-1},1),\] and output $\bot$ if $\ell_{\iota,0} = \ell_{\iota,1}$ or $\ell_\iota \notin \{\ell_{\iota,0},\ell_{\iota,1}\}$. Otherwise, let $r_\iota$ be such that $\ell_\iota = \ell_{\iota,r_\iota}$.
                \item Output $\bot$ if $\Ver_{k,L_i \dots L_1,\theta_i[V_1,\dots,V_\tau,\textcolor{red}{W_i}]}(\widetilde{v}_1,\dots,\widetilde{v}_\tau,\textcolor{red}{\widetilde{w}_i}) = \bot$. 
                \item Compute $(v_{\tau+1},\dots,v_i) = \Dec_{k,L_i\dots L_1,\theta_i[V_{\tau+1},\dots,V_i]}(\widetilde{v}_{\tau+1},\dots,\widetilde{v}_i)$, and output $\bot$ if the result is $\bot$.
                \item Compute $(\cdot,r_i) = f_i^{x,r^*_1,\dots,r^*_\tau,r_{\tau+1},\dots,r_{i-1}}(v^*_1,\dots,v^*_\tau,v_{\tau+1},\dots,v_i,\textcolor{red}{w^*_i})$.
                 \item Set $\ell_i \coloneqq H(x,\sigma_x,\widetilde{v}_1,\dots,\widetilde{v}_i,\ell_1,\dots,\ell_{i-1},r_i)$, and output $(\ell_i,\widetilde{v}_i)$. 
            \end{itemize}
        \end{itemize}
        \item Define $\GSim$ as in $\QObf_2$.

        \item Output $\ket*{\widetilde{\psi}} = \ket{\psi_k}\ket{\sk}, O = \left(\sF_1[\cdot],\dots,\sF_t[\cdot],\GSim\right).$
    \end{itemize}

    \vspace{0.2cm}\hrule\vspace{0.3cm}

    $\ParSim_2(1^\secp,\ket{\psi})$:
    \begin{itemize}
        \item Sample $k \gets \Gen(1^\kappa,n)$, and compute $\ket*{\psi_k} = \Enc_k(\ket{\psi})$.
        \item Sample a signature token $(\vk,\ket{\sk}) \gets \TokGen(1^\kappa)$.
        \item Let $H : \{0,1\}^* \to \{0,1\}^\kappa$ be a random oracle. 
        \item For each $i \in [t]$, define the function $\sF_i[x^*,\{v^*_\iota,r^*_\iota\}_{\iota \in [\tau]},\{w^*_\iota\}_{\iota \in [\tau+1,t]}](x,\sigma_{x},\widetilde{v}_1,\dots,\widetilde{v}_i,\widetilde{w}_i,\ell_1,\dots,\ell_{i-1})$: 
        \begin{itemize}
            \item If $x \neq x^*$, output $\sF_i\left(x,\sigma_{x},\widetilde{v}_1,\dots,\widetilde{v}_i,\widetilde{w}_i,\ell_1,\dots,\ell_{i-1}\right)$, where $\sF_i$ is defined as in $\QObf$.
            \item Output $\bot$ if $\TokVer(\vk,x,\sigma_x) = \bot$.
    
            \item For each $\iota \in [\textcolor{red}{i-1}]$, let \[\ell_{\iota,0} = H(x,\sigma_x,\widetilde{v}_1,\dots,\widetilde{v}_\iota,\ell_1,\dots,\ell_{\iota-1},0), ~~ \ell_{\iota,1} = H(x,\sigma_x,\widetilde{v}_1,\dots,\widetilde{v}_\iota,\ell_1,\dots,\ell_{\iota-1},1),\] and output $\bot$ if $\ell_{\iota,0} = \ell_{\iota,1}$ or $\ell_\iota \notin \{\ell_{\iota,0},\ell_{\iota,1}\}$. 
            
            \item If $i \leq \tau$:
            \begin{itemize}
                
                \item Output $\bot$ if $\Ver_{k,L_i \dots L_1,\theta_i}(\widetilde{v}_1,\dots,\widetilde{v}_i,\widetilde{w}_i) = \bot$. 
                 \item Set $\ell_i \coloneqq H(x,\sigma_x,\widetilde{v}_1,\dots,\widetilde{v}_i,\ell_1,\dots,\ell_{i-1},r^*_i)$, and output $(\ell_i,\widetilde{v}_i)$. 
            \end{itemize}

            \item If $i > \tau$:
            \begin{itemize}
                \item Output $\bot$ if $\Ver_{k,L_i \dots L_1,\theta_i[V_1,\dots,V_\tau,W_i]}(\widetilde{v}_1,\dots,\widetilde{v}_\tau,\widetilde{w}_i) = \bot$. 
                \item Compute $(v_{\tau+1},\dots,v_i) = \Dec_{k,L_i\dots L_1,\theta_i[V_{\tau+1},\dots,V_i]}(\widetilde{v}_{\tau+1},\dots,\widetilde{v}_i)$, and output $\bot$ if the result is $\bot$.
                \item \textcolor{red}{For $\iota \in [\tau+1,i]$, compute $(\cdot,r_\iota) = f_\iota^{x,r^*_1,\dots,r^*_\tau,r_{\tau+1},\dots,r_{\iota-1}}(v^*_1,\dots,v^*_\tau,v_{\tau+1},\dots,v_\iota,w^*_\iota)$.}
                 \item Set $\ell_i \coloneqq H(x,\sigma_x,\widetilde{v}_1,\dots,\widetilde{v}_i,\ell_1,\dots,\ell_{i-1},r_i)$, and output $(\ell_i,\widetilde{v}_i)$. 
            \end{itemize}
        \end{itemize}
        \item Define $\GSim$ as in $\QObf_2$.

        \item Output $\ket*{\widetilde{\psi}} = \ket{\psi_k}\ket{\sk}, O = \left(\sF_1[\cdot],\dots,\sF_t[\cdot],\GSim\right).$
    \end{itemize}

    \vspace{0.2cm}\hrule\vspace{0.3cm}

    $\ParSim_3(1^\secp,\ket{\psi})$:
    \begin{itemize}
        \item Sample $k \gets \Gen(1^\kappa,n)$, and compute $\ket*{\psi_k} = \Enc_k(\ket{\psi})$.
        \item Sample a signature token $(\vk,\ket{\sk}) \gets \TokGen(1^\kappa)$.
        \item Let $H : \{0,1\}^* \to \{0,1\}^\kappa$ be a random oracle. 
        \item For each $i \in [t]$, define the function $\textcolor{red}{\FSim_i[x^*]}(x,\sigma_{x},\widetilde{v}_1,\dots,\widetilde{v}_i,\widetilde{w}_i,\ell_1,\dots,\ell_{i-1})$: 
        \begin{itemize}
            \item If $x \neq x^*$, output $\sF_i\left(x,\sigma_{x},\widetilde{v}_1,\dots,\widetilde{v}_i,\widetilde{w}_i,\ell_1,\dots,\ell_{i-1}\right)$, where $\sF_i$ is defined as in $\QObf$.
            \item Output $\bot$ if $\TokVer(\vk,x,\sigma_x) = \bot$.
            \item For each $\iota \in [i-1]$, let \[\ell_{\iota,0} = H(x,\sigma_x,\widetilde{v}_1,\dots,\widetilde{v}_\iota,\ell_1,\dots,\ell_{\iota-1},0), ~~ \ell_{\iota,1} = H(x,\sigma_x,\widetilde{v}_1,\dots,\widetilde{v}_\iota,\ell_1,\dots,\ell_{\iota-1},1),\] and output $\bot$ if $\ell_{\iota,0} = \ell_{\iota,1}$ or $\ell_\iota \notin \{\ell_{\iota,0},\ell_{\iota,1}\}$. 
            \item \textcolor{red}{Output $\bot$ if $\Ver_{k,L_i \dots L_1,\theta_i}(\widetilde{v}_1,\dots,\widetilde{v}_i,\widetilde{w}_i) = \bot$.}
            \item Set $\ell_i \coloneqq H(x,\sigma_x,\widetilde{v}_1,\dots,\widetilde{v}_i,\ell_1,\dots,\ell_{i-1},\textcolor{red}{0}),$ and output $(\widetilde{v}_i,\ell_i)$.
        \end{itemize}
        \item Define $\GSim$ as in $\QObf_2$.

        \item Output $\ket*{\widetilde{\psi}} = \ket{\psi_k}\ket{\sk}, O = \left(\FSim_1[\cdot],\dots,\FSim_t[\cdot],\GSim\right).$
    \end{itemize}

    \vspace{0.2cm}\hrule\vspace{0.3cm}

    $\ParSim_4(1^\secp,\ket{\psi})$:
    \begin{itemize}
        \item Sample $k \gets \Gen(1^\kappa,n)$, and compute $\ket*{\psi_k} = \Enc_k(\ket{\psi})$.
        \item Sample a signature token $(\vk,\ket{\sk}) \gets \TokGen(1^\kappa)$.
        \item Let $H : \{0,1\}^* \to \{0,1\}^\kappa$ be a random oracle. 
        \item For each $i \in [t]$, define the function $\FSim_i[x^*, \textcolor{red}{w^*}](x,\sigma_{x},\widetilde{v}_1,\dots,\widetilde{v}_i,\widetilde{w}_i,\ell_1,\dots,\ell_{i-1})$: 
        \begin{itemize}
            \item If $x = x^*$, output $\FSim_i(x^*,\sigma_{x^*},\widetilde{v}_1,\dots,\widetilde{v}_i,\widetilde{w}_i,\ell_1,\dots,\ell_{i-1})$, where $\FSim_i$ is defined as in $\QObf_5$.
            \item \textcolor{red}{If $x \neq x^*$, output $\FSim_i[w^*](x,\sigma_{x},\widetilde{v}_1,\dots,\widetilde{v}_i,\widetilde{w}_i,\ell_1,\dots,\ell_{i-1})$, where $\FSim_i[w^*]$ is defined as in $\QObf_4$.}
        \end{itemize}
        \item Define $\GSim$ as in $\QObf_2$.

        \item Output $\ket*{\widetilde{\psi}} = \ket{\psi_k}\ket{\sk}, O = \left(\FSim_1[\cdot],\dots,\FSim_t[\cdot],\GSim\right).$
    \end{itemize}

        \vspace{0.2cm}\hrule\vspace{0.3cm}

    $\ParSim_5(1^\secp,\ket{\psi})$:
    \begin{itemize}
        \item Sample $k \gets \Gen(1^\kappa,n)$, and compute $\ket*{\psi_k} = \Enc_k(\ket{\psi})$.
        \item Sample a signature token $(\vk,\ket{\sk}) \gets \TokGen(1^\kappa)$.
        \item Let $H : \{0,1\}^* \to \{0,1\}^\kappa$ be a random oracle. 
        \item For each $i \in [t]$, define the function \textcolor{red}{$\FSim_i$ as in $\QObf_5$.}
        \item Define $\GSim$ as in $\QObf_2$.
        \item Output $\ket*{\widetilde{\psi}} = \ket{\psi_k}\ket{\sk}, O = \left(\FSim_1,\dots,\FSim_t,\GSim\right).$
    \end{itemize}

\end{tcolorbox}

Next, before proving the main lemma (\cref{lemma:mapping-hardness}) of this section, which involves $\ParMeas_0$, we prove several claims about these hybrid distributions, which will allow us to use the properties of simulated distributions when proving \cref{lemma:mapping-hardness}. We will essentially "work backwards" from $\ParSim_5$ to $\ParSim_0$ in order to show the sequence of indistinguishability claims we will need. Whenever we write $\{w^*_i\}_{i \in [S]}$ for some set $S \subseteq [t]$, we parse each $w^*_i \in \{0,1\}^{|W_i|}$.

First, we confirm that the mapping hardness claims we'll need hold in the fully simulated case of $\ParSim_5$.

\begin{claim}\label{claim:parsim-auth}
    For any QPQ unitary $U$, input $x^*$, partial measurement results $\{v^*_\iota,r^*_\iota\}_{\iota \in [\tau]}$, and $w^* = \{w^*_i\}_{i \in [t]}$, it holds that

    \[\E\left[\big\| \Pi[\neg w^*]U^O\Pi[w^*]\ket*{\widetilde{\psi}}\big\|^2 : \ket*{\widetilde{\psi}},O \gets \ParSim_5(1^\secp,\ket*{\psi[x^*,\{v^*_\iota,r^*_\iota\}_{\iota \in [\tau]}]})\right] = 2^{-\Omega(\kappa)}\]

    and 
    \[\E\left[\big\| \Pi[\neg (v^*_\tau,r^*_\tau)]U^O\Pi[\{w^*_\iota\}_{\iota \in [\tau+1,t]}]\ket*{\widetilde{\psi}} \big\|^2 : \ket*{\widetilde{\psi}},O \gets \ParSim_5(1^\secp,\ket*{\psi[x^*,\{v^*_\iota,r^*_\iota\}_{\iota \in [\tau]}]})\right] = 2^{-\Omega(\kappa)},\]

    where $\Pi[\neg (v^*_\tau,r^*_\tau)]$ is defined in the proof of \cref{lemma:parmeas-difference}.

\end{claim}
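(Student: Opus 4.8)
\textbf{Plan for \cref{claim:parsim-auth}.}
The key observation is that in $\ParSim_5$, all three oracles $\FSim_1,\dots,\FSim_t,\GSim$ are implemented using only the \emph{verification} oracle $\Ver_{k,\cdot,\cdot}(\cdot)$ of the authentication scheme together with the signature-token verification key $\vk$ and the random oracle $H$; none of them invoke $\Dec_{k,\cdot,\cdot}(\cdot)$. (This is why the distribution was chosen as the endpoint of the $\ParSim$ sequence.) Consequently, an adversary $U$ with oracle access to $O$ in $\ParSim_5$ can be simulated by an adversary with oracle access only to $\Ver_{k,\cdot,\cdot}(\cdot)$, since $\vk,\ket{\sk}$ and the truly random function $H$ are information that can be sampled independently of $k$ and handed to the adversary in the clear. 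Thus both statements reduce directly to the security (mapping security, \cref{def:auth-mapping-sec}) of the authentication scheme, which follows from \cref{thm:auth-sec}.

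For the first statement, I would argue as follows. The starting state $\ket*{\widetilde{\psi}} = \ket*{\psi_k}\ket{\sk}$ where $\ket*{\psi_k} = \Enc_k\left(\ket*{\psi[x^*,\{v^*_\iota,r^*_\iota\}_{\iota \in [\tau]}]}\right)$. Applying $\Pi[w^*]$ means projecting the authenticated $W_1,\dots,W_t$ wires onto standard-basis decoded value $w^* = \{w^*_i\}_{i\in[t]}$; by the correctness of the authentication scheme (\cref{def:auth-correct}), and the fact that $W_i$ are standard-basis indices, this is (up to the encoding isometry) the same as first collapsing the underlying wires in the standard basis and then encoding. Now consider the composed measurement: first apply $U^O$, then measure whether the $W$ wires decode to something in $\neg w^*$ (i.e.\ apply $\widetilde{M}$ with the appropriate $\theta[W_1,\dots,W_t]$, $f = \Dec$, $L = \emptyset$). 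Since the underlying (pre-encoding) state has its $W$ wires collapsed to the fixed value $w^*$, directly measuring the underlying state in the standard basis on $W_1,\dots,W_t$ yields $w^*$ with probability $1$ — so the set $B$ of ``bad'' outcomes (those $\neq w^*$ and $\neq\bot$) has probability zero. \cref{def:auth-mapping-sec} then says that the probability of the adversary $U^O$ (using only $\Ver_{k,\cdot,\cdot}(\cdot)$) producing output in $B$ is $2^{-\Omega(\kappa)}$, which is exactly the claimed bound. This is essentially the same argument as \cref{claim:prelim-wmapping}, with the only differences being the particular input state and the presence of the hard-coded values $x^*,\{v^*_\iota,r^*_\iota\}$ in the oracles — neither of which affects the reduction, since the oracles still only need $\Ver$.

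For the second statement, the argument is analogous but now we project onto $\Pi[\{w^*_\iota\}_{\iota \in [\tau+1,t]}]$ (collapsing only the later $W$ wires) and ask for the probability that $U^O$ produces a vector in the support of $\Pi[\neg(v^*_\tau,r^*_\tau)] = \Pi[\neg r^*_\tau] + \Pi[\neg v^*_\tau]$. Here I would use that the underlying state $\ket*{\psi[x^*,\{v^*_\iota,r^*_\iota\}_{\iota \in [\tau]}]}$, by construction (via the projectors $\Pi^{x^*,\dots}[v^*_\iota,r^*_\iota]$ applied in \cref{subsec:notation}), has its $V_\tau$ wires fully collapsed so that the logical value on $V_\tau$ (after applying $L_\tau\dots L_1$) equals $v^*_\tau$ with probability $1$, and similarly the $\Gamma_{c}$-outcome on $W_\tau$ is fixed to $r^*_\tau$ with probability $1$. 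Hence directly measuring the underlying state yields outcomes in $\neg(v^*_\tau,r^*_\tau)$ with probability zero. Again invoking mapping security (\cref{def:auth-mapping-sec}) with $B = \neg(v^*_\tau,r^*_\tau)$ and noting that the $\ParSim_5$ oracles are implementable from $\Ver_{k,\cdot,\cdot}(\cdot)$ alone gives the $2^{-\Omega(\kappa)}$ bound. One subtlety to check: the measurement $\widetilde{M}$ used to define $\Pi[\neg(v^*_\tau,r^*_\tau)]$ involves $\Dec$ with the linear operation $L_\tau\dots L_1$ and the bases $\theta_\tau$, and the function $f_\tau$ — I need to confirm that $B$ as defined really does correspond to a measurement outcome set with zero probability on the honestly-evaluated (and post-selected) state, which is exactly the content of the partial-measurement structure set up in \cref{subsec:notation}. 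I don't anticipate a genuine obstacle here — it's a matter of matching definitions — but it is the one place where care is needed, since the $\ParSim_5$ oracles themselves don't directly implement this $\widetilde{M}$; rather, we're appealing to the \emph{external} measurement $\widetilde{M}_{\theta,f,k,L}$ composed with $U^O$, and the reduction to \cref{def:auth-mapping-sec} needs that external measurement and the oracle both to be expressible via $\Ver$/the authentication-scheme interface.
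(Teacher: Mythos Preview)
Your proposal is correct and follows essentially the same approach as the paper: observe that the $\ParSim_5$ oracles are implementable from $\Ver_{k,\cdot,\cdot}(\cdot)$ alone, then invoke mapping security (\cref{def:auth-mapping-sec}) of the authentication scheme. The paper's proof additionally remarks explicitly that the collapsed wires $\{W_i\}_{i\in[\tau+1,t]}$ are disjoint from $(V_\tau,W_\tau)$---this is the reason the second statement is not trivially false after the projection, and you rely on it implicitly when asserting that the $\tau$-level outcome remains fixed to $(v^*_\tau,r^*_\tau)$.
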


\begin{proof}
    The key point is that in $\QObf_5$, none of the oracles $\FSim_1,\dots,\FSim_t,\GSim$ require access to the decryption oracle $\Dec_{k,\cdot,\cdot}(\cdot)$ for the authentication scheme. Rather, they can be implemented just given access to the verification oracle $\Ver_{k,\cdot,\cdot}(\cdot)$. Thus, these claims follow directly from the security of the authentication scheme (\cref{thm:auth-sec}), and in particular that it satisfies \cref{def:auth-mapping-sec} (mapping security). Note that in the second claim, we only collapse the wires $\{W_i\}_{i \in [\tau+1,t]}$, that is, the $W$ wires \emph{after} level $\tau$, which are disjoint from $(V_\tau,W_\tau)$. The claim could have been trivially false if we had collapsed the wires in $(V_\tau,W_\tau)$, in particular to an outcome different from $(v_\tau^*,r^*_\tau)$.
\end{proof}

The next two claims establish the indistinguishability of $\ParSim_3,\ParSim_4,$ and $\ParSim_5$ in the case when all of the $W$ wires have been collapsed to some value $w^*$.

\begin{claim}\label{claim:parsim4-5}
    For any (unbounded) distinguisher $D$, input $x^*$, partial measurement results $\{v^*_\iota,r^*_\iota\}_{\iota \in [\tau]}$, and $w^* = \{w^*_i\}_{i \in [t]}$, it holds that 
    \begin{align*}&\E\left[\big\|D^{\FSim_1[x^*,w^*],\dots,\GSim}\left(k,\Pi[w^*]\ket*{\widetilde{\psi}}\right)\big\|^2 : \ket*{\widetilde{\psi}},(\FSim_1[\cdot],\dots,\GSim) \gets \ParSim_4(1^\secp,\ket*{\psi[x^*,\{v^*_\iota,r^*_\iota\}_{\iota \in [\tau]}]})\right]\\ &= \E\left[\big\|D^{\FSim_1,\dots,\GSim}\left(k,\Pi[w^*]\ket*{\widetilde{\psi}}\right)\big\|^2 : \ket*{\widetilde{\psi}},(\FSim_1,\dots,\GSim) \gets \ParSim_5(1^\secp,\ket*{\psi[x^*,\{v^*_\iota,r^*_\iota\}_{\iota \in [\tau]}]})\right].\end{align*}
    where $D$'s input includes the key $k$ sampled by $\ParSim_4,\ParSim_5$.
\end{claim}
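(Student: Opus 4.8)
\textbf{Proof proposal for \cref{claim:parsim4-5}.}

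The plan is to argue that, conditioned on the input state lying in the image of $\Pi[w^*]$, the oracles $\FSim_i[x^*,w^*]$ from $\ParSim_4$ and $\FSim_i$ from $\ParSim_5$ are \emph{information-theoretically identical} from the distinguisher's point of view — that is, they induce the same distribution over all query-answer transcripts once the randomness of $H$ is averaged out. If this holds, then the two experiments produce literally the same mixed state and the claimed equality of norms follows immediately (with no error term, which matches the ``$=$'' rather than ``$\approx$'' in the statement). So the whole proof reduces to a careful inspection of the two oracle definitions.

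First I would recall what the two oracles do on an input $(x,\sigma_x,\widetilde{v}_1,\dots,\widetilde{v}_i,\widetilde{w}_i,\ell_1,\dots,\ell_{i-1})$. The oracle $\FSim_i$ (as in $\QObf_5$) performs the token check, checks the intermediate labels against $H(\dots,0)$ and $H(\dots,1)$, runs $\Ver_{k,L_i\dots L_1,\theta_i}(\widetilde{v}_1,\dots,\widetilde{v}_i,\widetilde{w}_i)$, and on success outputs $(\widetilde{v}_i, H(x,\sigma_x,\widetilde{v}_1,\dots,\widetilde{v}_i,\ell_1,\dots,\ell_{i-1},0))$. The oracle $\FSim_i[x^*,w^*]$ from $\ParSim_4$ is defined piecewise: on $x = x^*$ it is exactly $\FSim_i$ as in $\QObf_5$, and on $x \neq x^*$ it is $\FSim_i[w^*]$ as in $\QObf_4$ (which itself only \emph{verifies} its inputs via $\Dec$'s membership checks and outputs the zero label, the bit $r_i$ being used only through $H$-consistency, not through which label is output). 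The first observation is that the $x = x^*$ branches of the two oracles are literally syntactically the same. The second observation is that on the $x \neq x^*$ branch, $\FSim_i[w^*]$ (from $\QObf_4$) and $\FSim_i$ (from $\QObf_5$) always abort on the same set of inputs (since $\Dec$ and $\Ver$ abort on the same set by definition) and their non-$\bot$ outputs both carry $\widetilde{v}_i$ together with a hash of a prefix string — and crucially, for any fixed prefix $(x,\sigma_x,\widetilde{v}_1,\dots,\widetilde{v}_i,\ell_1,\dots,\ell_{i-1})$, the bit appended is a deterministic function of that prefix (and of $w^*$), so $H(\text{prefix},0)$ and $H(\text{prefix},b)$ are each marginally uniform and identically distributed. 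This is precisely the argument already made in \cref{claim:prelim4-5}, and I would invoke it essentially verbatim, the only difference being that here the distinguisher also has the $x = x^*$ branch, on which the two oracles are identical.

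Putting these together: for every query, the map from (prefix) to (distribution of answer over the choice of $H$) is the same in $\ParSim_4$ and $\ParSim_5$; since the input state $\Pi[w^*]\ket*{\widetilde\psi}$, the key $k$, and the token $(\vk,\ket{\sk})$ are sampled identically in both experiments, and $\GSim$ is defined identically in both, a standard hybrid over the (at most $\poly(\secp)$) oracle calls shows the global states are identical, hence $D$'s acceptance probability is identical. I would formalize the ``same distribution of answers'' point by a one-line purification/coupling: both experiments can be realized by sampling a single random function $H$ and running the respective oracles, and the joint distribution of $(k,\vk,\ket{\sk},H$ restricted to the queried prefixes$)$ is identical because the only $H$-values ever accessed are $H(\text{prefix},0)$ and $H(\text{prefix},1)$ for prefixes of the stated form, and swapping the roles of the $0$- and $1$-slots is a measure-preserving bijection on the function space that fixes the observable behavior.

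The main obstacle — really the only subtlety — is making sure that the piecewise definition does not create an input on which the two oracles genuinely differ: one must check that the ``glue'' between the $x = x^*$ branch and the $x \neq x^*$ branch is consistent, i.e.\ that no prefix can be simultaneously interpreted under both branches, which is immediate since $x$ is the first coordinate and the branch is selected by testing $x \overset{?}{=} x^*$. One should also double-check the edge cases $i \le \tau$ versus $i > \tau$ inside the $x = x^*$ branch of $\FSim_i[x^*,w^*]$, but since that branch is copied verbatim from $\QObf_5$'s $\FSim_i$ (which has no $\tau$-dependence at all), there is nothing to reconcile. I expect this claim to be genuinely routine once the oracle definitions are laid side by side; the real work in this subsection lies in the later claims (\cref{claim:parsim4-5}'s analogues for $\ParSim_3$ vs.\ $\ParSim_4$ and the eventual \cref{lemma:mapping-hardness}), where the collapsing arguments and the signature-token extraction from \cref{subsec:proof-overview} come into play.
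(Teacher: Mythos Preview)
Your approach is correct and matches the paper's own proof, which is the one-liner ``same as the proof of \cref{claim:prelim4-5}, except that here we are only switching oracle inputs on $x \neq x^*$ rather than all $x$.'' One small slip: your parenthetical description of $\sF_i[w^*]$ from $\QObf_4$ is inaccurate (it does \emph{not} output the zero label---it outputs $H(\text{prefix},r_i)$ with $r_i$ computed from the decoded $v$'s and the hard-coded $w^*_i$), but your subsequent argument correctly identifies that this $r_i$ is a deterministic function of the prefix and $w^*$, which is exactly the point needed.
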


\begin{proof}
    The proof is exactly the same as the proof of \cref{claim:prelim4-5}, except that here we are only switching oracle inputs on $x \neq x^*$ rather than all $x$.
\end{proof}

\begin{claim}\label{claim:parsim3-4}
    For any QPQ distinguisher $D$, input $x^*$, partial measurement results $\{v^*_\iota,r^*_\iota\}_{\iota \in [\tau]}$, and $w^* = \{w^*_i\}_{i \in [t]}$, it holds that 
    \begin{align*}&\bigg|\E\left[\big\|D^{\FSim_1[x^*],\dots,\GSim}\left(k,\Pi[w^*]\ket*{\widetilde{\psi}}\right)\big\|^2 : \ket*{\widetilde{\psi}},(\FSim_1[\cdot],\dots,\GSim) \gets \ParSim_3(1^\secp,\ket*{\psi[x^*,\{v^*_\iota,r^*_\iota\}_{\iota \in [\tau]}]})\right] \\ &-\E\left[\big\|D^{\FSim_1[x^*,w^*],\dots,\GSim}\left(k,\Pi[w^*]\ket*{\widetilde{\psi}}\right)\big\|^2 : \ket*{\widetilde{\psi}},(\FSim_1[\cdot],\dots,\GSim) \gets \ParSim_4(1^\secp,\ket*{\psi[x^*,\{v^*_\iota,r^*_\iota\}_{\iota \in [\tau]}]})\right]\bigg|\\ &= 2^{-\Omega(\kappa)},\end{align*}
    where $D$'s input includes the key $k$ sampled by $\ParSim_3,\ParSim_4$.
\end{claim}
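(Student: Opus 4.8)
\textbf{Proof plan for \cref{claim:parsim3-4}.}

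The plan is to argue that $\ParSim_3$ and $\ParSim_4$ produce oracles that, in the case $x = x^*$, are literally identical on all inputs: when $x = x^*$, the oracle $\FSim_i[x^*]$ in $\ParSim_3$ applies exactly $\FSim_i$ as defined in $\QObf_5$, and $\FSim_i[x^*,w^*]$ in $\ParSim_4$ also applies exactly $\FSim_i$ (from $\QObf_5$) on $x = x^*$ — the only difference being how the two distributions answer queries with $x \neq x^*$ ($\ParSim_3$ runs the honest-decoding $\sF_i$, while $\ParSim_4$ runs $\FSim_i[w^*]$ from $\QObf_4$). So the entire content of the claim reduces to showing that, conditioned on the input state being $\Pi[w^*]\ket*{\widetilde\psi}$, a QPQ adversary cannot distinguish the honest oracle $\sF_i$ from the collapsed-and-verification-only oracle $\FSim_i[w^*]$ on inputs with $x \neq x^*$. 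This is exactly the same phenomenon analyzed in the main body when passing from $\QObf_2$ through $\QObf_3,\QObf_4,\QObf_5$ (see \cref{claim:prelim3-4}, \cref{claim:prelim4-5}, and especially \cref{claim:prelim2-3}), just applied to the $x \neq x^*$ branch of the oracle rather than to a single $x^*$; indeed on the $x \neq x^*$ branch the oracles $\sF_i, \sF_i[w^*], \FSim_i$ have precisely the $\QObf_0,\QObf_3,\QObf_5$-style definitions.

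Concretely, I would proceed by a short chain of sub-hybrids mirroring the $\QObf$ sequence but only affecting $x \neq x^*$ queries: (i) replace $\sF_i$ with $\sF_i[w^*]$ (hard-code the $w^*$ outcomes into the $f_i$ evaluation) — indistinguishable because, on the collapsed component $\Pi[w^*]\ket*{\widetilde\psi}$, no QPQ adversary can map into $\Pi[\neg w^*]$, which is \cref{claim:parsim-auth} (the $w^*$-mapping part) together with \cref{lemma:puncture}; (ii) rewrite $\sF_i[w^*]$ so that it recomputes all earlier $r_\iota$ from the hard-coded $w^*$ rather than reading them from the labels $\ell_\iota$ — indistinguishable up to $2^{-\Omega(\kappa)}$ because the only differing inputs are those carrying an $\ell_\iota$ equal to the ``wrong-bit'' hash value, which can only be guessed with probability $2^{-\kappa}$ (exactly the argument of \cref{claim:prelim3-4}), again invoking \cref{lemma:puncture}; (iii) replace the hard-coded-$w^*$ oracle with the verification-only, always-zero-label oracle $\FSim_i[w^*]$ from $\QObf_4$ — this is an exact equality of distributions over the randomness of $H$, since for every prefix the two oracles abort on the same inputs and otherwise output a uniformly random label (this is \cref{claim:prelim4-5}, i.e.\ \cref{claim:parsim4-5}'s reasoning). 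Composing these gives $|\ParSim_3 - \ParSim_4| = 2^{-\Omega(\kappa)}$ on the $\Pi[w^*]$ component.

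A couple of bookkeeping points I would be careful about. First, I need to confirm that in $\ParSim_3$ and $\ParSim_4$ the $x = x^*$ branches genuinely agree as functions — the statement of the claim has already packaged the $x = x^*$ behavior identically (both call $\FSim_i$ from $\QObf_5$), so this is immediate from the box definitions and requires no argument. Second, the puncturing steps need the hypothesis of \cref{lemma:puncture}, namely that the adversary's state has negligible overlap with the ``differing inputs'' set; for step (i) this differing set lives inside $\Pi[\neg w^*]$ and the needed bound is \cref{claim:parsim-auth}, while for step (ii) the differing set is characterized by a fresh random $\kappa$-bit string and the bound is the trivial $q \cdot 2^{-\kappa}$. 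I expect the main (only mildly nontrivial) obstacle to be chasing the indices correctly — making sure the hard-coded-$w^*$ intermediate oracle is the right interpolant between the honest $\sF_i$ and $\FSim_i[w^*]$ on the $x \neq x^*$ branch while leaving the $x = x^*$ branch untouched — but this is entirely parallel to the already-proven $\QObf_2 \to \QObf_3 \to \QObf_4 \to \QObf_5$ transitions, so no new ideas are needed.
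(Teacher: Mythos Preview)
Your plan is sound and in fact more thorough than the paper's own proof, which is the single sentence ``The proof is exactly the same as the proof of \cref{claim:prelim3-4}, except that here we are only switching oracle inputs on $x \neq x^*$ rather than all $x$.'' Strictly speaking that one-liner only covers your step~(ii): \cref{claim:prelim3-4} compares $\QObf_3$ to $\QObf_4$, both of which already hard-code $w^*$, whereas $\ParSim_3$'s $x\neq x^*$ branch runs the honest $\sF_i$ from $\QObf$ (which decodes the actual $w_i$). So the $\ParSim_3\to\ParSim_4$ jump really is $\QObf_2\to\QObf_4$ on the $x\neq x^*$ branch, and the additional hard-coding step (your step~(i), analogous to \cref{claim:prelim2-3}) is genuinely needed. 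You spotted a gap the paper glosses over.

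Two minor corrections to your write-up. First, you describe the target oracle in $\ParSim_4$ as ``verification-only, always-zero-label $\FSim_i[w^*]$ from $\QObf_4$,'' but $\QObf_4$'s oracle $\sF_i[w^*]$ still decodes the $\widetilde v$'s and computes a real $r_i$; the verification-only always-zero oracle is $\QObf_5$'s $\FSim_i$. (The paper's notation is unhelpful here: $\ParSim_4$ calls it $\FSim_i[w^*]$ but points back to $\QObf_4$.) Consequently your chain only needs (i)$+$(ii) to land on $\ParSim_4$; step~(iii) overshoots to $\ParSim_5$. This is harmless since \cref{claim:parsim4-5} gives $\ParSim_4\equiv\ParSim_5$ exactly, but you should adjust the wording. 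Second, for step~(i) you cite \cref{claim:parsim-auth}, which is stated in $\ParSim_5$; to invoke it without circularity you must first establish that your intermediate hybrid (the $\QObf_3$-style one) is $2^{-\Omega(\kappa)}$-close to $\ParSim_5$ via steps (ii) and \cref{claim:parsim4-5}, and only then pull the mapping-hardness bound back. This is exactly the pattern the paper uses when deriving \cref{cor:wmapping} before proving \cref{claim:prelim2-3}, so just make the order of deductions explicit.
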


\begin{proof}
    The proof is exacly the same as the proof of \cref{claim:prelim3-4}, except that here we are only switching oracle inputs on $x \neq x^*$ rather than all $x$.
\end{proof}

This next claim shows that in $\ParSim_3$, it is hard to map wires $W_{\tau+1},\dots,W_t$ collapsed to $w^*_{\tau+1},\dots,w^*_t$ to an outcome different from $w^*_{\tau+1},\dots,w^*_t$.

\begin{claim}\label{claim:parsim-partialmap}
    For any QPQ unitary $U$, input $x^*$, partial measurement results $\{v^*_\iota,r^*_\iota\}_{\iota \in [\tau]}$, and $\{w^*_i\}_{i \in [\tau+1,t]}$, it holds that

    \[\E\left[\big\| \Pi[\neg \{w^*_i\}_{i \in [\tau+1,t]}]U^{\FSim_1[x^*],\dots,\GSim}\Pi[\{w^*_i\}_{i \in [\tau+1,t]}]\ket*{\widetilde{\psi}}\big\|^2\right] = 2^{-\Omega(\kappa)},\] where the expectation is over \[\ket*{\widetilde{\psi}},(\FSim_1[\cdot],\dots,\GSim) \gets \ParSim_3(1^\secp,\ket*{\psi[x^*,\{v^*_\iota,r^*_\iota\}_{\iota \in [\tau]}]}).\]
\end{claim}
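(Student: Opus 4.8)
The plan is to reduce \cref{claim:parsim-partialmap} to the mapping security of the authentication scheme (\cref{thm:auth-sec} / \cref{def:auth-mapping-sec}) by showing that in $\ParSim_3$ all oracles that are queried with $x = x^*$ can be implemented using only the verification oracle $\Ver_{k,\cdot,\cdot}(\cdot)$, and by arguing that queries with $x \neq x^*$ are effectively useless for the adversary's task. The key obstacle is that $\ParSim_3$'s oracle $\FSim_i[x^*]$ behaves like the \emph{real} $\sF_i$ (which decodes, and hence uses $\Dec_{k,\cdot,\cdot}(\cdot)$, not just $\Ver$) on all inputs $x \neq x^*$, so one cannot naively invoke authentication security directly. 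I would resolve this by a preliminary hybrid step, changing $\ParSim_3$ on the $x \neq x^*$ branch so that those oracle calls no longer decode the $\widetilde{w}$ registers, exactly as in the transition from $\QObf_2$ to $\QObf_5$ in \cref{subsec:main-theorem}.

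Concretely, the first step is to move from $\ParSim_3$ to $\ParSim_4$ and then $\ParSim_5$ restricted to the event that the $W$-wires of the input state have been collapsed: since the input state here is $\Pi[\{w^*_i\}_{i \in [\tau+1,t]}]\ket*{\widetilde\psi}$, the wires $W_{\tau+1},\dots,W_t$ are already collapsed to $\{w^*_i\}_{i \in [\tau+1,t]}$, and (by the standard-basis-collapsible $W$-wires property) the wires $W_1,\dots,W_\tau$ play no role on the $x=x^*$ branch because $\FSim_i[x^*]$ for $i \le \tau$ only verifies. So it suffices to further collapse $W_1,\dots,W_\tau$ to an arbitrary outcome and invoke \cref{claim:parsim3-4} and \cref{claim:parsim4-5} (with $w^*$ the full collapse) to conclude that $\ParSim_3$ restricted to this collapsed input is indistinguishable, up to $2^{-\Omega(\kappa)}$, from $\ParSim_5$ restricted to the same collapsed input. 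In $\ParSim_5$, the second mapping-hardness statement of \cref{claim:parsim-auth} — with $\tau$ replaced appropriately, or more directly the first statement of \cref{claim:parsim-auth} applied with the full $w^*$ — gives exactly the desired bound $\E[\|\Pi[\neg\{w^*_i\}_{i\in[\tau+1,t]}]U^O\Pi[\{w^*_i\}_{i\in[\tau+1,t]}]\ket*{\widetilde\psi}\|^2] = 2^{-\Omega(\kappa)}$, because $\Pi[\neg\{w^*_i\}_{i\in[\tau+1,t]}] = \sum_{i \in [\tau+1,t]}\Pi[\neg w_i^*]$ is precisely a ``$W$-wire mapping'' event covered by \cref{def:auth-mapping-sec}.

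The one subtlety to handle carefully is that \cref{claim:parsim3-4} and \cref{claim:parsim4-5} are stated for inputs of the form $\Pi[w^*]\ket*{\widetilde\psi}$ with $w^* = \{w^*_i\}_{i\in[t]}$ a \emph{full} assignment, whereas here we start with a partial collapse $\Pi[\{w^*_i\}_{i\in[\tau+1,t]}]$. I would bridge this by writing $\Pi[\{w^*_i\}_{i\in[\tau+1,t]}]\ket*{\widetilde\psi} = \sum_{\{w^*_i\}_{i\in[\tau]}}\Pi[\{w^*_i\}_{i\in[t]}]\ket*{\widetilde\psi}$, applying \cref{lemma:project} (Cauchy–Schwarz) to pass from the sum to the individual terms at the cost of a factor $2^{n}$, handling each full-collapse term via the above chain of indistinguishabilities and \cref{claim:parsim-auth}, and then absorbing the $2^{n} \cdot 2^{n}$ loss into $2^{-\Omega(\kappa)}$ using $\kappa = \omega(n)$ (indeed $\kappa \ge n^4$). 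I expect the main obstacle to be bookkeeping: making sure that on the $x = x^*$ branch the oracle $\FSim_i[x^*]$ for $i > \tau$ genuinely decodes only the $V$-registers (not the $W$-registers, which it only verifies), so that the Pauli-$\Dec$ operations appearing in $\FSim_i[x^*]$ do not interfere with the collapsed $W_{\tau+1},\dots,W_t$ registers — this is exactly what makes authentication security applicable, and it is guaranteed by the form of $\FSim_i[x^*]$ in $\ParSim_3$ together with the standard-basis-collapsible $W$-wires property. Once that is verified, the remainder is the standard ``split the state, reduce to authentication mapping security, re-sum'' template used throughout \cref{subsec:main-theorem}.
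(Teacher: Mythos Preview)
Your proposal is correct and follows essentially the same route as the paper: decompose the partial collapse $\Pi[\{w^*_i\}_{i\in[\tau+1,t]}]$ into full collapses $\Pi[w^*]$ over $\{w^*_i\}_{i\in[\tau]}$ via \cref{lemma:project} (incurring the $2^n$ factor), then for each full-collapse term invoke \cref{claim:parsim3-4} and \cref{claim:parsim4-5} to pass from $\ParSim_3$ to $\ParSim_5$, and finish with the first statement of \cref{claim:parsim-auth}. The only minor inaccuracy is your remark that $W_1,\dots,W_\tau$ ``play no role on the $x=x^*$ branch'': in fact the reason you must collapse those wires is the $x\neq x^*$ branch (where the real $\sF_i$ decodes $\widetilde{w}_i$), but this does not affect your argument since you handle that branch correctly via the $\ParSim_3\to\ParSim_5$ transition.
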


\begin{proof}
    We write $\E_{\ParSim_3}$ as shorthand for the expectation over \[\ket*{\widetilde{\psi}},(\FSim_1[\cdot],\dots,\GSim) \gets \ParSim_3(1^\secp,\ket*{\psi[x^*,\{v^*_\iota,r^*_\iota\}_{\iota \in [\tau]}]}),\] we write $\E_{\ParSim_5}$ as shorthand for the expectation over \[\ket*{\widetilde{\psi}},(\FSim_1,\dots,\GSim) \gets \ParSim_5(1^\secp,\ket*{\psi[x^*,\{v^*_\iota,r^*_\iota\}_{\iota \in [\tau]}]}),\] and, given $\{w^*_\iota\}_{\iota \in [\tau]}, \{w^*_{\iota}\}_{\iota \in [\tau+1,t]}$, we write $w^* = \{w^*_\iota\}_{\iota \in [\tau]} \cup \{w^*_{\iota}\}_{\iota \in [\tau+1,t]}$. Then, 
    
    \begin{align*}
    &\E_{\ParSim_3}\left[\big\| \Pi[\neg \{w^*_i\}_{i \in [\tau+1,t]}]U^{\FSim_1[x^*],\dots,\GSim}\Pi[\{w^*_i\}_{i \in [\tau+1,t]}]\ket*{\widetilde{\psi}}\big\|^2\right] \\
    &\leq 2^n \cdot \sum_{\{w^*_\iota\}_{\iota \in [\tau]}}\E_{\ParSim_3}\left[\big\| \Pi[\neg \{w^*_i\}_{i \in [\tau+1,t]}]U^{\FSim_1[x^*],\dots,\GSim}\Pi[w^*]\ket*{\widetilde{\psi}}\big\|^2\right] \\
    &\leq 2^n \cdot \sum_{\{w^*_\iota\}_{\iota \in [\tau]}}\E_{\ParSim_5}\left[\big\| \Pi[\neg \{w^*_i\}_{i \in [\tau+1,t]}]U^{\FSim_1,\dots,\GSim}\Pi[w^*]\ket*{\widetilde{\psi}}\big\|^2\right] + 2^n \cdot 2^{-\Omega(\kappa)} \\ 
    &= 2^{-\Omega(\kappa)} 
    \end{align*}

    where 
    \begin{itemize}
        \item The first inequality follows from \cref{lemma:project} (an application of Cauchy-Schwarz).
        \item The second inequality follows by combining \cref{claim:parsim3-4} and \cref{claim:parsim4-5} proven above.
        \item The third inequality follows from \cref{claim:parsim-auth} proven above.
    \end{itemize}
\end{proof}

In the proof of the next two claims, we will use the following notation. Fix a key $k$, input $x^*$, partial measurement results $\{v^*_\iota,r^*_\iota\}_{\iota \in [\tau]}$, and $\{w^*_\iota\}_{i \in [\tau+1,t]}$, and define the following function: \\

\noindent $R[k,x^*,\{v^*_\iota,r^*_\iota\}_{\iota \in [\tau]},\{w^*_\iota\}_{\iota \in [\tau+1,t]}] : (\widetilde{v}_1,\dots,\widetilde{v}_i) \to r_i$
\begin{itemize}
    \item If $i \leq \tau$, output $r^*_i$.
    \item Otherwise, compute $(v_{\tau+1},\dots,v_i) = \Dec_{k,L_i\dots L_1,\theta_i[V_{\tau+1},\dots,V_i]}(\widetilde{v}_{\tau+1},\dots,\widetilde{v}_i)$, and output $\bot$ if the result is $\bot$.
    \item For $\iota \in [\tau+1,i],$ compute $(\cdot,r_\iota) = f_\iota^{x^*,r^*_1,\dots,r^*_\tau,r_{\tau+1},\dots,r_{\iota-1}}(v_1^*,\dots,v_\tau^*,v_{\tau+1},\dots,v_\iota,w^*_\iota)$.
    \item Output $r_i$.
\end{itemize}

This function determines the bit $r_i$ when $x^*$, $\{v^*_\iota,r^*_\iota\}_{\iota \in [\tau]}$, and $\{w^*_\iota\}_{\iota \in [\tau+1,t]}$ have been hard-coded into the oracles in $\ParSim_2$. We will use it when showing indistinguishability between $\ParSim_1,\ParSim_2$, and $\ParSim_3$ in the case when the $W_{\tau+1},\dots,W_t$ wires of the input state have been collapsed to outcome $w^*_{\tau+1},\dots,w^*_t$.

\begin{claim}\label{claim:parsim2-3}
    For any (unbounded) distinguisher $D$, input $x^*$, partial measurement results $\{v^*_\iota,r^*_\iota\}_{\iota \in [\tau]}$, and $\{w^*_\iota\}_{i \in [\tau+1,t]}$, it holds that 
    \begin{align*}&\E\left[\big\|D^{\sF_1[x^*,\{v^*_\iota,r^*_\iota\}_{\iota \in [\tau]},\{w^*_\iota\}_{\iota \in [\tau+1,t]}],\dots,\GSim}\left(k,\vk,\Pi[\{w^*_\iota\}_{\iota \in [\tau+1,t]}]\ket*{\widetilde{\psi}}\right)\big\|^2\right]\\ &= \E\left[\big\|D^{\FSim_1[x^*],\dots,\GSim}\left(k,\vk,\Pi[\{w^*_\iota\}_{\iota \in [\tau+1,t]}]\ket*{\widetilde{\psi}}\right)\big\|^2\right],\end{align*}

    where the first expectation is over 
    \[\ket*{\widetilde{\psi}},(\sF_1[\cdot],\dots,\sF_t[\cdot],\GSim) \gets  \ParSim_2(1^\secp,\ket{\psi[x^*,\{v^*_\iota,r^*_\iota\}_{\iota \in [\tau]}]}),\]
    
    the second expectation is over 

    \[\ket*{\widetilde{\psi}},(\FSim_1[\cdot],\dots,\FSim_t[\cdot],\GSim) \gets  \ParSim_3(1^\secp,\ket{\psi[x^*,\{v^*_\iota,r^*_\iota\}_{\iota \in [\tau]}]}),\]
    
    and $D$'s input includes the keys $k,\vk$ sampled by $\ParSim_2,\ParSim_3$.
\end{claim}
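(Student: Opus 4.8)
\textbf{Proof proposal for \cref{claim:parsim2-3}.}

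The plan is to argue that the oracles $\sF_i[x^*,\{v^*_\iota,r^*_\iota\}_{\iota \in [\tau]},\{w^*_\iota\}_{\iota \in [\tau+1,t]}]$ in $\ParSim_2$ and the oracles $\FSim_i[x^*]$ in $\ParSim_3$ induce exactly the same distribution over the output of $D$, once $D$'s input state has had its $W_{\tau+1},\dots,W_t$ wires collapsed to the fixed outcomes $w^*_{\tau+1},\dots,w^*_t$. This mirrors the proof of \cref{claim:prelim4-5}, but now applied in the partial-simulation setting rather than the global one. The essential observation is that, after collapsing to $w^* = \{w^*_\iota\}_{\iota \in [\tau]} \cup \{w^*_\iota\}_{\iota \in [\tau+1,t]}$ (note that $\ket{\psi[x^*,\{v^*_\iota,r^*_\iota\}_{\iota \in [\tau]}]}$ already fixes $w^*_\iota = r^*_\iota$ on the $W_\iota$ wires for $\iota \leq \tau$, so this is really a collapse of all $W$ wires), each oracle $\sF_i[\dots]$ in $\ParSim_2$ on any prefix $(x^*,\sigma_{x^*},\widetilde{v}_1,\dots,\widetilde{v}_i,\widetilde{w}_i,\ell_1,\dots,\ell_{i-1})$ that is accepted by the verification checks always queries the random oracle $H$ on the single point $(x^*,\sigma_{x^*},\widetilde{v}_1,\dots,\widetilde{v}_i,\ell_1,\dots,\ell_{i-1},R[k,x^*,\{v^*_\iota,r^*_\iota\}_{\iota \in [\tau]},\{w^*_\iota\}_{\iota \in [\tau+1,t]}](\widetilde{v}_1,\dots,\widetilde{v}_i))$, where the last bit is a deterministic function of $k$, the prefix, and the hard-coded data. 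Meanwhile $\FSim_i[x^*]$ in $\ParSim_3$ always queries $H$ on the point ending in bit $0$.

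The key steps are then: (i) observe that on inputs with $x \neq x^*$, the oracles in $\ParSim_2$ and $\ParSim_3$ are literally identical (both fall through to $\sF_i$ as defined in $\QObf$), so there is nothing to argue there; (ii) observe that both families abort (output $\bot$) on precisely the same set of inputs, since both use $\Ver_{k,\cdot,\cdot}(\cdot)$ and $\Dec_{k,\cdot,\cdot}(\cdot)$ with the same codespaces and the same hash-consistency checks, and these checks do not depend on the value of the final bit fed to $H$; (iii) for the inputs that are not aborted, note that for each fixed prefix the two strings $H(\dots,0)$ and $H(\dots,1)$ are i.i.d.\ uniform over $\{0,1\}^\kappa$, so replacing $H(\dots,R[\cdots])$ by $H(\dots,0)$ is a relabeling of the random oracle that does not change the joint distribution of all oracle responses seen by $D$; (iv) conclude that $D$'s view, hence the distribution of $\|D^{\cdots}(\cdots)\|^2$, is identical in the two experiments. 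Note that because we collapsed all the $W$ wires, the value $R[\cdots](\widetilde{v}_1,\dots,\widetilde{v}_i)$ returned by $\ParSim_2$'s decoding-based computation agrees with the value that would be computed using $w^*_\iota$ in place of the (now collapsed) $\widetilde{w}_\iota$-decoded bits — this is exactly why we need the collapse, and it uses the \emph{standard-basis-collapsible $W$ wires} property so that the $W_\iota$ wires for $\iota > \tau$ have not been disturbed by $L_1,\dots,L_\iota$. The additional fact that $D$ is handed $k$ and $\vk$ does not interfere: these are fixed before the argument, and the relabeling of $H$ is valid conditioned on any fixed $(k,\vk)$.

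I expect the main (minor) obstacle to be bookkeeping the fact that $R[\cdots]$ is genuinely well-defined as a function of the prefix alone once $w^*$ is fixed: one must check that the intermediate bits $r_{\tau+1},\dots,r_{i-1}$ appearing in the definition of $\sF_i[\dots]$ in $\ParSim_2$ are themselves deterministic functions of $(\widetilde{v}_1,\dots,\widetilde{v}_{i-1})$ and the hard-coded data (via the same $R[\cdots]$ applied to shorter prefixes), so that there is no circularity and no dependence on the adversarially-supplied labels $\ell_\iota$ beyond the consistency check. This is immediate by induction on $i$, since $\sF_i[\dots]$ in $\ParSim_2$ never outputs a label $H(\dots,1-R[\cdots])$ (cf.\ the $\QObf_3 \to \QObf_4$ transition, which was already carried out to reach $\ParSim_2$ from $\ParSim_1$), so any prefix surviving the checks has $\ell_\iota = H(\dots,R[\cdots](\widetilde v_1,\dots,\widetilde v_\iota))$ and the recovered $r_\iota$ equals $R[\cdots](\widetilde v_1,\dots,\widetilde v_\iota)$. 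Everything else is a direct transcription of the reasoning in \cref{claim:prelim4-5}, restricted to the branch $x = x^*$.
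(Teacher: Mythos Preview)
Your core argument is correct and matches the paper's proof: the oracles agree on inputs with $x \neq x^*$, have identical abort sets, and on $x = x^*$ differ only in the last bit fed to $H$, which is a deterministic function $R[k,x^*,\{v^*_\iota,r^*_\iota\}_{\iota \in [\tau]},\{w^*_\iota\}_{\iota \in [\tau+1,t]}]$ of the prefix, so a relabeling of $H$ makes the two experiments identically distributed. Your elaboration about needing the collapse of the $W$ wires, however, is unnecessary and slightly off: in $\ParSim_2$ the values $w^*_\iota$ are already \emph{hard-coded} into the oracles (that was the $\ParSim_0 \to \ParSim_1$ change, not a ``decoding-based computation'' here), so $R[\cdots]$ depends only on $(\widetilde{v}_1,\dots,\widetilde{v}_i)$ regardless of whether the input state has been collapsed --- the projection $\Pi[\{w^*_\iota\}_{\iota \in [\tau+1,t]}]$ appears in the statement only because it is carried through from \cref{claim:parsim0-1}, not because this claim requires it.
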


\begin{proof}
    In $\ParSim_2$, we have that for inputs that begin with $x^*$, 
    \begin{align*}
        &\sF_i[x^*,\{v^*_\iota,r^*_\iota\}_{\iota \in [\tau]},\{w^*_\iota\}_{\iota \in [\tau+1,t]}](x^*,\sigma_{x^*},\widetilde{v}_1,\dots,\widetilde{v}_i,\widetilde{w}_i,\ell_1,\dots,\ell_{i-1})\\
        & \in \{H(x^*,\sigma_{x^*},\widetilde{v}_1,\dots,\widetilde{v}_i,\ell_1,\dots,\ell_{i-1},R[k,x^*,\{v^*_\iota,r^*_\iota\}_{\iota \in [\tau]},\{w^*_\iota\}_{\iota \in [\tau+1,t]}](\widetilde{v}_1,\dots,\widetilde{v}_i)), \bot\},
    \end{align*}

    while in $\ParSim_3$, we have that for inputs that begin with $x^*$, 

    \begin{align*}
        &\FSim_i[x^*,\{v^*_\iota,r^*_\iota\}_{\iota \in [\tau]},\{w^*_\iota\}_{\iota \in [\tau+1,t]}](x^*,\sigma_{x^*},\widetilde{v}_1,\dots,\widetilde{v}_i,\widetilde{w}_i,\ell_1,\dots,\ell_{i-1})\\
        & \in \{H(x^*,\sigma_{x^*},\widetilde{v}_1,\dots,\widetilde{v}_i,\ell_1,\dots,\ell_{i-1},0), \bot\}.
    \end{align*}

    Both implementations of the oracles are identical on inputs $x \neq x^*$, and both will always output $\bot$ on the same set of inputs, since this is true of $\Dec_{k,\cdot,\cdot}(\cdot)$ and $\Ver_{k,\cdot,\cdot}(\cdot)$ by definition. Finally, their non-$\bot$ answers on inputs that begin with $x^*$ are identically distributed over the  randomness of the random oracle $H$, since each $(x^*,\sigma_{x^*},\widetilde{v}_1,\dots,\widetilde{v}_i,\ell_1,\dots,\ell_{i-1})$ fixes a single choice of bit \[R[k,x^*,\{v^*_\iota,r^*_\iota\}_{\iota \in [\tau]},\{w^*_\iota\}_{\iota \in [\tau+1,t]}](\widetilde{v}_1,\dots,\widetilde{v}_i).\] Indeed, for any $(x^*,\sigma_{x^*},\widetilde{v}_1,\dots,\widetilde{v}_i,\ell_1,\dots,\ell_{i-1})$, \[H(x^*,\sigma_{x^*},\widetilde{v}_1,\dots,\widetilde{v}_i,\widetilde{w}_i,\ell_1,\dots,\ell_{i-1},0) \ \ \ \text{and} \ \ \  H(x^*,\sigma_{x^*},\widetilde{v}_1,\dots,\widetilde{v}_i,\widetilde{w}_i,\ell_1,\dots,\ell_{i-1},1)\] are identically distributed (each is a uniformly random string).

\end{proof}

\begin{claim}\label{claim:parsim1-2}
    For any QPQ distinguisher $D$, input $x^*$, partial measurement results $\{v^*_\iota,r^*_\iota\}_{\iota \in [\tau]}$, and $\{w^*_\iota\}_{i \in [\tau+1,t]}$, it holds that  
    \begin{align*}&\bigg|\E\left[\big\|D^{\sF_1[x^*,\{v^*_\iota,r^*_\iota\}_{\iota \in [\tau]},\{w^*_\iota\}_{\iota \in [\tau+1,t]}],\dots,\GSim}\left(k,\vk,\Pi[\{w^*_\iota\}_{\iota \in [\tau+1,t]}]\ket*{\widetilde{\psi}}\right)\big\|^2\right]\\ &- \E\left[\big\|D^{\sF_1[x^*,\{v^*_\iota,r^*_\iota\}_{\iota \in [\tau]},\{w^*_\iota\}_{\iota \in [\tau+1,t]}],\dots,\GSim}\left(k,\vk,\Pi[\{w^*_\iota\}_{\iota \in [\tau+1,t]}]\ket*{\widetilde{\psi}}\right)\big\|^2\right]\bigg| = 2^{-\Omega(\kappa)},\end{align*}
    where the first expectation is over 
    \[\ket*{\widetilde{\psi}},(\sF_1[\cdot],\dots,\sF_t[\cdot],\GSim) \gets  \ParSim_1(1^\secp,\ket*{\psi[x^*,\{v^*_\iota,r^*_\iota\}_{\iota \in [\tau]}]}),\]
    and the second expectation is over 
    \[\ket*{\widetilde{\psi}},(\sF_1[\cdot],\dots,\sF_t[\cdot],\GSim) \gets  \ParSim_2(1^\secp,\ket*{\psi[x^*,\{v^*_\iota,r^*_\iota\}_{\iota \in [\tau]}]}),\]
    and $D$'s input includes the keys $k,\vk$ sampled by $\ParSim_1,\ParSim_2$.
\end{claim}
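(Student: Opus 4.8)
\textbf{Proof plan for \cref{claim:parsim1-2}.}

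The plan is to identify the set of inputs on which the oracles in $\ParSim_1$ and $\ParSim_2$ differ, show that this set is hard to find (in the sense required by \cref{lemma:puncture}), and conclude via that lemma. First I would inspect the two descriptions: the only differences between $\ParSim_1$ and $\ParSim_2$ occur in the $i > \tau$ branch. In $\ParSim_1$, the oracle $\sF_i[\dots]$ verifies $\widetilde{w}_i$ via $\Ver_{k,L_i\dots L_1,\theta_i[V_1,\dots,V_\tau,W_i]}$ and, crucially, only reads the bits $r_{\tau+1},\dots,r_{i-1}$ off the input labels $\ell_{\tau+1},\dots,\ell_{i-1}$ (it checks $\ell_\iota \in \{\ell_{\iota,0},\ell_{\iota,1}\}$ for $\iota \in [\tau+1,i-1]$ and extracts $r_\iota$). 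In $\ParSim_2$, the oracle instead recomputes $r_{\tau+1},\dots,r_i$ from scratch using the decoded $v_{\tau+1},\dots,v_i$ and the hard-coded $w^*_\iota$, via the function $R[k,x^*,\{v^*_\iota,r^*_\iota\},\{w^*_\iota\}]$ introduced just before \cref{claim:parsim2-3}. Since the $W_{\tau+1},\dots,W_t$ wires of the input state have been collapsed to $\{w^*_\iota\}_{\iota\in[\tau+1,t]}$ via the projector $\Pi[\{w^*_\iota\}_{\iota \in [\tau+1,t]}]$, the two ways of obtaining $r_\iota$ agree \emph{as long as} the input labels $\ell_\iota$ supplied to the oracle are the ``correct'' ones, i.e.\ $\ell_\iota = H(\dots, R[\dots](\widetilde{v}_1,\dots,\widetilde{v}_\iota))$ rather than $\ell_\iota = H(\dots, 1 - R[\dots](\widetilde{v}_1,\dots,\widetilde{v}_\iota))$.

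Thus the differing set $S$ consists of inputs $(x^*,\sigma_{x^*},\widetilde{v}_1,\dots,\widetilde{v}_i,\widetilde{w}_i,\ell_1,\dots,\ell_{i-1})$ (for some $i > \tau$) such that there is an index $\iota \in [\tau+1, i-1]$ with $\ell_\iota = H(\dots, 1 - R[\dots](\widetilde{v}_1,\dots,\widetilde{v}_\iota))$ — that is, an ``inverted'' label at some level past $\tau$. (One should also double-check the $i \le \tau$ branch and the $\ell_{\iota,0}=\ell_{\iota,1}$ edge cases contribute nothing new: for $i \le \tau$ the two oracles are literally identical, and the label-collision events are absorbed into the $2^{-\Omega(\kappa)}$ bound anyway.) The key observation is that neither oracle in $\ParSim_2$ — and in particular none of the oracles available to the adversary — ever \emph{outputs} a label of the form $H(\dots, 1 - R[\dots])$: every output label is of the form $H(\dots, R[\dots])$ (or $0$, in the $x \ne x^*$ branches, which are again the ``correct'' bit by definition of $R$). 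Hence to query an input in $S$ the adversary must \emph{guess} a uniformly random $\kappa$-bit string $H(\dots, 1 - R[\dots])$ that it has never seen, which succeeds with probability at most $q \cdot 2^{-\kappa}$ over the randomness of the random oracle $H$, where $q = \poly(\secp)$ is the query bound.

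Concretely, I would state and use \cref{lemma:puncture} with $O^0_k$ the $\ParSim_1$ oracle tuple, $O^1_k$ the $\ParSim_2$ oracle tuple, and $S_k$ the differing set described above; the ``key'' $k$ here bundles $k$, $\vk$, $\ket{\sk}$, $H$, together with $x^*$ and all the hard-coded data. Condition 1 of \cref{lemma:puncture} (the oracles agree outside $S_k$) holds by the analysis above. For condition 2, I need $\E[\| \Pi[S_k] U^{O^0_k}\ket*{\widetilde{\psi}}\|^2] \le \epsilon = 2^{-\Omega(\kappa)}$ for any QPQ $U$; this is the probability-$\le q 2^{-\kappa}$ guessing argument, which can be made rigorous by a standard hybrid over $U$'s queries (or by appealing to the fact that a value of $H$ never revealed to the adversary is information-theoretically uniform, so the projection onto ``$\ell_\iota$ equals that value'' has small norm). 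I expect this guessing step to be the only real content, and it is routine; the rest is bookkeeping to match the exact oracle definitions. The main obstacle is thus not conceptual but careful case analysis: making sure that the $\ell_{\iota,0} = \ell_{\iota,1}$ abort conditions, the treatment of the $x \ne x^*$ branch (which is unchanged between $\ParSim_1$ and $\ParSim_2$ since both defer to $\sF_i$ from $\QObf$), and the interaction with $\GSim$ (also unchanged) are all accounted for, so that $S_k$ really is exactly the inverted-label set. Once that is pinned down, \cref{lemma:puncture} gives $|\cH - \cH'| \le 4q\sqrt{\epsilon} = 2^{-\Omega(\kappa)}$, which is the claim.
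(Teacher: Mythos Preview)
Your proposal is correct and follows essentially the same approach as the paper: identify the differing set as inputs carrying an ``inverted'' label $\ell_\iota = H(\dots, 1 - R[\dots](\widetilde{v}_1,\dots,\widetilde{v}_\iota))$, observe that the $\ParSim_2$ oracles never emit such a label so it can only be guessed with probability $2^{-\kappa}$, and conclude via \cref{lemma:puncture}. One bookkeeping nit: you set $O^0_k$ to be the $\ParSim_1$ oracles but then justify condition~2 using the $\ParSim_2$ oracles; just swap the roles (or note the lemma is symmetric under relabeling $O^0 \leftrightarrow O^1$) so that the oracle you argue about is the one appearing in condition~2.
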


\begin{proof}
    Observe that the oracles $\sF_i[x^*,\{v^*_\iota,r^*_\iota\}_{\iota \in [\tau]},\{w^*_\iota\}_{\iota \in [\tau+1,t]}]$ in these experiments are identical except for on inputs \[(x^*,\sigma_{x^*},\widetilde{v}_1,\dots,\widetilde{v}_i,\widetilde{w}_i,\ell_1,\dots,\ell_{i-1})\] such that there exists an $\iota \in [i-1]$ with \[\ell_{\iota} = H(x^*,\sigma_{x^*},\widetilde{v}_1,\dots,\widetilde{v}_\iota,\ell_1,\dots,\ell_{\iota-1},1-R[k,x^*,\{v^*_\iota,r^*_\iota\}_{\iota \in [\tau]},\{w^*_\iota\}_{\iota \in [\tau+1,t]}](\widetilde{v}_1,\dots,\widetilde{v}_\iota)).\] However, the oracles $\sF_i[x^*,\{v^*_\iota,r^*_\iota\}_{\iota \in [\tau]},\{w^*_\iota\}_{\iota \in [\tau+1,t]}]$ in $\ParSim_2$ are defined to never output such an $\ell_\iota$, and thus such an input can only be guessed with probability $2^{-\kappa}$ over the randomness of $H$. The claim follows by applying \cref{lemma:puncture} (a standard oracle hybrid argument).
\end{proof}

Now, in our final claim before the main lemma of this section, we combine what we have shown so far to establish the indistinguishability of $\ParSim_0$ and $\ParSim_1$ in the case when the $W_{\tau+1},\dots,W_t$ wires of the input state have been collapsed to some outcome $w^*_{\tau+1},\dots,w^*_t$.

\begin{claim}\label{claim:parsim0-1}
    For any QPQ distinguisher $D$, input $x^*$, partial measurement results $\{v^*_\iota,r^*_\iota\}_{\iota \in [\tau]}$, and $\{w^*_\iota\}_{i \in [\tau+1,t]}$, it holds that  
    \begin{align*}&\bigg|\E\left[\big\|D^{\sF_1[x^*,\{v^*_\iota,r^*_\iota\}_{\iota \in [\tau]}],\dots,\GSim}\left(k,\vk,\Pi[\{w^*_\iota\}_{\iota \in [\tau+1,t]}]\ket*{\widetilde{\psi}}\right)\big\|^2\right]\\ &- \E\left[\big\|D^{\sF_1[x^*,\{v^*_\iota,r^*_\iota\}_{\iota \in [\tau]},\{w^*_\iota\}_{\iota \in [\tau+1,t]}],\dots,\GSim}\left(k,\vk,\Pi[\{w^*_\iota\}_{\iota \in [\tau+1,t]}]\ket*{\widetilde{\psi}}\right)\big\|^2\right]\bigg| = 2^{-\Omega(\kappa)},\end{align*}
    where the first expectation is over 
    \[\ket*{\widetilde{\psi}},(\sF_1[\cdot],\dots,\sF_t[\cdot],\GSim) \gets  \ParSim_0(1^\secp,\ket*{\psi[x^*,\{v^*_\iota,r^*_\iota\}_{\iota \in [\tau]}]}),\]
    the second expectation is over
    \[\ket*{\widetilde{\psi}},(\sF_1[\cdot],\dots,\sF_t[\cdot],\GSim) \gets  \ParSim_1(1^\secp,\ket*{\psi[x^*,\{v^*_\iota,r^*_\iota\}_{\iota \in [\tau]}]}),\]
    and $D$'s input includes the keys $k,\vk$ sampled by $\ParSim_0,\ParSim_1$.
\end{claim}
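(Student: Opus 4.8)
\textbf{Proof plan for Claim~\ref{claim:parsim0-1}.} The goal is to move from $\ParSim_0 = \ParMeas$ to $\ParSim_1$ in the setting where the $W_{\tau+1},\dots,W_t$ wires have been collapsed to a fixed outcome $\{w^*_\iota\}_{\iota \in [\tau+1,t]}$. Inspecting the two distributions, the only syntactic change is that $\ParSim_1$ uses the hard-coded values $w^*_\iota$ in place of the decoded $w_i$ inside the computation of $r_i$ for layers $i > \tau$, and correspondingly only \emph{verifies} (rather than decodes) the $\widetilde w_i$ registers. In particular, for $i > \tau$, the oracle $\sF_i[x^*,\{v^*_\iota,r^*_\iota\}_{\iota \in [\tau]}]$ of $\ParSim_0$ and the oracle $\sF_i[x^*,\{v^*_\iota,r^*_\iota\}_{\iota \in [\tau]},\{w^*_\iota\}_{\iota \in [\tau+1,t]}]$ of $\ParSim_1$ agree on every input $(x,\sigma_x,\widetilde v_1,\dots,\widetilde v_i,\widetilde w_i,\ell_1,\dots,\ell_{i-1})$ such that $x = x^*$ and $\widetilde w_i \in P[w^*_i]$ (i.e.\ decodes to the hard-coded $w^*_i$), as well as on all inputs with $x \neq x^*$. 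The only differing inputs are those that (for some layer $i \in [\tau+1,t]$) contain a $\widetilde w_i \in P[\neg w^*_i]$ — a register that decodes to some $w_i \notin \{w^*_i,\bot\}$. So by \cref{lemma:puncture} (the standard oracle hybrid argument), it suffices to bound, in $\ParSim_1$, the probability that a $q$-query QPQ adversary, starting from $\Pi[\{w^*_\iota\}_{\iota \in [\tau+1,t]}]\ket*{\widetilde\psi}$, produces a query lying in $\Pi[\neg \{w^*_\iota\}_{\iota \in [\tau+1,t]}] = \sum_{i \in [\tau+1,t]}\Pi[\neg w^*_i]$.

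\textbf{Key steps.} First I would invoke \cref{lemma:puncture} with $O_k^0$ the oracle tuple of $\ParSim_0$ and $O_k^1$ the oracle tuple of $\ParSim_1$, the differing set $S_k$ being the set of inputs containing some $\widetilde w_i \in P[\neg w^*_i]$ for $i \in [\tau+1,t]$, and the "hard to hit" condition being exactly $\E\big[\|\Pi[\neg\{w^*_\iota\}_{\iota\in[\tau+1,t]}]U^{O_k^1}\Pi[\{w^*_\iota\}_{\iota\in[\tau+1,t]}]\ket*{\widetilde\psi}\|^2\big] = 2^{-\Omega(\kappa)}$. The technical content is then precisely this last bound: mapping hardness in $\ParSim_1$. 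But $\ParSim_1$ is not yet a fully simulated distribution — its oracles still decode the $V$ registers for $i > \tau$ — so I cannot directly appeal to authentication security. The way through is to transport the bound along the chain $\ParSim_1 \to \ParSim_2 \to \ParSim_3$ and then use \cref{claim:parsim-partialmap} (which already establishes exactly this mapping-hardness statement in $\ParSim_3$). Concretely: (i) $\ParSim_1 \approx \ParSim_2$ on inputs beginning with $x^*$ up to $2^{-\Omega(\kappa)}$ by the argument of \cref{claim:parsim1-2} (the only difference is outputting a "wrong-branch" label $\ell_\iota = H(\dots,1-r_\iota)$, which $\ParSim_2$ never does and which is $2^{-\kappa}$-unguessable); (ii) $\ParSim_2 \approx \ParSim_3$ on such inputs, with \emph{identical} distribution on non-$\bot$ answers, exactly as in \cref{claim:parsim2-3}, because $\Dec$ and $\Ver$ reject on the same inputs and $H(\dots,0),H(\dots,1)$ are identically distributed; note that here I only collapse the $W$ wires \emph{after} layer $\tau$, which are disjoint from $(V_\tau,W_\tau)$, so these indistinguishability claims go through for the projected input state $\Pi[\{w^*_\iota\}_{\iota\in[\tau+1,t]}]\ket*{\widetilde\psi}$. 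Since an adversary's ability to produce a query in $\Pi[\neg\{w^*_\iota\}_{\iota\in[\tau+1,t]}]$ is (the expected norm-squared of) a measurement on its final state, indistinguishability of the oracle tuples transfers the $\ParSim_3$ bound of \cref{claim:parsim-partialmap} back to $\ParSim_1$ up to $2^{-\Omega(\kappa)}$ error. Chaining the three switches and plugging into \cref{lemma:puncture} yields the claimed $2^{-\Omega(\kappa)}$ bound on $|\cH - \cH'|$, after absorbing the polynomial query count $q = \poly(\secp) = 2^{o(\kappa)}$ into the $\Omega(\kappa)$.

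\textbf{Main obstacle.} The subtle point is the asymmetry between the roles of the $W$ wires \emph{before} and \emph{after} layer $\tau$. The claim crucially collapses only $W_{\tau+1},\dots,W_t$ (not $W_\tau$, and not any $V$ register), because $(V_\tau,W_\tau)$ carry the measurement outcome $(v^*_\tau,r^*_\tau)$ that is already fixed, and collapsing $W_\tau$ to a \emph{different} value would make the statement false. I need to make sure that each of the intermediate indistinguishability steps ($\ParSim_1 \leftrightarrow \ParSim_2 \leftrightarrow \ParSim_3$) and the invocation of \cref{claim:parsim-partialmap} are all stated for the projected state $\Pi[\{w^*_\iota\}_{\iota\in[\tau+1,t]}]\ket*{\widetilde\psi}$ and for oracles that decode $W_\tau$ honestly (via $\Ver$ on $(\widetilde v_1,\dots,\widetilde v_\tau,\widetilde w_i)$) — which is exactly how the $\ParSim$ hybrids are written, so this is a matter of careful bookkeeping rather than a new idea. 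The only other thing to watch is that \cref{claim:parsim1-2} and \cref{claim:parsim2-3} as stated concern the \emph{fully} hard-coded oracles $\sF_i[x^*,\{v^*_\iota,r^*_\iota\}_{\iota\in[\tau]},\{w^*_\iota\}_{\iota\in[\tau+1,t]}]$, whereas here we start from $\sF_i[x^*,\{v^*_\iota,r^*_\iota\}_{\iota\in[\tau]}]$ (the $\ParSim_0$ oracle, which decodes $w_i$). So the clean way to phrase the proof is: first reduce via \cref{lemma:puncture} to the mapping-hardness bound in $\ParSim_1$ as above; then separately establish that bound by noting $\ParSim_1$'s oracles, restricted to the $\Pi[\{w^*_\iota\}_{\iota\in[\tau+1,t]}]$ subspace, already behave exactly like the hard-coded oracles of $\ParSim_2/\ParSim_3$ on the relevant inputs (this is where \cref{claim:parsim1-2}, \cref{claim:parsim2-3}, \cref{claim:parsim3-4}, \cref{claim:parsim4-5} get folded in, exactly as in the proof of \cref{claim:parsim-partialmap}), and conclude with \cref{claim:parsim-partialmap}.
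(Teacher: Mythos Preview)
Your proposal is correct and follows essentially the same route as the paper: identify that the $\ParSim_0$ and $\ParSim_1$ oracles differ only on inputs containing some $\widetilde w_i \in P[\neg w^*_i]$ for $i\in[\tau+1,t]$, apply \cref{lemma:puncture} to reduce to the mapping-hardness statement in $\ParSim_1$, and then transport that bound to $\ParSim_3$ via \cref{claim:parsim1-2} and \cref{claim:parsim2-3} before invoking \cref{claim:parsim-partialmap}. The only cosmetic point is that \cref{claim:parsim3-4} and \cref{claim:parsim4-5} are not invoked directly here (they are already absorbed into the proof of \cref{claim:parsim-partialmap}), so you can drop them from your final write-up.
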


\begin{proof}
    Observe that the oracles in these experiments are identical except for on inputs that include a $\widetilde{w}_i \in P[\neg w_i^*]$ for some $i \in [\tau+1,t]$. Thus, by \cref{lemma:puncture} (a standard oracle hybrid argument), it suffices to show that for any QPQ unitary $U$, 
    \begin{align*}
        \E\left[\Pi[\neg \{w^*_\iota\}_{i \in [\tau+1,t]}]U^{\sF_1[x^*,\{v^*_\iota,r^*_\iota\}_{\iota \in [\tau]},\{w^*_\iota\}_{\iota \in [\tau+1,t]}],\dots,\GSim}\Pi[\{w^*_\iota\}_{i \in [\tau+1,t]}]\ket*{\widetilde{\psi}}\right] = 2^{-\Omega(\kappa)},
    \end{align*}
    where the expectation is over

    \[\ket*{\widetilde{\psi}},(\sF_1[\cdot],\dots,\sF_t[\cdot],\GSim) \gets  \ParSim_1(1^\secp,\ket*{\psi[x^*,\{v^*_\iota,r^*_\iota\}_{\iota \in [\tau]}]}).\]

     Write $\E_{\ParSim_1}$ as shorthand for the expectation over \[\ket*{\widetilde{\psi}},(\sF_1[\cdot],\dots,\FSim_t[\cdot],\GSim) \gets \ParSim_1(1^\secp,\ket*{\psi[x^*,\{v^*_\iota,r^*_\iota\}_{\iota \in [\tau]}]}),\] and write $\E_{\ParSim_3}$ as shorthand for the expectation over \[\ket*{\widetilde{\psi}},(\FSim_1[\cdot],\dots,\FSim_t[\cdot],\GSim) \gets \ParSim_3(1^\secp,\ket*{\psi[x^*,\{v^*_\iota,r^*_\iota\}_{\iota \in [\tau]}]}).\]
        
    Then, 
    \begin{align*}
        &\E_{\ParSim_1}\left[\Pi[\neg \{w^*_\iota\}_{i \in [\tau+1,t]}]U^{\sF_1[x^*,\{v^*_\iota,r^*_\iota\}_{\iota \in [\tau]},\{w^*_\iota\}_{\iota \in [\tau+1,t]}],\dots,\GSim}\Pi[\{w^*_\iota\}_{i \in [\tau+1,t]}]\ket*{\widetilde{\psi}}\right] \\
        &\leq \E_{\ParSim_3}\left[\Pi[\neg \{w^*_\iota\}_{i \in [\tau+1,t]}]U^{\FSim_1[x^*],\dots,\GSim}\Pi[\{w^*_\iota\}_{i \in [\tau+1,t]}]\ket*{\widetilde{\psi}}\right] + 2^{-\Omega(\kappa)} \\
        &\leq 2^{-\Omega(\kappa)},
    \end{align*}

    where the first inequality follows by combining \cref{claim:parsim1-2} and \cref{claim:parsim2-3}, and the second inequality follows from \cref{claim:parsim-partialmap}, all proven above.
\end{proof}

Now, we prove the main lemma of this section. The proof of \cref{lemma:mapping-hardness} combines some claims proven above with \cref{lemma:mapping-hardness-3} that follows. \cref{lemma:mapping-hardness-3} is a similar claim but with respect to $\ParSim_3$ instead of $\ParSim_0$.

\begin{lemma}\label{lemma:mapping-hardness}
    For any QPQ unitary $U$, input $x^*$, and partial measurement results $\{v^*_\iota,r^*_\iota\}_{\iota \in [\tau]}$, it holds that 
    \[\E\left[\big\|\left(\Pi[x^*,\vk] \otimes \Pi[\neg (v^*_\tau,r^*_\tau)]\right)U^{\sF_1[x^*,\{v^*_\iota,r^*_\iota\}_{\iota \in [\tau]}],\dots,\GSim}\ket*{\widetilde{\psi}}\big\|^2\right] = 2^{-\Omega(\kappa)},\]

    where the projectors are defined as in the proof of \cref{lemma:parmeas-difference}, and the expectation is over

    \[\ket*{\widetilde{\psi}},(\sF_1[\cdot],\dots,\sF_t[\cdot],\GSim) \gets \ParMeas(1^\secp,\ket*{\psi[x^*,\{v^*_\iota,r^*_\iota\}_{\iota \in [\tau]}]}),\] which, recall, is defined to be the same as 
     \[\ket*{\widetilde{\psi}},(\sF_1[\cdot],\dots,\sF_t[\cdot],\GSim) \gets \ParSim_0(1^\secp,\ket*{\psi[x^*,\{v^*_\iota,r^*_\iota\}_{\iota \in [\tau]}]}).\]
    
\end{lemma}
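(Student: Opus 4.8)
The goal is to show that, given the authenticated version of $\ket{\psi[x^*,\{v^*_\iota,r^*_\iota\}_{\iota \in [\tau]}]}$ (a state whose first $\tau$ measurement outcomes are fixed to $(v^*_\iota,r^*_\iota)$) together with the partially-hard-coded oracles $\sF_1[x^*,\{v^*_\iota,r^*_\iota\}_{\iota \in [\tau]}],\dots,\GSim$ as defined in $\ParSim_0 = \ParMeas$, no QPQ unitary $U$ can simultaneously (i) hold a valid signature $\sigma_{x^*}$ (projected onto by $\Pi[x^*,\vk]$) and (ii) disturb the $(V_\tau,W_\tau)$ registers so that they no longer decode to $(v^*_\tau,r^*_\tau)$ (projected onto by $\Pi[\neg(v^*_\tau,r^*_\tau)]$). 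The plan is to reduce this to the same statement but with respect to the fully-simulated hybrid $\ParSim_3$ (where the oracle inputs on $x = x^*$ go through $\FSim$, i.e.\ only verification, never decoding and never running $f_i$), and then to prove that $\ParSim_3$ statement directly using the signature-token-extraction / purified-random-oracle argument sketched in \cref{subsec:proof-overview}. I will package the latter as a separate \cref{lemma:mapping-hardness-3} and invoke it.

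\textbf{Step 1: Reduce $\ParSim_0$ to $\ParSim_3$ via the $W$-wire decomposition.} First decompose the input state by measuring the ``late'' $W$ wires $W_{\tau+1},\dots,W_t$ in the standard basis: write $\ket*{\widetilde{\psi}} = \sum_{\{w^*_\iota\}_{\iota \in [\tau+1,t]}} \Pi[\{w^*_\iota\}_{\iota \in [\tau+1,t]}]\ket*{\widetilde{\psi}}$ (this is legitimate because of the standard-basis-collapsible $W$-wires property, \cref{def:LM-properties}, combined with the fact that $\ket{\psi[x^*,\{v^*_\iota,r^*_\iota\}_{\iota \in [\tau]}]}$ already has its $(V_\tau,W_\tau)$ block fixed). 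For each component, use \cref{lemma:map-distinguish}-style Cauchy-Schwarz (as in the $|\cH_2-\cH_3|$ and $|\cH_5-\cH_6|$ claims) to pay a factor $2^n$ and replace $U$ acting on the full state with $U$ acting on each $\Pi[\{w^*_\iota\}_{\iota \in [\tau+1,t]}]\ket*{\widetilde{\psi}}$ separately; the cross-terms are controlled by \cref{claim:parsim-partialmap} (hardness of mapping the late-$W$ wires away from $w^*$, in $\ParSim_3$) — but I must be careful to establish that cross-term bound in $\ParSim_0$, which follows by chaining the indistinguishability claims \cref{claim:parsim0-1}, \cref{claim:parsim1-2}, \cref{claim:parsim2-3}, \cref{claim:parsim3-4}, \cref{claim:parsim4-5} back through $\ParSim_5$ and then applying \cref{claim:parsim-auth}. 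Once the late-$W$ wires are collapsed to a fixed $\{w^*_\iota\}_{\iota \in [\tau+1,t]}$, the chain of indistinguishability claims \cref{claim:parsim0-1} through \cref{claim:parsim3-4} (which are precisely stated for the state $\Pi[\{w^*_\iota\}_{\iota \in [\tau+1,t]}]\ket*{\widetilde{\psi}}$) lets me swap the $\ParSim_0$ oracles for the $\ParSim_3$ oracles $\FSim_1[x^*],\dots,\GSim$ with only $2^{-\Omega(\kappa)}$ loss, since the projector $\Pi[x^*,\vk]\otimes\Pi[\neg(v^*_\tau,r^*_\tau)]$ is a fixed measurement that the distinguisher at the end of \cref{lemma:puncture}-type arguments can apply. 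Summing over the $\le 2^n$ values of $\{w^*_\iota\}_{\iota \in [\tau+1,t]}$ keeps the bound at $2^{-\Omega(\kappa)}$ because $\kappa = \omega(n)$.

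\textbf{Step 2: Prove the $\ParSim_3$ version (\cref{lemma:mapping-hardness-3}) by signature-token extraction.} This is the substantive step. In $\ParSim_3$, all oracle calls on input $x^*$ go through $\FSim_i[x^*]$, which only verifies (never decodes) and always outputs the ``zero'' label, so every oracle can be implemented from $\Ver_{k,\cdot,\cdot}(\cdot)$ plus $\TokVer[\vk]$ plus the purified random oracle $H$. Purify $H$ (following \cite{C:Zhandry19}) with a database register in uniform superposition at every point. The claim becomes: $U$ cannot produce a state where the signature register holds a valid $\sigma_{x^*}$, the $(V_\tau,W_\tau)$ registers are off the support of $(v^*_\tau,r^*_\tau)$, \emph{and} the database register is still in its all-uniform state at the $x^*$-prefixed locations — because any oracle call on $x^*$ that actually collapsed/disturbed the $(V_\tau,W_\tau)$ registers must have entangled with the database at an input beginning with $(x^*,\sigma_{x^*})$, so an extractor measuring the database in the Hadamard basis would recover a valid signature $\sigma_{x^*}$; combined with the fact that the obfuscation state also contains $\ket{\sk}$, from which the adversary can be made to produce another signature $\sigma_{x^*}$ ``for free'' — no wait, the cleaner route is the $\mathsf{SWAP}$ argument from the toy problem in \cref{subsec:proof-overview}: decompose $\ket*{\widetilde{\psi}}$ into the orthogonal ``logical'' components indexed by the possible $(V_\tau,W_\tau)$-outcomes, use authentication security (\cref{thm:auth-sec}) to argue $U$ cannot find an input on which the true oracle and the ``frozen-label'' oracle $O[r^*_\tau]$ differ, then observe that the other logical branch's oracle $O[1-r^*_\tau]$ equals $\mathsf{SWAP}\circ O[r^*_\tau]\circ \mathsf{SWAP}$ on the relevant database register, and finally use $\mathsf{SWAP}\ket{\text{uniform}} = \ket{\text{uniform}}$ together with the projector onto the undisturbed database to collapse everything back to a single $O[r^*_\tau]$-experiment, which is then bounded by authentication security since $O[r^*_\tau]$ needs only $\Ver$. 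I expect this to be the main obstacle: the toy version in the overview handles a single qubit and a one-bit oracle, and the real lemma must handle the partial ZX measurements $M_{\theta_i,f_i}$ on many authenticated registers, $t$ layers of adaptive measurements, and the fact that the ``off-support'' projector $\Pi[\neg(v^*_\tau,r^*_\tau)]$ mixes a standard-basis-decoding condition ($\Pi[\neg v^*_\tau]$) with a functional condition on $f_\tau$ ($\Pi[\neg r^*_\tau]$) — so the decomposition into logical branches and the bookkeeping of which database locations get disturbed is considerably more intricate, and one must carefully invoke the generic security/mapping-security of the authentication scheme (\cref{def:auth-mapping-sec}, \cref{thm:auth-sec}) rather than coset-state-specific facts.

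\textbf{Wrap-up.} Finally, assemble: the $2^{-\Omega(\kappa)}$ bound in $\ParSim_3$ from \cref{lemma:mapping-hardness-3}, pushed back through the Step-1 chain of indistinguishability claims (each contributing $2^{-\Omega(\kappa)}$ or being exact), and the $2^n$-factor Cauchy-Schwarz losses, all combine — using $n \le t \le \poly(\secp)$ and $\kappa = n^4 = \omega(n)$ — to give $2^{-\Omega(\kappa)}$ overall, establishing \cref{lemma:mapping-hardness} as stated.
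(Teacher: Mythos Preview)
Your overall architecture matches the paper's: decompose the state by measuring the late $W$ wires $W_{\tau+1},\dots,W_t$, use the $\ParSim_0\to\ParSim_1\to\ParSim_2\to\ParSim_3$ indistinguishability chain on each component to land in $\ParSim_3$, and then invoke a separate \cref{lemma:mapping-hardness-3} proved by a purified-random-oracle / signature-token-extraction argument. Two corrections are worth making.

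\textbf{Step 1 is slightly over-engineered.} The paper does not need \cref{lemma:map-distinguish} or any cross-term control here. Because the target is a bound on a single projector's weight (not a difference of two distinguishing probabilities), \cref{lemma:project} suffices: $\|\Gamma U\ket{\psi}\|^2 \le N\sum_i\|\Gamma U\Pi_i\ket{\psi}\|^2$. So you pay the factor $2^n$ and move straight to the per-$\{w^*_\iota\}_{\iota\in[\tau+1,t]}$ terms without first proving mapping-hardness of the late-$W$ wires in $\ParSim_0$. Also, the chain only needs to reach $\ParSim_3$ (claims \ref{claim:parsim0-1}, \ref{claim:parsim1-2}, \ref{claim:parsim2-3}); going on to $\ParSim_4$ is unnecessary for this step.

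\textbf{Step 2 has the extraction direction reversed.} Your description says the relevant database registers are the $x^*$-prefixed ones and that the extractor recovers $\sigma_{x^*}$. That cannot work (you already have $\sigma_{x^*}$ from the projector $\Pi[x^*,\vk]$; a second copy of the same signature breaks nothing). The paper's argument is the opposite: in $\ParSim_3$, the $x=x^*$ branch uses only $\FSim_i$ (verification, zero-label), which by authentication security cannot move the state onto $\Pi[\neg(v^*_\tau,r^*_\tau)]$. So any such movement must come from oracle calls on $x\neq x^*$, which still run the real $\sF_i$ and therefore disturb the database at registers in $A[\neq x^*]\coloneqq\{a:a=(x,\cdot),\,x\neq x^*\}$. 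The central claim (\cref{claim:extract}) is that mapping onto $\Pi[\neg(v^*_\tau,r^*_\tau)]$ while leaving the $A[\neq x^*]$ database in uniform superposition is $2^{-\Omega(\kappa)}$-hard. Correspondingly, the SWAP argument's decomposition is over all $W$-wire outcomes $w^*=\{w^*_\iota\}_{\iota\in[t]}$ (not over $(V_\tau,W_\tau)$-outcomes), and the permutation $\Sigma[w^*]$ acts on the $A[\neq x^*]$ database registers to identify the $\ParSim_4$ oracle $\FSim_i[x^*][w^*]$ with the $\ParSim_5$ oracle $\FSim_i[x^*][\emptyset]$; this reduces the claim to authentication mapping-security in $\ParSim_5$ (\cref{claim:parsim-auth}). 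Finally, if the lemma were false, one lands in $\Pi[x^*,\vk]\otimes(\cI-\ketbra{+_\kappa}{+_\kappa}^{\otimes A[\neq x^*]})$ with non-negligible weight, measures the database to extract a valid $\sigma_x$ for some $x\neq x^*$, and together with the $\sigma_{x^*}$ already in hand contradicts signature-token unforgeability.
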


\begin{proof}

    Write $\E_{\ParSim_0}$ as shorthand for the expectation over \[\ket*{\widetilde{\psi}},(\sF_1[\cdot],\dots,\sF_t[\cdot],\GSim) \gets \ParSim_1(1^\secp,\ket*{\psi[x^*,\{v^*_\iota,r^*_\iota\}_{\iota \in [\tau]}]}),\] and write $\E_{\ParSim_3}$ as shorthard for the expectation over \[\ket*{\widetilde{\psi}},(\FSim_1[\cdot],\dots,\FSim_t[\cdot],\GSim) \gets \ParSim_3(1^\secp,\ket*{\psi[x^*,\{v^*_\iota,r^*_\iota\}_{\iota \in [\tau]}]}).\]

    Then,

    \begin{align*}
        &\E_{\ParSim_0}\left[\big\|\left(\Pi[x^*,\vk] \otimes \Pi[\neg (v^*_\tau,r^*_\tau)]\right)U^{\sF_1[x^*,\{v^*_\iota,r^*_\iota\}_{\iota \in [\tau]}],\dots,\GSim}\ket*{\widetilde{\psi}}\big\|^2\right] \\
        &\leq 2^n \cdot \sum_{\{w^*_\iota\}_{\iota \in [\tau+1,t]}} \E_{\ParSim_0}\left[\big\|\left(\Pi[x^*,\vk] \otimes \Pi[\neg (v^*_\tau,r^*_\tau)]\right)U^{\sF_1[x^*,\{v^*_\iota,r^*_\iota\}_{\iota \in [\tau]}],\dots,\GSim}\Pi[\{w^*_\iota\}_{\iota \in [\tau+1,t]}]\ket*{\widetilde{\psi}}\big\|^2\right]  \\
        &\leq 2^n \cdot \sum_{\{w^*_\iota\}_{\iota \in [\tau+1,t]}} \E_{\ParSim_3}\left[\big\|\left(\Pi[x^*,\vk] \otimes \Pi[\neg (v^*_\tau,r^*_\tau)]\right)U^{\FSim_1[x^*],\dots,\GSim}\Pi[\{w^*_\iota\}_{\iota \in [\tau+1,t]}]\ket*{\widetilde{\psi}}\big\|^2\right] + 2^{-\Omega(\kappa)} \\
        &\leq 2^{-\Omega(\kappa)},\\
    \end{align*}

    where 
    
    \begin{itemize}
        \item The first inequality follows from \cref{lemma:project} (an application of Cauchy-Schwarz).
        \item The second inequality follows from combining \cref{claim:parsim0-1}, \cref{claim:parsim1-2}, and \cref{claim:parsim2-3} proven above.
        \item The third inequality follows from \cref{lemma:mapping-hardness-3} proven below.
    \end{itemize}

\end{proof}

We finally complete the proof of security of our obfuscation scheme with the following lemma. An overview of the techniques involved in this proof, which include extraction via a purified random oracle, is given in \cref{subsec:proof-overview}.

\begin{lemma}\label{lemma:mapping-hardness-3}
    For any QPQ unitary $U$, input $x^*$, partial measurement results $\{v^*_\iota,r^*_\iota\}_{\iota \in [\tau]}$, and $\{w^*_\iota\}_{\iota \in [\tau+1,t]}$, it holds that 
    \[\E\left[\big\|\left(\Pi[x^*,\vk] \otimes \Pi[\neg (v^*_\tau,r^*_\tau)]\right)U^{\FSim_1[x^*],\dots,\GSim}\Pi[\{w^*_\iota\}_{\iota \in [\tau+1,t]}]\ket*{\widetilde{\psi}}\big\|^2\right] = 2^{-\Omega(\kappa)},\]
    
    where the expectation is over

    \[\ket*{\widetilde{\psi}},(\FSim_1[\cdot],\dots,\FSim_t[\cdot],\GSim) \gets \ParSim_3(1^\secp,\ket*{\psi[x^*,\{v^*_\iota,r^*_\iota\}_{\iota \in [\tau]}]}).\]
    
\end{lemma}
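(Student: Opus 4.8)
The plan is to lift the toy problem of \cref{subsec:proof-overview} to the full setting. First I would purify the random oracle $H$ following \cite{C:Zhandry19}: introduce a database register $\rv D = \bigotimes_u \rv D_u$, one $\kappa$-qubit register per input $u$, each initialized to the uniform superposition over $\{0,1\}^\kappa$, and replace every internal call to $H$ made by the oracles $\FSim_1[x^*],\dots,\FSim_t[x^*],\GSim$ with the corresponding purified-query unitary acting on $\rv D$ (together with an auxiliary phase register for the output ``labels''). This is a perfect simulation of the experiment in the lemma statement, so it suffices to bound the analogous quantity in the purified world, where we additionally have the projector $\Pi[x^*,\vk]$ onto valid signatures of $x^*$ in the adversary's output register.

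The heart of the argument is to show that, on the state $\Pi[\{w^*_\iota\}_{\iota\in[\tau+1,t]}]\ket*{\widetilde\psi}$, the only oracles in $\ParSim_3$ that invoke $\Dec$ rather than $\Ver$ --- namely $\FSim_i[x^*](x,\cdot)$ for $x\neq x^*$, which are the only route by which the adversary could perform a rogue measurement on the authenticated $(V_\tau,W_\tau)$ registers --- can be replaced by verification-only oracles $\FSim_i^{\mathrm{ver}}$ that always query the ``$0$'' label. For a fixed layer $i$ and message $x\neq x^*$ I would, as in the overview, split the adversary's state into the orthogonal logical branches determined by the bit $r_i$ that $\FSim_i[x^*](x,\cdot)$ would decode, and compare it with the oracle that always queries $\rv D$ at $(x,\sigma_x,\dots,0)$. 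On branches with $r_i = 0$ these agree; on branches with $r_i = 1$ they differ only by a $\mathsf{SWAP}$ of the two database registers $\rv D_{(x,\sigma_x,\dots,0)}$ and $\rv D_{(x,\sigma_x,\dots,1)}$, and $\mathsf{SWAP}$ fixes the uniform-superposition initial state of $\rv D$. Using mapping security of the authentication scheme (\cref{thm:auth-sec}) to control the branch decomposition, together with the fact that --- by unforgeability of the signature token --- the adversary cannot simultaneously hold a valid signature on $x^*$ and the valid signature on $x$ that a non-$\bot$ answer from $\FSim_i[x^*](x,\cdot)$ requires, one concludes that the left-hand $\mathsf{SWAP}$ can be absorbed against the projector onto the undisturbed $\rv D$, exactly as in the overview. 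Iterating over all $i\in[t]$ and all $x\neq x^*$ (via \cref{lemma:puncture} and the Cauchy--Schwarz lemmas \cref{lemma:map-distinguish} and \cref{lemma:project}) collapses the experiment to one in which every oracle can be implemented given only $\Ver_{k,\cdot,\cdot}(\cdot)$, the signature verification key, and a fresh random oracle, none of which touch the authenticated state.

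In this collapsed experiment, landing in the image of $\Pi[\neg (v^*_\tau,r^*_\tau)] = \Pi[\neg r^*_\tau] + \Pi[\neg v^*_\tau]$ corresponds to the logical $\tau$-th measurement $M_{\theta_\tau,f_\tau^{x^*,r^*_1,\dots,r^*_{\tau-1}}}$ returning an outcome other than $(v^*_\tau,r^*_\tau)$; since we are in the fully-deterministic, post-selected case and $\{v^*_\iota,r^*_\iota\}_{\iota\in[\tau]}$ are measurement results in the support, this outcome has probability $0$ on the de-authenticated state $L_\tau\cdots L_1\ket*{\psi[x^*,\{v^*_\iota,r^*_\iota\}_{\iota\in[\tau]}]}$. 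Hence mapping security of the authentication scheme (\cref{def:auth-mapping-sec}) gives that the adversary reaches this image only with probability $2^{-\Omega(\kappa)}$, which is the claim.

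The main obstacle is the middle step: carrying out the $\mathsf{SWAP}$-and-project argument uniformly across all $t$ adaptively-specified measurement layers and all exponentially many messages $x\neq x^*$ at once, and in particular making rigorous the coupling between the output projection $\Pi[x^*,\vk]$ and the statement ``$\rv D$ is undisturbed at every input beginning with $(x,\sigma_x)$ for a valid $\sigma_x$'' through the exponential unforgeability of signature tokens. One must also track the error budget carefully --- each layer/message pair contributes a $2^{-\Omega(\kappa)}$ that must survive the $2^{\Theta(n)}$-type losses incurred by the Cauchy--Schwarz steps --- which is exactly why the statement tolerates only $\kappa = \max\{\secp,n^4\}$.
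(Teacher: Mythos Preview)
Your proposal is correct and uses the same ingredients as the paper: purify $H$, exploit that the relevant permutation of database registers fixes the uniform superposition, and combine signature-token unforgeability with authentication mapping security. The paper packages these ingredients in a different order that is worth knowing. It first proves an intermediate claim (\cref{claim:extract}) that, \emph{without any appeal to signature tokens}, the weight on $\Pi[\neg(v^*_\tau,r^*_\tau)]\otimes\ketbra{+_\kappa}{+_\kappa}^{\otimes\{\regD_a\}_{a\in A[\neq x^*]}}$ is $2^{-\Omega(\kappa)}$; this is where the SWAP trick lives, and it is carried out via a single global decomposition over all of $w^*=\{w^*_\iota\}_{\iota\in[t]}$ and a single permutation $\Sigma[w^*]$ of the database registers (handling every layer and every $x\neq x^*$ at once), rather than your per-layer, per-$x$ iteration. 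Only then does the paper invoke signature tokens: if the lemma failed, combining with \cref{claim:extract} forces $\Pi[x^*,\vk]\otimes(\cI-\ketbra{+_\kappa}{+_\kappa}^{\otimes\{\regD_a\}_{a\in A[\neq x^*]}})$ to have weight $2^{-o(\kappa)}$, and the reduction explicitly measures the disturbed database registers in the Hadamard basis to extract a valid $\sigma_x$ alongside $\sigma_{x^*}$. Your sentence ``the left-hand $\mathsf{SWAP}$ can be absorbed against the projector onto the undisturbed $\rv D$'' compresses both halves into one; that is fine, but you should make explicit the extraction step (Hadamard-measure $\regD$ to read off a valid $(x,\sigma_x)$), since that is what actually turns ``database disturbed on $A[\neq x^*]$'' plus $\Pi[x^*,\vk]$ into a signature-token forgery.
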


\begin{proof}
Recall that the oracles $\FSim_1[x^*],\dots,\FSim_t[x^*],\GSim$ output by $\ParSim_3$ internally make use of a random oracle $H : \{0,1\}^* \to \{0,1\}^\kappa$, and let $K = \poly(\secp)$ be an upper bound on the length of strings that $H$ takes as input.

We will purify the random oracle $H$, introducing an oracle register $\regD \coloneqq \{\regD_a\}_{a \in \{0,1\}^K}$, where each $\regD_a$ is a $\kappa$-qubit register. Define $\ket{+_\kappa}$ to be the uniform superposition over all $\kappa$-bit strings, and define $\ket{+_H}^\regD \coloneqq \ket{+_\kappa}^{\otimes \{\regD_a\}_{a \in \{0,1\}^K}}$ to be the uniform superposition over all random oracles $H$. Finally, let $A[\neq x^*] \coloneqq \{a \in \{0,1\}^K : a = (x,\cdot) \text{ for } x \neq x^*\}$ be the set of all random oracle inputs / sub-registers of $\regD$ that do not start with $x^*$.

Now, the purified random oracle begins by initializing $\regD$ to the state $\ket{+_H}$. Each time a query to $H$ is made on input register $\regA$ and output register $\regB$, we apply a unitary defined by the map \[\ket{a}^\regA\ket{b}^\regB\ket{H}^\regD \to \ket{a}\left(\CNOT^{\otimes \kappa}\right)^{\regD_a,\regB}\ket{b}^\regB\ket{H}^\regD.\] For the rest of this proof, we will implement oracle queries to $H$ using this purified procedure, and explicitly introduce the register $\regD$, initialized to $\ket{+_H}$, in our expressions.

The central claim we need is the following, which shows that is is hard to map onto $\Pi[\neg (v^*_\tau,r^*_\tau)]$ \emph{without disturbing} the random oracle registers $\{\regD_a\}_{a \in A[\neq x^*]}$. In other words, at any point at which the adversary's state has some overlap with $\Pi[\neg (v^*_\tau,r^*_\tau)]$, it \emph{must} be the case that the adversary currently holds some information about a random oracle output at $(x,\dots)$ for some $x \neq x^*$.

\begin{claim}\label{claim:extract} 

     For any QPQ unitary $U$, input $x^*$, partial measurement results $\{v^*_\iota,r^*_\iota\}_{\iota \in [\tau]}$, and $\{w^*_\iota\}_{\iota \in [\tau+1,t]}$, it holds that

    \begin{align*}
    \E\left[\big\| \left(\Pi[\neg (v^*_\tau,r^*_\tau)] \otimes \ketbra{+_\kappa}{+_\kappa}^{\otimes \{\regD_a\}_{a \in A[\neq x^*]}}\right)U^{\FSim_1[x^*],\dots,\GSim}\ket*{\widetilde{\psi}^*}\ket{+_H}^\regD\big\|^2\right] = 2^{-\Omega(\kappa)},
    \end{align*}

    where the expectation is over 

    \[\ket*{\widetilde{\psi}},(\FSim_1[\cdot],\dots,\FSim_t[\cdot],\GSim) \gets \ParSim_3(1^\secp,\ket*{\psi[x^*,\{v^*_\iota,r^*_\iota\}_{\iota \in [\tau]}]}),\]

    and \[\ket*{\widetilde{\psi}^*} \coloneqq \Pi[\{w^*_\iota\}_{\iota \in [\tau+1,t]}]\ket*{\widetilde{\psi}}.\]

\end{claim}

\begin{proof}

First, consider the following distribution, which essentially toggles between $\ParSim_4$ and $\ParSim_5$.

\begin{tcolorbox}[breakable, enhanced]\footnotesize
$\Sim[x^*](1^\secp,\ket{\psi})$:
    \begin{itemize}
        \item Sample $k \gets \Gen(1^\kappa,n)$, and compute $\ket*{\psi_k} = \Enc_k(\ket{\psi})$.
        \item Sample a signature token $(\vk,\ket{\sk}) \gets \TokGen(1^\kappa)$.
        \item Let $H : \{0,1\}^* \to \{0,1\}^\kappa$ be a random oracle. 
        \item For each $i \in [t]$, define the function $\FSim_i[x^*][z](x,\sigma_{x},\widetilde{v}_1,\dots,\widetilde{v}_i,\widetilde{w}_i,\ell_1,\dots,\ell_{i-1})$: 
        \begin{itemize}
            \item If $z = \emptyset$, output $\FSim_i(x,\sigma_{x},\widetilde{v}_1,\dots,\widetilde{v}_i,\widetilde{w}_i,\ell_1,\dots,\ell_{i-1})$, where $\FSim_i$ is defined as in $\QObf_5$.
            \item Otherwise, if $z = w^*$:
            \begin{itemize}
                \item If $x = x^*$, output $\FSim_i(x^*,\sigma_{x^*},\widetilde{v}_1,\dots,\widetilde{v}_i,\widetilde{w}_i,\ell_1,\dots,\ell_{i-1})$.
                \item If $x \neq x^*$, output $\FSim_i[w^*](x,\sigma_{x},\widetilde{v}_1,\dots,\widetilde{v}_i,\widetilde{w}_i,\ell_1,\dots,\ell_{i-1})$, where $\FSim_i[w^*]$ is defined as in $\QObf_4$.
            \end{itemize}
        \end{itemize}
        \item Define $\GSim$ as in $\QObf_2$.

        \item Output $\ket*{\widetilde{\psi}} = \ket{\psi_k}\ket{\sk}, O = \left(\FSim_1[x^*][\cdot],\dots,\FSim_t[x^*][\cdot],\GSim\right).$
    \end{itemize}
\end{tcolorbox}

Indeed, observe that  

\[\ket*{\widetilde{\psi}},(\FSim_1[x^*][w^*],\dots,\FSim_t[x^*][w^*],\GSim) \gets \Sim[x^*](1^\secp,\ket*{\psi[x^*,\{v^*_\iota,r^*_\iota\}_{\iota \in [\tau]}]})\] is equivalent to \[\ket*{\widetilde{\psi}},(\FSim_1[x^*,w^*],\dots,\FSim_t[x^*,w^*],\GSim) \gets \ParSim_4(1^\secp,\ket*{\psi[x^*,\{v^*_\iota,r^*_\iota\}_{\iota \in [\tau]}]}),\] while
\[\ket*{\widetilde{\psi}},(\FSim_1[x^*][\emptyset],\dots,\FSim_t[x^*][\emptyset],\GSim) \gets \Sim[x^*](1^\secp,\ket*{\psi[x^*,\{v^*_\iota,r^*_\iota\}_{\iota \in [\tau]}]})\] is equivalent to \[\ket*{\widetilde{\psi}},(\FSim_1,\dots,\FSim_t,\GSim) \gets \ParSim_5(1^\secp,\ket*{\psi[x^*,\{v^*_\iota,r^*_\iota\}_{\iota \in [\tau]}]}).\]

Next, recall the definition of $R[k,w^*]$ from \cref{subsec:main-theorem}: \\

\noindent $R[k,w^*](x,\widetilde{v}_1,\dots,\widetilde{v}_i):$
\begin{itemize}
    \item Compute $(v_1,\dots,v_i) = \Dec_{k,L_i\dots L_1,\theta_i[V_1,\dots,V_i]}(\widetilde{v}_1,\dots,\widetilde{v}_i)$. If the result is $\bot$, then output $\bot$.
    \item For $\iota \in [i],$ compute $(\cdot,r_\iota) = f_\iota^{x,r_1,\dots,r_{\iota-1}}(v_1,\dots,v_\iota,w^*_\iota)$.
    \item Output $r_i$.
\end{itemize}

Now, for any $w^* = \{w^*_\iota\}_{\iota \in [t]}$, define a unitary $\Sigma[w^*]$ that permutes the registers $A[\neq x^*]$ according to the rule that, for $x \neq x^*$, swaps 
\[(x,\sigma_x,\widetilde{v}_1,\dots,\widetilde{v}_i,\widetilde{w}_i,\ell_1,\dots,\ell_{i-1},0) ~~~ \text{and} ~~~ (x,\sigma_x,\widetilde{v}_1,\dots,\widetilde{v}_i,\widetilde{w}_i,\ell_1,\dots,\ell_{i-1},1)\] whenever $R[k,w^*](x,\widetilde{v}_1,\dots,\widetilde{v}_i)=1$. By definition of $\Sim[x^*]$, we have the following fact.

\begin{fact}\label{fact:1}
For any unitary $U$ and \[\ket*{\widetilde{\psi}},(\FSim_1[x^*][\cdot],\dots,\FSim_t[x^*][\cdot],\GSim) \in \Sim[x^*](1^\secp,\ket*{\psi[x^*,\{v^*_\iota,r^*_\iota\}_{\iota \in [\tau]}]}),\] it holds that
\begin{align*}
U^{\FSim_1[x^*][w^*],\dots,\GSim}\ket*{\widetilde{\psi}^*}\ket{+_H}^\regD = \Sigma[w^*] U^{\FSim_1[x^*][\emptyset],\dots,\GSim}\Sigma[w^*]\ket*{\widetilde{\psi}^*} \ket{+_H}^\regD,
\end{align*}

where the unitary $\Sigma[w^*]$ is applied to registers $\{\regD_a\}_{a \in A[\neq x^*]}$.
\end{fact}

We will also use the following immediate fact.

\begin{fact}\label{fact:2}
    For any $w^*$, \[\Sigma[w^*]\ket{+_\kappa}^{\otimes\{\regD_a\}_{a \in A[\neq x^*]}} = \ket{+_\kappa}^{\otimes\{\regD_a\}_{a \in A[\neq x^*]}}.\]
\end{fact}

Now, write $\E_{\ParSim_3}$ as shorthand for the expectation over \[\ket*{\widetilde{\psi}},(\FSim_1[\cdot],\dots,\FSim_t[\cdot],\GSim) \gets \ParSim_3(1^\secp,\ket*{\psi[x^*,\{v^*_\iota,r^*_\iota\}_{\iota \in [\tau]}]}),\] write $\E_{\ParSim_4}$ as shorthand for the expectation over \[\ket*{\widetilde{\psi}},(\FSim_1[\cdot],\dots,\FSim_t[\cdot],\GSim) \gets \ParSim_4(1^\secp,\ket*{\psi[x^*,\{v^*_\iota,r^*_\iota\}_{\iota \in [\tau]}]}),\] write $\E_{\ParSim_5}$ as shorthand for the expectation over \[\ket*{\widetilde{\psi}},(\FSim_1,\dots,\FSim_t,\GSim) \gets \ParSim_5(1^\secp,\ket*{\psi[x^*,\{v^*_\iota,r^*_\iota\}_{\iota \in [\tau]}]}),\] and write $\E_{\Sim[x^*]}$ as shorthand for \[\ket*{\widetilde{\psi}},(\FSim_1[x^*][\cdot],\dots,\FSim_t[x^*][\cdot],\GSim) \gets \Sim[x^*](1^\secp,\ket*{\psi[x^*,\{v^*_\iota,r^*_\iota\}_{\iota \in [\tau]}]}).\]

Then,

\begin{align*}
    &\E_{\ParSim_3}\left[\bigg\| \left(\Pi[\neg (v^*_\tau,r^*_\tau)] \otimes \ketbra{+_\kappa}{+_\kappa}^{\otimes \{\regD_a\}_{a \in A[\neq x^*]}}\right)U^{\FSim_1[x^*],\dots,\GSim}\ket*{\widetilde{\psi}^*}\ket{+_H}^\regD\bigg\|^2\right] \\
    &= \E_{\ParSim_3}\left[\bigg\| \sum_{w^*}\left(\Pi[\neg (v^*_\tau,r^*_\tau)] \otimes \ketbra{+_\kappa}{+_\kappa}^{\otimes \{\regD_a\}_{a \in A[\neq x^*]}}\right)U^{\FSim_1[x^*],\dots,\GSim}\Pi[w^*]\ket*{\widetilde{\psi}^*}\ket{+_H}^\regD\bigg\|^2\right] \\
    &= \E_{\ParSim_3}\left[\bigg\|\Pi[\neg (v^*_\tau,r^*_\tau)] \sum_{w^*} \left(\ketbra{+_\kappa}{+_\kappa}^{\otimes \{\regD_a\}_{a \in A[\neq x^*]}}\right)U^{\FSim_1[x^*],\dots,\GSim}\Pi[w^*]\ket*{\widetilde{\psi}^*}\ket{+_H}^\regD\bigg\|^2\right] \\
    &\leq \E_{\ParSim_4}\left[\bigg\|\Pi[\neg (v^*_\tau,r^*_\tau)] \sum_{w^*} \left(\ketbra{+_\kappa}{+_\kappa}^{\otimes \{\regD_a\}_{a \in A[\neq x^*]}}\right)U^{\FSim_1[x^*,w^*],\dots,\GSim}\Pi[w^*]\ket*{\widetilde{\psi}^*}\ket{+_H}^\regD\bigg\|^2\right] + 2^n \cdot 2^{-\Omega(\kappa)} \\
    &= \E_{\Sim[x^*]}\left[\bigg\|\Pi[\neg (v^*_\tau,r^*_\tau)] \sum_{w^*} \left(\ketbra{+_\kappa}{+_\kappa}^{\otimes \{\regD_a\}_{a \in A[\neq x^*]}}\right)U^{\FSim_1[x^*][w^*],\dots,\GSim}\Pi[w^*]\ket*{\widetilde{\psi}^*}\ket{+_H}^\regD\bigg\|^2\right] + 2^{-\Omega(\kappa)} \\
    &= \E_{\Sim[x^*]}\left[\bigg\|\Pi[\neg (v^*_\tau,r^*_\tau)] \sum_{w^*} \left(\ketbra{+_\kappa}{+_\kappa}^{\otimes \{\regD_a\}_{a \in A[\neq x^*]}}\right)\Sigma[w^*]U^{\FSim_1[x^*][\emptyset],\dots,\GSim}\Sigma[w^*]\Pi[w^*]\ket*{\widetilde{\psi}^*}\ket{+_H}^\regD\bigg\|^2\right]\\ 
    &~~~~~~ + 2^{-\Omega(\kappa)} \\
    &= \E_{\Sim[x^*]}\left[\bigg\|\Pi[\neg (v^*_\tau,r^*_\tau)] \sum_{w^*} \left(\ketbra{+_\kappa}{+_\kappa}^{\otimes \{\regD_a\}_{a \in A[\neq x^*]}}\right)U^{\FSim_1[x^*][\emptyset],\dots,\GSim}\Pi[w^*]\ket*{\widetilde{\psi}^*}\ket{+_H}^\regD\bigg\|^2\right] + 2^{-\Omega(\kappa)} \\
    &= \E_{\ParSim_5}\left[\bigg\|\Pi[\neg (v^*_\tau,r^*_\tau)] \sum_{w^*} \left(\ketbra{+_\kappa}{+_\kappa}^{\otimes \{\regD_a\}_{a \in A[\neq x^*]}}\right)U^{\FSim_1,\dots,\GSim}\Pi[w^*]\ket*{\widetilde{\psi}^*}\ket{+_H}^\regD\bigg\|^2\right] + 2^{-\Omega(\kappa)} \\
    &= \E_{\ParSim_5}\left[\bigg\|\left(\Pi[\neg (v^*_\tau,r^*_\tau)] \otimes \ketbra{+_\kappa}{+_\kappa}^{\otimes \{\regD_a\}_{a \in A[\neq x^*]}}\right)U^{\FSim_1,\dots,\GSim}\ket*{\widetilde{\psi}^*}\ket{+_H}^\regD\bigg\|^2\right] + 2^{-\Omega(\kappa)} \\
    &\leq 2^{-\Omega(\kappa)},
\end{align*}

where 

\begin{itemize}
    \item The first inequality follows from \cref{claim:parsim3-4} proven above.
    \item The following equality is by definition of $\Sim[x^*]$.
    \item The following equality is \cref{fact:1}.
    \item The following equality is \cref{fact:2}.
    \item The following equality is by definition of $\Sim[x^*]$.
    \item The final inequality follows from \cref{claim:parsim-auth} proven above.
\end{itemize}

\end{proof}

Now, assume for contradiction that the lemma is false. Combined with the fact that \cref{claim:extract} is true, this implies that 

\begin{align*}
    \E\left[\big\| \left(\Pi[x^*,\vk] \otimes \left(\cI - \ketbra{+_\kappa}{+_\kappa}^{\otimes \{\regD_a\}_{a \in A[\neq x^*]}}\right)\right)U^{\FSim_1[x^*],\dots,\GSim}\ket*{\widetilde{\psi}^*}\ket{+_H}^\regD\big\|^2\right] = 2^{-o(\kappa)},
\end{align*}

where the expectation is over

 \[\ket*{\widetilde{\psi}},(\FSim_1[\cdot],\dots,\FSim_t[\cdot],\GSim) \gets \ParSim_3(1^\secp,\ket*{\psi[x^*,\{v^*_\iota,r^*_\iota\}_{\iota \in [\tau]}]}).\]

However, this would violate the security of the signature token scheme, which we now show. Consider the following reduction that takes as input the signing key $\ket{\sk}$ for a signature token scheme, and has oracle access to $\TokVer[\vk]$.

\begin{itemize}
    \item Prepare $\ket*{\widetilde{\psi}^*},\ket{+_H}$, and $(\FSim_1[\cdot],\dots,\FSim_t[\cdot],\GSim)$ as in the description of \cref{lemma:mapping-hardness-3}, and run $U^{\FSim_1[\cdot],\dots,\GSim}\ket*{\widetilde{\psi}^*}\ket{+_H}$, except that $\TokVer$ queries  are computed by forwarding them to $\TokVer[\vk]$.
    \item Measure the final state of $U$ in the standard basis, and parse the outcome as $(\sigma_{x^*},\cdot)$.
    \item Measure the final state on registers $\{\regD_a\}_{a \in A[\neq x^*]}$ in the Hadamard basis. If any register $\regD_a$ gives a result other than $0^\kappa$, then parse $a = (x,\sigma_x,\cdot)$ for some $x \neq x^*$. 
    \item Output $(\sigma_{x^*},\sigma_x)$.
\end{itemize}

Then, by the definition of $\Pi[x^*,\vk]$ and the fact that the random oracle $H$ is only ever queried on inputs that begin with $(x,\sigma_x)$ such that $\TokVer(\vk,x,\sigma_x) = \top$, we have that with probability $2^{-o(\kappa)}$, $\TokVer(\vk,x^*,\sigma_{x^*}) = \TokVer(\vk,x,\sigma_x) = \top$, which violates security of the signature token scheme (\cref{def:unforgeability}). Indeed, note that the signature token scheme is secure against $\poly(\secp)$ query bounded adversaries that otherwise have unlimited time and space, which is satisfied by the reduction given above.

\end{proof}

\section{Acknowledgements}

We thank Sam Gunn for correspondence throughout this project, including several illuminating discussions.

\ifsubmission
\bibliographystyle{plain}
\else
\bibliographystyle{alpha}
\fi

\bibliography{abbrev3,crypto,main}

\newcommand{\etalchar}[1]{$^{#1}$}
\begin{thebibliography}{AMTDW00}

\bibitem[Aar09]{DBLP:conf/coco/Aaronson09}
Scott Aaronson.
\newblock Quantum copy-protection and quantum money.
\newblock In {\em Proceedings of the 24th Annual {IEEE} Conference on
  Computational Complexity, {CCC} 2009, Paris, France, 15-18 July 2009}, pages
  229--242. {IEEE} Computer Society, 2009.

\bibitem[ABDS21]{ABDS21}
Gorjan Alagic, Zvika Brakerski, Yfke Dulek, and Christian Schaffner.
\newblock Impossibility of quantum virtual black-box obfuscation of classical
  circuits.
\newblock In Tal Malkin and Chris Peikert, editors, {\em Advances in Cryptology
  - {CRYPTO} 2021 - 41st Annual International Cryptology Conference, {CRYPTO}
  2021, Virtual Event, August 16-20, 2021, Proceedings, Part {I}}, volume 12825
  of {\em Lecture Notes in Computer Science}, pages 497--525. Springer, 2021.

\bibitem[ABOEM17]{aharonov2017interactive}
Dorit Aharonov, Michael Ben-Or, Elad Eban, and Urmila Mahadev.
\newblock Interactive proofs for quantum computations, 2017.

\bibitem[AC12]{STOC:AarChr12}
Scott Aaronson and Paul Christiano.
\newblock Quantum money from hidden subspaces.
\newblock In Howard~J. Karloff and Toniann Pitassi, editors, {\em 44th ACM
  STOC}, pages 41--60. {ACM} Press, May 2012.

\bibitem[AF16]{AF16}
Gorjan Alagic and Bill Fefferman.
\newblock On quantum obfuscation.
\newblock {\em CoRR}, abs/1602.01771, 2016.

\bibitem[AGKZ20]{10.1145/3357713.3384304}
Ryan Amos, Marios Georgiou, Aggelos Kiayias, and Mark Zhandry.
\newblock {\em One-Shot Signatures and Applications to Hybrid Quantum/Classical
  Authentication}, page 255–268.
\newblock Association for Computing Machinery, New York, NY, USA, 2020.

\bibitem[ALL{\etalchar{+}}21]{DBLP:conf/crypto/AaronsonLLZZ21}
Scott Aaronson, Jiahui Liu, Qipeng Liu, Mark Zhandry, and Ruizhe Zhang.
\newblock New approaches for quantum copy-protection.
\newblock In Tal Malkin and Chris Peikert, editors, {\em Advances in Cryptology
  - {CRYPTO} 2021 - 41st Annual International Cryptology Conference, {CRYPTO}
  2021, Virtual Event, August 16-20, 2021, Proceedings, Part {I}}, volume 12825
  of {\em Lecture Notes in Computer Science}, pages 526--555. Springer, 2021.

\bibitem[ALP21]{10.1007/978-3-030-77886-6_17}
Prabhanjan Ananth and Rolando~L. La~Placa.
\newblock Secure software leasing.
\newblock In Anne Canteaut and Fran{\c{c}}ois-Xavier Standaert, editors, {\em
  Advances in Cryptology -- EUROCRYPT 2021}, pages 501--530, Cham, 2021.
  Springer International Publishing.

\bibitem[AMTDW00]{892142}
A.~Ambainis, M.~Mosca, A.~Tapp, and R.~De~Wolf.
\newblock Private quantum channels.
\newblock In {\em Proceedings 41st Annual Symposium on Foundations of Computer
  Science}, pages 547--553, 2000.

\bibitem[BB84]{BEN84}
C.~H. Bennett and G.~Brassard.
\newblock {Quantum cryptography: Public key distribution and coin tossing}.
\newblock In {\em Proceedings of IEEE International Conference on Computers,
  Systems, and Signal Processing}, page 175, India, 1984.

\bibitem[BCM{\etalchar{+}}18]{DBLP:conf/focs/BrakerskiCMVV18}
Zvika Brakerski, Paul~F. Christiano, Urmila Mahadev, Umesh~V. Vazirani, and
  Thomas Vidick.
\newblock A cryptographic test of quantumness and certifiable randomness from a
  single quantum device.
\newblock In Mikkel Thorup, editor, {\em 59th {IEEE} Annual Symposium on
  Foundations of Computer Science, {FOCS} 2018, Paris, France, October 7-9,
  2018}, pages 320--331. {IEEE} Computer Society, 2018.

\bibitem[BDGM22]{brakerski_et_al:LIPIcs.ICALP.2022.28}
Zvika Brakerski, Nico D\"{o}ttling, Sanjam Garg, and Giulio Malavolta.
\newblock {Factoring and Pairings Are Not Necessary for IO: Circular-Secure LWE
  Suffices}.
\newblock In Miko{\l}aj Boja\'{n}czyk, Emanuela Merelli, and David~P. Woodruff,
  editors, {\em 49th International Colloquium on Automata, Languages, and
  Programming (ICALP 2022)}, volume 229 of {\em Leibniz International
  Proceedings in Informatics (LIPIcs)}, pages 28:1--28:20, Dagstuhl, Germany,
  2022. Schloss Dagstuhl -- Leibniz-Zentrum f{\"u}r Informatik.

\bibitem[BGI{\etalchar{+}}12]{JACM:BGIRSVY12}
Boaz Barak, Oded Goldreich, Russell Impagliazzo, Steven Rudich, Amit Sahai,
  Salil~P. Vadhan, and Ke~Yang.
\newblock On the (im)possibility of obfuscating programs.
\newblock {\em J. {ACM}}, 59(2):6:1--6:48, 2012.

\bibitem[BGMZ18]{TCC:BGMZ18}
James Bartusek, Jiaxin Guan, Fermi Ma, and Mark Zhandry.
\newblock Return of {GGH15}: Provable security against zeroizing attacks.
\newblock In Amos Beimel and Stefan Dziembowski, editors, {\em TCC~2018,
  Part~II}, volume 11240 of {\em {LNCS}}, pages 544--574. Springer, Heidelberg,
  November 2018.

\bibitem[BGS13]{C:BroGutSte13}
Anne Broadbent, Gus Gutoski, and Douglas Stebila.
\newblock Quantum one-time programs - (extended abstract).
\newblock In Ran Canetti and Juan~A. Garay, editors, {\em CRYPTO~2013,
  Part~II}, volume 8043 of {\em {LNCS}}, pages 344--360. Springer, Heidelberg,
  August 2013.

\bibitem[BJSW20]{doi:10.1137/18M1193530}
Anne Broadbent, Zhengfeng Ji, Fang Song, and John Watrous.
\newblock Zero-knowledge proof systems for qma.
\newblock {\em SIAM Journal on Computing}, 49(2):245--283, 2020.

\bibitem[BK21]{DBLP:conf/latincrypt/BroadbentK21}
Anne Broadbent and Raza~Ali Kazmi.
\newblock Constructions for quantum indistinguishability obfuscation.
\newblock In Patrick Longa and Carla R{\`{a}}fols, editors, {\em Progress in
  Cryptology - {LATINCRYPT} 2021 - 7th International Conference on Cryptology
  and Information Security in Latin America, Bogot{\'{a}}, Colombia, October
  6-8, 2021, Proceedings}, volume 12912 of {\em Lecture Notes in Computer
  Science}, pages 24--43. Springer, 2021.

\bibitem[BKNY23]{BKNY23}
James Bartusek, Fuyuki Kitagawa, Ryo Nishimaki, and Takashi Yamakawa.
\newblock Obfuscation of pseudo-deterministic quantum circuits.
\newblock In Barna Saha and Rocco~A. Servedio, editors, {\em Proceedings of the
  55th Annual {ACM} Symposium on Theory of Computing, {STOC} 2023, Orlando, FL,
  USA, June 20-23, 2023}, pages 1567--1578. {ACM}, 2023.

\bibitem[BM22]{BM22}
James Bartusek and Giulio Malavolta.
\newblock Indistinguishability obfuscation of null quantum circuits and
  applications.
\newblock In Mark Braverman, editor, {\em 13th Innovations in Theoretical
  Computer Science Conference, {ITCS} 2022, January 31 - February 3, 2022,
  Berkeley, CA, {USA}}, volume 215 of {\em LIPIcs}, pages 15:1--15:13. Schloss
  Dagstuhl - Leibniz-Zentrum f{\"{u}}r Informatik, 2022.

\bibitem[BOCG{\etalchar{+}}06]{BCGHS}
Michael Ben-Or, Claude Crépeau, Daniel Gottesman, Avinatan Hassidim, and Adam
  Smith.
\newblock Secure multiparty quantum computation with (only) a strict honest
  majority.
\newblock pages 249 -- 260, 11 2006.

\bibitem[BPR15]{FOCS:BitPanRos15}
Nir Bitansky, Omer Paneth, and Alon Rosen.
\newblock On the cryptographic hardness of finding a {Nash} equilibrium.
\newblock In Venkatesan Guruswami, editor, {\em 56th FOCS}, pages 1480--1498.
  {IEEE} Computer Society Press, October 2015.

\bibitem[BS16]{arxiv:BenSat16}
Shalev Ben{-}David and Or~Sattath.
\newblock Quantum tokens for digital signatures.
\newblock {\em arXiv (CoRR)}, abs/1609.09047, 2016.

\bibitem[CG23]{CG23}
Andrea Coladangelo and Sam Gunn.
\newblock How to use quantum indistinguishability obfuscation, 2023.

\bibitem[CLLZ21]{10.1007/978-3-030-84242-0_20}
Andrea Coladangelo, Jiahui Liu, Qipeng Liu, and Mark Zhandry.
\newblock Hidden cosets and applications to unclonable cryptography.
\newblock In Tal Malkin and Chris Peikert, editors, {\em Advances in Cryptology
  -- CRYPTO 2021}, pages 556--584, Cham, 2021. Springer International
  Publishing.

\bibitem[CMP22]{coladangelo2022quantum}
Andrea Coladangelo, Christian Majenz, and Alexander Poremba.
\newblock Quantum copy-protection of compute-and-compare programs in the
  quantum random oracle model, 2022.

\bibitem[CVW18]{C:CheVaiWee18}
Yilei Chen, Vinod Vaikuntanathan, and Hoeteck Wee.
\newblock {GGH15} beyond permutation branching programs: Proofs, attacks, and
  candidates.
\newblock In Hovav Shacham and Alexandra Boldyreva, editors, {\em CRYPTO~2018,
  Part~II}, volume 10992 of {\em {LNCS}}, pages 577--607. Springer, Heidelberg,
  August 2018.

\bibitem[DS18]{dulek2018quantum}
Yfke Dulek and Florian Speelman.
\newblock Quantum ciphertext authentication and key recycling with the trap
  code, 2018.

\bibitem[DS23]{cryptoeprint:2022/786}
Marcel Dall'Agnol and Nicholas Spooner.
\newblock {On the Necessity of Collapsing for Post-Quantum and Quantum
  Commitments}.
\newblock In Omar Fawzi and Michael Walter, editors, {\em 18th Conference on
  the Theory of Quantum Computation, Communication and Cryptography (TQC
  2023)}, volume 266 of {\em Leibniz International Proceedings in Informatics
  (LIPIcs)}, pages 2:1--2:23, Dagstuhl, Germany, 2023. Schloss Dagstuhl --
  Leibniz-Zentrum f{\"u}r Informatik.

\bibitem[GGH{\etalchar{+}}13]{FOCS:GGHRSW13}
Sanjam Garg, Craig Gentry, Shai Halevi, Mariana Raykova, Amit Sahai, and Brent
  Waters.
\newblock Candidate indistinguishability obfuscation and functional encryption
  for all circuits.
\newblock In {\em 54th FOCS}, pages 40--49. {IEEE} Computer Society Press,
  October 2013.

\bibitem[GGH15]{TCC:GenGorHal15}
Craig Gentry, Sergey Gorbunov, and Shai Halevi.
\newblock Graph-induced multilinear maps from lattices.
\newblock In Yevgeniy Dodis and Jesper~Buus Nielsen, editors, {\em TCC~2015,
  Part~II}, volume 9015 of {\em {LNCS}}, pages 498--527. Springer, Heidelberg,
  March 2015.

\bibitem[GGSW13]{STOC:GarGenSahWat13}
Sanjam Garg, Craig Gentry, Amit Sahai, and Brent Waters.
\newblock Witness encryption and its applications.
\newblock In Dan Boneh, Tim Roughgarden, and Joan Feigenbaum, editors, {\em
  Symposium on Theory of Computing Conference, STOC'13, Palo Alto, CA, USA,
  June 1-4, 2013}, pages 467--476. {ACM}, 2013.

\bibitem[GP21]{10.1145/3406325.3451070}
Romain Gay and Rafael Pass.
\newblock Indistinguishability obfuscation from circular security.
\newblock In {\em Proceedings of the 53rd Annual ACM SIGACT Symposium on Theory
  of Computing}, STOC 2021, page 736–749, New York, NY, USA, 2021.
  Association for Computing Machinery.

\bibitem[GYZ17]{C:GarYueZha17}
Sumegha Garg, Henry Yuen, and Mark Zhandry.
\newblock New security notions and feasibility results for authentication of
  quantum data.
\newblock In Jonathan Katz and Hovav Shacham, editors, {\em CRYPTO~2017,
  Part~II}, volume 10402 of {\em {LNCS}}, pages 342--371. Springer, Heidelberg,
  August 2017.

\bibitem[ILW23]{STOC:IlaLiWil23}
Rahul Ilango, Jiatu Li, and R.~Ryan Williams.
\newblock Indistinguishability obfuscation, range avoidance, and bounded
  arithmetic.
\newblock In Barna Saha and Rocco~A. Servedio, editors, {\em Proceedings of the
  55th Annual {ACM} Symposium on Theory of Computing, {STOC} 2023, Orlando, FL,
  USA, June 20-23, 2023}, pages 1076--1089. {ACM}, 2023.

\bibitem[JLS21]{STOC:JaiLinSah21}
Aayush Jain, Huijia Lin, and Amit Sahai.
\newblock Indistinguishability obfuscation from well-founded assumptions.
\newblock In Samir Khuller and Virginia~Vassilevska Williams, editors, {\em
  {STOC} '21: 53rd Annual {ACM} {SIGACT} Symposium on Theory of Computing,
  Virtual Event, Italy, June 21-25, 2021}, pages 60--73. {ACM}, 2021.

\bibitem[JLS22]{EC:JaiLinSah22}
Aayush Jain, Huijia Lin, and Amit Sahai.
\newblock Indistinguishability obfuscation from {LPN} over
  {\textdollar}{\textbackslash}mathbb \{F\}{\_}p{\textdollar}, dlin, and prgs
  in nc\({}^{\mbox{0}}\).
\newblock In Orr Dunkelman and Stefan Dziembowski, editors, {\em Advances in
  Cryptology - {EUROCRYPT} 2022 - 41st Annual International Conference on the
  Theory and Applications of Cryptographic Techniques, Trondheim, Norway, May
  30 - June 3, 2022, Proceedings, Part {I}}, volume 13275 of {\em Lecture Notes
  in Computer Science}, pages 670--699. Springer, 2022.

\bibitem[JNV{\etalchar{+}}20]{DBLP:journals/corr/abs-2001-04383}
Zhengfeng Ji, Anand Natarajan, Thomas Vidick, John Wright, and Henry Yuen.
\newblock Mip*=re.
\newblock {\em CoRR}, abs/2001.04383, 2020.

\bibitem[KS08]{DBLP:conf/icalp/KolesnikovS08}
Vladimir Kolesnikov and Thomas Schneider.
\newblock Improved garbled circuit: Free {XOR} gates and applications.
\newblock In Luca Aceto, Ivan Damg{\aa}rd, Leslie~Ann Goldberg, Magn{\'{u}}s~M.
  Halld{\'{o}}rsson, Anna Ing{\'{o}}lfsd{\'{o}}ttir, and Igor Walukiewicz,
  editors, {\em Automata, Languages and Programming, 35th International
  Colloquium, {ICALP} 2008, Reykjavik, Iceland, July 7-11, 2008, Proceedings,
  Part {II} - Track {B:} Logic, Semantics, and Theory of Programming {\&} Track
  {C:} Security and Cryptography Foundations}, volume 5126 of {\em Lecture
  Notes in Computer Science}, pages 486--498. Springer, 2008.

\bibitem[LLQZ22]{DBLP:conf/tcc/LiuLQZ22}
Jiahui Liu, Qipeng Liu, Luowen Qian, and Mark Zhandry.
\newblock Collusion resistant copy-protection for watermarkable
  functionalities.
\newblock In Eike Kiltz and Vinod Vaikuntanathan, editors, {\em Theory of
  Cryptography - 20th International Conference, {TCC} 2022, Chicago, IL, USA,
  November 7-10, 2022, Proceedings, Part {I}}, volume 13747 of {\em Lecture
  Notes in Computer Science}, pages 294--323. Springer, 2022.

\bibitem[Mah18a]{FOCS:Mahadev18b}
Urmila Mahadev.
\newblock Classical homomorphic encryption for quantum circuits.
\newblock In Mikkel Thorup, editor, {\em 59th FOCS}, pages 332--338. {IEEE}
  Computer Society Press, October 2018.

\bibitem[Mah18b]{DBLP:conf/focs/Mahadev18}
Urmila Mahadev.
\newblock Classical verification of quantum computations.
\newblock In Mikkel Thorup, editor, {\em 59th {IEEE} Annual Symposium on
  Foundations of Computer Science, {FOCS} 2018, Paris, France, October 7-9,
  2018}, pages 259--267. {IEEE} Computer Society, 2018.

\bibitem[Sho94]{DBLP:conf/focs/Shor94}
Peter~W. Shor.
\newblock Algorithms for quantum computation: Discrete logarithms and
  factoring.
\newblock In {\em 35th Annual Symposium on Foundations of Computer Science,
  Santa Fe, New Mexico, USA, 20-22 November 1994}, pages 124--134. {IEEE}
  Computer Society, 1994.

\bibitem[SW14]{STOC:SahWat14}
Amit Sahai and Brent Waters.
\newblock How to use indistinguishability obfuscation: deniable encryption, and
  more.
\newblock In David~B. Shmoys, editor, {\em 46th ACM STOC}, pages 475--484.
  {ACM} Press, May~/~June 2014.

\bibitem[Wie83]{DBLP:journals/sigact/Wiesner83}
Stephen Wiesner.
\newblock Conjugate coding.
\newblock {\em {SIGACT} News}, 15(1):78--88, 1983.

\bibitem[Win99]{DBLP:journals/tit/Winter99}
Andreas~J. Winter.
\newblock Coding theorem and strong converse for quantum channels.
\newblock {\em {IEEE} Trans. Inf. Theory}, 45(7):2481--2485, 1999.

\bibitem[WW21]{10.1007/978-3-030-77883-5_5}
Hoeteck Wee and Daniel Wichs.
\newblock Candidate obfuscation via oblivious lwe sampling.
\newblock In {\em Advances in Cryptology – EUROCRYPT 2021: 40th Annual
  International Conference on the Theory and Applications of Cryptographic
  Techniques, Zagreb, Croatia, October 17–21, 2021, Proceedings, Part III},
  page 127–156, Berlin, Heidelberg, 2021. Springer-Verlag.

\bibitem[Yao86]{DBLP:conf/focs/Yao86}
Andrew~Chi{-}Chih Yao.
\newblock How to generate and exchange secrets (extended abstract).
\newblock In {\em 27th Annual Symposium on Foundations of Computer Science,
  Toronto, Canada, 27-29 October 1986}, pages 162--167. {IEEE} Computer
  Society, 1986.

\bibitem[Zha12]{6375347}
Mark Zhandry.
\newblock How to construct quantum random functions.
\newblock In {\em 2012 IEEE 53rd Annual Symposium on Foundations of Computer
  Science}, pages 679--687, 2012.

\bibitem[Zha19]{C:Zhandry19}
Mark Zhandry.
\newblock How to record quantum queries, and applications to quantum
  indifferentiability.
\newblock In Alexandra Boldyreva and Daniele Micciancio, editors, {\em
  CRYPTO~2019, Part~II}, volume 11693 of {\em {LNCS}}, pages 239--268.
  Springer, Heidelberg, August 2019.

\bibitem[Zha21]{Zha21}
Mark Zhandry.
\newblock Quantum lightning never strikes the same state twice. or: Quantum
  money from cryptographic assumptions.
\newblock {\em J. Cryptol.}, 34(1):6, 2021.

\end{thebibliography}
\end{document}